\documentclass[11pt]{article}

\usepackage{amsmath}
\usepackage{amsthm}
\usepackage{amssymb}
\usepackage{mathrsfs}
\usepackage{mathtools}
\usepackage{complexity}
\usepackage{hyperref}
\usepackage[capitalize]{cleveref}
\usepackage{tikz}
\usepackage{tkz-berge}
\usetikzlibrary{shapes.geometric, arrows}
\usepackage{float}

\usepackage{xspace}

\usepackage{caption}
\usepackage{subcaption}

\usepackage{tabularx}
\usepackage{listofitems}
\usepackage{array}
\newcolumntype{Y}{>{\centering\arraybackslash}X}

\usepackage{thmtools, thm-restate}
\usepackage[margin=1in]{geometry}

\usepackage{anyfontsize}

\bibliographystyle{plainurl}

\newtheorem{theorem}{Theorem}[section]
\newtheorem{lemma}[theorem]{Lemma}
\newtheorem{corollary}[theorem]{Corollary}
\newtheorem{definition}[theorem]{Definition}
\newtheorem{proposition}[theorem]{Proposition}
\newtheorem{claim}{Claim}

\DeclareMathOperator{\hol}{Holant}
\DeclareMathOperator{\hols}{Holant^*}
\DeclareMathOperator{\holbs}{Holant^*_2}
\DeclareMathOperator{\holts}{Holant^*_3}
\DeclareMathOperator{\supp}{supp}
\DeclareMathOperator{\rank}{rank}
\DeclareMathOperator{\spn}{span}

\DeclareMathOperator{\parf}{\mathsf{PARITY}}

\DeclareMathOperator{\typei}{I}
\DeclareMathOperator{\typeii}{II}
\DeclareMathOperator{\proj}{proj}

\newcommand{\db}{\mathsf{B}}
\newcommand{\dg}{\mathsf{G}}
\newcommand{\dr}{\mathsf{R}}

\newcommand{\geneq}{\textsf{GenEQ}\xspace}
\newcommand{\genperm}{\textsf{GenPerm}\xspace}

\newcommand{\swhelper}[1]{$\mathsf{Swap}_{#1}$\xspace}
\newcommand{\swbg}{\swhelper{\db \dg; \dr}}
\newcommand{\swbr}{\swhelper{\db \dr; \dg}}
\newcommand{\swgr}{\swhelper{\dg \dr; \db}}

\newcommand{\octgroup}{O_h}

\newcommand{\sph}{\#\P-hard\xspace}

\newcommand{\teh}{^{\otimes 3}}
\newcommand{\tew}{^{\otimes 2}}
\newcommand{\transpose}{^\intercal}

\newcommand{\domres}[1]{
  ^{*\to\{#1\}}
}

\newcommand{\strspt}{\textsf{EBD}\xspace}

\renewcommand{\phi}{\varphi}

\tikzset{
mycirc/.style={circle,fill=black, minimum size=0.05cm}
}
\tikzset{
mysq/.style={rectangle,fill=black, minimum size=0.05cm}
}

\tikzstyle{block} = [rectangle, rounded corners, minimum width=1cm, minimum height=1cm,text centered, draw=black]
\tikzstyle{arrow} = [thick,->,>=stealth]

\newcommand{\bdgr}{\dg \dr | \db}

\newcommand{\tractbinary}{$\mathscr{A}$\xspace}
\newcommand{\tractE}{$\mathscr{B}$\xspace}
\newcommand{\tractBG}{$\mathscr{C}$\xspace}
\newcommand{\tractBGR}{$\mathscr{D}$\xspace}
\newcommand{\tractBGGRBR}{$\mathscr{E}$\xspace}

\newcommand{\ternarytractgeneq}{$\mathfrak{A}$\xspace}
\newcommand{\ternarytractz}{$\mathfrak{B}$\xspace}

\title{
Holant* Dichotomy on Domain Size 3: A Geometric Perspective
}

\author{Jin-Yi Cai\thanks{University of Wisconsin-Madison. {\tt jyc@cs.wisc.edu}}
\and Jin Soo Ihm\thanks{University of Wisconsin-Madison.
{\tt ihm2@wisc.edu}}}






\begin{document}

\date{}
\maketitle
\begin{abstract}
Holant problems are a general framework to study the computational complexity of counting problems. It is a more expressive framework than
counting constraint satisfaction problems (CSP) which are in turn
more expressive than counting graph homomorphisms (GH).
In this paper, we prove the first complexity dichotomy of $\holts(\mathcal{F})$ where $\mathcal{F}$ is an arbitrary set of symmetric, real valued constraint functions on domain size $3$.
We give an explicit tractability criterion
and prove that, if  $\mathcal{F}$ satisfies  this  
 criterion
 then $\holts(\mathcal{F})$ is   polynomial time
computable, and otherwise it is  \sph, with no intermediate cases. 
We show that the geometry of the tensor decomposition of the constraint functions 
plays a central role in the formulation as well as the
structural internal logic of the dichotomy.
\end{abstract}

\section{Introduction}
Holant problems were introduced in \cite{cai_computational_2011} as a broad framework to study the computational complexity of counting problems.
Counting CSP  is a special case of Holant problems~\cite{Creignou1996ComplexityOG, bulatov-dalmau-csp, 10.1145/2528400, doi:10.1137/070690201, 10.1145/1536414.1536511, DBLP:journals/siamcomp/DyerR13, 5959820, Cai2011ComplexityOC}.
In turn, counting CSP includes counting graph homomorphisms (GH),
introduced by Lov\'asz~\cite{Lovsz1967OperationsWS, Hell2004GraphsAH},
which is a special case with a single binary constraint function.
Typical Holant problems include counting all matchings, counting
perfect matchings \#PM (including all weighted versions), counting cycle
covers, counting edge colorings, and many other natural problems.
It is strictly more expressive than GH; for example, it is known that
\#PM cannot be expressed in the framework of GH~\cite{Freedman2004ReflectionPR,DBLP:journals/cpc/CaiG22}.

The complexity classification program of counting problems is to classify as broad a class of  problems as possible
according to their inherent computational complexity within these
frameworks.
Let $\mathcal{F}$ be a set of (real or complex valued) constraint functions defined on some domain set $D$.
It defines a Holant problem $\hol(\mathcal{F})$ as follows.
An input consists of a graph $G = (V, E)$, where each $v \in V$ has an associated $\mathbf{F} \in \mathcal{F}$, with incident edges to $v$ labeled as input variables of $\mathbf{F}$.
The output is the sum of products of evaluations of the constraint functions over all assignments over $D$ for the variables.
The goal of the complexity classification of Holant problems is to classify the complexity of $\hol(\mathcal{F})$.
A complexity dichotomy theorem for counting problems classifies every problem
in a broad class of problems $\mathcal{F}$ 
to be either polynomial time solvable or \sph.

There has been tremendous progress in the classification of 
counting GH and counting CSP~\cite{Creignou1996ComplexityOG, doi:10.1137/070690201, 10.1145/1536414.1536511,
10.1145/2528400, DBLP:journals/siamcomp/DyerR13, 5959820, Cai2011ComplexityOC, 
dyer-graph-hom,
bulatov_complexity_2005, goldberg-partition,  hutchison_graph_2010}.
Much progress was also made in the classification of 
Holant problems, particularly on the Boolean domain ($|D|=2$), 
i.e., when variables take 0-1 values (but constraint functions take
arbitrary values, such as partition functions from
statistical physics). This includes the
dichotomy for all complex-valued symmetric constraint functions~\cite{complex-symmetric-holant-dichotomy}
and for all real-valued not necessarily symmetric constraint functions~\cite{real-holant-dichotomy}.
On the other hand, obtaining higher domain Holant dichotomy has been far more challenging.
There is a huge increase in difficulty in proving dichotomy theorems for domain size $> 2$, as already seen in decision CSP of domain size $3$, a major achievement by Bulatov \cite{bulatov-csp-domain-3}.
Toward proving 
these dichotomies one often first considers restricted classes of
Holant problems assuming some particular set of constraint functions
are present. Two 
sets stand out:
(1) the set of equality functions
 ${\cal EQ}$ of all arities (this is the class of all counting CSP problems)
 and (2) the set of all unary functions ${\cal U}$, i.e., functions of arity one.
 Indeed, \#CSP($\mathcal{F})
=\hol(\mathcal{F}
\cup {\cal EQ})$; i.e., counting CSP  are the special case of Holant problems
with  ${\cal EQ}$ assumed to be present.
In this paper we study (2):
$\holts(\mathcal{F}) := \hol_3(\mathcal{F} \cup \mathcal{U})$,
for an arbitrary set $\mathcal{F}$ of symmetric real-valued  constraint functions
on domain size 3. 


Previously there were only two significant Holant dichotomies on higher domains.
One is for
a single ternary constraint function that has a strong symmetry property
called domain permutation invariance~\cite{cai_complexity_2016}. That work also 
solves a decades-old open problem of
the complexity of counting edge colorings. The other is a
dichotomy for $\holts(f)$ where $f$ is a single symmetric complex-valued
ternary constraint function on 
domain size 3~\cite{cai_dichotomy_2013}. Extending this
dichotomy to an arbitrary constraint function, or more ambitiously,
to a set of constraint functions has been a goal for more than 
10 years without much progress.

In this paper, we extend the result in
\cite{cai_dichotomy_2013} to an arbitrary
set of real-valued symmetric constraint functions.
In~\cite{liu_restricted_nodate}
an interesting observation was made that an exceptional form of 
complex-valued tractable constraint functions 
does not occur when the function is 
real-valued.
By restricting ourselves to a set  $\mathcal{F}$ of 
real-valued constraint functions, we can bypass a lot  of difficulty
associated with this exceptional form.
Another major source of intricacy is related to the interaction of binary constraint functions
with other constraint functions in  $\mathcal{F}$.
We introduce a new geometric perspective that provides 
a unifying
principle in the formulation as well as
a structural internal logic of what leads to tractability and what  leads  
to \#\P-hardness. After discovering some new tractable classes of functions
aided by the geometric perspective, we 
are able to prove a $\holts(\mathcal{F})$ dichotomy.
This dichotomy is dictated by the geometry of the tensor decomposition of constraint functions.

Suppose $\mathbf{G}$ is a binary constraint function and $\mathbf{F}$ is a tenary constraint function, with $\mathbf{F} = \mathbf{u}\teh + \mathbf{v}\teh$ its tensor decomposition.
One of the simplest constructions possible with $\mathbf{G}$ and $\mathbf{F}$ is to connect $\mathbf{G}$ at the three edges of $\mathbf{F}$;
the resulting constraint function is $\mathbf{G} \teh \mathbf{F}$ which has tensor decomposition $(\mathbf{G} \mathbf{u})\teh + (\mathbf{G} \mathbf{v})\teh$.
We see that this gadget construction plays nicely with the tensor decomposition.
Generalizing this idea, suppose $\mathcal{B}$ is a set of binary constraint functions and $\mathcal{T}$ is a set of ternary constraint functions.
Let $\langle \mathcal{B} \rangle$ be the monoid generated by $\mathcal{B}$.
We may consider the orbit  $\mathcal{O}$ of $\mathcal{T}$ under the monoid action of $\langle \mathcal{B} \rangle$,
such that $\mathbf{G} \in \langle \mathcal{B} \rangle$ acts on $\mathbf{F} \in \mathcal{T}$ by $\mathbf{G} : \mathbf{F} \mapsto \mathbf{G}\teh \mathbf{F}$.
Although the constraint functions in $\mathcal{O}$ are the results of a very simple gadget construction, we show that $\mathcal{O}$ contains sufficient information about the interaction of  binary constraint functions and  other constraint functions, and the simplicity allows us to analyze it by considering the geometry of the vectors of the tensor decomposition of the constraint functions in $\mathcal{O}$.

Compared to 
the Boolean domain dichotomy theorem, stated in explicit
  recurrences
on the values of the signatures (see Theorem 2.12 in \cite{cai_complexity_2017}) 
the dichotomy theorem (\cref{thm:dichotomy-set-of-domain-3})
we wish to prove has a more
non-explicit
form, which is also more conceptual. This is informed by the
geometric perspective, but it also causes some difficulty in its proof,
when we try to extend
to a set of constraint functions of arbitrary arities.
We introduce a new  technique   to overcome this difficulty.
 First (and this is quite a surprise), it turns out that a dichotomy of two constraint functions of arity $3$ is easier to state and prove than the dichotomy of one binary and one ternary constraint functions. Also they can be proven independently of each other.
This is a departure from all previous proofs of dichotomy theorems in this area.
Second, using the unary constraint functions available in $\hols$, any symmetric constraint function $\mathbf{F}$ of arity $4$  defines a linear transformation from $\mathbb{R}^3$ to the space of symmetric constraint functions of arity $3$, which corresponds to the ternary constraint functions constructible by connecting a unary function to $\mathbf{F}$.
In particular, the image of this map, $\mathscr{F}$, is a linear subspace. 
In particular, the image $\mathscr{F}$ of this map is a linear subspace.
Considering the space $\mathscr{F}$ instead of specific 
sub-functions allows us to bypass the difficulty from the non-explicit form of the  dichotomy
statement, which is  in terms of tensor decompositions up to 
an orthogonal transformation.
We show that a dichotomy of two ternary constraint functions and the fact that $\mathscr{F}$ is closed under linear combinations imply that $\mathscr{F}$ must be of a very special form for $\holts(\mathscr{F})$ to be tractable, which in turn implies that $\mathbf{F}$ must possess a certain regularity.

While the tractability criterion in
 \cref{thm:dichotomy-set-of-domain-3} is stated in a conceptual and succinct way,
 the tractable cases are actually quite rich and varied. We present here specific examples of new tractable cases.
Denote the domain by  $D = \{\db, \dg, \dr\}$.
We use the notation in \cref{fig:ternary-signature-notation} to denote a symmetric ternary constraint function on domain $D$.
\begin{figure}
\centering
  \begin{tikzpicture}[scale=0.3, every node/.style={scale=0.7}]
  \node at (0, 0) {$f_{\db \db \db}$};
  \node at (-2, -1) {$f_{\db \db \dg}$};
  \node at (2, -1) {$f_{\db \db \dr}$};
  \node at(-4, -2) {$f_{\db \dg \dg}$};
  \node at (0, -2) {$f_{\db \dg \dr}$};
  \node at (4, -2) {$f_{\db \dr \dr}$};
  \node at (-6, -3) {$f_{\dg \dg \dg}$};
  \node at (-2, -3) {$f_{\dg \dg \dr}$};
  \node at (2, -3) {$f_{\dg \dr \dr}$};
  \node at (6, -3) {$f_{\dr \dr \dr}$};
\end{tikzpicture}
\caption{Notation for expressing a symmetric ternary domain $3$ constraint functions. 
This notation can be extended for higher arity signatures by using a larger triangle.
}
\label{fig:ternary-signature-notation}
\end{figure}
Consider the four constraint functions $\mathbf{F}_1, \mathbf{G}_1, \mathbf{H}_1, \mathbf{B}_1$ in \cref{fig:f-g-h-tractable}.
\begin{figure}
  \centering
  \begin{subfigure}[b]{0.22\textwidth}
    \centering
  \begin{tikzpicture}[scale=0.3, every node/.style={scale=0.7}]
  \node at (0, 0) {$2$};
  \node at (-1, -1) {$2$};
  \node at (1, -1) {$-1$};
  \node at(-2, -2) {$2$};
  \node at (0, -2) {$-1$};
  \node at (2, -2) {$5$};
  \node at (-3, -3) {$2$};
  \node at (-1, -3) {$-1$};
  \node at (1, -3) {$5$};
  \node at (3, -3) {$-7$};
\end{tikzpicture}
\caption{$\mathbf{F}_1$}
\end{subfigure}
\begin{subfigure}[b]{0.22\textwidth}
  \centering
  \begin{tikzpicture}[scale=0.3, every node/.style={scale=0.7}]
  \node at (0, 0) {$-7$};
  \node at (-1, -1) {$5$};
  \node at (1, -1) {$-1$};
  \node at(-2, -2) {$5$};
  \node at (0, -2) {$5$};
  \node at (2, -2) {$2$};
  \node at (-3, -3) {$-7$};
  \node at (-1, -3) {$-1$};
  \node at (1, -3) {$2$};
  \node at (3, -3) {$2$};
\end{tikzpicture}
\caption{$\mathbf{G}_1$}
\end{subfigure}
\begin{subfigure}[b]{0.22\textwidth}
  \centering
  \begin{tikzpicture}[scale=0.3, every node/.style={scale=0.7}]
  \node at (0, 0) {$-2$};
  \node at (-1, -1) {$1$};
  \node at (1, -1) {$1$};
  \node at(-2, -2) {$1$};
  \node at (0, -2) {$-2$};
  \node at (2, -2) {$1$};
  \node at (-3, -3) {$-2$};
  \node at (-1, -3) {$1$};
  \node at (1, -3) {$1$};
  \node at (3, -3) {$-2$};
\end{tikzpicture}
\caption{$\mathbf{H}_1$}
\end{subfigure} 
\begin{subfigure}[b]{0.22\textwidth}
  \centering
  \begin{tikzpicture}[scale=0.3, every node/.style={scale=0.7}]
  \node at (0, 0) {$3+2\sqrt{2}$};
  \node at (-1.5, -1) {$-3+2\sqrt{2}$};
  \node at (2, -1) {$-\sqrt{2}$};
  \node at(-3, -2) {$3+2\sqrt{2}$};
  \node at (0.5, -2) {$-\sqrt{2}$};
  \node at (3.5, -2) {$-4\sqrt{2}$};
\end{tikzpicture}
\caption{$\mathbf{B}_1$}
\end{subfigure}
  \caption{Ternary constraint functions $\mathbf{F}_1$, $\mathbf{G}_1$, $\mathbf{H}_1$, and a binary constraint function $\mathbf{B}_1$.} \label{fig:f-g-h-tractable}
\end{figure}
It is not obvious that $\holts(\mathbf{F}_1, \mathbf{G}_1, \mathbf{H}_1, \mathbf{B}_1)$ is polynomial-time computable.

We apply the orthogonal transform $T = \frac{1}{\sqrt{6}} \left[ \begin{smallmatrix}
    \sqrt{2} & \sqrt{2} & \sqrt{2} \\
    1 & 1 & -2\\
    \sqrt{3} & - \sqrt{3} & 0
\end{smallmatrix} \right]$
which transforms $\mathbf{F}_1, \mathbf{G}_1$ and $\mathbf{H}_1$ to be supported
in $\{\db, \dg\}^*, \{\db, \dr\}^*, \{\dg, \dr\}^*$ respectively,
Their tensor decompositions  have a revealing structure. 
Ignoring the scalar constants, we have\footnote{Complex numbers do appear,  even though the signatures are all real valued. This is similar to  eigenvalues.}
\begin{align*}
    \mathbf{F}_1' &= T \teh \mathbf{F}_1 = 3 \sqrt{3} (1, 0, 0)\teh + 6 \sqrt{6} (0, 1, 0)\teh = 3 \sqrt{3}  \, \mathbf{e}_1\teh + 6 \sqrt{6} \,\mathbf{e}_2\teh \\
    \mathbf{G}_1' &= T \teh \mathbf{G}_1 = (1, 0, i)\teh + (1, 0, -i)\teh = (\mathbf{e}_1 + i \mathbf{e}_3)\teh + (\mathbf{e}_1 - i \mathbf{e}_3)\teh \\
    \mathbf{H}_1' &= T \teh \mathbf{H}_1 = (0, 1, i)\teh + (0, 1, -i)\teh = (\mathbf{e}_2 + i \mathbf{e}_3)\teh + (\mathbf{e}_2 - i \mathbf{e}_3)\teh
\end{align*}
The vectors in tensor decompositions show that geometrically, $\mathbf{F}_1'$, $\mathbf{G}_1'$, and $\mathbf{H}_1'$ are associated with 
three coordinate planes.
The function $\mathbf{B}_1' = T \tew \mathbf{B}_1$ written in matrix form
is $\left[ \begin{smallmatrix}
    0 & 1 & 0\\
    1 & 0 & 0 \\
    0 & 0 & 1
\end{smallmatrix} \right]$, where the $(i,j)$ entry
is the function value $\mathbf{B}_1'(i,j)$, for $i, j$ in the new domain set.
Applying  \cref{thm:dichotomy-set-of-domain-3}
we can conclude  that $\{\mathbf{F}_1, \mathbf{G}_1, \mathbf{H}_1, \mathbf{B}_1\}$ is 
in tractable class \tractBGGRBR. 

For the second example we apply the orthogonal transform
$T = \frac{1}{\sqrt{6}}
\left[ \begin{smallmatrix}
\sqrt{2} & \sqrt{2} & - \sqrt{2}\\
1 & 1 & 2\\
-\sqrt{3} & \sqrt{3} & 0
\end{smallmatrix} \right]$ to the constraint functions in \cref{fig:bg-r-tractable}.
\begin{figure}
  \centering
  \begin{subfigure}[b]{0.22\textwidth}
    \centering
  \begin{tikzpicture}[scale=0.3, every node/.style={scale=0.7}]
  \node at (0, 0) {$-3$};
  \node at (-1, -1) {$1$};
  \node at (1, -1) {$-5$};
  \node at(-2, -2) {$-3$};
  \node at (0, -2) {$-5$};
  \node at (2, -2) {$2$};
  \node at (-3, -3) {$1$};
  \node at (-1, -3) {$-5$};
  \node at (1, -3) {$2$};
  \node at (3, -3) {$10$};
\end{tikzpicture}
\caption{$\mathbf{F}_2$}
\end{subfigure}
\begin{subfigure}[b]{0.22\textwidth}
  \centering
  \begin{tikzpicture}[scale=0.3, every node/.style={scale=0.7}]
  \node at (0, 0) {$5$};
  \node at (-1, -1) {$11$};
  \node at (1, -1) {$4$};
  \node at(-2, -2) {$5$};
  \node at (0, -2) {$4$};
  \node at (2, -2) {$2$};
  \node at (-3, -3) {$11$};
  \node at (-1, -3) {$4$};
  \node at (1, -3) {$2$};
  \node at (3, -3) {$1$};
\end{tikzpicture}
\caption{$\mathbf{G}_2$}
\end{subfigure}
\begin{subfigure}[b]{0.3\textwidth}
  \centering
  \begin{tikzpicture}[scale=0.3, every node/.style={scale=0.7}]
  \node at (0, 0) {$4 + 2 \sqrt{2}$};
  \node at (-2, -1) {$-2 + 2 \sqrt{2}$};
  \node at (2, -1) {$-4 + \sqrt{2}$};
  \node at(-4, -2) {$4 + 2 \sqrt{2}$};
  \node at (0, -2) {$-4 + \sqrt{2}$};
  \node at (4, -2) {$-2 -4\sqrt{2}$};
\end{tikzpicture}
\caption{$\mathbf{H}_2$}
\end{subfigure} 
\begin{subfigure}[b]{0.23\textwidth}
  \centering
  \begin{tikzpicture}[scale=0.3, every node/.style={scale=0.7}]
  \node at (0, 0) {$2-2\sqrt{2}$};
  \node at (-1.5, -1) {$0$};
  \node at (1.5, -1) {$2 + \sqrt{2}$};
  \node at(-4, -2) {$-2 + 2\sqrt{2}$};
  \node at (0, -2) {$-2-\sqrt{2}$};
  \node at (4, -2) {$0$};
\end{tikzpicture}
\caption{$\mathbf{B}_2$}
\end{subfigure}
  \caption{Ternary constraint functions $\mathbf{F}_2$, $\mathbf{G}_2$ and bianry constraint functions $\mathbf{H}_2, \mathbf{B}_2$.} \label{fig:bg-r-tractable}
\end{figure}
%
{\small
\begin{align*}
    T \teh \mathbf{F}_2 = 3 \sqrt{3} ((1, i, 0)\teh + (1, -i, 0)\teh) + 4 \sqrt{2} \mathbf{e}_3\teh , \quad
    T \teh \mathbf{G}_2 = (\sqrt{3}, \sqrt{6}, 0)\teh + 6 \sqrt{2} \mathbf{e}_3\teh 
\end{align*}
}
and $T\teh \mathbf{H}_2$ and $T\teh \mathbf{B}_2$ can be written in matrix form $\begin{bsmallmatrix}
        1 & 1 & 0 \\
        1 & -1 & 0 \\
        0 & 0 & 1
    \end{bsmallmatrix}$ and $\begin{bsmallmatrix}
    0 & 0 & 1 \\ 
    0 & 0 & -1 \\
    1 & -1 & 0
    \end{bsmallmatrix}$ respectively, up to scalar constants.
Applying  \cref{thm:dichotomy-set-of-domain-3}
we can conclude  that $\{\mathbf{F}_2, \mathbf{G}_2, \mathbf{H}_2, \mathbf{B}_2\}$ is 
in tractable class \tractBGR.

Our new algorithm also solves some natural problems.
Consider the following problem.
For $n \in \mathbb{N}$, $i \ne j \in \{\db, \dg, \dr\}$, and any $a, b \in \mathbb{R}$, let $\parf^{n, i, j}_{a, b} : \{\db, \dg, \dr\}^n \to \mathbb{R}$ be the function
{\small
\[
\parf^{n, i, j}_{a, b} (\mathbf{x}) = \begin{cases*}
    a & if $\mathbf{x} \in \{i, j\}^n$ and $\mathbf{x}$ contains even number of $i$ \\
    b & if $\mathbf{x} \in \{i, j\}^n$ and $\mathbf{x}$ contains odd number of $i$ \\
    0 & otherwise
\end{cases*}
\]
}
Let $(\ne)_{pq;r} : \{\db, \dg, \dr\}^2 \to \{0, 1\}$ for distinct $p, q, r \in \{\db, \dg, \dr \}$ be the function
{\small
\[
(\ne)_{pq;r}(x, y) = \begin{cases*}
    1 & if $x, y \in \{p, q\}$ and $x \ne y$ \\
    1 & if $x = y = r$ \\
    0 & otherwise
\end{cases*}
\]
}
Let 
\(\mathcal{F} = \{ \parf^{n, i, j}_{a, b} : i \ne j \in \{\db, \dg, \dr\}, a, b \in \mathbb{R}  \} \cup \{ (\ne)_{pq;r} : p, q, r \in \{\db, \dg, \dr\}  \} \, . \)
There is a related constraint satisfaction decision problem, where 
\[\mathcal{F}^b = \{ \parf^{n, i, j}_{a, b} : i \ne j \in \{\db, \dg, \dr\}, a, b \in \{0, 1\}  \} \cup \{ (\ne)_{pq;r} : p, q, r \in \{\db, \dg, \dr\}  \} \, , \]
and we ask if an $\mathcal{F}^b$ signature grid has a nonzero assignment. 
It is not even immediately obvious whether this decision problem is solvable in polynomial time.
\cref{thm:dichotomy-set-of-domain-3} tells us that $\mathcal{F}$ is in class \tractBGGRBR and thus $\holts(\mathcal{F})$ is computable in polynomial time, which implies that the decision problem is also solvable in polynomial time.

\section{Preliminaries}
\subsection{Definitions}
Definitions of Holant problem and gadget are introduced in this subsection.

Let $D$ be a finite domain set, and $\mathcal{F}$ be a set of constraint functions, called signatures.
Each $\mathbf{F} \in \mathcal{F}$ is a mapping from $D^k \to \mathbb{C}$ for some arity $k$.
If the image of $\mathbf{F}$ is contained in $\mathbb{R}$, we say $\mathbf{F}$ is real-valued.

A \textit{signature grid} $\Omega = (G, \mathcal{F}, \pi)$ consists of a graph $G = (V, E)$ where each vertex is labeled by
a function $\mathbf{F}_v \in \mathcal{F}$ and $\pi$ is the labeling.
The arity of $\mathbf{F}_v$ must match the degree of $v$.
The Holant problem on instance $\Omega$ is to evaluate
\begin{equation}
  \hol_{\Omega} = \sum_{\sigma} \prod_{v \in V} \mathbf{F}_v(\sigma |_{E(v)}) \, ,
\end{equation}
where the sum is over all edge assignments $\sigma : E \to D$ and $E(v)$ is the edges adjacent to $v$, and $\mathbf{F}_v(\sigma |_{E(v)})$ is the evaluation of $\mathbf{F}_v$ on the ordered input tuple $\sigma |_{E(v)}$.

A signature $\mathbf{F}_v$ is listed by its values lexicographically as a table, or it can be expressed as a tensor in $(\mathbb{C}^{\lvert D \rvert})^{\otimes \deg(v)}$.
We can identify a unary function $\mathbf{F}(x): D \to \mathbb{C}$ with a vector $\mathbf{u} \in \mathbb{C}^{\lvert D \rvert}$.
Given two vectors $\mathbf{u}$ and $\mathbf{v}$ of dimension $\lvert D \rvert$, the tensor product $\mathbf{u} \otimes \mathbf{v}$ is a vector in 
$\mathbb{C}^{\lvert D \rvert^2}$, with entries $u_i v_j$ for $1 \le i, j \le \lvert D \rvert$.
For matrices $A = (a_{ij})$ and $B = (b_{kl})$ the tensor product (or Kronecker product) $A \otimes B$ is defined similarly;
it has entries $a_{ij}b_{kl}$ indexed by $((i, k), (j, l))$ lexicographically.
We write $\mathbf{u}^{\otimes k}$ for $\mathbf{u}\otimes \cdots \otimes \mathbf{u}$ with $k$ copies of $\mathbf{u}$.
$A^{\otimes k}$ is similarly defined.
We have $(A \otimes B)(A' \otimes B') = (A A' \otimes BB')$ whenever the matrix products are defined.
In particular, $A^{\otimes k}(\mathbf{u}_1 \otimes \cdots \otimes \mathbf{u}_k) = A \mathbf{u}_1 \otimes \cdots \otimes A \mathbf{u}_k$ when the matrix-vector 
products $A \mathbf{u}_i$ are defined.

A signature $\mathbf{F}$ of arity $k$ is \textit{degenerate} if $\mathbf{F} = \mathbf{u}_1 \otimes \cdots \otimes \mathbf{u}_k$ for some vectors $\mathbf{u}_i$.
Such a signature is very weak; there is no interaction between the variables.
If every signature in $\mathcal{F}$ is degenerate, then $\hol_\Omega$ for any $\Omega = (G, \mathcal{F}, \pi)$ is computable in polynomial time in a trivial way:
Simply split every vertex $v$ into $\deg(v)$ vertices each assigned a unary $\mathbf{F}_i$ and connected to the incident edge.
Then $\hol_{\Omega}$ becomes a product over each component of a single edge.
Thus degenerate signatures are weak and should be properly understood as made up by unary signatures.
To concentrate on the essential features that differentiates tractability from intractability, $\hols$ was introduced in \cite{10.1145/1536414.1536511, cai_computational_2011}.
These are the problems where all unary signatures are assumed to be present, i.e. $\hols(\mathcal{F}) = \hol(\mathcal{F} \cup \mathcal{U})$ where $\mathcal{U}$ is the set of all unary signatures.
 We note that for real valued $\mathcal{F}$ the complexity of $\hols(\mathcal{F})$
 is  unchanged whether we use real valued or complex valued 
 $\mathcal{U}$~\cite{liu_restricted_nodate}(Lemma 9), and
 hence in this paper we use real valued $\mathcal{U}$.
 In the proof of \#\P-hardness, we freely use complex valued unary functions and apply the known $\hols$ dichotomy theorems that may use complex valued unary functions.

Our proof uses the  notion of a \textit{gadget}.
Consider a type of graph $G = (V, I, E)$ where $I$ and $E$ are two kinds of edges.
Edges in $I$ are ordinary internal edges with two endpoints in $V$.
Edges in $E$ are external edges (also called dangling edges) which have only one end point in $V$.
Such a graph can be made into a part of a larger graph as follows: 
Given a graph $G'$ and a vertex $v$ of $G'$, we may replace $v$ with a graph $G$ by merging the external edges of $E$ to the incident edges of $v$.

A $\mathcal{F}$ gadget consists of a graph $G = (V, I, E)$ and a labeling $\pi$ where each vertex $v \in V$ is labeled by $\mathbf{F}_v \in \mathcal{F}$.
We may view $G$ as a function $\mathbf{F}_G$, such that if we replace a vertex $v$ of a graph $G'$ by $G$, the Holant value of the resulting instance is 
as if we assign $F_G$ to $v$.
For this to hold, $\mathbf{F}_G$ must be such that for an assignment $\tau: E \to D$, 
\[
  \mathbf{F}_G(\tau) = \sum_{\sigma}\prod_{u \in V} \mathbf{F}_u (\tau \sigma |_{E(u)}) \, ,
\]
where the sum is over all edge assignments $\sigma: I \to D$ and $\tau \sigma$ is the combined assignment on $E \cup I$.

\subsection{Holographic Transformation}
To describe the idea of holographic transformations, it is convenient to consider bipartite graphs.
For a general graph, we can always transform it into a bipartite graph while preserving the Holant value, as follows:
for each edge in the graph, we replace it by a path of length $2$, and assign to the new vertex the binary Equality function $(=_2)$.

We use the notation $\hol(\mathcal{R} | \mathcal{G})$ to denote the Holant problem on bipartite graphs $H = (U, V, E)$,
where each signature for a vertex in $U$ or $V$ is from $\mathcal{R}$ or $\mathcal{G}$, respectively.
An input instance for the bipartite Holant problem is a bipartite signature grid and is denoted as $\Omega = (H ; \mathcal{R} | \mathcal{G} ; \pi)$.
Signatures in $\mathcal{R}$ are considered as row vectors (or covariant tensors); signatures in $\mathcal{G}$ are considered as column vectors (or contravariant tensors).

For a $\lvert D \rvert \times \lvert D \rvert$ matrix $T$ and a signature set $\mathcal{F}$, define
\[
  T \mathcal{F} = \{\mathbf{G} : \exists \mathbf{F} \in \mathcal{F} \text{ of arity n, such that } \mathbf{G} = T^{\otimes n} \mathbf{F}\} \, ,
\]
and similarly for $\mathcal{F}T$.
Whenever we write $T^{\otimes n} \mathbf{F}$ or $T \mathcal{F}$, we view the signatures as column vectors; similarly $\mathbf{F} T^{\otimes n}$ or $\mathcal{F} T$ as row vectors.
A holographic transformation by $T$ is the following operation:
given a signature grid $\Omega = (H ; \mathcal{R} | \mathcal{G} ; \pi)$, for the same graph $H$,
we get a new grid $\Omega' = (H ; \mathcal{R} T | T^{-1} \mathcal{G}; \pi)$ by replacing each signature in $\mathcal{R}$ or $\mathcal{G}$
with the corresponding signature in $\mathcal{R}T$ or $T^{-1} \mathcal{G}$.

\begin{theorem}[Valiant's Holant Theorem \cite{10.1109/FOCS.2006.7}]
If there is a holographic transformation mapping signature grid $\Omega$ to $\Omega'$, then $\hol_{\Omega} = \hol_{\Omega'}$.
\end{theorem}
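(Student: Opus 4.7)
The plan is to view the Holant value as a complete tensor contraction along the edges of the underlying graph, and show that the holographic transformation inserts a factor $T T^{-1} = I$ on each edge, which has no effect on the value.

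First I would set up the bipartite framework $\Omega = (H; \mathcal{R} | \mathcal{G}; \pi)$ with $H = (U, V, E)$, and write
\[
\hol_\Omega = \sum_\sigma \prod_{u \in U} \mathbf{F}_u(\sigma|_{E(u)}) \prod_{v \in V} \mathbf{F}_v(\sigma|_{E(v)}).
\]
Since $H$ is bipartite, each edge $e$ connects exactly one $u \in U$ to one $v \in V$, and its assignment $\sigma(e)$ appears as exactly one argument of $\mathbf{F}_u$ and one argument of $\mathbf{F}_v$. Thus $\hol_\Omega$ is precisely the full contraction of the row tensors $\{\mathbf{F}_u\}_{u \in U}$ against the column tensors $\{\mathbf{F}_v\}_{v \in V}$, with the pairing dictated by the edges of $H$.

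Next I would track what the transformation does along a single edge. After replacing $\mathbf{F}_u$ by $\mathbf{F}_u T^{\otimes n_u}$ and $\mathbf{F}_v$ by $(T^{-1})^{\otimes n_v} \mathbf{F}_v$ (where $n_u, n_v$ are the respective arities), each index slot of $\mathbf{F}_u$ carries one copy of $T$ and each index slot of $\mathbf{F}_v$ carries one copy of $T^{-1}$. Grouping factors by edges, the contraction along a single edge $e = (u,v)$ in $\hol_{\Omega'}$ has the schematic form
\[
\sum_{a,b} \mathbf{F}_u(\ldots, a, \ldots) \Bigl( \sum_c T_{a,c}\, (T^{-1})_{c,b} \Bigr) \mathbf{F}_v(\ldots, b, \ldots) = \sum_a \mathbf{F}_u(\ldots, a, \ldots) \mathbf{F}_v(\ldots, a, \ldots),
\]
because $\sum_c T_{a,c}\, (T^{-1})_{c,b} = (T T^{-1})_{a,b} = \delta_{a,b}$. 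Performing this cancellation on every edge in turn yields $\hol_{\Omega'} = \hol_\Omega$.

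The main obstacle I anticipate is purely notational: I must verify that the $n_u$ copies of $T$ inside $T^{\otimes n_u}$ line up, in the correct index-slot order, with the $n_u$ edges incident to $u$, and similarly on the $V$ side, so that the $T$ and $T^{-1}$ that meet at a given edge really are the ones being paired by the contraction. Once the index alignment is written out carefully, the cancellation is immediate and no deeper structural input about the signatures or the graph is required; in particular the argument makes no use of symmetry, real-valuedness, or domain size.
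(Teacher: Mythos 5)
Your proof is correct: viewing $\hol_\Omega$ as a full tensor contraction along the edges of the bipartite grid and cancelling the inserted $T T^{-1} = I$ on each edge is exactly the standard argument for Valiant's Holant Theorem, which the paper cites from Valiant's work rather than reproving. The index-alignment point you flag is the only bookkeeping needed, and your handling of the row/column (covariant/contravariant) convention for $\mathcal{R}T$ and $T^{-1}\mathcal{G}$ is consistent with the paper's setup.
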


Therefore, an invertible holographic transformation does not change the complexity of the Holant problem in the bipartite setting.
Furthermore, if $T$ is orthogonal, then $(=_2)T^{\otimes 2} = T\transpose I T = I$, so it preserves binary equality.
This means that an orthogonal holographic transformation can be used freely in the standard setting.
\begin{corollary}
Suppose $T$ is an orthogonal matrix, $T\transpose T = I$, and let $\Omega = (G, \mathcal{F}, \pi)$ be a signature grid.
Under a holographic transformation by $T$, we get a new signature grid $\Omega' = (G, T \mathcal{F}, \pi)$ and $\hol_{\Omega} = \hol_{\Omega'}$.
\end{corollary}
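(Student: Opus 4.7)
The plan is to reduce the general (not necessarily bipartite) statement to the bipartite setting covered by Valiant's Holant Theorem, exploiting the observation already noted in the excerpt that an orthogonal $T$ fixes the binary equality signature.

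First I would turn $\Omega = (G, \mathcal{F}, \pi)$ into an equivalent bipartite signature grid $\tilde{\Omega}$ by the standard subdivision trick described just before the statement: replace every edge $e \in E(G)$ with a length-two path whose new middle vertex is labelled by the binary equality $(=_2)$. Inserting $(=_2)$ along an edge does not change the sum defining $\hol$, since for every edge assignment only the terms where the two halves agree contribute, and on such terms $(=_2)$ evaluates to $1$. Hence $\hol_{\Omega} = \hol_{\tilde{\Omega}}$, where $\tilde{\Omega} = (\tilde{H}; \{=_2\} \,|\, \mathcal{F}; \tilde{\pi})$ is a bipartite signature grid with the new subdivision vertices on one side and the original vertices on the other.

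Next I would apply the holographic transformation by $S := T^{-1} = T\transpose$ (which is also orthogonal) to the bipartite grid $\tilde{\Omega}$. By Valiant's Holant Theorem, this yields
\[
\hol_{\tilde{\Omega}} \;=\; \hol_{\tilde{\Omega}'},\qquad \tilde{\Omega}' \;=\; (\tilde{H};\; \{=_2\} S \,|\, S^{-1}\mathcal{F};\; \tilde{\pi}).
\]
On the right-hand side we get $S^{-1}\mathcal{F} = T\mathcal{F}$, exactly the target signature set. On the left-hand side, the binary equality transforms as a matrix by $(=_2)\,S^{\otimes 2} = S\transpose I\, S = T T\transpose = I = (=_2)$, using that $T$ (equivalently $S$) is orthogonal. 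So every middle vertex in $\tilde{\Omega}'$ is still labelled by $(=_2)$.

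Finally I would undo the subdivision: since each middle vertex of $\tilde{\Omega}'$ carries $(=_2)$, contracting the two incident edges back to a single edge preserves the Holant sum (again by the contribution of $(=_2)$). This returns a signature grid on the original graph $G$ with each vertex labelled by the corresponding transformed signature in $T\mathcal{F}$, which is exactly $\Omega' = (G, T\mathcal{F}, \pi)$. Chaining equalities gives $\hol_{\Omega} = \hol_{\tilde{\Omega}} = \hol_{\tilde{\Omega}'} = \hol_{\Omega'}$, as required. The proof is essentially a bookkeeping exercise; the only non-routine step is the identity $(=_2)S^{\otimes 2} = (=_2)$, which is precisely where orthogonality of $T$ is used, and this is the sole place that distinguishes orthogonal transformations from general invertible ones in the standard (non-bipartite) setting.
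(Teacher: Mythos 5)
Your proposal is correct and follows exactly the route the paper intends: subdivide each edge with $(=_2)$ to pass to the bipartite setting, invoke Valiant's Holant Theorem, and use orthogonality via $(=_2)S^{\otimes 2} = S\transpose S = I$ to see that the equality signatures are unchanged, then contract back. Your choice of transforming by $S = T^{-1}$ so that the vertex side becomes exactly $T\mathcal{F}$ is a careful handling of the orientation convention that the paper glosses over, but it is the same argument.
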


\subsection{Notation} \label{subsec:notations}
For two nonzero vectors $\mathbf{x}, \mathbf{y} \in \mathbb{C}^n$, we write $\mathbf{x} \sim \mathbf{y}$ to denote projective equality, i.e. $\mathbf{x} = \lambda \mathbf{y}$ for some nonzero $\lambda \in \mathbb{C}$.
For two tuples of vectors $X = (\mathbf{x}_1, \ldots, \mathbf{x}_m)$ and $Y = (\mathbf{y}_1, \ldots, \mathbf{y}_m)$, we write $X \sim Y$ if $\mathbf{x}_i \sim \mathbf{y}_i$ for all $1 \le i \le m$ after some reordering.
Throughout this paper, the symbol $\langle \mathbf{u}, \mathbf{v} \rangle$ for $\mathbf{u}, \mathbf{v} \in \mathbb{C}^n$ denotes the dot product,
i.e. $\langle \mathbf{u}, \mathbf{v} \rangle = \sum u_i v_i$.
We say $\mathbf{u}, \mathbf{v} \in \mathbb{C}^n$ are orthogonal if $\langle \mathbf{u}, \mathbf{v} \rangle = 0$

For the ease of notation, we do not distinguish between column vectors and row vectors in this paper when the intention is clear from the context.

A signature $\mathbf{F}$ of arity $k$ is \textit{symmetric} if $\mathbf{F}(x_1, \ldots, x_k) = \mathbf{F}(x_{\sigma(1)}, \ldots, x_{\sigma(k)})$ for all $\sigma \in S_k$, the symmetric group.
It can be shown that a symmetric signature is degenerate if and only if $\mathbf{F} = \mathbf{u}^{\otimes k}$ for some unary $\mathbf{u}$.
In this paper, if not further specified, a signature $\mathbf{F}$ is assumed to be real-valued, symmetric, and on domain $3$.

We consider a signature $\mathbf{F}$ and its nonzero multiple $c \mathbf{F}$ as the same signature, since replacing $\mathbf{F}$ by $c \mathbf{F}$ only introduces a easily computable global factor in the Holant value.

A symmetric signature $\mathbf{F}$ on $k$ Boolean variables $\{0, 1\}$ can be expressed as $[f_0, f_1, \ldots, f_k]$ where $f_i$ is the value of $\mathbf{F}$ 
on inputs of Hamming weight $i$.
In this paper, we focus on signatures on domain size $3$, and we use the symbols $\{\db, \dg, \dr\}$ to denote the domain elements.
%
A binary signature $\mathbf{F}$ (not necessarily symmetric) can be expressed as a $\lvert D \rvert \times \lvert D \rvert$ matrix $M_{\mathbf{F}}$, where the entry $(i, j) \in D \times D$
is the value of $\mathbf{F}(i, j)$.
For the ease of notation, we use the term matrix and binary signature interchangeably, and use $\mathbf{F}$ to refer to both a signature and its matrix $M_{\mathbf{F}}$.
To fix an ordering, binary signature on domain $3$ is expressed as the following:
\[
  \mathbf{F} = \begin{bmatrix}
    f_{\db\db} & f_{\db \dg} & f_{\db \dr} \\
    f_{\dg \db} & f_{\dg \dg} & f_{\dg \dr} \\
    f_{\dr \db} & f_{\dr \dg} & f_{\dr \dr} 
  \end{bmatrix} \, .
\]
If $\mathbf{F}$ is a symmetric signature, then $\mathbf{F}$ is a symmetric matrix.

Let $\mathbf{G}$ be a binary signature and $\mathbf{F}$ be a symmetric signature of arity $k \ge 2$.
We use $\mathbf{G}^{\otimes k} \mathbf{F}$ to denote the gadget constructed by attaching a $\mathbf{G}$ at the edges of $\mathbf{F}$.
An example for $k = 3$ is shown in \cref{fig:gf}.
\begin{figure}
\centering
      \begin{tikzpicture}[scale=0.7]
        \node[mycirc, label=right:{$\mathbf{F}$}] (f1) at (19, 0) {};
        \node[mycirc, label=above:{$\mathbf{G}$}] (f2) at (17, 0) {};

        \node[mycirc, label=above:{$\mathbf{G}$}] (g11) at (17, 2) {};
        \node[mycirc, label=above:{$\mathbf{G}$}] (g21) at (17, -2) {};

        \draw (f1) to[out=90, in=0] (g11) -- (16, 2);
        \draw (f1) to[out=-90, in=0] (g21) -- (16, -2);

        \draw (f1) -- (f2) -- (16, 0);
  \end{tikzpicture}
  \caption{Gadget $\mathbf{G}^{\otimes 3} \mathbf{F}$}\label{fig:gf}
\end{figure}
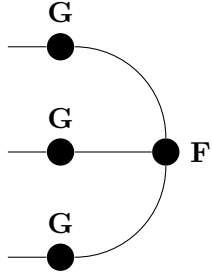
If $\mathbf{F}$ is written in a tensor form, i.e. $\mathbf{F} = \mathbf{v}_1^{\otimes k} + \cdots + \mathbf{v}_s^{\otimes k}$ for $\mathbf{v}_i \in \mathbb{C}^{\lvert D \rvert}$
we can easily check that $\mathbf{G}^{\otimes k} \mathbf{F} = (\mathbf{G} \mathbf{v}_1)^{\otimes k} + \cdots + (\mathbf{G} \mathbf{v}_s)^{\otimes k}$.
This gadget construction will be used throughout the paper.

Another gadget construction common in this paper is connecting a unary signature.
For the ease of notation, we identify a vector $\mathbf{u} \in \mathbb{C}^{\lvert D \rvert}$ with a unary signature on domain $D$.
Let $\mathbf{u} \in \mathbb{C}^{\lvert D \rvert}$ and $\mathbf{F}$ be a symmetric signature on domain $D$ of arity $k$.
Then, $\langle \mathbf{F}, \mathbf{u} \rangle$ is the arity $k - 1$ gadget obtained by connecting $\mathbf{u}$ to any edge of $\mathbf{F}$.
Since $\mathbf{F}$ is symmetric, the choice of the edge does not matter.

We use $\hol_2$ to denote the Holant problem on Boolean domain $\{0, 1\}$, and $\hol_3$ to denote the Holant problem on domain $\{\db, \dg, \dr\}$.
We say two sets of signatures $\mathcal{F}$ and $\mathcal{G}$ are \textit{compatible} if $\hol(\mathcal{F} \cup \mathcal{G})$ is tractable.

For a domain $3$ signature, we use the symbol $\mathbf{F}\domres{i, j}$ to denote the domain $2$ signature obtained by restricting the inputs of $\mathbf{F}$ to be from $\{i, j\} \subset \{\db, \dg, \dr\}$.
We extend this notation to set of signatures $\mathcal{F}$: $\mathcal{F}\domres{i, j} := \{ \mathbf{F}\domres{i, j} : \mathbf{F} \in \mathcal{F} \}$.
When we take a domain restriction of a domain $3$ signature to Boolean domain, we will identify the domains in the following way:
\begin{itemize}
  \item $\mathbf{F}\domres{\db, \dg}$ is viewed as identifying $\db$ as $0$ and $\dg$ as $1$.
  \item $\mathbf{F}\domres{\db, \dr}$ is viewed as identifying $\db$ as $0$ and $\dr$ as $1$.
  \item $\mathbf{F}\domres{\dg, \dr}$ is viewed as identifying $\dg$ as $0$ and $\dr$ as $1$.
\end{itemize}

Let $\mathbf{F}$ be a domain $3$ signature.
We define $\supp \mathbf{F}$ to be the set of inputs for which $\mathbf{F}$ is nonzero.
We say $\mathbf{F}$ is \strspt (a signature defined essentially on a Boolean domain) if $\supp \mathbf{F} \subseteq \{\db, \dg\}^*$, $\supp \mathbf{F} \subseteq \{\db, \dr\}^*$ or $\supp \mathbf{F} \subseteq \{\dg, \dr\}^*$.
Note that if $\supp \mathbf{F} = \{\db\}^*$ for a symmetric $\mathbf{F}$, then $\mathbf{F} = a \mathbf{e_1}^{\otimes n}$, and thus degenerate.

We say $\mathbf{F}$ is \textit{domain separated} to $\{\db, \dg\}$ and $\{\dr\}$, written $\mathbf{F}$ is $\db \dg | \dr$, if $\supp \mathbf{F} \subseteq \{\db, \dg\}^* \cup \{\dr\}^*$.
In other words, $\mathbf{F}$ is zero on inputs that take values from both $\{\db, \dg\}$ and $\{\dr\}$.
It is possible that $\supp \mathbf{F} \subseteq \{\db\}^* \cup \{\dg\}^* \cup \{\dr\}^*$, in which case $\mathbf{F} =a  \mathbf{e_1}^{\otimes n} + b \mathbf{e_2}^{\otimes n} + c \mathbf{e_3}^{\otimes n}$ for some $a, b, c \in \mathbb{R}$.
We call such $\mathbf{F}$ a \geneq signature.
We similarly define domain separation to $\{i, j\}$ and $\{k\}$ and write $ij | k$ for any distinct $i, j, k \in \{\db, \dg, \dr\}$.
We also refer to a matrix $\db \dg | \dr$ if it can be viewed as a $\db \dg | \dr$ binary signature.
For example, $M = \begin{bsmallmatrix}
  * & * & 0 \\
  * & * & 0 \\
  0 & 0 & *
\end{bsmallmatrix}$ is a $\db \dg | \dr$ matrix.

Let $M$ be a $\db \dg | \dr$ matrix.
We can easily check the following two facts.
If $\mathbf{F}$ is a $\db \dg | \dr$ signature of arity $k$, then $M^{\otimes k} \mathbf{F}$ is $\db \dg | \dr$ as well.
If $\mathbf{G}$ is a signature of arity $k$ such that $\supp \mathbf{G} \subseteq \{\db, \dg\}^*$, then $\supp M^{\otimes k} \mathbf{G} \subseteq \{\db, \dg\}^*$ as well.


Denote by $\mathcal{E}$  the set of all functions $\mathbf{F}$ such that if $\mathbf{F}$ has arity $n$, then $\supp \mathbf{F} \subseteq \{a, b, c\}$ for $a = (a_1, \ldots, a_n), b = (b_1, \ldots, b_n), c = (c_1, \ldots, c_n) \in \{\db, \dg, \dr\}^n$ such that for all $1 \le i \le n$, $a_i, b_i, c_i$ are all distinct.
We think of $\mathcal{E}$ as a generalized form of \geneq function to not necessarily symmetric functions.

We use $\mathcal{D}$ to denote the set of $3 \times 3$ matrices such that the first two columns are linearly dependent and also the first two rows are linearly dependent.
In other words,
\[
  \mathcal{D} = \left\{ \begin{bmatrix}
      - & x \mathbf{v} & - \\
      - & y \mathbf{v} & - \\
      - & \mathbf{u} & -
      \end{bmatrix} = \begin{bmatrix}
      \vert & \vert & \vert \\
      x' \mathbf{v}' & y' \mathbf{v}' & \mathbf{u}' \\
      \vert & \vert & \vert
  \end{bmatrix} : x, y, x', y' \in \mathbb{R}, \mathbf{v}, \mathbf{u}, \mathbf{v}', \mathbf{u}' \in \mathbb{R}^3 \right\}
  \,.
\]
We can easily check that $\mathcal{D}$ is closed under multiplication.
If $M$ is a $\db \dg | \dr$ matrix, we can see that $M \mathcal{D}, \mathcal{D} M \subseteq \mathcal{D}$.
Also, the symmetric matrices in $\mathcal{D}$ have the following form:
$\begin{bsmallmatrix}
  a x^2 & a xy & bx \\
  axy & ay^2 & by \\
  bx & by & c
\end{bsmallmatrix}$ 
for some $a, b, c, x, y \in \mathbb{R}$.

We use $\genperm$ to denote the $3 \times 3$ generalized permutation matrices, which are matrices such that each row and column contains at most one nonzero real value.
We use $\octgroup$ to denote the symmetry group of an octahedron.
As a subgroup of the real $3 \times 3$ orthogonal group $O(3)$, $\octgroup$ consists of the matrices
\[
\begin{bmatrix}
    \epsilon_1 & 0 & 0 \\
    0 & \epsilon_2 & 0 \\
    0 & 0 & \epsilon_3
\end{bmatrix},
\begin{bmatrix}
    \epsilon_1 & 0 & 0 \\
    0 & 0 & \epsilon_2 \\
    0 & \epsilon_3 & 0
\end{bmatrix},
\begin{bmatrix}
    0 & \epsilon_1 & 0 \\
    \epsilon_2 & 0 & 0 \\
    0 & 0 & \epsilon_3
\end{bmatrix},
\begin{bmatrix}
    0 & \epsilon_1 & 0 \\
    0 & 0 & \epsilon_2 \\
    \epsilon_3 & 0 & 0 
\end{bmatrix},
\begin{bmatrix}
    0 & 0 & \epsilon_1 \\
    \epsilon_2 & 0 & 0 \\
    0 & \epsilon_3 & 0
\end{bmatrix},
\begin{bmatrix}
    0 & 0 & \epsilon_1 \\
    0 &  \epsilon_2 & 0\\
     \epsilon_3 & 0 & 0
\end{bmatrix}
\]
for $\epsilon_1, \epsilon_2, \epsilon_3 \in \{1, -1\}$.
Thus, $\octgroup$ has order $48$.
The symmetric matrices in $\octgroup$ are of the form 
\[
  \begin{bmatrix}
      \epsilon_1 & 0 & 0 \\
      0 & 0 & \epsilon_2 \\
      0 & \epsilon_2 & 0
      \end{bmatrix}, 
       \begin{bmatrix}
      0 & \epsilon_2 & 0 \\
      \epsilon_2 & 0 & 0 \\
      0 & 0 & \epsilon_1
      \end{bmatrix}, 
       \begin{bmatrix}
      0 & 0 & \epsilon_2 \\
      0 & \epsilon_1 & 0 \\
      \epsilon_2 & 0 & 0
  \end{bmatrix} 
\]
for $\epsilon_1, \epsilon_2 \in \{1, -1\}$.

We call a signature/matrix of the form $\begin{bsmallmatrix}
  0 & 0 & a \\
  0 & 0 & b \\
  c & d & 0
\end{bsmallmatrix}$ for some $a, b, c, d \in \mathbb{R}$ as a \swbg signature/matrix.
The intuition is that a signature of this form swaps the domains $\{\db, \dg\}$ and $\{\dr\}$ in a degenerate way.
Similarly, we call $\begin{bsmallmatrix}
  0 & a & 0 \\
  b & 0 & c \\
  0 & d & 0
\end{bsmallmatrix}$ as \swbr and $\begin{bsmallmatrix}
  0 & a & b \\
  c & 0 & 0 \\
  d & 0 & 0
\end{bsmallmatrix}$ as \swgr.

\subsection{Symmetric Tensor Rank}
We follow \cite{comon_symmetric_2008}.
$\mathsf{S}^k(\mathbb{C}^n)$ is the set of order-$k$ symmetric tensor over $\mathbb{C}^n$.
In our setting, we may think of it as the set of symmetric complex-valued $k$-arity signatures over a domain of size $n$.
\begin{definition}[Definition 4.1 of \cite{comon_symmetric_2008}]
  The symmetric rank of $A \in \mathsf{S}^k(\mathbb{C}^n)$ is defined as 
  \[
    \rank(A) := \min \{s : A = \sum_{i = 1}^s \mathbf{y}_i^{\otimes k}\}
  \]
\end{definition}
By Lemma 4.2 in \cite{comon_symmetric_2008}, symmetric rank always exists.

\begin{corollary}[Corollary 4.4 of \cite{comon_symmetric_2008}]\label{cor:pairiwse-linearly-independent-tensor}
  Let $\mathbf{v}_1, \ldots, \mathbf{v}_r \in \mathbb{C}^n$ be $r$ pairwise linearly independent vectors.
  Then, for any $k \ge r - 1$, the rank-$1$ symmetric tensors
  \[
    \mathbf{v}_1^{\otimes k}, \ldots, \mathbf{v}_r^{\otimes k} \in \mathsf{S}^k(\mathbb{C}^n)
  \]
  are linearly independent.
\end{corollary}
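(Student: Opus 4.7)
The plan is to use the standard duality between symmetric tensors in $\mathsf{S}^k(\mathbb{C}^n)$ and homogeneous polynomials of degree $k$ on $\mathbb{C}^n$, under which a rank-1 symmetric tensor $\mathbf{v}^{\otimes k}$ pairs with a homogeneous polynomial $P$ of degree $k$ to give the scalar $P(\mathbf{v})$. Suppose for contradiction that $\sum_{i=1}^{r} c_i \mathbf{v}_i^{\otimes k} = 0$ with not all $c_i$ equal to zero. Then for every homogeneous $P$ of degree $k$ we must have $\sum_{i=1}^{r} c_i P(\mathbf{v}_i) = 0$. So it suffices to exhibit, for each $j \in \{1, \ldots, r\}$, a homogeneous polynomial $P_j$ of degree exactly $k$ with $P_j(\mathbf{v}_j) \neq 0$ and $P_j(\mathbf{v}_i) = 0$ for all $i \neq j$; pairing with the relation then forces $c_j = 0$.

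The key construction is Lagrange-style. For each pair $i \neq j$, the pairwise linear independence of $\mathbf{v}_i$ and $\mathbf{v}_j$ guarantees a linear functional $m_{i,j} \in (\mathbb{C}^n)^*$ with $m_{i,j}(\mathbf{v}_i) = 0$ and $m_{i,j}(\mathbf{v}_j) \neq 0$: take any hyperplane through $\mathbf{v}_i$ that avoids $\mathbf{v}_j$, which exists precisely because $\mathbf{v}_j \notin \operatorname{span}\{\mathbf{v}_i\}$. Fix $j$ and set
\[
Q_j(\mathbf{x}) \;:=\; \prod_{i \neq j} m_{i,j}(\mathbf{x}),
\]
a homogeneous polynomial of degree $r-1$. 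By construction $Q_j(\mathbf{v}_i) = 0$ for each $i \neq j$ (the factor $m_{i,j}$ kills it), while $Q_j(\mathbf{v}_j) = \prod_{i \neq j} m_{i,j}(\mathbf{v}_j) \neq 0$ as a product of nonzero scalars.

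To meet the required degree $k$, choose any linear form $\ell_j$ with $\ell_j(\mathbf{v}_j) \neq 0$ and define
\[
P_j(\mathbf{x}) \;:=\; \ell_j(\mathbf{x})^{k-(r-1)}\, Q_j(\mathbf{x}).
\]
The hypothesis $k \geq r-1$ is exactly what makes the exponent a nonnegative integer, so $P_j$ is a genuine homogeneous polynomial of degree $k$. Since $Q_j$ already vanishes on each $\mathbf{v}_i$ with $i \neq j$, so does $P_j$; and $P_j(\mathbf{v}_j) = \ell_j(\mathbf{v}_j)^{k-(r-1)} Q_j(\mathbf{v}_j) \neq 0$. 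Pairing against the hypothetical linear relation gives $c_j P_j(\mathbf{v}_j) = 0$, so $c_j = 0$. Since $j$ was arbitrary, all coefficients vanish, contradicting nontriviality.

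There is no real obstacle beyond correctly setting up the tensor-polynomial pairing; the only place where the bound $k \geq r-1$ is actually consumed is the step of inflating the naturally degree-$(r-1)$ interpolating polynomial $Q_j$ up to degree $k$ by multiplying by a power of a linear form nonvanishing on $\mathbf{v}_j$. One sanity check worth recording: when $k < r-1$, the conclusion can fail (for instance, taking $k=1$ with three pairwise linearly independent vectors in $\mathbb{C}^2$), so the degree condition is tight and the argument must, and does, use it essentially.
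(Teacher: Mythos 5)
Your proof is correct. Note, however, that the paper does not prove this statement at all: it is quoted verbatim as Corollary 4.4 of \cite{comon_symmetric_2008}, so there is no in-paper argument to compare against. Your route --- dualizing against homogeneous forms and building, for each $j$, a degree-$k$ product of linear forms $P_j=\ell_j^{\,k-(r-1)}\prod_{i\ne j}m_{i,j}$ that vanishes at every $\mathbf{v}_i$, $i\ne j$, but not at $\mathbf{v}_j$ --- is a clean, self-contained Lagrange-interpolation/apolarity argument, and it can even be phrased without any abstract pairing: applying the functional $m_{i_1,j}\otimes\cdots\otimes\ell_j\otimes\cdots$ (a tensor product of $k$ linear forms) to the relation $\sum_i c_i\mathbf{v}_i^{\otimes k}=0$ yields $c_j\prod(\text{nonzero scalars})=0$ directly. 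All the ingredients are justified: pairwise linear independence forces each $\mathbf{v}_i\ne 0$ and guarantees a functional killing $\mathbf{v}_i$ but not $\mathbf{v}_j$ (otherwise $\mathbf{v}_j$ would lie in the double annihilator of $\operatorname{span}\{\mathbf{v}_i\}$), and the hypothesis $k\ge r-1$ is consumed exactly once, to pad $Q_j$ from degree $r-1$ to degree $k$; your tightness example in $\mathbb{C}^2$ confirms the bound cannot be dropped. In spirit this is close to the dual-covector manipulations the paper does use in its Proposition~\ref{prop:uniqueness-tensor-decomposition}, so the technique fits naturally with the surrounding material even though the paper itself delegates this particular fact to the cited reference.
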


\begin{lemma}[Lemma 5.1 of \cite{comon_symmetric_2008}]\label{lem:linearly-independent-tensor-rank}
  Let $\mathbf{y}_1, \ldots, \mathbf{y}_s \in \mathbb{C}^n$ be linearly independent and $k \ge 2$ an integer.
  Then, the symmetric tensor defined by $A := \sum_{i = 1}^{s} \mathbf{y}_i^{\otimes k}$ has $\rank(A) = s$.
\end{lemma}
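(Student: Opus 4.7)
The plan is to split the bound in two. The upper bound $\rank(A) \le s$ is immediate from the definition, since the presented sum already exhibits a symmetric decomposition of length $s$. For the lower bound $\rank(A) \ge s$, I would use a matrix flattening argument. Since $k \ge 2$, pick any split $k = k_1 + k_2$ with $k_1, k_2 \ge 1$, and flatten $A$ into an $n^{k_1} \times n^{k_2}$ matrix $M_A$ by treating the first $k_1$ tensor slots as row indices and the last $k_2$ as column indices. Under this flattening each rank-one symmetric tensor $\mathbf{y}^{\otimes k}$ becomes the rank-one matrix $\mathbf{y}^{\otimes k_1} (\mathbf{y}^{\otimes k_2})\transpose$, and the identity $A = \sum_i \mathbf{y}_i^{\otimes k}$ flattens to $M_A = U V\transpose$, where $U$ and $V$ are the matrices whose columns are $\mathbf{y}_i^{\otimes k_1}$ and $\mathbf{y}_i^{\otimes k_2}$ respectively.

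The key technical step is to promote linear independence of $\mathbf{y}_1, \ldots, \mathbf{y}_s$ to linear independence of their $m$-th tensor powers $\mathbf{y}_1^{\otimes m}, \ldots, \mathbf{y}_s^{\otimes m}$ for every $m \ge 1$. My plan here is a dual-functional argument: if $\sum_i c_i \mathbf{y}_i^{\otimes m} = 0$, then contracting against $\phi^{\otimes m}$ for any $\phi \in (\mathbb{C}^n)^*$ yields $\sum_i c_i \phi(\mathbf{y}_i)^m = 0$. Because the $\mathbf{y}_i$ are linearly independent, one can extend them to a basis and, for each fixed $i_0$, pick a functional $\phi$ with $\phi(\mathbf{y}_i) = \delta_{i, i_0}$; this forces $c_{i_0} = 0$, and letting $i_0$ vary kills every coefficient.

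Applying this fact with $m = k_1$ and $m = k_2$ shows that both $U$ and $V$ have full column rank $s$, so $\rank(M_A) = s$. On the other hand, any competing decomposition $A = \sum_{j=1}^t \mathbf{z}_j^{\otimes k}$ flattens to $M_A = \sum_j \mathbf{z}_j^{\otimes k_1} (\mathbf{z}_j^{\otimes k_2})\transpose$, which has rank at most $t$. Combining these two observations gives $t \ge s$, and hence $\rank(A) \ge s$, as desired.

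I expect the main subtlety to be the tensor-power linear independence claim, which genuinely requires full linear independence rather than the weaker pairwise linear independence of \cref{cor:pairiwse-linearly-independent-tensor}; the latter would force the arity hypothesis $k \ge s - 1$, which is not available in this lemma. The dual-functional trick bypasses any such arity bound and yields linear independence for every $m \ge 1$, which is exactly what is needed to keep both factors of the flattening at full column rank and to make the rank comparison go through for every $k \ge 2$.
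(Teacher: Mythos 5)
Your proof is correct, but it takes a different route from the one the paper relies on. The paper does not reprove this lemma; it imports it from \cite{comon_symmetric_2008}, and the argument it adapts (see the proof of \cref{prop:uniqueness-tensor-decomposition}) is a dual-covector \emph{contraction} argument: one applies functionals dual to the $\mathbf{y}_i$ (resp.\ the competing $\mathbf{z}_j$) to peel off tensor factors and reduce the order, deriving a contradiction with a shorter decomposition. You instead prove the lower bound by \emph{flattening}: split $k = k_1 + k_2$, write $M_A = UV\transpose$ with $U, V$ having columns $\mathbf{y}_i^{\otimes k_1}$, $\mathbf{y}_i^{\otimes k_2}$, show both have full column rank $s$ via your tensor-power independence claim, and note that any length-$t$ symmetric decomposition flattens to a matrix of rank at most $t$, hence $t \ge s$. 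All steps check out: the dual-functional argument (evaluate $\sum_i c_i \mathbf{y}_i^{\otimes m}$ against $\phi^{\otimes m}$ with $\phi(\mathbf{y}_i)=\delta_{i,i_0}$) does give linear independence of the $m$-th powers for every $m \ge 1$, and you are right that this is genuinely stronger than \cref{cor:pairiwse-linearly-independent-tensor} in the regime needed here, since that corollary would impose an arity bound. As for what each approach buys: your flattening argument is purely linear-algebraic, works uniformly for all $k \ge 2$, and in fact lower-bounds rank by a matrix rank (so it also bounds border rank); the contraction/peeling argument of \cite{comon_symmetric_2008}, which the paper mimics, yields finer structural information — it identifies the vectors of the two decompositions up to scalars, which is exactly what the paper needs for the uniqueness statement in \cref{prop:uniqueness-tensor-decomposition} and cannot be read off from a rank count alone.
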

We derive a simple result on the uniqueness by adapting the proof of \cref{lem:linearly-independent-tensor-rank}.
\begin{proposition}\label{prop:uniqueness-tensor-decomposition}
  Let $k \ge 3$.
  Let $\mathbf{y}_1, \ldots, \mathbf{y}_s, \mathbf{z}_1, \ldots, \mathbf{z}_s \in \mathbb{C}^n$ such that $\{\mathbf{y}_i\}_{1 \le i \le s}$ and $\{\mathbf{z}_i\}_{1 \le i \le s}$ are two sets of linearly independent vectors.
  Suppose
  \begin{equation}\label{eq:symmetric-sum-equal}
    \sum_{i = 1}^{s} \mathbf{y}_i^{\otimes k} = \sum_{i = 1}^{s} \mathbf{z}_i^{\otimes k} \, .
  \end{equation}
  Then, $(\mathbf{y}_1, \ldots, \mathbf{y}_s) \sim (\mathbf{z}_1, \ldots, \mathbf{z}_s)$.
\end{proposition}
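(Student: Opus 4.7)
The plan is to reduce \eqref{eq:symmetric-sum-equal} to an equation inside $\mathbb{C}^s$ involving a single invertible change-of-basis matrix $M$, and then extract the monomial (generalized permutation) structure of $M$ by a one-slot contraction combined with \cref{lem:linearly-independent-tensor-rank}.

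\emph{Span agreement.} First I would show $W:=\mathrm{span}(\mathbf{y}_1,\ldots,\mathbf{y}_s)=W':=\mathrm{span}(\mathbf{z}_1,\ldots,\mathbf{z}_s)$. Contracting both sides of \eqref{eq:symmetric-sum-equal} with a common vector $\mathbf{v}\in\mathbb{C}^n$ in $k-1$ of the $k$ slots yields
\[
\sum_{i=1}^s \langle \mathbf{y}_i,\mathbf{v}\rangle^{k-1}\mathbf{y}_i \;=\; \sum_{i=1}^s \langle \mathbf{z}_i,\mathbf{v}\rangle^{k-1}\mathbf{z}_i,
\]
so each side lies in $W\cap W'$. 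Since the standard bilinear form on $\mathbb{C}^n$ is non-degenerate and the $\mathbf{y}_i$ are linearly independent, the map $\mathbf{v}\mapsto(\langle\mathbf{y}_i,\mathbf{v}\rangle)_{1\le i\le s}$ is surjective onto $\mathbb{C}^s$; picking $\mathbf{v}_j$ with $\langle\mathbf{y}_i,\mathbf{v}_j\rangle=\delta_{ij}$ gives $\mathbf{y}_j\in W'$, and by symmetry $W=W'$.

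\emph{Reduction and the key contraction.} Identifying $W$ with $\mathbb{C}^s$ via the basis $\{\mathbf{z}_i\}$, write $\mathbf{y}_i\leftrightarrow c_i$, where $c_i$ is the $i$-th column of the resulting invertible matrix $M\in\mathrm{GL}_s(\mathbb{C})$. The tensor identity becomes
\[
\sum_{j=1}^s c_j^{\otimes k} \;=\; \sum_{j=1}^s \mathbf{e}_j^{\otimes k} \quad \text{in } (\mathbb{C}^s)^{\otimes k}.
\]
It suffices to show $M$ is monomial; then $c_j=\omega_j\mathbf{e}_{\sigma(j)}$ for some permutation $\sigma$, which unwinds to $\mathbf{y}_j\sim\mathbf{z}_{\sigma(j)}$. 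To this end I would contract once with $\mathbf{e}_\ell$, obtaining
\[
\sum_{j=1}^s (c_j)_\ell \, c_j^{\otimes(k-1)} \;=\; \mathbf{e}_\ell^{\otimes(k-1)}.
\]
Let $S_\ell=\{j:(c_j)_\ell\neq 0\}$ and choose $\omega_j\in\mathbb{C}$ with $\omega_j^{k-1}=(c_j)_\ell$ for each $j\in S_\ell$, so the LHS becomes $\sum_{j\in S_\ell}(\omega_j c_j)^{\otimes(k-1)}$, where the vectors $\omega_jc_j$ remain linearly independent. Since $k-1\ge 2$, \cref{lem:linearly-independent-tensor-rank} gives the LHS symmetric rank $|S_\ell|$, while the RHS has rank $1$. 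Therefore $|S_\ell|=1$ for every $\ell$, so each row of $M$ has exactly one nonzero entry, and invertibility of $M$ forces it to be monomial.

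\emph{Main obstacle.} The delicate point---and the reason the proposition requires $k\ge 3$---is that the rank-comparison step needs the contracted tensor to have order at least $2$, i.e.\ $k-1\ge 2$. For $k=2$ the statement genuinely fails: every orthogonal change of basis preserves $\sum_i\mathbf{e}_i^{\otimes 2}$, so no such monomial rigidity can hold. The contraction above is calibrated to the threshold $k\ge 3$, after which \cref{lem:linearly-independent-tensor-rank} immediately pins down the monomial structure of $M$ and yields the desired projective equality.
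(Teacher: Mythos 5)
Your proof is correct. The dual vectors $\mathbf{v}_j$ exist because the bilinear form is nondegenerate and the $\mathbf{y}_i$ are linearly independent, so the span-agreement step goes through (with $\delta_{ij}^{k-1}=\delta_{ij}$); the reduction to $\sum_j c_j^{\otimes k}=\sum_j \mathbf{e}_j^{\otimes k}$ in $(\mathbb{C}^s)^{\otimes k}$ is legitimate since both sides lie in $W^{\otimes k}$; and the one-slot contraction plus the rank comparison via \cref{lem:linearly-independent-tensor-rank} (applied to tensors of order $k-1\ge 2$, with the harmless rescaling by $(k-1)$-th roots) indeed forces $|S_\ell|=1$ for every $\ell$, so invertibility makes $M$ a generalized permutation matrix and $\mathbf{y}_j\sim\mathbf{z}_{\sigma(j)}$. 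This is a genuinely different organization from the paper's proof, although both rest on the same two tools: contraction against dual functionals and \cref{lem:linearly-independent-tensor-rank}. The paper argues vector by vector and inductively: applying a covector dual to $\mathbf{y}_1$ gives $\mathbf{y}_1^{\otimes(k-1)}=\sum_i c_i\,\mathbf{z}_i^{\otimes(k-1)}$, applying a covector dual to some $\mathbf{z}_i$ with $c_i\ne 0$ and iterating (here $k\ge 3$ enters as $k-2\ge 1$) yields $\mathbf{y}_1\sim\mathbf{z}_i$, and the rank lemma then shows the leftover multiple of $\mathbf{y}_1^{\otimes k}$ vanishes, so induction on $s$ proceeds. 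Your version trades the induction for a global statement: after proving the spans agree (a step the paper never needs), you exhibit the entire matching at once by showing the change-of-basis matrix is monomial, one contraction per row. What your route buys is a non-inductive argument in which the role of the threshold $k\ge 3$, and its genuine failure at $k=2$ (orthogonal invariance of $\sum_i\mathbf{e}_i^{\otimes 2}$), is completely transparent; what the paper's route buys is that it works directly in $\mathbb{C}^n$ without the preliminary span identification or change of basis.
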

\begin{proof}
By linear independence of $\mathbf{y}_1, \ldots, \mathbf{y}_s$, there exist covectors $\phi_1, \ldots, \phi_s \in (\mathbb{C}^n)^*$ that are dual to $\mathbf{y}_1, \ldots, \mathbf{y}_s$, i.e. 
$\phi_i(\mathbf{y}_j) = \delta_{ij}$ where $\delta_{ij}$ is the Kronecker delta.
Similarly, let $\psi_1, \ldots, \psi_s \in (\mathbb{C}^n)^*$ be covectors dual to $\mathbf{z}_1, \ldots, \mathbf{z}_s$.

Applying $\phi_1$ to both sides of \cref{eq:symmetric-sum-equal} gives
$\mathbf{y}_1 ^{\otimes k - 1} = \sum_{i = 1}^{s} c_i \mathbf{z}_i^{\otimes k - 1}$ for $c_i = \phi_1(\mathbf{z}_i)$.
Since the left side is nonzero, it must be the case that there exists some $i$ such that $c_i \ne 0$.
Applying $\psi_i$ to both sides gives
$\psi_i(\mathbf{y}_1) \mathbf{y}_1^{\otimes k - 2} = c_i \mathbf{z}_i^{\otimes k - 2}$.
Since the right side is nonzero, it must be the case that $\psi_i(\mathbf{y}_1) \ne 0$.
Since $k - 2 \ge 1$ by assumption, we may repeatedly apply $\phi_1$ to both sides until we have $a \mathbf{y}_1 = b \mathbf{z}_i$ for some nonzero $a, b \in \mathbb{C}$, so $\mathbf{y}_1 \sim \mathbf{z}_i$.

After reordering, we may assume $i = 1$ and move $\mathbf{z}_1^{\otimes k}$ to the other side in \cref{eq:symmetric-sum-equal} to obtain
\begin{equation}\label{eq:after-moving-z}
\sum_{i = 2}^{s} \mathbf{z}_i^{\otimes k} = 
\mathbf{y}_1^{\otimes k} - \mathbf{z}_1^{\otimes k} + \sum_{i = 2}^s \mathbf{y}_i^{\otimes k} = c \mathbf{y}_1^{\otimes k} + \sum_{i = 2}^{s} \mathbf{y}_i^{\otimes k}
\end{equation}
for some $c \in \mathbb{C}$ since $\mathbf{y}_1 \sim \mathbf{z}_1$.
If $c \ne 0$, the tensor on the right hand side of \cref{eq:after-moving-z} has rank $s$ while the left hand side has rank $s - 1$ by \cref{lem:linearly-independent-tensor-rank}.
This is a contradiction, so $c = 0$.
(Note that this does not directly imply $\mathbf{y}_1 = \mathbf{z}_1$ since it may be the case that for even $k$, $\mathbf{y}_1 = - \mathbf{z}_1$.)

Since $c = 0$ in \cref{eq:after-moving-z}, we may complete the proof by induction.
\end{proof}

\subsection{Known Dichotomy Theorems}
A $\holbs(\mathcal{F})$ dichotomy on a set of symmetric, complex-valued signatures $\mathcal{F}$ is known.
\begin{definition}[Definition 2.9 in \cite{cai_complexity_2017}]\label{def:boolean-signature-type-i-ii}
  A signature $[x_0, x_1, \ldots, x_n]$, where $n \ge 2$, has type $\typei(a, b)$, if there exist $a$ and $b$ (not both $0$), such that $a x_k + b x_{k+1} = a x_{k+2}$ for $0 \le k \le n -2$.
  We say it it is of type $\typeii$, if $x_k = - x_{k+2}$ for $0 \le k \le n - 2$.
\end{definition}
\begin{theorem}[Theorem 2.12 in \cite{cai_complexity_2017}]\label{thm:dich-sym-Boolean}
Let $\mathcal{F}$ be a set of nondegenerate symmetric signatures over $\mathbb{C}$ in Boolean variables.
Then $\holbs(\mathcal{F})$ is computable in polynomial time for the following three classes of $\mathcal{F}$.
In all other cases, $\holbs(\mathcal{F})$ is \#\P-hard.
\begin{enumerate}
  \item Every signature in $\mathcal{F}$ is of arity $\le 2$.
  \item There exists $a$ and $b$ (not both $0$, depending only on $\mathcal{F}$), such that every signature in $\mathcal{F}$ either (1) has type $\typei(a,b)$ or (2) has arity $2$ and is of the form $[2a \lambda, b \lambda, -2 a \lambda]$.
  \item Every signature in $\mathcal{F}$ either (1) has type $\typeii$ or (2) has arity $2$ and is of the form $[\lambda, 0, \lambda]$.
\end{enumerate}
\end{theorem}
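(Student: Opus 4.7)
The plan is to split the argument into establishing tractability for each of the three classes and showing $\holbs(\mathcal F)$ is \sph for every $\mathcal F$ falling outside all three.

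For tractability I would use a holographic (or direct) reduction in each case. Class $1$ is immediate: with every signature of arity at most $2$ and all unary signatures available, the Holant factors over the graph as a combination of path-matrix products and cycle traces and is computable in polynomial time. For Class $2$ ($\typei(a,b)$), the recurrence $ax_k + bx_{k+1} = ax_{k+2}$ has characteristic polynomial $at^2 - bt - a$; when $a \neq 0$ its roots $\alpha,\beta$ satisfy $\alpha\beta = -1$, so the vectors $(1,\alpha)$ and $(1,\beta)$ are orthogonal. Every $\typei(a,b)$ signature of arity $n$ decomposes as $c_1(1,\alpha)^{\otimes n} + c_2(1,\beta)^{\otimes n}$, while the exceptional binary $[2a\lambda, b\lambda, -2a\lambda]$ decomposes as $a\lambda\bigl((1,\alpha)\otimes(1,\beta) + (1,\beta)\otimes(1,\alpha)\bigr)$. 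An orthogonal holographic transformation sending $(1,\alpha),(1,\beta)$ to the standard basis therefore converts the entire set into generalized equalities together with the disequality $(\neq_2)$, for which $\holbs$ is polynomial time. For Class $3$ the recurrence $x_k = -x_{k+2}$ has roots $\pm i$, and the analogous transformation $Z = \tfrac{1}{\sqrt 2}\left[\begin{smallmatrix}1&1\\i&-i\end{smallmatrix}\right]$ plays the same role, with $[\lambda,0,\lambda]$ becoming (up to scalar) an edge signature absorbed by the generalized-equality family.

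For hardness the strategy is to first establish a single-signature base case and then bootstrap to sets. The base case is: any nondegenerate symmetric $f$ of arity $\geq 3$ which is neither $\typei(a,b)$ for any $(a,b)$ nor $\typeii$ makes $\holbs(f)$ \sph. Its proof attaches unary signatures to $f$ to realize a parametric family of binary signatures and then uses polynomial interpolation in an eigenvalue parameter to reduce from a known \sph problem, such as counting graph homomorphisms with a hard weight matrix or counting perfect matchings in a reduced form. Given the base case, for a general $\mathcal F$ I would argue: if some $f \in \mathcal F$ of arity $\geq 3$ lies outside $\typei \cup \typeii$, the base case applies directly. Otherwise every higher-arity $f$ is $\typei(a_f,b_f)$ or $\typeii$, and I would show that if any two are \emph{incompatible} (non-proportional $(a,b)$ pairs, or a $\typei/\typeii$ mix, or a mismatched exceptional binary), a gadget combining them via tensor contraction produces a signature violating both recurrences, reducing back to the base case.

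The main obstacle I anticipate is the boundary analysis driven by the arity-$2$ exceptions $[2a\lambda, b\lambda, -2a\lambda]$ and $[\lambda,0,\lambda]$, each of which is compatible with exactly one class yet forces \sph-ness in any mismatched combination. To handle this I would interpolate on the eigenvalues of an attached binary matrix to generate a dense family of binaries, and then use this family to transform any supposedly tractable higher-arity signature into one whose recurrence violates the putative $\typei$ or $\typeii$ condition, at which point the single-signature base case takes over. Combining these gadget constructions with exhaustive enumeration of the pairwise interactions among the three classes yields the full set-version \sph for every $\mathcal F$ that does not lie entirely within one of the three tractable families.
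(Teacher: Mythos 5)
A framing point first: the paper does not prove this statement at all --- it is quoted as Theorem 2.12 of \cite{cai_complexity_2017} and used as a black box (only the tensor-decomposition reformulations in \cref{prop:boolean-type-i-tensor-decomposition} and the following proposition are derived here). So your proposal can only be compared with the known proof in the cited work, whose overall architecture (normalize tractable signatures via a holographic transformation to generalized equalities, prove a single-signature hardness base case, then handle sets by pairwise compatibility of the $(a,b)$ parameters) your outline does mirror at a high level.

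There are, however, two genuine gaps. (1) Tractability over $\mathbb{C}$: your claim that every nondegenerate $\typei(a,b)$ signature of arity $n$ equals $c_1(1,\alpha)^{\otimes n}+c_2(1,\beta)^{\otimes n}$ with orthogonal, normalizable vectors fails when the characteristic polynomial $at^2-bt-a$ has a repeated root, i.e.\ $b^2+4a^2=0$ (e.g.\ $b=2ia$). The signature $x_k=k\,i^k$, i.e.\ $[0,i,-2,-3i,\ldots]$, is nondegenerate, satisfies the recurrence $x_k+2i\,x_{k+1}=x_{k+2}$, admits no rank-two symmetric tensor decomposition, and its associated vector $(1,i)$ is isotropic, so no complex orthogonal holographic transformation carries it to a standard basis vector. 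These "vanishing-type" signatures are precisely the exceptional complex forms alluded to in this paper's introduction (the reason the present authors restrict to real-valued signatures on domain $3$), they belong to the tractable class 2 of the theorem you are proving, and your algorithm silently excludes them; a complete proof needs a separate tractability argument for them (and, relatedly, your degenerate subcase $a=0$ is skipped). (2) Hardness: everything after "the base case is" is a statement of intent rather than an argument. That every nondegenerate symmetric signature of arity $\ge 3$ outside $\typei\cup\typeii$ makes $\holbs$ \sph is itself the heart of the cited theorem and requires explicit gadget and interpolation constructions (which unary attachments, which binary family, which known \sph problem, and why interpolation succeeds for all complex parameter values, including unit-norm eigenvalues where standard interpolation arguments break). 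Likewise, "a gadget combining them via tensor contraction produces a signature violating both recurrences" and the treatment of mismatched exceptional binaries $[2a\lambda,b\lambda,-2a\lambda]$ and $[\lambda,0,\lambda]$ must actually be exhibited case by case. As written, the hardness half assumes exactly what has to be proved.
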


A $\holts(\mathbf{F})$ dichotomy for a single symmetric, complex-valued signature of arity $3$ is known, but it has an exceptional tractable case, the case 3.
\begin{theorem}[Theorem 3.1, 3.2 in \cite{cai_dichotomy_2013}]
Let $\mathbf{F}$ be a symmetric, complex valued, ternary signature over domain $\{\db, \dg, \dr\}$.
Then $\holts(\mathbf{F})$ \sph unless $\mathbf{F}$ is one of the following forms, in which case the problem is solvable in polynomial time.
\begin{enumerate}
  \item There exists three vectors $\boldsymbol{\alpha}, \boldsymbol{\beta}, \boldsymbol{\gamma} \in \mathbb{C}^3$ such that $\langle \boldsymbol{\alpha}, \boldsymbol{\beta} \rangle = 0$, $\langle \boldsymbol{\alpha}, \boldsymbol{\gamma} \rangle = 0$, and $\langle \boldsymbol{\beta}, \boldsymbol{\gamma} \rangle = 0$, and 
    \[
      \mathbf{F} = \boldsymbol{\alpha}\teh + \boldsymbol{\beta}\teh + \boldsymbol{\gamma}\teh
    \]
  \item There exists three vectors $\boldsymbol{\alpha}, \boldsymbol{\beta}_1, \boldsymbol{\beta}_2 \in \mathbb{C}^3$ such that 
    $\langle \boldsymbol{\alpha}, \boldsymbol{\beta}_1 \rangle = 0$, $\langle \boldsymbol{\alpha}, \boldsymbol{\beta}_2 \rangle = 0$, 
    $\langle \boldsymbol{\beta}_1, \boldsymbol{\beta}_1 \rangle = 0$,
    $\langle \boldsymbol{\beta}_2, \boldsymbol{\beta}_2 \rangle = 0$ and 
    \[
      \mathbf{F} = \boldsymbol{\alpha}\teh + \boldsymbol{\beta}_1\teh + \boldsymbol{\beta}_2\teh
    \]
  \item There exists two vectors $\boldsymbol{\beta}, \boldsymbol{\gamma} \in \mathbb{C}^3$ and a signature $\mathbf{F}_{\boldsymbol{\beta}}$ of arity $3$, such that
    $\boldsymbol{\beta} \ne 0$, $\langle \boldsymbol{\beta}, \boldsymbol{\beta} \rangle = 0$, $\langle \mathbf{F}_{\boldsymbol{\beta}}, \boldsymbol{\beta} \rangle = 0$ and
    \[
      \mathbf{F} = \mathbf{F}_{\boldsymbol{\beta}} + \boldsymbol{\beta}^{\otimes 2} \otimes \boldsymbol{\gamma} + \boldsymbol{\beta} \otimes \boldsymbol{\gamma} \otimes \boldsymbol{\beta} + \boldsymbol{\gamma} \otimes \boldsymbol{\beta}^{\otimes 2}
    \]
\end{enumerate}
These cases are equivalent to an existence of an orthogonal transformation $T$, such that 
\begin{enumerate}
  \item 
    For some $a, b, c \in \mathbb{C}$
    \[
      T\teh \mathbf{F} = a \mathbf{e}_1\teh + b \mathbf{e}_2\teh + c \mathbf{e}_3\teh 
    \]
  \item 
    For some $c \ne 0$ and $\lambda \in \mathbb{C}$,
    \[
      c T\teh \mathbf{F} =  \boldsymbol{\beta}_0\teh + \overline{\boldsymbol{\beta}_0}\teh + \lambda \mathbf{e}_3 \teh  
    \]
    where $\boldsymbol{\beta}_0 = \frac{1}{\sqrt{2}}(1, i, 0)\transpose$ and $\overline{\boldsymbol{\beta}}$ is its conjugate $\frac{1}{\sqrt{2}}(1, -i, 0)\transpose$.
  \item 
    For $\epsilon \in \{0, 1\}$,
    \[
     T\teh \mathbf{F} = 
     \mathbf{F}_0 + \epsilon \left( \boldsymbol{\beta}^{\otimes 2} \otimes \overline{\boldsymbol{\beta}} + \boldsymbol{\beta} \otimes \overline{\boldsymbol{\beta}} \otimes \boldsymbol{\beta} + \overline{\boldsymbol{\beta}} \otimes \boldsymbol{\beta}^{\otimes 2}\right)
    \]
    where $\mathbf{F}_0$ satisfies the annihilation condition $\langle \mathbf{F}_0, \boldsymbol{\beta}_0 \rangle = 0$.
\end{enumerate}
\end{theorem}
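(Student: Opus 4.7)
The plan is to establish the dichotomy by combining orthogonal holographic transformations, the Boolean-domain $\holbs$ dichotomy (\cref{thm:dich-sym-Boolean}), and a case analysis driven by the symmetric tensor decomposition of $\mathbf{F}$. Since an orthogonal holographic transformation preserves the complexity of $\holts$ (it fixes the binary equality used to build unary signatures into the gadget calculus), we may replace $\mathbf{F}$ by $T\teh \mathbf{F}$ for any orthogonal $T$. This machinery simultaneously yields the equivalence of the two characterizations of each tractable case: a triple of mutually orthogonal nonzero vectors can, after normalization, be rotated onto $\mathbf{e}_1, \mathbf{e}_2, \mathbf{e}_3$ (giving Case 1); any nonzero isotropic vector in $\mathbb{C}^3$ can be orthogonally rotated onto a scalar multiple of $\boldsymbol{\beta}_0 = \frac{1}{\sqrt{2}}(1, i, 0)\transpose$ (giving Case 2); and the annihilation condition of Case 3 is preserved under orthogonal change of basis.

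For tractability I would exhibit an explicit algorithm in each canonical form. In Case 1, $\mathbf{F}$ becomes a generalized equality $a\mathbf{e}_1\teh + b\mathbf{e}_2\teh + c\mathbf{e}_3\teh$, and on any connected component the Holant sum reduces to evaluating the three constant edge-assignments on the component, which is polynomial. In Case 2, after transformation the signature splits off the third coordinate, and on $\{\db, \dg\}^*$ the isotropy $\langle \boldsymbol{\beta}_0, \boldsymbol{\beta}_0\rangle = 0$ places the restricted problem in a tractable class of \cref{thm:dich-sym-Boolean} (type $\typeii$ after an appropriate Boolean holographic transform). In Case 3, the annihilation condition $\langle \mathbf{F}_0, \boldsymbol{\beta}\rangle = 0$ ensures that attaching $\boldsymbol{\beta}$ as a unary annihilates $\mathbf{F}_0$, reducing the computation to a structured calculation on the two-dimensional isotropic flag spanned by $\boldsymbol{\beta}$ and $\overline{\boldsymbol{\beta}}$.

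For hardness, for each $\mathbf{F}$ outside the three tractable forms the plan is to construct a reduction from a known $\sph$ problem. The primary tool is domain restriction: by connecting suitable unary signatures (available in $\hols$) to a single edge of $\mathbf{F}$, one obtains the Boolean-domain ternary signature $\mathbf{F}\domres{i,j}$, and whenever this restriction is a hard instance of \cref{thm:dich-sym-Boolean} the result follows. When every single restriction lies in a tractable Boolean class but $\mathbf{F}$ itself does not match any of the three canonical forms, one builds auxiliary binary signatures by contracting $\mathbf{F}$ against a unary, then uses \cref{prop:uniqueness-tensor-decomposition} to show that the resulting tensor structure cannot be reorganized into a tractable form; iterated interpolation then transfers Boolean-domain hardness to the higher-domain setting.

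The main obstacle is Case 3, which is the exceptional tractable form that appears only in the complex-valued setting (indeed this paper bypasses it by restricting to real-valued signatures). The annihilation condition interacts subtly with the symmetric tensor decomposition: both verifying tractability (the algorithm uses the null structure of $\boldsymbol{\beta}$ in a way that does not factor through standard CSP-style dynamic programming) and carving Case 3 out cleanly from the $\sph$ boundary require a careful analysis of how isotropic directions can or cannot appear in minimum-rank decompositions, for which \cref{prop:uniqueness-tensor-decomposition} provides the essential rigidity.
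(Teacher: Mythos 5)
This statement is not proved in the paper at all: it is Theorem 3.1/3.2 of \cite{cai_dichotomy_2013}, imported verbatim as a known dichotomy (and the present paper deliberately sidesteps its hardest part, the exceptional Case 3, by restricting to real-valued signatures, via \cref{thm:dich-single-sym-ter-dom3-real}). So there is no in-paper argument to compare against; what you have written has to stand on its own, and as it stands it is a plan rather than a proof. The equivalence of the two formulations and the Case 1/Case 2 tractability are fine in outline, but the two load-bearing components are exactly where the proposal is vaguest, and each has a concrete gap.

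First, on hardness: connecting a unary signature to one edge of $\mathbf{F}$ does \emph{not} produce the domain restriction $\mathbf{F}\domres{i,j}$ — it produces a binary signature $\langle \mathbf{F}, \mathbf{u}\rangle$. Domain restriction requires first realizing the binary equality $(=_{ij})$ as a gadget (as in \cref{lem:geneq-eqbg,lem:z-eqbg,lem:domain-restriction}), and whether such an equality is realizable depends on the form of $\mathbf{F}$, so the "primary tool" is not available unconditionally; the original proof has to work much harder (explicit gadgets and polynomial interpolation) precisely in the regime where every Boolean restriction one can reach is tractable. Naming "iterated interpolation" without specifying any interpolation construction leaves the hardness half unproved. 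Second, the rigidity you invoke from \cref{prop:uniqueness-tensor-decomposition} does not reach Case 3: that proposition only compares decompositions into linearly independent rank-one terms, whereas a Case 3 signature $\mathbf{F}_{\boldsymbol{\beta}} + \boldsymbol{\beta}^{\otimes 2}\otimes\boldsymbol{\gamma} + \boldsymbol{\beta}\otimes\boldsymbol{\gamma}\otimes\boldsymbol{\beta} + \boldsymbol{\gamma}\otimes\boldsymbol{\beta}^{\otimes 2}$ need not admit any such decomposition, so it cannot be used either to verify the Case 3 algorithm or to carve Case 3 out of the hard region. Likewise, "a structured calculation on the two-dimensional isotropic flag" is not an algorithm; the polynomial-time algorithm for the vanishing-type Case 3 is a genuinely nontrivial piece of \cite{cai_dichotomy_2013} and would need to be supplied. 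In short, the proposal correctly identifies the skeleton (orthogonal holographic transformations, Boolean dichotomy, tensor-decomposition case analysis) but the steps it leaves as black boxes are the actual content of the cited theorem.
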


It is shown in \cite{liu_restricted_nodate} that when $\mathbf{F}$ is assumed to be a real-valued signature, case 3 does not occur.
\begin{theorem}[Theorem 2 in \cite{liu_restricted_nodate}]
  \label{thm:dich-single-sym-ter-dom3-real}
  Let $\mathbf{F}$ be a real-valued symmetric ternary function over domain $\{\db, \dg, \dr\}$. 
  Then, $\holts(\mathbf{F})$ is \#\P-hard unless the function $\mathbf{F}$ is expressible as one of the following two forms, in which case the problem is in \P.
  \begin{enumerate}
    \item[\ternarytractgeneq.] \label{ternary-tract-geneq}
      $\mathbf{F} = \boldsymbol{\alpha} \teh + \boldsymbol{\beta} \teh + \boldsymbol{\gamma}\teh$ 
      where $\boldsymbol{\alpha}, \boldsymbol{\beta}, \boldsymbol{\gamma} \in \mathbb{R}^3$ 
      and $\langle \boldsymbol{\alpha}, \boldsymbol{\gamma} \rangle = \langle \boldsymbol{\beta}, \boldsymbol{\gamma} \rangle = \langle \boldsymbol{\gamma}, \boldsymbol{\alpha} \rangle = 0$.
    \item[\ternarytractz.] \label{ternary-tract-Z}
      $\mathbf{F} = \boldsymbol{\alpha} \teh + \boldsymbol{\beta} \teh + \overline{\boldsymbol{\beta}}\teh$ where $\boldsymbol{\alpha} \in \mathbb{R}^3$, $\langle \boldsymbol{\alpha}, \boldsymbol{\beta} \rangle = \langle \boldsymbol{\beta}, \boldsymbol{\beta} \rangle = 0$.
  \end{enumerate}
  This is equivalent to the existence of a real orthogonal transformation $T$, such that 
  \begin{enumerate}
    \item[\ternarytractgeneq.] $T\teh \mathbf{F} = a \mathbf{e_1}\teh + b \mathbf{e_2} \teh + c \mathbf{e_3}\teh$ for some $a, b, c \in \mathbb{R}$.
    \item[\ternarytractz.] $c T\teh \mathbf{F}  = \epsilon(\boldsymbol{\beta_0}\teh + \overline{\boldsymbol{\beta_0}}\teh) + \lambda \mathbf{e_3}\teh$ where $\boldsymbol{\beta_0} = (1, i, 0)\transpose$, $\epsilon \in \{0, 1\}$ and for some $c, \lambda \in \mathbb{R}$ and $c \ne 0$.
  \end{enumerate}
\end{theorem}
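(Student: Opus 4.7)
The plan is to derive \cref{thm:dich-single-sym-ter-dom3-real} as a refinement of the complex-valued single-ternary dichotomy (Theorem 3.1, 3.2 in \cite{cai_dichotomy_2013}) stated just above it. Tractability of Cases \ternarytractgeneq and \ternarytractz is essentially immediate: Case \ternarytractgeneq is complex Case 1 with a real orthogonal $T$ and real scalars $a,b,c$, while Case \ternarytractz is complex Case 2 with a real orthogonal $T$, real $\lambda$, and the specific isotropic vector $\boldsymbol{\beta}_0 = (1,i,0)\transpose$. The equivalence between the tensor forms and the normal forms in each case is a routine calculation: for Case \ternarytractgeneq one chooses the real orthogonal frame formed by the (necessarily pairwise orthogonal) vectors $\boldsymbol{\alpha},\boldsymbol{\beta},\boldsymbol{\gamma}$; for Case \ternarytractz the conditions $\langle\boldsymbol{\alpha},\boldsymbol{\beta}\rangle=\langle\boldsymbol{\beta},\boldsymbol{\beta}\rangle=0$ force $\boldsymbol{\beta}\in\mathbb{C}^3$ isotropic with $\boldsymbol{\alpha}$ real and orthogonal to the real and imaginary parts of $\boldsymbol{\beta}$, so a real orthogonal rotation carries $\boldsymbol{\alpha}$ to $\mathbf{e}_3$ and $\boldsymbol{\beta}$ to $\boldsymbol{\beta}_0$.

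For the hardness direction, assume $\mathbf{F}$ is real-valued but is in neither Case \ternarytractgeneq nor Case \ternarytractz, and suppose for contradiction that $\holts(\mathbf{F})$ is not \sph. By the complex dichotomy, $\mathbf{F}$ must fall into complex Case 1, 2, or 3. I will argue each of these, combined with $\mathbf{F}$ being real-valued, must place $\mathbf{F}$ in Case \ternarytractgeneq or \ternarytractz. Consider complex Case 1: write $\mathbf{F} = a\mathbf{u}_1\teh + b\mathbf{u}_2\teh + c\mathbf{u}_3\teh$ with complex pairwise orthogonal $\mathbf{u}_i$. If the $\mathbf{u}_i$ are all linearly independent, \cref{prop:uniqueness-tensor-decomposition} (applied to $\mathbf{F}$ and $\overline{\mathbf{F}}=\mathbf{F}$) gives $(\mathbf{u}_1,\mathbf{u}_2,\mathbf{u}_3)\sim(\overline{\mathbf{u}_1},\overline{\mathbf{u}_2},\overline{\mathbf{u}_3})$ up to permutation; the orthogonality $\langle\mathbf{u}_i,\mathbf{u}_j\rangle=0$ rules out any pairing that mixes a vector with its conjugate (this would force an isotropic vector to be orthogonal to itself and to real vectors simultaneously, collapsing into Case \ternarytractz), leaving only $\mathbf{u}_i\sim\overline{\mathbf{u}_i}$, i.e.\ each $\mathbf{u}_i$ is a real scalar multiple of a real vector, putting $\mathbf{F}$ in Case \ternarytractgeneq. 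Degenerate subcases (fewer than three distinct directions) are handled similarly and fall into Case \ternarytractgeneq or \ternarytractz. Complex Case 2 is treated in the same spirit: the isotropic pair $\boldsymbol{\beta}_0,\overline{\boldsymbol{\beta}_0}$ together with the real vector $\mathbf{e}_3$ is already a conjugation-symmetric decomposition, and pulling back by $T^{-1}$ yields $\mathbf{F}=\boldsymbol{\alpha}\teh+\boldsymbol{\beta}\teh+\overline{\boldsymbol{\beta}}\teh$ in Case \ternarytractz after absorbing the real orthogonality of the preimages into $T$.

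The main obstacle is complex Case 3 with $\epsilon=1$, the ``defective'' case
\[
T\teh\mathbf{F} \;=\; \mathbf{F}_0 \;+\; \boldsymbol{\beta}_0\tew\otimes\overline{\boldsymbol{\beta}_0} \;+\; \boldsymbol{\beta}_0\otimes\overline{\boldsymbol{\beta}_0}\otimes\boldsymbol{\beta}_0 \;+\; \overline{\boldsymbol{\beta}_0}\otimes\boldsymbol{\beta}_0\tew,
\]
with $\langle\mathbf{F}_0,\boldsymbol{\beta}_0\rangle=0$. This tensor genuinely has rank $4$ and is not captured by \cref{prop:uniqueness-tensor-decomposition}, which is why uniqueness arguments alone do not suffice. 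The strategy is to exploit that $\mathbf{F}$ being real-valued imposes $\overline{T\teh\mathbf{F}}=\overline{T}\teh\mathbf{F}$, i.e.\ conjugating the decomposition yields a valid decomposition of $\overline{T}\teh\mathbf{F}$; by comparing both sides entry-by-entry in the $\boldsymbol{\beta}_0,\overline{\boldsymbol{\beta}_0},\mathbf{e}_3$ basis and using the annihilation condition $\langle\mathbf{F}_0,\boldsymbol{\beta}_0\rangle=0$, one shows the coefficient of the defective cross term on the $(\boldsymbol{\beta}_0,\boldsymbol{\beta}_0,\overline{\boldsymbol{\beta}_0})$-like monomials must vanish, reducing the decomposition to the form of Case 1 or Case 2. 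Once this is established, combining with the analysis of Cases 1 and 2 above lands $\mathbf{F}$ in Case \ternarytractgeneq or Case \ternarytractz, contradicting the assumption, and completing the hardness direction. The delicate bookkeeping in the defective case, together with checking that the real orthogonal $T$ in the normal forms can indeed be chosen in $O(3,\mathbb{R})$ (rather than in the larger complex orthogonal group $O(3,\mathbb{C})$), is the hardest part of the proof.
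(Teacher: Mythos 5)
First, note that the paper does not prove this statement at all: it is imported verbatim as Theorem~2 of \cite{liu_restricted_nodate} and used as a black box, so there is no internal proof to compare against; what you are proposing is a re-derivation of that cited result from the complex dichotomy of \cite{cai_dichotomy_2013}. Your treatment of complex Cases~1 and~2 is plausible and could be made rigorous: working with the $T$-free tensor forms, conjugating, and invoking \cref{prop:uniqueness-tensor-decomposition} together with \cref{prop:real-degenerate-real-vector} does force the conjugation action to respect the decomposition (the mixed pairing $\boldsymbol{\beta} \sim \overline{\boldsymbol{\gamma}}$ dies because $\langle \boldsymbol{\beta}, \overline{\boldsymbol{\beta}} \rangle = \sum_i \lvert \beta_i \rvert^2 = 0$ forces $\boldsymbol{\beta} = 0$, not because of any ``collapse into Case \ternarytractz'' as you state), and the rank-deficient subcases are genuinely routine.

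The gap is in the part you yourself flag as the crux, complex Case~3, and it is not mere bookkeeping. (i) The canonical form of Case~3 is stated with respect to an orthogonal $T$ that is a priori \emph{complex} orthogonal; real-valuedness of $\mathbf{F}$ gives $\overline{T\teh \mathbf{F}} = \overline{T}\teh \mathbf{F}$, which is a statement about the different tensor $\overline{T}\teh\mathbf{F}$, not a conjugation symmetry of $T\teh\mathbf{F}$ in the $(\boldsymbol{\beta}_0, \overline{\boldsymbol{\beta}_0}, \mathbf{e}_3)$ coordinates. So the proposed ``compare both sides entry-by-entry'' step does not compare two decompositions of the same tensor unless one first shows $T$ can be taken real, which is exactly the content you defer. (ii) Even if you work with the $T$-free form $\mathbf{F} = \mathbf{F}_{\boldsymbol{\beta}} + \boldsymbol{\beta}^{\otimes 2} \otimes \boldsymbol{\gamma} + \boldsymbol{\beta} \otimes \boldsymbol{\gamma} \otimes \boldsymbol{\beta} + \boldsymbol{\gamma} \otimes \boldsymbol{\beta}^{\otimes 2}$ and conjugate it, there is no uniqueness statement available for this defective (non symmetric-rank) decomposition --- you concede \cref{prop:uniqueness-tensor-decomposition} does not apply --- so ``the coefficient of the cross term must vanish'' is asserted rather than derived; this is precisely where the cited paper has to do real work. (iii) The subcase $\epsilon = 0$ of Case~3 (a signature merely satisfying the annihilation condition $\langle \mathbf{F}_0, \boldsymbol{\beta}_0 \rangle = 0$ after a complex orthogonal transformation) is a separate tractable family in the complex dichotomy and is not addressed at all; for the real theorem one must also show that every real-valued signature of this type already falls under form \ternarytractgeneq or \ternarytractz. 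Until (i)--(iii) are supplied, the hardness direction of \cref{thm:dich-single-sym-ter-dom3-real} is not established by your argument, and the honest course in this paper is what the authors do: cite \cite{liu_restricted_nodate}.
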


\subsection{Miscellaneous}
\begin{proposition}\label{prop:real-degenerate-real-vector}
  Suppose $\mathbf{v}^{\otimes k}$ for a nonzero $\mathbf{v} \in \mathbb{C}^n$ is real-valued.
  Then, there exists a nonzero $\mathbf{u} \in \mathbb{R}^n$ and nonzero $\lambda \in \mathbb{R}$ such that $\mathbf{v}^{\otimes k} = \lambda \mathbf{u}^{\otimes k}$.
\end{proposition}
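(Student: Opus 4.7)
The plan is to show that $\mathbf{v}$ must be a real scalar multiple of some unit-modulus phase, so that the phase gets absorbed into $\lambda$ when we raise to the tensor power. The key is to extract a common argument for all nonzero entries of $\mathbf{v}$ from the fact that certain products of entries are forced to be real.

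First I would pick an index $j$ with $v_j \neq 0$, which exists since $\mathbf{v}$ is nonzero. Write $v_j = r_j e^{i\theta_j}$ with $r_j > 0$. For every index $i$, the entry of $\mathbf{v}^{\otimes k}$ at position $(i,j,j,\ldots,j)$ equals $v_i v_j^{k-1}$, and by hypothesis this number is real. Since $v_j^{k-1} \neq 0$, this forces $v_i \in \mathbb{R} \cdot e^{-i(k-1)\theta_j}$, meaning every coordinate $v_i$ (nonzero or not) has the form $\rho_i e^{i\phi}$ with $\phi := -(k-1)\theta_j$ fixed and $\rho_i \in \mathbb{R}$.

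Setting $\mathbf{u} := (\rho_1,\ldots,\rho_n) \in \mathbb{R}^n$, we obtain $\mathbf{v} = e^{i\phi}\mathbf{u}$, and $\mathbf{u}$ is nonzero because $\rho_j \neq 0$. Hence
\[
\mathbf{v}^{\otimes k} \;=\; e^{ik\phi}\, \mathbf{u}^{\otimes k}.
\]
Now $\mathbf{u}^{\otimes k}$ is a nonzero real tensor and, by assumption, $\mathbf{v}^{\otimes k}$ is real, so the scalar $\lambda := e^{ik\phi}$ must be real, forcing $\lambda \in \{-1, +1\}$. In particular $\lambda$ is a nonzero real number, and $\mathbf{v}^{\otimes k} = \lambda\, \mathbf{u}^{\otimes k}$ as required.

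There is essentially no obstacle here; the only subtlety to be careful about is the case of a zero coordinate (for which $\rho_i = 0$ works fine) and ensuring $\mathbf{u} \neq 0$ (guaranteed by the presence of at least one nonzero coordinate of $\mathbf{v}$). The argument generalizes the familiar fact that a complex number whose $k$-th power is real must lie on one of $2k$ lines through the origin: here the same phase alignment must hold simultaneously for all coordinates, forced by the mixed-entry reality conditions.
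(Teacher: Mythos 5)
Your proof is correct and follows essentially the same route as the paper: both extract a common phase from the reality of the mixed entries $v_i v_j^{k-1}$, factor it out of $\mathbf{v}$, and conclude the leftover scalar is real (indeed $\pm 1$) from the diagonal entry $v_j^k$. If anything, your version is a bit tidier---by pivoting on an arbitrary nonzero coordinate and letting $\rho_i=0$ absorb zero entries, you avoid the paper's separate treatment of the cases $k=1$ and $n=1$ and its implicit reliance on $v_1\neq 0$.
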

\begin{proof}
  Suppose $k = 1$.
  Then, $\mathbf{v}^{\otimes 1} = \mathbf{v}$ is real-valued if and only if $\mathbf{v} \in \mathbb{R}^n$.

  Suppose $n = 1$.
  Then, $\mathbf{v}$ is a complex number $v \in \mathbb{C}$.
  We write $v = r e^{i\theta}$ a nonzero $r \in \mathbb{R}$ and $v^k = r^k e^{i k \theta}$.
  $v^k \in \mathbb{R}$ implies that $e^{i k \theta} = \epsilon$ for $\epsilon \in \{1, -1\}$.
  So, $v^k = \epsilon r^k$.

  Suppose $k \ge 2$ and $n \ge 2$.
  We write $\mathbf{v} = (v_1, v_2, \ldots, v_n)$.
  We write $v_j = r_j e^{i \theta_j}$ and 
  $\mathbf{v}^{\otimes k}$ being real-valued implies that $v_j^k \in \mathbb{R}$ for each $j$, which implies that $k \theta_j$ is an integer multiple of $\pi$.
  Therefore, we may choose $\theta_j = \frac{2 \pi m_j}{2k}$ for some $m_j \in \mathbb{Z}$.
  Also, we must have $v_1 v_j^{k-1} \in \mathbb{R}$ for all $2 \le j \le n$, which in turn implies that $\theta_1 + (k-1)\theta_j$ is an integer multiple of $\pi$.
  Then, it must be the case that $\frac{m_1}{k} + \frac{(k-1)m_j}{k}$ is an integer, so $m_1 + (k-1) m_j \equiv 0 \mod k$.
  Therefore, $m_j = m_1 + k a_j$ for some $a_j \in \mathbb{Z}$.

  We may factor out $e^{i \theta_1}$ from $\mathbf{v}$ and obtain $\mathbf{u} = (u_1, u_2, \ldots, u_n) = e^{-i \theta_1} \mathbf{v}$ where
  \[
    u_j = e^{-i \theta_1} v_j = e^{-i \pi m_1/k} r_j e^{i \pi m_j/k} = r_j e^{i \pi a_j} \in \mathbb{R}
  \]
  Therefore, $\mathbf{u} \in \mathbb{R}^n$.
  Let $\lambda = (e^{i \theta_1})^k \in \mathbb{R}$.
  Then, $\lambda \mathbf{u}^{\otimes k} = (e^{i \theta_1} \mathbf{u})^{\otimes k} = \mathbf{v}^{\otimes k}$ as desired.
\end{proof}
Therefore, we may always assume that a real-valued, symmetric, degenerate signature is a tensor power of a real vector by \cref{prop:real-degenerate-real-vector} and rescaling.

A characterization of a signature that will be often used in this paper is in terms of the symmetric tensor decomposition.
Proposition 2.10, 2.11, and the tractability proof of Theorem 2.12 of \cite{cai_complexity_2017} yields the following:
\begin{proposition}\label{prop:boolean-type-i-tensor-decomposition}
  A nondegenerate Boolean domain signature $\mathbf{F}$ is of type $\typei(a, b)$ if and only if $\mathbf{F} = c \begin{bmatrix}
  \alpha \\ \beta
  \end{bmatrix}^{\otimes n} + d \begin{bmatrix}
  \gamma \\ \delta
  \end{bmatrix}^{\otimes n}$ where $\alpha \gamma + \beta \delta = 0$ and $a = \alpha \gamma = - \beta \delta$ and $b = \alpha \delta + \beta \gamma$.
  Also, the set of orthogonal vectors $\left\{ \begin{bmatrix}
  \alpha \\ \beta
  \end{bmatrix}, \begin{bmatrix}
  \gamma \\ \delta
  \end{bmatrix} \right\}$ is uniquely determined by $a, b$ up to a scalar multiple.
\end{proposition}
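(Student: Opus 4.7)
My plan is to prove both directions of the equivalence separately, and then address uniqueness. For the ``if'' direction, I would assume $\mathbf{F} = c (\alpha, \beta)^{\otimes n} + d (\gamma, \delta)^{\otimes n}$ with $\alpha \gamma + \beta \delta = 0$, $a = \alpha \gamma = -\beta \delta$, and $b = \alpha \delta + \beta \gamma$, write $x_k = c \alpha^{n-k} \beta^k + d \gamma^{n-k} \delta^k$, and substitute directly into $a x_{k+2} - b x_{k+1} - a x_k$. Grouping by $c$ and $d$, the inner bracket of the $c$-term simplifies (using $a = \alpha \gamma$) to $-\alpha^{2}(\alpha \gamma + \beta \delta)$, and similarly the $d$-term's inner bracket simplifies (using $a = -\beta \delta$) to $-\delta^{2}(\alpha \gamma + \beta \delta)$. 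Both vanish by orthogonality, so the Type $\typei(a,b)$ recurrence holds.

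For the ``only if'' direction, I would analyze the recurrence $a x_{k+2} = b x_{k+1} + a x_k$ by cases. When $a \ne 0$, the characteristic polynomial $a t^{2} - b t - a$ has roots $t_1, t_2$ with $t_1 t_2 = -1$ by Vieta's; this reciprocal relation is precisely the orthogonality $(1, t_1) \cdot (1, t_2) = 1 + t_1 t_2 = 0$. When the roots are distinct, the general solution $x_k = A t_1^{k} + B t_2^{k}$ gives $\mathbf{F} = A (1, t_1)^{\otimes n} + B (1, t_2)^{\otimes n}$, and nondegeneracy forces $A, B \ne 0$. Rescaling $\alpha = \lambda, \beta = \lambda t_1, \gamma = \mu, \delta = \mu t_2$ with $\lambda \mu = a$ then matches the normalization $a = \alpha \gamma$, and automatically gives $b = \alpha \delta + \beta \gamma = \lambda \mu (t_1 + t_2) = a \cdot (b/a)$, consistent with $b$. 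When $a = 0$ and $b \ne 0$, the recurrence forces $x_1 = \cdots = x_{n-1} = 0$, and nondegeneracy gives $x_0, x_n \ne 0$, yielding the decomposition $x_0 (1,0)^{\otimes n} + x_n (0,1)^{\otimes n}$ with the orthogonal pair $(1,0), (0,1)$.

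Uniqueness of the orthogonal pair up to scaling follows from \cref{prop:uniqueness-tensor-decomposition}: for $n \ge 3$ the two linearly independent rank-$1$ summands in a rank-$2$ symmetric tensor decomposition are unique up to reordering and rescaling, so the two directions correspond precisely to the two roots of $a t^{2} - b t - a$, which depend only on the ratio $b/a$. The main obstacle will be the case of repeated roots (discriminant $b^{2} + 4 a^{2} = 0$, forcing $b = \pm 2 a i$), which can only arise in the complex setting; for real $\mathbf{F}$ one may always select real $(a,b)$ witnesses by taking real and imaginary parts of any complex pair satisfying the recurrence, after which $b^{2} + 4 a^{2} > 0$ rules out coincident roots. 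As indicated in the preamble to the proposition, the complex-valued edge cases are already subsumed by Propositions 2.10 and 2.11 together with the tractability proof of Theorem 2.12 of \cite{cai_complexity_2017}, which I would cite to complete the argument.
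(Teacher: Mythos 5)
The paper never proves this proposition in-house: it is imported wholesale from Propositions 2.10, 2.11 and the tractability proof of Theorem 2.12 of \cite{cai_complexity_2017}. Your recurrence-theoretic argument is therefore a genuinely different, self-contained route, and its core is correct: in the ``if'' direction the bracketed coefficients do collapse to $-\alpha^{2}(\alpha\gamma+\beta\delta)$ and $-\delta^{2}(\alpha\gamma+\beta\delta)$, and in the ``only if'' direction, for $a\ne 0$ the characteristic polynomial $at^{2}-bt-a$ has root product $-1$ by Vieta, which is exactly the orthogonality $1+t_1t_2=0$, while $a=0$, $b\ne 0$ forces $x_1=\cdots=x_{n-1}=0$ and the pair $\mathbf{e}_1,\mathbf{e}_2$. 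For the uniqueness claim you do not need \cref{prop:uniqueness-tensor-decomposition} (which anyway requires arity $\ge 3$): the constraints $\alpha\gamma=-\beta\delta=a\ne 0$ and $\alpha\delta+\beta\gamma=b$ already force $\beta/\alpha$ and $\delta/\gamma$ to be the two roots of $at^{2}-bt-a$, so the two directions are pinned down by $(a,b)$ alone, for every arity. What your proof buys over the paper's citation is an explicit, elementary verification that is easy to check; what the citation buys is coverage of the boundary cases that the source handles in its own normal form.

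The one genuine soft spot is the coincident-root case $b^{2}+4a^{2}=0$, i.e. $b=\pm 2ia$. Your repair---taking real and imaginary parts to get a real witness pair $(a',b')$---only shows that a real-valued $\mathbf{F}$ admits a decomposition adapted to $(a',b')$; it does not produce the decomposition for the original pair $(a,b)$, which is what the literal biconditional demands. Indeed, for complex pairs the statement as written can fail there: $[1,0,1]$ satisfies the $\typei(1,2i)$ recurrence, yet the constraints $\alpha\gamma=-\beta\delta=1$, $\alpha\delta+\beta\gamma=2i$ force both vectors onto the isotropic line through $(1,i)$, so no such decomposition exists; for higher arity the signatures $x_k=(A+Bk)i^{k}$ with $B\ne 0$ exhibit the same obstruction. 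So this subcase must either be excluded (e.g. by noting $b^{2}+4a^{2}>0$ whenever $(a,b)$ is real and not both zero, which is the situation in which the paper actually uses the proposition) or deferred to \cite{cai_complexity_2017} exactly as you do in your last sentence---which is also all the paper itself does. Apart from that boundary case, your argument is complete and correct.
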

\begin{proposition}\label{prop-boolean-type-ii-tensor-decomposition}
  A nondegenerate Boolean domain signature $\mathbf{F}$ is of type $\typeii$ if and only if 
  $\mathbf{F} = c \begin{bmatrix}
  1 \\ i
  \end{bmatrix}^{\otimes n} + d \begin{bmatrix}
  1 \\ -i
  \end{bmatrix}^{\otimes n}$.
  A real valued $\mathbf{F}$ is of type $\typeii$ if and only if $\mathbf{F} = (\mathbf{u} + i \mathbf{v})^{\otimes n} + (\mathbf{u} - i \mathbf{v})^{\otimes n}$ for orthogonal $\mathbf{u}, \mathbf{v} \in \mathbb{R}^2$ with $\langle \mathbf{u}, \mathbf{u} \rangle = \langle \mathbf{v}, \mathbf{v} \rangle$.
\end{proposition}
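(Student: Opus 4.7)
The plan is to handle the first biconditional by recognizing type $\typeii$ as a second-order linear recurrence, then refine to the real case by a conjugation argument. For the ``if'' direction of the first biconditional: substituting $\mathbf{F} = c(1,i)^{\otimes n} + d(1,-i)^{\otimes n}$ gives the weight-$k$ entry $x_k = c\cdot i^k + d\cdot (-i)^k$, and $x_{k+2} = c\cdot i^{k+2} + d\cdot (-i)^{k+2} = -c\cdot i^k - d\cdot (-i)^k = -x_k$, so $\mathbf{F}$ is of type $\typeii$. For the ``only if'' direction, $\typeii$ is by definition the homogeneous linear recurrence $x_{k+2} + x_k = 0$ whose characteristic polynomial $t^2+1$ has roots $\pm i$, so every solution is of the form $x_k = c\cdot i^k + d\cdot(-i)^k$ for unique $c,d\in\mathbb{C}$ determined by $x_0, x_1$. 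Matching entry-by-entry shows $\mathbf{F}$ coincides with $c(1,i)^{\otimes n}+d(1,-i)^{\otimes n}$.

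For nondegeneracy: the vectors $(1,i)^\intercal$ and $(1,-i)^\intercal$ are linearly independent in $\mathbb{C}^2$, so by \cref{cor:pairiwse-linearly-independent-tensor} the tensors $(1,i)^{\otimes n}$ and $(1,-i)^{\otimes n}$ are linearly independent for $n\ge 2$. If either $c$ or $d$ vanishes, $\mathbf{F}$ is a single rank-one tensor power and hence degenerate; conversely, if both are nonzero, \cref{lem:linearly-independent-tensor-rank} says $\mathbf{F}$ has symmetric tensor rank $2$, and therefore cannot be written as $\mathbf{w}^{\otimes n}$.

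For the real case, I would first invoke the first part to write $\mathbf{F} = c(1,i)^{\otimes n}+d(1,-i)^{\otimes n}$ for some $c,d\in\mathbb{C}$. Realness of $\mathbf{F}$ forces $x_0 = c+d\in\mathbb{R}$ and $x_1 = (c-d)i\in\mathbb{R}$, whence $c+d$ is real and $c-d$ is purely imaginary; equivalently $d = \overline{c}$. Write $c = re^{i\alpha}$ (so $d = re^{-i\alpha}$), pick any real $n$-th root $\rho = r^{1/n}$ together with $\theta = -\alpha/n$, and set $\mathbf{u} = \rho(\cos\theta,\sin\theta)^\intercal$, $\mathbf{v} = \rho(-\sin\theta,\cos\theta)^\intercal$. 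A direct computation gives $\mathbf{u}+i\mathbf{v} = \rho e^{-i\theta}(1,i)^\intercal$ and $\mathbf{u}-i\mathbf{v} = \rho e^{i\theta}(1,-i)^\intercal$, so $(\mathbf{u}+i\mathbf{v})^{\otimes n}+(\mathbf{u}-i\mathbf{v})^{\otimes n} = \rho^n e^{-in\theta}(1,i)^{\otimes n}+\rho^n e^{in\theta}(1,-i)^{\otimes n} = c(1,i)^{\otimes n}+d(1,-i)^{\otimes n} = \mathbf{F}$, and $\mathbf{u},\mathbf{v}$ are orthogonal of equal norm by construction. Conversely, for any orthogonal $\mathbf{u},\mathbf{v}\in\mathbb{R}^2$ of equal norm, reversing this polar parameterization reduces $(\mathbf{u}\pm i\mathbf{v})^{\otimes n}$ to the canonical form, so $\mathbf{F}$ is real of type $\typeii$ by the first part.

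The main subtlety I expect is the choice of $n$-th root: there are $n$ valid values of $\theta$ mod $2\pi/n$, but this ambiguity merely rotates the pair $(\mathbf{u},\mathbf{v})$ within its plane without changing the orthogonality or equal-norm conditions, so it does not affect the statement. A secondary but minor point is the edge cases $c=0$ or $d=0$ in the real setting, which are ruled out by nondegeneracy in the first part and, in the real setting, correspond to $r=0$, i.e. $\mathbf{F}\equiv 0$, which is excluded.
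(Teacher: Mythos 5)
Your proof is correct and, for the real-valued case (the part the paper actually proves), it follows essentially the same route: realness forces $d=\overline{c}$, then an $n$-th root of $c$ is extracted and $\mathbf{u},\mathbf{v}$ are taken as the corresponding rotated orthogonal pair of equal norm (your polar parameterization $\rho,\theta$ is just the paper's choice of an $n$-th root $a+ib$ with $\mathbf{u}=(a,-b)^\intercal$, $\mathbf{v}=(b,a)^\intercal$ in different notation). The only difference is that you also prove the first biconditional via the characteristic roots of the recurrence $x_{k+2}=-x_k$, whereas the paper simply cites it from Proposition 2.11 of \cite{cai_complexity_2017}; that addition is sound but not a different approach.
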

\begin{proof}
  First part is from Proposition 2.11 of \cite{cai_complexity_2017}.
  The if part is immediate.

  Suppose $\mathbf{F}$ is real valued and of type $\typeii$. 
  By the first part, for $\mathbf{F} = [x, y, -x, -y, \ldots]$, we have
  $x = c + d$ and $y = i(c - d)$ for some $c, d \in \mathbb{C}$.
  Writing $c = p + iq$ and $d = r + is$, $x, y \in \mathbb{R}$ implies $q = -s$ and $p = r$, meaning $c = \overline{d}$.
  Those in turn imply that $x = 2p$ and $y = -2q$.
  Let $a + ib$ for $a, b \in \mathbb{R}$ be any $n$th root of $c$.
  We have
  \[
    \mathbf{F} = c \begin{bmatrix}
      1 \\ i
      \end{bmatrix}^{\otimes n} + \overline{c} \begin{bmatrix}
      1 \\ -i
      \end{bmatrix}^{\otimes n} = \begin{bmatrix}
      (a + ib) \\ i(a + ib)
      \end{bmatrix}^{\otimes n} + \begin{bmatrix}
      (a - ib) \\ -i(a - ib)
    \end{bmatrix}^{\otimes n} = (\mathbf{u} + i \mathbf{v})^{\otimes n} + (\mathbf{u} - i \mathbf{v})^{\otimes n}
  \]
  where $\mathbf{u} = (a, -b)\transpose$ and $\mathbf{v} = (b, a)\transpose$.
\end{proof}

The dichotomy theorems with symmetric tensor decomposition and \cref{prop:uniqueness-tensor-decomposition} imply the following criteria for determining the hardness of a signature from the tensor decomposition.
\begin{corollary}\label{cor:non-orthogonal-linearly-independent-hard}
  Let $\boldsymbol{\alpha}, \boldsymbol{\beta} \in \mathbb{R}^2$ be linearly independent and $\langle \boldsymbol{\alpha}, \boldsymbol{\beta} \rangle \ne 0$.
  Then, the following are \sph.
  \begin{enumerate}
    \item $\holbs(\boldsymbol{\alpha}\teh + \boldsymbol{\beta}\teh)$
    \item $\holbs( (\boldsymbol{\alpha} + i \boldsymbol{\beta})\teh + (\boldsymbol{\alpha} - i \boldsymbol{\beta})\teh)$
  \end{enumerate}
  Let $\boldsymbol{\alpha}, \boldsymbol{\beta}, \boldsymbol{\gamma} \in \mathbb{R}^3$ be linearly independent and $\langle \boldsymbol{\alpha}, \boldsymbol{\beta} \rangle \ne 0$.
Then, the following are \sph.
\begin{enumerate}
\item $\holts(\boldsymbol{\alpha}\teh + \boldsymbol{\beta}\teh)$
  \item $\holts(\boldsymbol{\alpha}\teh + \boldsymbol{\beta}\teh + \boldsymbol{\gamma}\teh)$
  \item $\holts( (\boldsymbol{\alpha} + i \boldsymbol{\beta})\teh + (\boldsymbol{\alpha} - i \boldsymbol{\beta})\teh)$
  \item $\holts( (\boldsymbol{\alpha} + i \boldsymbol{\beta})\teh + (\boldsymbol{\alpha} - i \boldsymbol{\beta})\teh + \boldsymbol{\gamma} \teh)$ 
\end{enumerate}
\end{corollary}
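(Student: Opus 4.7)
The strategy is to combine the existing single-signature dichotomies \cref{thm:dich-sym-Boolean,thm:dich-single-sym-ter-dom3-real} with the uniqueness of symmetric tensor decomposition of linearly independent vectors (\cref{prop:uniqueness-tensor-decomposition}, which applies since the arity $k=3$). Tractability pins $\mathbf{F}$ to one of two rigid tensor-decomposition forms---an ``orthogonal'' form (Boolean type $\typei$ or ternary class \ternarytractgeneq) or an ``isotropic-pair'' form (Boolean type $\typeii$ or ternary class \ternarytractz). Uniqueness then forces the vectors in the hypothesis decomposition to match (up to reordering and nonzero scalars) those of the tractable form, and the hypothesis $\langle \boldsymbol{\alpha}, \boldsymbol{\beta} \rangle \ne 0$ rules out every such matching.

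I first verify that each $\mathbf{F}$ in the statement is nondegenerate and real-valued. Reality for the conjugate-pair forms is immediate. For nondegeneracy, the real linear independence of $\boldsymbol{\alpha}, \boldsymbol{\beta}$ (and $\boldsymbol{\gamma}$ when present) implies complex linear independence of $\boldsymbol{\alpha} \pm i\boldsymbol{\beta}$ (and of $\boldsymbol{\gamma}$ together with them), so by \cref{lem:linearly-independent-tensor-rank} the symmetric rank of $\mathbf{F}$ equals its number of summands. The arity-$2$ exception in \cref{thm:dich-sym-Boolean} does not apply since $\mathbf{F}$ has arity $3$.

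For each hypothesis form versus each tractable form, I work out the matching, using \cref{prop:boolean-type-i-tensor-decomposition,prop-boolean-type-ii-tensor-decomposition} to make the Boolean tractable forms explicit. For an orthogonal tractable form paired with the real hypothesis $\boldsymbol{\alpha}^{\otimes 3} + \boldsymbol{\beta}^{\otimes 3}\,(+\boldsymbol{\gamma}^{\otimes 3})$, the matching yields $\langle \boldsymbol{\alpha}, \boldsymbol{\beta} \rangle = 0$ directly. For an orthogonal tractable form paired with a conjugate-pair hypothesis, the orthogonality requirement becomes $\langle \boldsymbol{\alpha} + i\boldsymbol{\beta}, \boldsymbol{\alpha} - i\boldsymbol{\beta} \rangle = \|\boldsymbol{\alpha}\|^2 + \|\boldsymbol{\beta}\|^2 = 0$, impossible for nonzero real $\boldsymbol{\alpha}, \boldsymbol{\beta}$; the remaining subcase where $\boldsymbol{\alpha} \pm i\boldsymbol{\beta}$ is projectively real forces $\boldsymbol{\alpha}, \boldsymbol{\beta}$ to be proportional, contradicting linear independence. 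For an isotropic-pair tractable form, matching a real summand in the hypothesis with the isotropic $\boldsymbol{\beta}'$ is impossible because a nonzero real vector cannot satisfy $\sum v_j^2 = 0$; matching $\boldsymbol{\alpha} \pm i\boldsymbol{\beta}$ with $\boldsymbol{\beta}'$ yields $\langle \boldsymbol{\alpha}, \boldsymbol{\beta} \rangle = 0$ (together with $\|\boldsymbol{\alpha}\| = \|\boldsymbol{\beta}\|$), again contradicting the hypothesis.

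The main obstacle is the rank-mismatch subcase for $\holts$ in cases 1 and 3, where the hypothesis has rank $2$ while class \ternarytractz{} is presented with three summands. Here I would argue that in order for the rank to drop to $2$, either one summand of the class-\ternarytractz{} form vanishes (has coefficient zero) or two of the three summands coincide projectively; the latter cannot happen because $\boldsymbol{\beta}' \sim \overline{\boldsymbol{\beta}'}$ would make $\boldsymbol{\beta}'$ projectively real and hence forcibly zero by isotropy, and $\boldsymbol{\beta}' \sim \boldsymbol{\alpha}'$ with $\boldsymbol{\alpha}'$ real fails for the same reason. Once this rank reduction is settled, \cref{prop:uniqueness-tensor-decomposition} applies to two active summands on each side and the preceding case analysis delivers the contradiction.
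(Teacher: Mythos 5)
Your strategy coincides with the paper's own proof: assume tractability, invoke \cref{thm:dich-sym-Boolean} (via \cref{prop:boolean-type-i-tensor-decomposition,prop-boolean-type-ii-tensor-decomposition}) and \cref{thm:dich-single-sym-ter-dom3-real}, match the two tensor decompositions using \cref{prop:uniqueness-tensor-decomposition}, and contradict $\langle \boldsymbol{\alpha}, \boldsymbol{\beta} \rangle \ne 0$ (or reality/isotropy) by a dot-product computation; you even write out several cases the paper dismisses as ``clear.'' The one step I cannot accept as written is your rank-mismatch argument. You assert that for the rank of a class \ternarytractz expression $\boldsymbol{\alpha}'\teh + \boldsymbol{\beta}'\teh + \overline{\boldsymbol{\beta}'}\teh$ to drop to $2$, either a summand vanishes or two summands coincide projectively. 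As a general statement about symmetric tensors this is false: a sum of three cubes of pairwise non-proportional vectors can have symmetric rank $2$ (for instance the binary cubic $x^3 + y^3 + (x+y)^3$ has three distinct roots and hence rank $2$), so ruling out projective coincidences does not rule out the possibility that all three summands are nonzero while the tensor still has rank $2$.

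What actually closes this subcase is the class \ternarytractz structure itself, not the coincidence dichotomy. If $\boldsymbol{\alpha}' \ne 0$ and $\boldsymbol{\beta}' \ne 0$, write $\boldsymbol{\beta}' = \mathbf{u} + i\mathbf{v}$ with $\mathbf{u}, \mathbf{v} \in \mathbb{R}^3$; isotropy $\langle \boldsymbol{\beta}', \boldsymbol{\beta}' \rangle = 0$ forces $\mathbf{u}, \mathbf{v}$ to be nonzero, orthogonal, and of equal norm, and $\langle \boldsymbol{\alpha}', \boldsymbol{\beta}' \rangle = 0$ forces the real vector $\boldsymbol{\alpha}'$ to be orthogonal to both $\mathbf{u}$ and $\mathbf{v}$, hence $\boldsymbol{\alpha}' \notin \spn_{\mathbb{C}}(\boldsymbol{\beta}', \overline{\boldsymbol{\beta}'}) = \spn_{\mathbb{C}}(\mathbf{u}, \mathbf{v})$. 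So the three vectors are linearly independent and \cref{lem:linearly-independent-tensor-rank} gives rank $3$. Consequently, for a rank-$2$ signature the real summand must vanish, after which \cref{prop:uniqueness-tensor-decomposition} applies to two summands on each side and the rest of your case analysis (which is correct) goes through. Replace the ``vanish or coincide projectively'' claim with this linear-independence argument and your write-up matches the paper's proof, with more detail.
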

\begin{proof}
We will only prove the statements involving $\boldsymbol{\alpha} \pm i \boldsymbol{\beta}$; the others are clear.

Suppose $(\boldsymbol{\alpha} + i \boldsymbol{\beta})\teh + (\boldsymbol{\alpha} - i \boldsymbol{\beta})\teh$ is tractable.
Then, it must be equal to either $\mathbf{u}\teh + \mathbf{v}\teh$ or $(\mathbf{u} + i \mathbf{v}) \teh + (\mathbf{u} - i \mathbf{v})\teh$ for 
$\mathbf{u}, \mathbf{v} \in \mathbb{R}^3$ such that $\langle \mathbf{u}, \mathbf{v} \rangle = 0$, with the additional condition that $\langle \mathbf{u}, \mathbf{u} \rangle = \langle \mathbf{v}, \mathbf{v} \rangle$ in the second case.
For the first case, by \cref{prop:uniqueness-tensor-decomposition}, we may assume that $\boldsymbol{\alpha} + i \boldsymbol{\beta} = c_1 \mathbf{u}$ and $\boldsymbol{\alpha} - i \boldsymbol{\beta} = c_2 \mathbf{v}$ for some $c_1, c_2 \in \mathbb{C}$.
Then, we must have $\langle \boldsymbol{\alpha} + i \boldsymbol{\beta}, \boldsymbol{\alpha} - i \boldsymbol{\beta} \rangle = \langle c_1 \mathbf{u}, c_2 \mathbf{v} \rangle = 0$. 
This implies $\langle \boldsymbol{\alpha}, \boldsymbol{\alpha} \rangle + \langle \boldsymbol{\beta}, \boldsymbol{\beta} \rangle = 0$ which cannot happen since $\boldsymbol{\alpha}, \boldsymbol{\beta}$ are nonzero real vectors.
For the second case, by \cref{prop:uniqueness-tensor-decomposition}, we may assume that $\boldsymbol{\alpha} + i \boldsymbol{\beta} = c_1 (\mathbf{u} + i \mathbf{v})$ and $\boldsymbol{\alpha} - i \boldsymbol{\beta} = c_2 (\mathbf{u} - i \mathbf{v})$ for some $c_1, c_2 \in \mathbb{C}$.
This implies $\langle \boldsymbol{\alpha} + i \boldsymbol{\beta} , \boldsymbol{\alpha} + i \boldsymbol{\beta}  \rangle = c_1^2 \langle \mathbf{u} + i \mathbf{v}, \mathbf{u} + i \mathbf{v} \rangle = c_1^2(\langle \mathbf{u}, \mathbf{u} \rangle - \langle \mathbf{v}, \mathbf{v} \rangle + 2 i \langle \mathbf{u}, \mathbf{v} \rangle) = 0$.
However, we also have that $\langle \boldsymbol{\alpha} + i \boldsymbol{\beta}, \boldsymbol{\alpha} + i \boldsymbol{\beta} \rangle = \langle \boldsymbol{\alpha}, \boldsymbol{\alpha} \rangle - \langle \boldsymbol{\beta}, \boldsymbol{\beta} \rangle + 2 i \langle \boldsymbol{\alpha}, \boldsymbol{\beta} \rangle $ which has a nonzero imaginary part by assumption.
\end{proof}

\section{Statement of the Dichotomy Theorem}
Let $\mathcal{F}$ be a set of nondegenerate, real-valued, symmetric signatures over domain $\{\db, \dg, \dr\}$.
\begin{restatable}{theorem}{maintheorem}\label{thm:dichotomy-set-of-domain-3}
$\holts(\mathcal{F})$ is computable in polynomial time if there exists a real orthogonal $T$, such that one of the following conditions holds.
  In all other cases, $\holts(\mathcal{F})$ is \#\P-hard.
  \begin{enumerate}
    \item[\tractbinary.] \label{tract-binary} Every signature in $\mathcal{F}$ has arity $\le 2$.
    \item[\tractE.] \label{tract-E} $T \mathcal{F} \subseteq \mathcal{E}$.
    \item[\tractBG.] \label{tract-BG} 
      \begin{enumerate}
        \item For all $\mathbf{F} \in T \mathcal{F}$ of arity $ \ge 3$, $\supp \mathbf{F} \subseteq \{\db, \dg\}^*$, and
        \item For all binary $\mathbf{G} \in T \mathcal{F}$, either $\mathbf{G} \in \mathcal{D}$ or $\mathbf{G}$ is $\db \dg | \dr$, and
        \item $\holbs((T\mathcal{F})\domres{\db, \dg})$ is tractable.
      \end{enumerate}
    \item[\tractBGR.] \label{tract-BG-R} 
      \begin{enumerate}
        \item For all $\mathbf{F} \in T \mathcal{F}$ of arity $ \ge 3$, $\mathbf{F}$ is $\db \dg | \dr$, and
        \item For all binary $\mathbf{G} \in T \mathcal{F}$, either $\mathbf{G}$ is $\db \dg | \dr$ or $\mathbf{G}$ is \swbg, and
        \item $\holbs((T\mathcal{F})\domres{\db, \dg})$ is tractable.
      \end{enumerate}
    \item[\tractBGGRBR.] \label{tract-BG-GR-BR} Let $\mathcal{F}_{ij} = \{\mathbf{F} \in T \mathcal{F} : \supp \mathbf{F} \subseteq \{i, j\}^*\}$.
      Let $\mathcal{R} = T \mathcal{F} - (\mathcal{F}_{\db \dg} \cup \mathcal{F}_{\db \dr} \cup \mathcal{F}_{\dg \dr})$. 
      \begin{enumerate}
        \item $\mathcal{R} \subseteq \mathbb{R}\octgroup$,
        and $\langle \mathcal{R}' \rangle \subseteq \octgroup$, where
        $\langle \mathcal{R}' \rangle$ is the monoid generated by $\mathcal{R}'
        =\mathbb{R} \mathcal{R} \cap \octgroup$, and
        \item 
        For all $i, j \in \{\db, \dg, \dr\}$, $\holbs( \langle \mathcal{R}' \rangle\domres{i, j} \cup \mathcal{F}_{ij})$ is tractable, and
        \item 
        For all $i, j \in \{ \db, \dg, \dr\}$, 
        $\holbs( (\bigcup_{\mathbf{G} \in \langle \mathcal{R}' \rangle} \mathbf{G} (T \mathcal{F})) \domres{i, j})$ is tractable.
      \end{enumerate}
  \end{enumerate}
\end{restatable}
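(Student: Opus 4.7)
The plan is to prove tractability and hardness in separate directions, and for each direction to work through the five candidate classes $\tractbinary$--$\tractBGGRBR$ using the tensor-decomposition geometry developed in the preliminaries. For tractability, class $\tractbinary$ is standard: a signature grid with only unary and binary constraints reduces to evaluating products of $3{\times}3$ matrices along the graph structure. For $\tractE$, the decomposed form in $\mathcal{E}$ means the Holant sum factors along compatible domain-element tuples $(a,b,c)$, each factor being a trivial product. For $\tractBG$ and $\tractBGR$, after transforming by $T$ every high-arity signature lives on a Boolean-like support, so the hypothesized $\holbs$-tractability of the restriction $(T\mathcal{F})\domres{\db,\dg}$ carries over, with the $\mathcal{D}$-matrices or swap matrices handled as sparse gluing rules between the two-element blocks. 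Class $\tractBGGRBR$ is the most involved: the finiteness of $\octgroup$ makes $\langle \mathcal{R}' \rangle$ finite, so one can explicitly enumerate the orbit of $T\mathcal{F}$ under the monoid action and reduce to finitely many Boolean sub-Holant instances, each tractable by condition (c).

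For hardness, the plan follows the two-step reduction hinted at in the introduction: first a dichotomy for two ternary signatures, then a dichotomy for one binary plus one ternary signature, and then a bootstrap to arbitrary arity via the unary-contraction subspace $\mathscr{F}$. Given $\mathbf{F}_1, \mathbf{F}_2$ ternary, I would first normalize $\mathbf{F}_1$ by orthogonal $T$ to the canonical form \ternarytractgeneq or \ternarytractz of \cref{thm:dich-single-sym-ter-dom3-real}, then examine the tensor decomposition of $T\teh \mathbf{F}_2$ in that coordinate system; the cases where the decomposition vectors fail to align with the coordinate axes, the isotropic cone, or the appropriate orthogonality relations are shown $\sph$ by constructing a gadget whose output has tensor decomposition covered by \cref{cor:non-orthogonal-linearly-independent-hard}. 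For the binary-ternary pair, the gadget $\mathbf{G}\teh \mathbf{F} = (\mathbf{G}\mathbf{u})\teh + (\mathbf{G}\mathbf{v})\teh$ lets us iterate $\mathbf{G}$; if $\mathbf{G}$ does not belong to $\mathcal{D}$, is not block-$\db\dg|\dr$, and is not one of the swap forms, then the orbit of the decomposition vectors under $\mathbf{G}$ either escapes $\octgroup$-type structure and produces a non-tractable ternary signature, or forces hardness via the two-ternary dichotomy just established.

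To lift from arity $\le 3$ to arbitrary arity, I would use the linear subspace $\mathscr{F} \subseteq \mathsf{S}^3(\mathbb{R}^3)$ obtained by contracting an arity-$k$ signature $\mathbf{F}$ with $k-3$ unary signatures (always available in $\hols$). Since $\mathscr{F}$ is closed under real linear combinations and $\holts(\mathscr{F})$ must be tractable whenever $\holts(\mathcal{F})$ is, the two-ternary dichotomy forces every pair in $\mathscr{F}$ to share a single common canonical form, which by \cref{prop:uniqueness-tensor-decomposition} pins down a global orthogonal $T$ and a rigid support/decomposition structure for $\mathbf{F}$ itself. Applying the same subspace argument to the mixed binary-ternary interactions (by contracting $k-2$ inputs of higher-arity signatures and pairing with binaries in $\mathcal{F}$) yields the cross-arity compatibility conditions, producing exactly the families $\tractbinary$--$\tractBGGRBR$.

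The main obstacle I anticipate is class $\tractBGGRBR$ and its failure mode. The condition $\langle \mathcal{R}' \rangle \subseteq \octgroup$ is delicate because a single binary signature whose monoid closure exits $\octgroup$ (e.g.\ by producing a generalized permutation with a nontrivial real scalar that, upon further composition, yields a non-monomial matrix) can cascade: its iterated action on the ternary family produces infinitely many $\db\dg$-, $\db\dr$-, $\dg\dr$-supported signatures whose joint Boolean $\holbs$ problem becomes $\sph$ by \cref{thm:dich-sym-Boolean}. Showing that this is the \emph{only} failure route requires a careful finite case analysis of products of $\octgroup$-like matrices with $\mathcal{D}$-matrices and swap matrices, together with interpolation arguments to realize arbitrary non-orthogonal binary signatures from iterated gadgets, and this is where most of the technical weight of the hardness proof will sit.
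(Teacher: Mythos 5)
Your overall architecture tracks the paper's closely: per-class algorithms for tractability, and for hardness a two-ternary dichotomy, a binary--ternary dichotomy, and an arity lift through the contraction subspace $\mathscr{F}=\{\langle\mathbf{F},\mathbf{u}\rangle:\mathbf{u}\in\mathbb{R}^3\}$. (The paper proves the two pair dichotomies independently of each other, whereas you derive the binary--ternary case partly from the two-ternary one; that is a legitimate variation.) One imprecision in the lift: saying that ``the two-ternary dichotomy forces every pair in $\mathscr{F}$ to share a single common canonical form'' is not right if read as pairwise compatibility of the given signatures --- two non-\geneq rank-2 signatures supported on different coordinate planes are pairwise tractable yet admit no common form. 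What actually does the work is applying the low-arity dichotomies to \emph{sums} such as $\mathbf{F}+\mathbf{G}$, which lie in $\mathscr{F}$ by linearity; this is exactly the content of \cref{prop:geneq-sum-hard,prop:z-sum-hard,prop:bg-gr-sum-hard}, and your outline should make that mechanism explicit.

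The genuine gap is at the set-of-signatures stage. You propose to handle ``mixed binary--ternary interactions'' by ``the same subspace argument,'' but the family of ternary signatures obtained by contracting \emph{different} members of $\mathcal{F}$ (or by letting binaries act on them) is not a vector space: closure under addition, which is the engine of the subspace argument, is simply unavailable across distinct signatures. The paper needs a different device here: after reducing every signature of arity $\ge 3$ to a ternary one (\cref{cor:replace-arity-3}), it forms the orbit $\mathcal{O}=\{\mathbf{G}\teh\mathbf{F}:\mathbf{F}\in\mathcal{T},\ \mathbf{G}\in\langle\mathcal{B}\rangle\}$ under the monoid generated by the binaries, shows via the plane geometry of rank-2 decompositions (\cref{prop:cannot-strspt-plane,lem:set-of-ternary-signatures-must-bg-r-or-separate}) that $\mathcal{O}$ must be transformable to one of the structured supports, and then, in the \tractBGGRBR case, pins down the non-\strspt binaries to $\mathbb{R}\octgroup$ by checking even powers $\mathbf{H}^{2k}$ (which are diagonal with entries $x^{2k}$ and $1$) against the non-\geneq Boolean restrictions of $\mathbf{F}$ and $\mathbf{G}$ via \cref{thm:dich-sym-Boolean}; no interpolation is used or needed anywhere. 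Without a concrete substitute for this orbit/geometric analysis, your outline does not yield the \tractBGGRBR conditions --- in particular $\mathcal{R}\subseteq\mathbb{R}\octgroup$ with $\langle\mathcal{R}'\rangle\subseteq\octgroup$ and the compatibility conditions (b), (c) --- as the exhaustive list of ways to escape \#\P-hardness.
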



In classes \tractBG, \tractBGR, and \tractBGGRBR, we refer to the $\holbs$ tractability.
By the Boolean domain dichotomy, it is necessary that those $\holbs$ problems are tractable.
However, it is not immediately clear why they are sufficient conditions.

In case (b) of class \tractBGGRBR, it seems like we may need to use the asymmetric $\holbs$ dichotomy  (which is known \cite{cai_dichotomy_2020_asymm}), because while the signatures in $\mathcal{R}$ are symmetric, $\langle \mathcal{R} \rangle$ may contain asymmetric signatures.
We claim that is not necessary.
Let $\mathbf{G} \in \octgroup$.
If $\mathbf{G}\domres{i, j}$ is nondegenerate and asymmetric, then $\mathbf{G}\domres{i, j} = \pm \begin{bsmallmatrix}
  0 & 1 \\
  -1 & 0
\end{bsmallmatrix}$.
We can deduce that $\mathbf{G}\domres{i, j}$ is \emph{universally compatible}, i.e., when appended to
any $\holbs$ tractable class results in a tractable set, without referring to any asymmetric $\holbs$ dichotomy.
For type $\typei(a, b)$ (see \cref{def:boolean-signature-type-i-ii,thm:dich-sym-Boolean}), we see that $\begin{bsmallmatrix}
  0 & 1 \\ -1 & 0
\end{bsmallmatrix} = \frac{1}{4a^2 + b^2} \begin{bsmallmatrix}
  b & - 2a \\
  - 2a & -b
\end{bsmallmatrix} \begin{bsmallmatrix}
  2a & b \\
  b & - 2a
\end{bsmallmatrix}$.
The first matrix in the symmetric signature notation is $[b, -2a, -b]$, which satisfies the recurrence $a \cdot b + b (-2a) = a  \cdot (-b)$.
The second matrix in the symmetric signature notation is $[2a, b, -2a]$, which is the other specified form.
For type $\typeii$, we see that $\begin{bsmallmatrix}
  0 & 1 \\
  -1 & 0
\end{bsmallmatrix} = \begin{bsmallmatrix}
  1 & 0 \\ 0 & -1
\end{bsmallmatrix} \begin{bsmallmatrix}
  0 & 1 \\ 1 & 0
\end{bsmallmatrix}$, and $[1, 0, -1]$  and $[0, 1, 0]$ are type $\typeii$.

This is not a coincidence.
We may view the $\holbs$ dichotomy in a geometric way.
Consider the tractable type $\typei(a, b)$.
There are two norm $1$ orthogonal vectors $\mathbf{u}, \mathbf{v} \in \mathbb{R}^2$ such that any signature of arity $\ge 3$ of  type $\typei(a, b)$
is of the form $c \mathbf{u}^{\otimes n} + d \mathbf{v}^{\otimes n}$.
Essentially, the type $\typei(a,b)$ signatures of arity $\ge 3$  
can be represented by two orthogonal lines in the $\mathbb{R}^2$ plane.
Consider the gadget construction of connecting a binary signature $\mathbf{G}$ to all of the edges of a signature.
In the tensor decomposition form, we have $\mathbf{G}^{\otimes n}(c \mathbf{u}^{\otimes n} + d \mathbf{v}^{\otimes n}) = c (\mathbf{G} \mathbf{u})^{\otimes n} + d (\mathbf{G} \mathbf{v})^{\otimes n}$.
For the binary signatures, we can check that the tractable signatures correspond to the linear transformations that fix the union of the two orthogonal lines as a set.
This is easily verified for the case of type $\typei(0, 1)$, when $\mathbf{u} = \mathbf{e}_1$ and $\mathbf{v} = \mathbf{e}_2$, since $[*, 0, *]$ signatures correspond to scaling $\mathbf{e}_1$ and $\mathbf{e}_2$, while $[0, *, 0]$ signatures correspond to reflection along $x = y$ line. 
Since type $\typei(a, b)$ for different $a, b$ are all related by a holographic transformation by a rotation, it is sufficient to check for type $\typei(0, 1)$.
The asymmetric signature above is a $\pi/2$ rotation, which fixes any pair of orthogonal lines, so it is tractable with all $\typei(a, b)$ types.
Another geometric view is that a $\pi/2$ rotation can be written as composition of a reflection along the line at the $\pi/4$ angle between $\mathbf{u}$ and $\mathbf{v}$ and the reflection along $\mathbf{u}$.
The two signatures that we used to construct the $\pi/2$ rotation above are precisely these geometric transformations.
\begin{figure}
  \centering
  \begin{tikzpicture}[scale=0.5]
    \draw[->, thin, dashed] (-5, 0) -- (5, 0) node[right]{$x$};
    \draw[->,  thin, dashed] (0, -5) -- (0, 5) node[above]{$y$};

    \draw[-] (-4, -1) -- (4, 1) node[right]{$\mathbf{u}$};
    \draw[-] (-1, 4) node[above]{$\mathbf{v}$} -- (1, -4) ;

    \node (b) at (7.5, 1) {$\mathbf{G}$};
    \node (a) at (7.5, 0) {$\leadsto$};

    \tikzset{shift={(14, 0)}}

    \draw[->,  thin, dashed] (-5, 0) -- (5, 0) node[right]{$x$};
    \draw[->,  thin, dashed] (0, -5) -- (0, 5) node[above]{$y$};

    \draw[-] (-3, -2) -- (3, 2) node[right]{$ \mathbf{G} \mathbf{u}$};
    \draw[-] (-2, 3) node[above]{$\mathbf{G} \mathbf{v}$} -- (2, -3) ;
  \end{tikzpicture}
  \caption{Visualization of type $\typei(a, b)$.} \label{fig:type-i-geometric}
\end{figure}

The tractable type $\typeii$ may be viewed as a circle in the $\mathbb{R}^2$ plane.
The justification is that by \cref{lem:Z-normalization}, \cref{cor:z-normalization-orthogonal}, and \cref{prop:typeii-same-norm}, any type $\typeii$ signature can be written as $(\mathbf{u} + i \mathbf{v})^{\otimes n} + (\mathbf{u} - i \mathbf{v})^{\otimes n}$ for two orthogonal $\mathbf{u}, \mathbf{v} \in \mathbb{R}^2$ of the same norm, and all such signatures can be constructed from a single type $\typeii$ signature.
So, in a sense, after normalization, type $\typeii$ signatures span the whole unit circle.
And the compatible binary signature are the linear transformations that transform a circle to a circle.
These signatures are precisely the scalar multiples of the elements in the orthogonal group $O(2)$.
We see that the signature $[x, y, -x]$ corresponds to a scalar multiple of a reflection matrix.
Also, since any rotation can be written as a product of two reflections, we have that the rotation matrices viewed as  signatures, possibly asymmetric, are also compatible with type $\typeii$ signatures.
\begin{figure}
  \centering
  \begin{tikzpicture}[scale=0.5]
    \draw[->, thin, dashed] (-5, 0) -- (5, 0) node[right]{$x$};
    \draw[->, thin, dashed] (0, -5) -- (0, 5) node[above]{$y$};

    \draw (0, 0) circle (4);
    \node (c) at (3.5,3.5) {$S^1$};

    \node (b) at (7.5, 1) {$\mathbf{G}$};
    \node (a) at (7.5, 0) {$\leadsto$};

    \tikzset{shift={(14, 0)}}

    \draw[->, thin, dashed] (-5, 0) -- (5, 0) node[right]{$x$};
    \draw[->, thin, dashed] (0, -5) -- (0, 5) node[above]{$y$};

    \draw (0, 0) ellipse (3.5 and 2.5);
    \node (c) at (3.5,2) {$\mathbf{G} S^1$};
  \end{tikzpicture}
  \caption{Visualization of type $\typeii$.} \label{fig:type-ii-geometric}
\end{figure}

A similar analogy can be made for domain $3$ signatures as well, since the tensor decomposition forms in \cref{thm:dich-single-sym-ter-dom3-real} are also about orthogonality of the vectors. 
Similar to the Boolean domain tractable signatures, a tractable domain $3$ signature also can be represented as a set of vectors in the three dimensional space.
For instance, let $\mathbf{u}_1\teh + \mathbf{u}_2\teh$, $\mathbf{v}_1\teh + \mathbf{v}_2\teh$ and $\mathbf{w}_1\teh + \mathbf{w}_2\teh$, 
be three signatures.
The idea is depicted in \cref{fig:geometric-intuition-3-signatures}, and this intuition is formalized in \cref{sec:set-of-signatures}.
Class \tractBG says that the signatures of arity $3$ or higher must live in 
some plane, and we can assume it is the $xy$-plane after an orthogonal transformation.
The compatible binary signatures are those fixing the $xy$-plane ($\db \dg | \dr$) or 
behaving like a degenerate transformation on the $xy$ plane.
The class \tractBGR says that the signatures of arity $3$ or higher are formed by $xy$-plane and a $z$-axis line.
The compatible binary signatures, after the same orthogonal transformation,
are those fixing the $xy$-plane and $z$-axis line ($\db \dg | \dr$) or mapping between the $xy$-plane and the $z$-axis (\swbg).
The class \tractBGGRBR says that the signatures of arity $3$ or higher must live in one of $xy$-plane, $yz$-plane, or $xz$-plane.
The compatible binary signatures are those permuting the three planes without stretching. 
Such transformations are precisely the group $O_h$.

This idea of viewing a binary signature as a transformation is also captured in the final proof in \cref{sec:set-of-signatures}, where we define a set $\mathcal{O}$ which is the set of all gadgets constructible by connecting a chain of binary signatures on a ternary signature. 
In terms of the vectors in the tensor decomposition, $\mathcal{O}$ can be viewed as the orbit under the monoid action of the binary signatures.
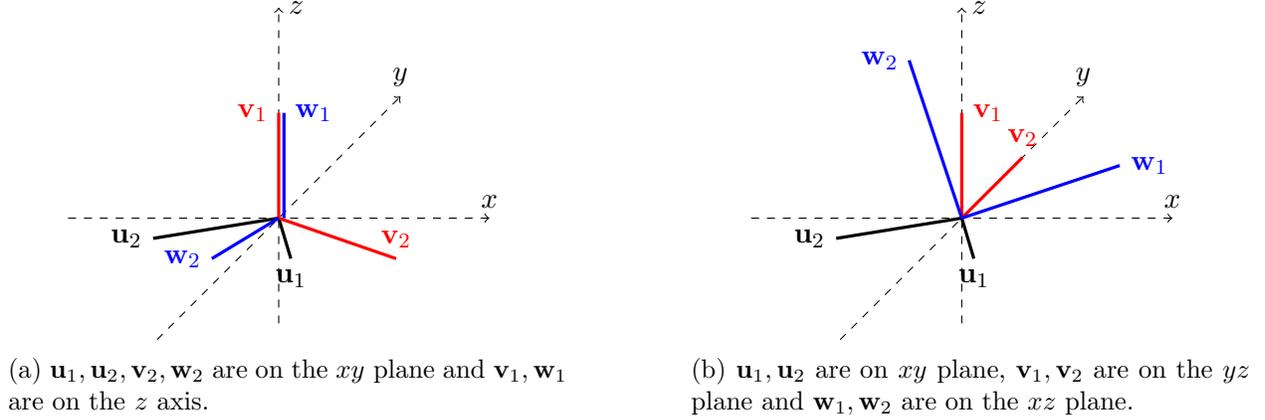
\begin{figure}
  \centering
  \begin{subfigure}[b]{0.45\textwidth}
    \centering
    \begin{tikzpicture}[scale=0.7]
      \draw[thin, dashed, ->] (xyz cs:x=-4) -- (xyz cs:x=4) node[above] {$x$};
      \draw[thin, dashed, ->] (xyz cs:y=-2) -- (xyz cs:y=4) node[right] {$z$};
      \draw[thin, dashed, ->] (xyz cs:z=6) -- (xyz cs:z=-6) node[above] {$y$};

      \draw[very thick] (xyz cs:x=0) -- (xyz cs:x=1,z=2) node[below] {$\mathbf{u}_1$};
      \draw[very thick] (xyz cs:z=0) -- (xyz cs:x=-2,z=1) node[left] {$\mathbf{u}_2$};

      \draw[very thick, red] (xyz cs:x=0) -- (xyz cs:y=2) node[left] {$\mathbf{v}_1$};
      \draw[very thick, red] (xyz cs:x=0) -- (xyz cs:x=3,z=2) node[above] {$\mathbf{v}_2$};

      \draw[very thick, blue] (xyz cs:x=0.1) -- (xyz cs:x=0.1,y=2) node[right] {$\mathbf{w}_1$};
      \draw[very thick, blue] (xyz cs:x=0) -- (xyz cs:x=-0.5,z=2) node[left] {$\mathbf{w}_2$};
    \end{tikzpicture}
    \caption{$\mathbf{u}_1, \mathbf{u}_2, \mathbf{v}_2, \mathbf{w}_2$ are on the $xy$ plane and $\mathbf{v}_1, \mathbf{w}_1$ are on the $z$ axis.}
  \end{subfigure}
  \hfill
  \begin{subfigure}[b]{0.45\textwidth}
    \centering
    \begin{tikzpicture}[scale=0.7]
      \draw[thin, dashed, ->] (xyz cs:x=-4) -- (xyz cs:x=4) node[above] {$x$};
      \draw[thin, dashed, ->] (xyz cs:y=-2) -- (xyz cs:y=4) node[right] {$z$};
      \draw[thin, dashed, ->] (xyz cs:z=6) -- (xyz cs:z=-6) node[above] {$y$};

      \draw[very thick] (xyz cs:x=0) -- (xyz cs:x=1,z=2) node[below] {$\mathbf{u}_1$};
      \draw[very thick] (xyz cs:z=0) -- (xyz cs:x=-2,z=1) node[left] {$\mathbf{u}_2$};

      \draw[very thick, red] (xyz cs:x=0) -- (xyz cs:y=2) node[right] {$\mathbf{v}_1$};
      \draw[very thick, red] (xyz cs:x=0) -- (xyz cs:z=-3) node[above] {$\mathbf{v}_2$};

      \draw[very thick, blue] (xyz cs:x=0) -- (xyz cs:x=3,y=1) node[right] {$\mathbf{w}_1$};
      \draw[very thick, blue] (xyz cs:x=0) -- (xyz cs:x=-1,y=3) node[left] {$\mathbf{w}_2$};
    \end{tikzpicture}
    \caption{$\mathbf{u}_1, \mathbf{u}_2$ are on $xy$ plane, $\mathbf{v}_1, \mathbf{v}_2$ are on the $yz$ plane and $\mathbf{w}_1, \mathbf{w}_2$ are on the $xz$ plane.}
  \end{subfigure}
  \caption{Geometric idea behind the dichotomy of $\holts(\mathcal{F})$}\label{fig:geometric-intuition-3-signatures}
\end{figure}

\section{Tractability}
In this section, we prove tractability.

Any $3 \times 3$ orthogonal matrix $T$ keeps the binary equality $(=_2)$ over $\{\db, \dg, \dr\}$ unchanged, namely $T\transpose I_3 T = I_3$ in matrix notation.
Hence $\holts(\mathcal{F})$ is tractable if and only if $\holts(T \mathcal{F})$ is tractable. 
Also, we may always assume that the given signature grid is connected,
as the Holant value of any signature grid is the product over connected components.

\subsection{Class \texorpdfstring{\tractbinary}{A}}
Class \tractbinary is when every signature in $\mathcal{F}$ has arity $\le 2$.
Then, the graph of the signature grid $\Omega$ is a disjoint union of paths and cycles. By matrix multiplication, we can compute the Holant value for a path. 
The Holant value for a cycle is obtained by taking the trace of a path.

\subsection{Class \texorpdfstring{\tractE}{B}}
Class \tractE is when there exists an orthogonal $T$ such that $T \mathcal{F} \subseteq \mathcal{E}$.
We prove the tractability of $\holts(\mathcal{E})$.
Note that $\mathcal{U} \subseteq \mathcal{E}$, so we argue the tractability of $\hol_3(\mathcal{E})$.
Let $\Omega$ be a signature grid and $e$ any edge.
For any $\db, \dg, \dr$ assignment to $e$, the assignment must propagate uniquely to all edges, which implies that there are at most three assignments to the whole grid that can result in a nonzero Holant value.
Therefore, $\hol_3(\mathcal{E})$ is polynomial time computable.

\subsection{Class \texorpdfstring{\tractBG}{C}}
Class \tractBG is when there exists an orthogonal $T$ such that $T \mathcal{F}$ has the following properties:
\begin{enumerate}
        \item For all $\mathbf{F} \in T \mathcal{F}$ of arity $3$ or higher, $\supp \mathbf{F} \subseteq \{\db, \dg\}^*$.
        \item For all binary $\mathbf{G} \in T \mathcal{F}$, either $\mathbf{G} \in \mathcal{D}$ or $\mathbf{G}$ is $\db \dg | \dr$.
        \item $\holbs((T\mathcal{F})\domres{\db, \dg})$ is tractable.
      \end{enumerate}
Let $\Omega$ be a connected signature grid 
over $T \mathcal{F}$. We may assume  $\Omega$
is not in Class \tractbinary.
If a unary signature is connected to another unary
signature then this is the entire  $\Omega$ 
and we are done. If a unary signature is connected to a binary
signature then they  become another unary constraint which we can compute
its signature. So, by induction, 
we can assume any unary remaining is connected to
some signature 
$\mathbf{F} \in T \mathcal{F}$ of arity $\ge 3$, which has 
$\supp \mathbf{F} \subseteq \{\db, \dg\}^*$, and thus we can replace the
unary restricted to $\{\db, \dg\}^*$.
If there is a  chain of binary signatures $\mathbf{G}_1, \mathbf{G}_2, \ldots, \mathbf{G}_k$, we replace  it by a single binary signature $\mathbf{G}$ by taking the matrix product. If all $\mathbf{G}_i$ are $\db \dg | \dr$, then $\mathbf{G}$ is also $\db \dg | \dr$. In addition, the matrix of $\mathbf{G}\domres{\db, \dg}$ is equal to the product of the matrices of $\mathbf{G}_i\domres{\db, \dg}$.
If there is some $i$ such that $\mathbf{G}_i \in \mathcal{D}$, then $\mathbf{G} \in \mathcal{D}$ as well, since $\mathcal{D}$ is closed under multiplication and also closed under left or right multiplication by a $\db \dg | \dr$ matrix.
Note that if $\mathbf{G} \in \mathcal{D}$, then $\mathbf{G} \domres{\db, \dg}$ is degenerate. Now since any binary  $\mathbf{G}$ remaining can only be connected to
signatures $\mathbf{F}$ of arity $\ge 3$ with 
$\supp \mathbf{F} \subseteq \{\db, \dg\}^*$,
we can replace  $\mathbf{G}$ by  $\mathbf{G} \domres{\db, \dg}$.
Thus we obtain an equivalent signature grid 
$\Omega'$ as a $\holbs$ instance on domain $\{\db, \dg\}$.
Then condition 3. implies that  the Holant value of $\Omega'$ can be computed in polynomial time.

\subsection{Class \texorpdfstring{\tractBGR}{D}}
Class \tractBGR is when there exists an orthogonal $T$ such that $T \mathcal{F}$ has the following properties:
\begin{enumerate}
        \item For all $\mathbf{F} \in T \mathcal{F}$ of arity $3$ or higher, $\mathbf{F}$ is $\db \dg | \dr$.
        \item For all binary $\mathbf{G} \in T \mathcal{F}$, either $\mathbf{G}$ is $\db \dg | \dr$ or $\mathbf{G}$ is \swbg.
        \item $\holbs((T\mathcal{F})\domres{\db, \dg})$ is tractable.
      \end{enumerate}
Suppose we are given a $T \mathcal{F}$ signature grid $\Omega$.
If $\Omega$ does not use any binary signature of \swbg, then all signatures are $\db \dg | \dr$. 
Then, if any edge gets assigned $\db$ or $\dg$, all other edges must also be assigned $\db$ or $\dg$ for the assignment to result in a nonzero value.
Similarly, any assignment of $\dr$ to an edge must propagate as $\dr$ to all other edges for the assignment to result in a nonzero value.
Therefore, the Holant value is sum of all assignments taking values in $\{\db, \dg\}$ and an assignment that only assigns $\dr$.
The first sum can be computed in polynomial since $\holbs( (T\mathcal{F})\domres{\db, \dg})$ is tractable, and the second value is computable in polynomial time.

Now, suppose $\Omega$ contains \swbg signatures. 
First, note that a product of two \swbg signatures is $\db \dg | \dr$ signature that is degenerate on $\{\db, \dg\}$.
\[
  \begin{bmatrix}
    0 & 0 & a \\
    0 & 0 & b \\
    c & d & 0
    \end{bmatrix} \begin{bmatrix}
    0 & 0 & e \\
    0 & 0 & f \\
    g & h & 0
  \end{bmatrix} = \begin{bmatrix}
    ag & ah & 0 \\
    bg & bh & 0 \\
    0 & 0 & ce + df
  \end{bmatrix}
  \, .
\]
Hence, we may reduce any chain $\mathbf{G}_1, \mathbf{G}_2, \cdots, \mathbf{G}_k$ of \swbg signatures by matrix multiplication to a length $2$ chain $\mathbf{G}, \mathbf{G}_k$ if $k$ is odd and a single $\mathbf{G}$ if $k$ is even,
where $\mathbf{G}$ is a $\db \dg | \dr$ signature that is degenerate on $\{\db, \dg\}$.
In particular, $\mathbf{G}\domres{\db, \dg}$ is compatible with $\mathcal{F}\domres{\db, \dg}$.
Therefore, we may assume that we have a signature grid $\Omega'$ such that any \swbg signature is connected to two $\db \dg | \dr$ signatures.
Now, suppose we gather each connected component of $\db \dg | \dr$ signatures as a cluster, so that the connections between clusters are by \swbg signatures.
If we imagine a graph with vertices being the clusters and edges being the \swbg signatures, then this graph must be bipartite.
Otherwise, suppose there is an odd cycle of clusters $C_1, C_2, \cdots, C_k, C_1$.
In any nonzero assignment, each cluster can take only values from $\{\db, \dg\}$ or $\dr$.
If $C_1$ gets $\{\db, \dg\}$, then $C_2$ must have $\dr$ because of the \swbg connection.
Therefore, if there is an odd cycle, $C_1$ will have an incoming $\dr$, resulting in a zero evaluation.
Similar argument shows that an assignment of $\dr$ to $C_1$ evaluates to zero.
Also, by the same argument, there cannot be a self loop by \swbg signature within a single cluster. 

Let the bipartition be $L \sqcup R$.
There are only two types of nonzero assignments: 
(1) all clusters in $L$ get $\{\db, \dg\}$ and all clusters in $R$ get $\dr$;
(2) all clusters in $L$ get $\dr$ and all clusters in $R$ get $\{\db, \dg\}$.
For both types of assignments, the \swbg signatures connecting a cluster $C_L \in L$ and $C_R \in R$ factors into two degenerate signatures.
In particular, suppose $\mathbf{G} = \begin{bsmallmatrix}
  0 & 0 & a \\
  0 & 0 & b \\
  c & d & 0
\end{bsmallmatrix}$ is a \swbg signature connecting $C_L$ and $C_R$.
Let $\Omega'$ be obtained from $\Omega$ by removing $\mathbf{G}$ and connecting the unary $(a, b, 0)$ to $C_L$ and $(0, 0, 1)$ to $C_R$, as described in \cref{fig:swap-factor}.
Then, since $C_L$ only takes values in $\{\db, \dg\}$ and $C_R$ only takes values in $\dr$, 
the sum of type (1) assignments is the same as the Holant value of $\Omega'$.
\begin{figure}
  \centering
  \begin{tikzpicture}
    \node[mycirc, label=above:{$C_L$}] (a) at (0, 0) {};
    \node[mycirc, label=above:{$\mathbf{G}$}] (b) at (1, 0) {};
    \node[mycirc, label=above:{$C_R$}] (c) at (2, 0) {};

    \draw (-1, 0) -- (a) -- (b) -- (c) -- (3, 0);
    \draw (-1, 0.5) -- (a) -- (-1, -0.5);
    \draw (3, 0.5) -- (c) -- (3, -0.5);

    \node (l) at (4, 0) {$\Rightarrow$};

    \node[mycirc, label=above:{$C_L$}] (f) at (6, 0) {};
    \node[mycirc, label=above:{$(a, b, 0)$}] (g) at (7.5, 0) {};
    \node[mycirc, label=above:{$(0, 0, 1)$}] (h) at (9.5, 0) {};
    \node[mycirc, label=above:{$C_R$}] (i) at (11, 0) {};

    \draw (5, 0) -- (f) -- (g);
    \draw (h) -- (i) -- (12, 0);

    \draw(5, 0.5) -- (f);
    \draw(5, -0.5) -- (f);

    \draw(12, 0.5) -- (i);
    \draw(12, -0.5) -- (i);
  \end{tikzpicture}
  \caption{Factorization of \swbg in type (1) assignment}\label{fig:swap-factor}
\end{figure}
This equivalence can be shown formally by a calculation. 
Let $e_L$ denote the edge between $C_L$ and $\mathbf{G}$ and $e_R$ denote the edge between $\mathbf{G}$ and $C_R$.
Let $c_{ij}$ be the contribution from rest of the grid when $e_L = i$ and $e_R = j$.
In a type (1) assignment, $e_L$ can only take value $\{\db, \dg\}$ and $e_R$ can only take value $\dr$.
The sum of all type (1) assignments is then:
\[
  \mathbf{G}(\db, \dr) \cdot c_{\db \dg} + \mathbf{G}(\dg, \dr) \cdot c_{\dg \dr} = a \cdot c_{\db \dg} + b \cdot c_{\dg \dr} 
  \, .
\]
The Holant value of $\Omega'$ is: 
\[
  a \cdot 1 \cdot c_{\db \dg} + b \cdot 1 \cdot c_{\dg \dr}
  \, .
\]
Note that we may use the same $c_{ij}$ since the two new unary signatures force the edge assignment in $L$ to be from $\{\db, \dg\}$ and the edge assignment in $R$ to be $\dr$.
Therefore, to evaluate the sum of type (1) assignments, we may factor all the \swbg signatures in this way, evaluate in $\{\db, \dg\}$ domain $\holbs$ on $L$ and assign $\dr$ to all of $R$, and multiply the two resulting values.
Both can be done in polynomial time since $\mathcal{F}\domres{\db, \dg}$ is assumed to be tractable.
Similar argument shows that the sum of type (2) assignments can be also computed in polynomial time.
Then, the Holant value is just the sum of those two values. 

\subsection{Class \texorpdfstring{\tractBGGRBR}{E}}
Class \tractBGGRBR is when there exists an orthogonal $T$ such that $T \mathcal{F}$ has the following properties.
Let $\mathcal{F}_{ij} = \{\mathbf{F} \in T \mathcal{F} : \supp \mathbf{F} \subseteq \{i, j\}^*\}$.
Let $\mathcal{R} = T \mathcal{F} - (\mathcal{F}_{\db \dg} \cup \mathcal{F}_{\db \dr} \cup \mathcal{F}_{\dg \dr})$.
Let 
$\mathcal{R}' = \{ \lambda \mathbf{G} : \lambda \in \mathbb{R}, \mathbf{G} \in \mathcal{R} \text{ such that } \lambda \mathbf{G} \in \octgroup \}$.
\begin{enumerate}
        \item For any $\mathbf{G} \in \mathcal{R}$, there is some $\lambda \in \mathbb{R}$ such that $\lambda \mathbf{G} \in \octgroup$.
        \item For all $i, j \in \{\db, \dg, \dr\}$, $\holbs( \langle \mathcal{R}' \rangle\domres{i, j} \cup \mathcal{F}_{ij})$ is tractable.
        \item For all $i, j \in \{ \db, \dg, \dr\}$, 
        $\holbs( (\bigcup_{\mathbf{G} \in \langle \mathcal{R}' \rangle} \mathbf{G} (T \mathcal{F})) \domres{i, j})$ is tractable.
      \end{enumerate}
Suppose we are given a $T \mathcal{F}$ signature grid $\Omega$.
If $\Omega$ does not use any signature from $\mathcal{R}$, then all signatures in $\Omega$ are \strspt.
Then, whenever there is an edge between two signatures with different support, the edge factors as pinning.
For example, if there is an edge $e$ between a signature $\mathbf{F}_{\db \dg} \in \mathcal{F}_{\db \dg}$ and $\mathbf{F}_{\dg \dr} \in \mathcal{F}_{\dg \dr}$,
we may remove the edge and connect the unary $(0, 1, 0)$ to both $\mathbf{F}_{\db \dg}$ and $\mathbf{F}_{\dg \dr}$, as shown \cref{fig:support-factor}.
This is because $e$ can only take value $\dg$ in any nonzero assignment.
Also, if there is a connection between any unary signature $\mathbf{u}$ and a \strspt signature supported on $\{i, j\}$, then we may replace $\mathbf{u}$ with $\mathbf{u}\domres{ij}$.
We obtain a new signature grid $\Omega'$ after factoring all such edges, and each of the connected components of $\Omega'$ is an instance of $\holbs(\mathcal{F}_{ij})$ for some $i, j \in \{\db, \dg, \dr\}$.
By assumption, the Holant value of each of the connected components can be computed in polynomial time.
\begin{figure}
  \centering
  \begin{tikzpicture}
    \node[mycirc, label=above:{$\mathbf{F}_{\db\dg}$}] (a) at (0, 0) {};
    \node[mycirc, label=above:{$\mathbf{F}_{\dg \dr}$}] (c) at (2, 0) {};

    \draw (-1, 0) -- (a) -- (c) -- (3, 0);
    \draw (-1, 0.5) -- (a) -- (-1, -0.5);
    \draw (3, 0.5) -- (c) -- (3, -0.5);

    \node (l) at (4, 0) {$\Rightarrow$};

    \node[mycirc, label=above:{$\mathbf{F}_{\db\dg}$}] (f) at (6, 0) {};
    \node[mycirc, label=above:{$(0, 1, 0)$}] (g) at (7.5, 0) {};
    \node[mycirc, label=above:{$(0, 1, 0)$}] (h) at (9.5, 0) {};
    \node[mycirc, label=above:{$\mathbf{F}_{\dg\dr}$}] (i) at (11, 0) {};

    \draw (5, 0) -- (f) -- (g);
    \draw (h) -- (i) -- (12, 0);

    \draw (5, 0.5) -- (f) -- (5, -0.5);
    \draw (12, 0.5) -- (i) -- (12, -0.5);
  \end{tikzpicture}
  \caption{Factorization of an edge between \strspt signatures of different support}\label{fig:support-factor}
\end{figure}
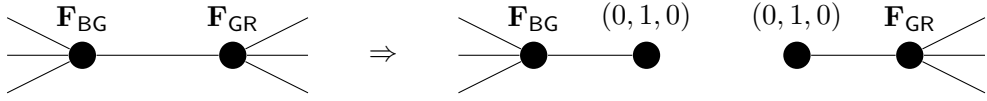

Now, suppose $\Omega$ contains signatures from $\mathcal{R}$.
We may replace those signatures by the corresponding scalar multiple in $\mathcal{R}'$.
Recall that the signatures in $\mathcal{R}'$ are not \strspt, but they are binary signatures of signed swap matrices by the assumption that $\mathcal{R}' \subseteq \octgroup$.
By the above, we may assume that in $\Omega$, there is no edge between \strspt signatures of different supports.
First, we reduce any chain of $\mathcal{R}'$ signatures into a single binary signature in $\langle \mathcal{R}' \rangle \subseteq \octgroup$.
Reducing any connection of $\mathcal{R}'$ signature with a unary signature to a unary, we obtain a new grid $\Omega'$ in which any $\langle \mathcal{R}' \rangle$ signature is between two signatures from $\mathcal{F}_{ij}$ and $\mathcal{F}_{i'j'}$.
We imagine $\Omega'$ is composed of clusters $C$ where each cluster is a connected component of $\mathcal{F}_{ij}$ signatures for some $i, j \in \{\db, \dg, \dr\}$, and each cluster has outgoing edges of $\langle \mathcal{R}' \rangle$ signatures.

Suppose $\mathbf{G} \in \langle \mathcal{R}' \rangle$ is a self loop on a cluster $C$.
Since its two end points are from $C$ and all signatures in $C$ are \strspt on the same support $\{i, j\}$, we may replace $\mathbf{G}$ with $\mathbf{G}\domres{i, j}$.
By assumption, $\mathbf{G}\domres{i, j}$ is compatible with signatures in $C$, so we may absorb it into $C$.

Suppose $\mathbf{G} \in \langle \mathcal{R}' \rangle$ connects two clusters $C_1$ and $C_2$ with supports $\{i_1, j_1\}$ and $\{i_2, j_2\}$ respectively.
Suppose $\mathbf{F}_1$ and $\mathbf{F}_2$ are the signatures in $C_1$ and $C_2$ connected by $\mathbf{G}$ respectively.
Let the arity of $\mathbf{F}_1$ be $n$.
Then, we replace all the edges of $\mathbf{F}_1$, except the one connecting to $\mathbf{G}$, with $\mathbf{G}^{-1} \mathbf{G} $.
Essentially, we are performing a local holographic transformation, so this does not change the Holant value.
The process is described in \cref{fig:local-holographic} for arity $3$ case.
We now have $\mathbf{G}^{\otimes n} \mathbf{F}_1$ connected directly to $\mathbf{F}_2$.
If $\lvert \supp \mathbf{G}^{\otimes n} \mathbf{F}_1 \cap \supp \mathbf{F}_2 \rvert \le 1$, the edge factors in to pinning.
Otherwise, by the assumption, $\mathbf{G}^{\otimes n} \mathbf{F}_1$ is compatible with $\mathbf{F}_2$, so we may absorb it into the $C_2$ cluster.

We choose a cluster to begin with, and repeatedly absorb its neighboring signatures or factor the edge into pinning by the above process.
Each step of the above process can be done in polynomial time.
Then, in the end, we will be left with multiple connected components in which all the signatures are \strspt, and their $\holbs$ values can be computed in polynomial time by the assumption.

\begin{figure}
\centering
      \begin{tikzpicture}[scale=0.55]
        \node[mycirc, label=right:{$\mathbf{F}_1$}] (f1) at (3, 0) {};
        \node[mycirc, label=above:{$\mathbf{G}$}] (g) at (1, 0) {};
        \node[mycirc, label=above:{$\mathbf{F}_2$}] (f2) at (-1, 0) {};

        \node (f3) at (-2, 0) {};

        \node[label=above:{}] (g2) at (1, 2) {};

        \draw (f1) to[out=90, in=0] (1, 2) -- (0, 2);
        \draw (f1) to[out=-90, in=0] (1, -2) -- (0, -2);

        \draw (f1) -- (g) -- (f2) -- (f3);

        \node[mycirc, label=right:{$\mathbf{F}_1$}] (f1) at (11, 0) {};
        \node[mycirc, label=above:{$\mathbf{G}$}] (g) at (9, 0) {};
        \node[mycirc, label=above:{$\mathbf{F}_2$}] (f2) at (7, 0) {};

        \node (f3) at (6, 0) {};

        \node[mycirc, label=above:{$\mathbf{G}$}] (g1) at (9, 2) {};
        \node[mycirc, label=above:{$\mathbf{G}$}] (g2) at (9, -2) {};
        \node[mycirc, label=above:{$\mathbf{G}^{-1}$}] (g11) at (8, 2) {};
        \node[mycirc, label=above:{$\mathbf{G}^{-1}$}] (g21) at (8, -2) {};

        \draw (f1) to[out=90, in=0] (g1) -- (g11) -- (7, 2);
        \draw (f1) to[out=-90, in=0] (g2) -- (g21) -- (7, -2);

        \draw (f1) -- (g) -- (f2) -- (f3);

        \node[mycirc, label=right:{$\mathbf{G}^{\otimes 3} \mathbf{F}_1$}] (f1) at (19, 0) {};
        \node[mycirc, label=above:{$\mathbf{F}_2$}] (f2) at (15, 0) {};

        \node (f3) at (14, 0) {};

        \node[mycirc, label=above:{$\mathbf{G}^{-1}$}] (g11) at (17, 2) {};
        \node[mycirc, label=above:{$\mathbf{G}^{-1}$}] (g21) at (17, -2) {};

        \draw (f1) to[out=90, in=0] (g11) -- (16, 2);
        \draw (f1) to[out=-90, in=0] (g21) -- (16, -2);

        \draw (f1) -- (f2) -- (f3);
  \end{tikzpicture}
  \caption{Local holographic transformation}\label{fig:local-holographic}
\end{figure}
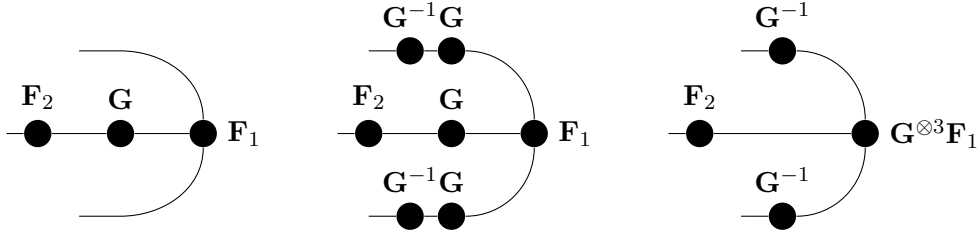

\section{Outline of Hardness}
For the rest of the paper, we prove the hardness results.
After a dichotomy of a single ternary signature~\cite{cai_dichotomy_2013}, a natural next step is proving a dichotomy of a pair of ternary and binary signatures (as binary signatures are the `simplest' signatures after unary signatures), and use it to prove further theorems.
However, for domain size 3 in the Holant setting, 
binary signatures actually allow nontrivial, and somewhat unanticipated, interactions with other signatures. 
Also, it turns out that a dichotomy for a pair of binary and ternary signatures,
while certainly needed on its own, is not easily applicable for showing further dichotomies. 
 We circumvent this difficulty by proving a dichotomy of a pair of ternary signatures directly.

In \cref{sec:single-ternary-signle-binary}, we show the dichotomy of $\holts(\mathbf{F}, \mathbf{G})$, in which $\mathbf{F}$ is a ternary signature and $\mathbf{G}$ is a binary signature.
In \cref{sec:two-ternary}, we show the dichotomy of $\holts(\mathbf{F}, \mathbf{G})$, in which $\mathbf{F}$ and $\mathbf{G}$ are both ternary signatures.
In \cref{sec:higher-arity}, we show the dichotomy of $\holts(\mathbf{F})$ for a signature $\mathbf{F}$ of arbitrary arity $\ge 3$.
In \cref{sec:set-of-signatures}, we show the dichotomy of $\holts(\mathcal{F})$ for an arbitrary set of signatures $\mathcal{F}$. 
A logical dependency is described by the diagram in 
\cref{fig:logical-dependency}.
In each case, we only consider signatures of rank $\ge 2$.
This is because rank $1$ signatures are degenerate and thus can be factored into unary signatures. 
Since $\hols$ assumes the existence of unary signatures, degenerate signatures do not impact the complexity.

\begin{figure}
\centering
    \begin{tikzpicture}[scale=0.7]
        \node (ar33) [block] at (0, 0) {$(3, 3)$};

        \node (ar44) [block, right of=ar33, xshift=1cm] {$(4, 4)$};
                \node (ar4) [block, above of=ar44, yshift=1cm] {$(4)$};

        \node (ar55) [block, right of=ar44, xshift=1cm] {$(5, 5)$};
                \node (ar5) [block, above of=ar55, yshift=1cm] {$(5)$};

        \node (dots) [right of=ar5, yshift=-1cm, xshift=0cm] {$\cdots$};

        \node (arn) [block, right of=dots] {$(n)$};

        \node (set) [block] at (12, 0) {Set};

        \draw[dashed, rounded corners] (1.5, -1.2) rectangle (10, 4);

        \node (ar23) [block] at (0, -2) {$(2, 3)$};

        \draw [arrow] (ar33) -- (ar4);
        \draw [arrow] (ar4) -- (ar44);
        \draw [arrow] (ar33) -- (ar44);

        \draw [arrow] (ar44) -- (ar5);
        \draw [arrow] (ar5) -- (ar55);
        \draw [arrow] (ar44) -- (ar55);

        \draw [arrow, rounded corners] (ar23) -- (12, -2) --  (set);
        \draw [arrow, rounded corners] (10, 1.4) -- (12, 1.4) -- (set);
    \end{tikzpicture}
    \caption{
Logical dependency diagram: $(n)$ refers to a dichotomy of $\holts(\mathbf{F})$ for an arity-$n$ signature $\mathbf{F}$.
    $(n, m)$ refers to a dichotomy of $\holts(\mathbf{F}, \mathbf{G})$ of an arity-$n$ signature $\mathbf{F}$ and an arity-$m$ signature $\mathbf{G}$.
    `Set' refers to the dichotomy of an arbitrary set of signatures.
    }\label{fig:logical-dependency}
\end{figure}
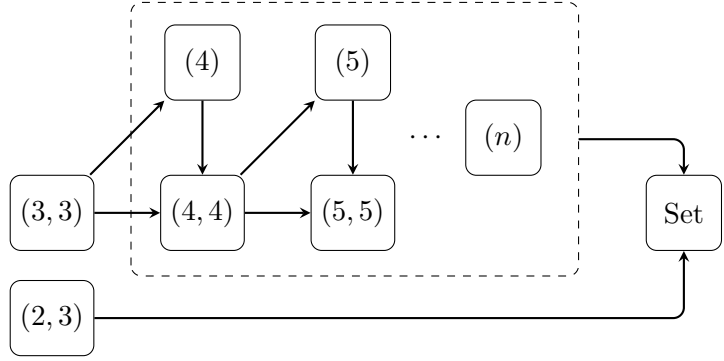

In the proof of the dichotomy theorems, 
by applying the known dichotomy theorems, we argue that to escape \#\P-hardness 
 the signatures must be of the tractable forms of \cref{thm:dichotomy-set-of-domain-3}. 
Therefore, the signatures are either tractable or fail to escape hardness, meaning they are \#\P-hard.
Hence, we obtain a dichotomy.

The main intuition behind the proof of hardness is the geometry of the vectors in a tensor decomposition of a signature.
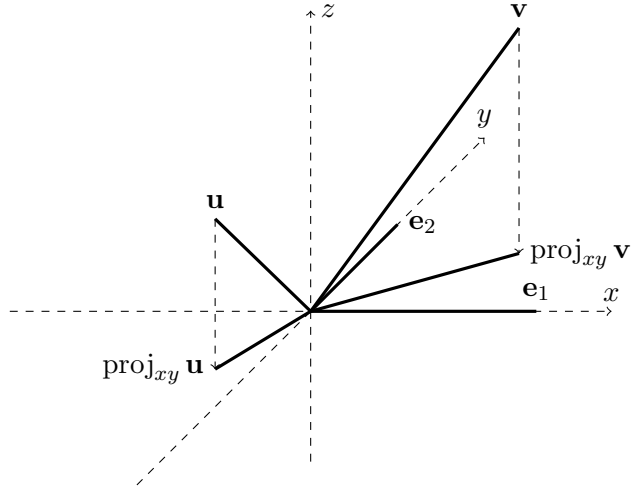
\begin{figure}
  \centering
  \begin{tikzpicture}[scale=1]
\draw[thin, dashed, ->] (xyz cs:x=-4) -- (xyz cs:x=4) node[above] {$x$};
\draw[thin, dashed, ->] (xyz cs:y=-2) -- (xyz cs:y=4) node[right] {$z$};
\draw[thin, dashed, ->] (xyz cs:z=6) -- (xyz cs:z=-6) node[above] {$y$};
\draw[very thick] (xyz cs:x=0) -- (xyz cs:x=3) node[above] {$\mathbf{e}_1$};
\draw[very thick] (xyz cs:z=0) -- (xyz cs:z=-3) node[right] {$\mathbf{e}_2$};

\draw[very thick] (xyz cs:x=0) -- (xyz cs:x=2,y=3,z=-2) node[above] {$\mathbf{v}$};
\draw[very thick] (xyz cs:x=0) -- (xyz cs:x=-0.5,y=2,z=2) node[above] {$\mathbf{u}$};

\draw[very thick] (xyz cs:x=0) -- (xyz cs:x=2,y=0,z=-2) node[right] {$\proj_{xy} \mathbf{v}$};
\draw[very thick] (xyz cs:x=0) -- (xyz cs:x=-0.5,y=0,z=2) node[left] {$\proj_{xy} \mathbf{u}$};

\draw[thin, dashed, ->]  (xyz cs:x=2,y=3,z=-2) -- (xyz cs:x=2,y=0,z=-2);
\draw[thin, dashed, ->] (xyz cs:x=-0.5,y=2,z=2) -- (xyz cs:x=-0.5,y=0,z=2);
\end{tikzpicture}
\caption{Geometric intuition of domain restriction}\label{fig:geometric-intuition}
\end{figure}
We always start with a canonical form of a signature (after a suitable orthogonal
transformation), for example, $\mathbf{F} = \mathbf{e}_1\teh + \mathbf{e}_2\teh$.
By \cref{lem:geneq-eqbg} and \cref{lem:domain-restriction}, $\mathbf{F}$ can realize $(=_{\db \dg})$ and take domain restriction to $\{\db, \dg\}$.
If the other signature is $\mathbf{G} = \mathbf{u}\teh + \mathbf{v}\teh$ for orthogonal $\mathbf{u},\mathbf{v} \in \mathbb{R}^3$,
the domain restriction is $\mathbf{G}\domres{\db, \dg} = (\proj_{xy} \mathbf{u})\teh + (\proj_{xy} \mathbf{v})\teh$.
Essentially, most of the arguments boil down to showing that the angle between $\proj_{xy} \mathbf{u}$ and $\proj_{xy} \mathbf{v}$ must be $\pi/2$ (with one exception), otherwise, 
$\mathbf{G}\domres{\db, \dg}$ is \sph by the Boolean domain dichotomy.
The other tractable possibility is when $\proj_{xy} \mathbf{u} \sim \proj_{xy} \mathbf{v}$, which can only happen if the plane that contains $\mathbf{u}$ and $\mathbf{v}$ contains the $z$-axis, which corresponds to the case when $\mathbf{G}\domres{\db, \dg}$ is degenerate.

\section{A Pair of Ternary and Binary Signatures}\label{sec:single-ternary-signle-binary}
We show the dichotomy of $\holts(\mathbf{F}, \mathbf{G})$ for a ternary signature $\mathbf{F}$ and a binary signature $\mathbf{G}$.
We may assume that $\mathbf{F}$ is of the tractable form in \cref{thm:dich-single-sym-ter-dom3-real}, and we separate into four cases, in which $\mathbf{F}$ is: type \ternarytractgeneq and rank $2$; type \ternarytractz and rank $2$; type \ternarytractgeneq and rank $3$; type \ternarytractz and rank $3$.
Note that we do not need to consider a rank $1$ signature since such signature is degenerate.
In addition, we may assume that $\mathbf{F}$ is in the canonical form of $a \mathbf{e}_1 \teh + b \mathbf{e}_2 \teh + c \mathbf{e}_3 \teh$ or $\boldsymbol{\beta}_0\teh + \overline{\boldsymbol{\beta}_0}\teh + \lambda \mathbf{e}_3 \teh$ for some $a, b, c, \lambda \in \mathbb{R}$.

We prove some simple propositions about normalizing the individual scalar constants of the tensor decomposition and realizing similar signatures. 
This will allow us to assume that a type \ternarytractgeneq signature has the form $a \mathbf{e}_1\teh + b \mathbf{e}_2 \teh + c \mathbf{e}_3\teh$ for $a, b, c \in \{0, 1\}$ and a type \ternarytractz signature has the form $\boldsymbol{\beta}_0\teh + \overline{\boldsymbol{\beta}_0}\teh + \lambda \mathbf{e}_3 \teh$ for $\lambda \in \{0, 1\}$.
\begin{proposition}\label{lem:geneq-normalization}
  Let $\mathbf{F} = \mathbf{v}_1\teh + \mathbf{v}_2 \teh + \mathbf{v}_3 \teh$ such that $\mathbf{v}_1, \mathbf{v}_2, \mathbf{v}_3 \in \mathbb{R}^3$ are pairwise orthogonal.
  Let $\mathcal{F}$ be any set of signatures.
  Let $a, b, c \in \mathbb{R}$ be arbitrary.
  If $\mathbf{v}_1, \mathbf{v}_2, \mathbf{v}_3 \ne 0$, then 
  \[
    \holts(\mathcal{F} \cup \{a \mathbf{v}_1\teh + b\mathbf{v}_2 \teh + c \mathbf{v}_3 \teh\}) \le_T \holts(\mathcal{F} \cup \{ \mathbf{F} \}) \, .
  \]

  If $\mathbf{v}_1, \mathbf{v}_2 \ne 0$ and $\mathbf{v}_3 = 0$, then for any $a, b \in \mathbb{R}$,
  \[
    \holts(\mathcal{F} \cup \{a \mathbf{v}_1\teh + b \mathbf{v}_2 \teh\}) \le_T \holts(\mathcal{F}  \cup \{ \mathbf{F} \}) \, .
  \]
\end{proposition}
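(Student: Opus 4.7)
The plan is to realize the target signature $a\mathbf{v}_1\teh + b\mathbf{v}_2\teh + c\mathbf{v}_3\teh$ directly as a gadget, exploiting that $\hols$ provides all unary signatures for free. The crucial observation is that, because $\mathbf{v}_1, \mathbf{v}_2, \mathbf{v}_3$ are pairwise orthogonal and real, connecting unary signatures to $\mathbf{F}$ gives us independent control over the ``eigen-coefficients'' along the three directions $\mathbf{v}_i$.

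\textbf{Step 1: A binary signature from $\mathbf{F}$.} For any $\mathbf{u} \in \mathbb{R}^3$ I form $\mathbf{G}_{\mathbf{u}} := \langle \mathbf{F}, \mathbf{u}\rangle$, the binary signature obtained by attaching $\mathbf{u}$ to one edge of $\mathbf{F}$. Using the tensor decomposition of $\mathbf{F}$,
\[
\mathbf{G}_{\mathbf{u}} \;=\; \sum_{r} \langle \mathbf{v}_r, \mathbf{u}\rangle \, \mathbf{v}_r \mathbf{v}_r\transpose.
\]
By pairwise orthogonality, $\mathbf{G}_{\mathbf{u}} \mathbf{v}_j = \langle \mathbf{v}_j, \mathbf{u}\rangle \, \lVert \mathbf{v}_j\rVert^2 \, \mathbf{v}_j$, so each $\mathbf{v}_j$ is an eigenvector of $\mathbf{G}_{\mathbf{u}}$ with eigenvalue $\langle \mathbf{v}_j, \mathbf{u}\rangle\,\lVert \mathbf{v}_j\rVert^2$.

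\textbf{Step 2: Assembling the target ternary signature.} I then form the gadget $\mathbf{G}_{\mathbf{u}}\teh \mathbf{F}$, i.e.\ attach a copy of $\mathbf{G}_{\mathbf{u}}$ to each of the three edges of a fresh copy of $\mathbf{F}$. The tensor-decomposition identity $\mathbf{G}\teh(\sum_j \mathbf{v}_j\teh) = \sum_j (\mathbf{G} \mathbf{v}_j)\teh$ gives
\[
\mathbf{G}_{\mathbf{u}}\teh \mathbf{F} \;=\; \sum_{j} \langle \mathbf{v}_j, \mathbf{u}\rangle^3 \, \lVert \mathbf{v}_j\rVert^{6} \, \mathbf{v}_j\teh.
\]
To match arbitrary real coefficients $a_1, a_2, a_3 \in \mathbb{R}$, I choose the unary $\mathbf{u}$ so that $\langle \mathbf{v}_j, \mathbf{u}\rangle = a_j^{1/3}/\lVert \mathbf{v}_j\rVert^2$ (using the real cube root, which is well defined for every real). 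When all three $\mathbf{v}_j$ are nonzero they form an orthogonal basis of $\mathbb{R}^3$, so such a $\mathbf{u}$ exists and is given explicitly by $\mathbf{u} = \sum_j (a_j^{1/3}/\lVert\mathbf{v}_j\rVert^{4})\,\mathbf{v}_j$. This handles the first case. For the second case, $\mathbf{v}_3 = 0$, I simply pick $\mathbf{u} = (a^{1/3}/\lVert\mathbf{v}_1\rVert^{4})\mathbf{v}_1 + (b^{1/3}/\lVert\mathbf{v}_2\rVert^{4})\mathbf{v}_2$; the vanishing third term in the sum for $\mathbf{G}_{\mathbf{u}}\teh \mathbf{F}$ causes no issue.

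\textbf{Step 3: Turning the gadget into a reduction.} Given any $\holts(\mathcal{F}\cup\{a\mathbf{v}_1\teh + b\mathbf{v}_2\teh + c\mathbf{v}_3\teh\})$ instance $\Omega$, I obtain an equivalent $\holts(\mathcal{F}\cup\{\mathbf{F}\})$ instance by replacing each occurrence of the target signature by the gadget above (four copies of $\mathbf{F}$ plus three copies of the unary $\mathbf{u}$, all legal because $\mathcal{U}$ is available in $\hols$). By Valiant's Holant Theorem applied locally, the Holant value is preserved, yielding the claimed Turing reduction. I do not expect any real obstacle: the only mild subtleties are the use of real cube roots (which forces nothing beyond working over $\mathbb{R}$) and the insistence on $\mathbf{v}_j \ne 0$ whenever its coefficient is prescribed, which is exactly the hypothesis of each of the two cases.
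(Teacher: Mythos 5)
Your proposal is correct and follows essentially the same route as the paper: both realize the rescaled signature as a gadget built from copies of $\mathbf{F}$ and a suitably chosen unary, using the pairwise orthogonality of $\mathbf{v}_1, \mathbf{v}_2, \mathbf{v}_3$ to control each tensor component independently. The only differences are cosmetic --- the paper first applies the orthogonal transformation diagonalizing $\mathbf{F}$ and uses a two-copy gadget with one unary, while you work basis-free with $\mathbf{G}_{\mathbf{u}}\teh\mathbf{F}$ (four copies of $\mathbf{F}$, three unaries, real cube roots); also note that preserving the Holant value under vertex replacement follows directly from the definition of a gadget's signature rather than from Valiant's Holant Theorem.
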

\begin{proof}
  Suppose $\mathbf{v}_i \ne 0$ for all $i$.
  Let $T$ be the orthogonal matrix with $\mathbf{u}_i$ as the $i$th row.
  Then, $T \teh \mathbf{F} = c_1 \mathbf{e}_1 \teh + c_2 \mathbf{e}_2 \teh + c_3 \mathbf{e}_3 \teh$,
  where $c_i = \|\mathbf{v}_i\|^3$.
  Consider the gadget in \cref{fig:geneq-normalization}.
  \begin{figure}
    \centering
    \begin{tikzpicture}
      \node[mycirc, label=left:{$T \teh \mathbf{F}$}] (f1) at (0, 0) {};
      \node[mycirc, label=right:{$T \teh \mathbf{F}$}] (f2) at (2, 0) {};
      \node[mycirc, label=above:{$(a/c_1^2, b/c_2^2, c/c_3^2)$}] (u) at (3, 1) {};

      \draw (-1, 1) -- (f1);
      \draw (-1, -1) -- (f1);
      \draw (f1) -- (f2);
      \draw (f2) -- (u);
      \draw (f2) -- (3, -1);
    \end{tikzpicture}
    \caption{Realization of $a \mathbf{e}_1\teh + b \mathbf{e}_2\teh + c \mathbf{e}_3\teh$} \label{fig:geneq-normalization}
  \end{figure}
  We can see that it is the signature $a\mathbf{e}_1\teh + b\mathbf{e}_2 \teh + c\mathbf{e}_3 \teh$.
  So, $\holts(T \mathcal{F} \cup \{a\mathbf{e_1}\teh + b\mathbf{e_2}\teh + c\mathbf{e_3}\teh\}) \le_T \holts(T \mathcal{F} \cup \{T \teh \mathbf{F}\})$.
  Applying $T^{-1}$ gives the result.

  If $\mathbf{v}_3 = 0$, then we may apply the same construction by finding an orthonormal basis containing $\mathbf{u}_1, \mathbf{u}_2$, and using the unary $(a/c_1^2, b/c_2^2, 0)$.
\end{proof}

\begin{proposition}\label{lem:Z-normalization}
  Let $\mathbf{F} = \boldsymbol{\beta} \teh + \overline{\boldsymbol{\beta}} \teh + \lambda \mathbf{e}_3 \teh$ for $\lambda \in \mathbb{R}$ where $\boldsymbol{\beta} = \frac{1}{\sqrt{2}} (1, i, 0)$.
  Let $\mathcal{F}$ be any set of signatures.
  If $\lambda \ne 0$, then
  \[
  \holts(\mathcal{F} \cup \{\mathbf{G}\}) \le_T
    \holts(\mathcal{F} \cup \{\mathbf{F}\})
  \]
  for any real valued symmetric ternary $\db \dg | \dr$ signature $\mathbf{G}$ such that 
   $\mathbf{G}\domres{\db, \dg}$ is of type $\typeii$,
   i.e., $\mathbf{G}\domres{\db, \dg} = [x, y, -x, -y]$ for some $x, y \in \mathbb{R}$.

   If $\lambda = 0$, then 
   \[
  \holts(\mathcal{F} \cup \{\mathbf{G}\}) \le_T
    \holts(\mathcal{F} \cup \{\mathbf{F}\})
  \]
  for any real valued symmetric ternary signature $\mathbf{G}$ such that $\supp \mathbf{G} \subseteq \{\db, \dg \}^*$ and 
   $\mathbf{G}\domres{\db, \dg}$ is of type $\typeii$,
   i.e., $\mathbf{G}\domres{\db, \dg} = [x, y, -x, -y]$ for some $x, y \in \mathbb{R}$.
\end{proposition}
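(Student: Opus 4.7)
The plan is to realize $\mathbf{G}$ directly as a gadget built from $\mathbf{F}$ and suitable real unary signatures. First, I would put $\mathbf{G}$ into a form adapted to the tensor decomposition of $\mathbf{F}$. By \cref{prop-boolean-type-ii-tensor-decomposition} applied to the Boolean slice $\mathbf{G}\domres{\db,\dg} = [x,y,-x,-y]$, lifted to domain $3$ with third coordinate $0$ and augmented by the $\db\dg|\dr$ component, $\mathbf{G}$ admits the decomposition
\[
  \mathbf{G} = C\,\boldsymbol{\beta}\teh + \overline{C}\,\overline{\boldsymbol{\beta}}\teh + \mu\, \mathbf{e}_3\teh
\]
for some $C \in \mathbb{C}$ and $\mu \in \mathbb{R}$. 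In the case $\lambda = 0$, the hypothesis $\supp \mathbf{G} \subseteq \{\db,\dg\}^*$ forces $\mu = \mathbf{G}(\dr,\dr,\dr) = 0$.

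Next, for any real $\mathbf{u} = (u_1,u_2,u_3)$, let $\mathbf{B}_u := \langle \mathbf{F}, \mathbf{u}\rangle$ be the binary obtained by attaching $\mathbf{u}$ to a single edge of $\mathbf{F}$. The pairings $\langle \boldsymbol{\beta}, \boldsymbol{\beta}\rangle = \langle \overline{\boldsymbol{\beta}}, \overline{\boldsymbol{\beta}}\rangle = 0$, $\langle \boldsymbol{\beta}, \overline{\boldsymbol{\beta}}\rangle = 1$, $\langle \boldsymbol{\beta}, \mathbf{e}_3\rangle = 0$ immediately give
$\mathbf{B}_u = \alpha\, \boldsymbol{\beta}\boldsymbol{\beta}\transpose + \overline{\alpha}\,\overline{\boldsymbol{\beta}}\,\overline{\boldsymbol{\beta}}\transpose + \lambda u_3\, \mathbf{e}_3\mathbf{e}_3\transpose$ with $\alpha := (u_1 + i u_2)/\sqrt{2}$, and the same pairings give the eigenvector-style action $\mathbf{B}_u \boldsymbol{\beta} = \overline{\alpha}\, \overline{\boldsymbol{\beta}}$, $\mathbf{B}_u \overline{\boldsymbol{\beta}} = \alpha\, \boldsymbol{\beta}$, $\mathbf{B}_u \mathbf{e}_3 = \lambda u_3\, \mathbf{e}_3$. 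Attaching one copy of $\mathbf{B}_u$ to each edge of a fresh copy of $\mathbf{F}$ and applying $\mathbf{M}^{\otimes k}(\mathbf{v}^{\otimes k}) = (\mathbf{M}\mathbf{v})^{\otimes k}$ componentwise to the tensor decomposition of $\mathbf{F}$ yields
\[
  \mathbf{B}_u\teh \mathbf{F} = \alpha^3\, \boldsymbol{\beta}\teh + \overline{\alpha}^3\, \overline{\boldsymbol{\beta}}\teh + \lambda^4 u_3^3\, \mathbf{e}_3\teh.
\]

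It remains to choose $\mathbf{u}$ matching $\mathbf{G}$. Pick any complex cube root $\alpha$ of $C$ and set $u_1 := \sqrt{2}\,\mathrm{Re}(\alpha)$, $u_2 := \sqrt{2}\,\mathrm{Im}(\alpha)$; these are real, and automatically $\overline{\alpha}^3 = \overline{C}$. When $\lambda \ne 0$, let $u_3$ be the real cube root of $\mu/\lambda^4$; when $\lambda = 0$, take $u_3 = 0$, which is valid since the hypothesis then gives $\mu = 0$. Either way $\mathbf{B}_u\teh \mathbf{F} = \mathbf{G}$ exactly, and replacing each $\mathbf{G}$-vertex in any instance of $\holts(\mathcal{F} \cup \{\mathbf{G}\})$ by this gadget (four copies of $\mathbf{F}$ and three copies of the unary $\mathbf{u}$) gives the desired polynomial-time reduction. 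The only delicate step is bookkeeping of the lift constant $\sqrt{2}\boldsymbol{\beta} = (1,i,0)\transpose$ when extracting $C$ from the Boolean type-$\typeii$ decomposition; no substantive obstacle arises, since the key structural fact — that $\mathbf{B}_u$ swaps $\boldsymbol{\beta}$ and $\overline{\boldsymbol{\beta}}$ up to scalar while preserving the $\mathbf{e}_3$ direction — is forced by the isotropy relations of $\boldsymbol{\beta}$.
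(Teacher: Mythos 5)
Your construction is correct: the pairing relations $\langle\boldsymbol{\beta},\boldsymbol{\beta}\rangle=\langle\overline{\boldsymbol{\beta}},\overline{\boldsymbol{\beta}}\rangle=0$, $\langle\boldsymbol{\beta},\overline{\boldsymbol{\beta}}\rangle=1$, $\langle\boldsymbol{\beta},\mathbf{e}_3\rangle=0$ do give $\mathbf{B}_u\boldsymbol{\beta}=\overline{\alpha}\,\overline{\boldsymbol{\beta}}$, $\mathbf{B}_u\overline{\boldsymbol{\beta}}=\alpha\boldsymbol{\beta}$, $\mathbf{B}_u\mathbf{e}_3=\lambda u_3\mathbf{e}_3$, hence $\mathbf{B}_u\teh\mathbf{F}=\alpha^3\boldsymbol{\beta}\teh+\overline{\alpha}^3\overline{\boldsymbol{\beta}}\teh+\lambda^4u_3^3\mathbf{e}_3\teh$, and since every $C\in\mathbb{C}$ has a cube root and every real number has a real cube root (with $\mu=0$ forced when $\lambda=0$), the choice of the real unary $\mathbf{u}$ goes through, including the degenerate corner cases $C=0$ or $\mathbf{G}\domres{\db,\dg}\equiv 0$. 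The route is genuinely different from the paper's, though it rests on the same two facts (the isotropy of $\boldsymbol{\beta}$ and the type-$\typeii$ decomposition of $\mathbf{G}$ with conjugate coefficients). The paper first builds an arity-$4$ intermediate $\mathbf{F}'=\boldsymbol{\beta}^{\otimes 4}+\overline{\boldsymbol{\beta}}^{\otimes 4}+\lambda^3\mathbf{e}_3^{\otimes 4}$ from a chain of three copies of $\mathbf{F}$ with a unary in the middle — verified via a non-orthogonal holographic transformation by $M=\begin{bsmallmatrix} Z^{-1}&0\\0&1\end{bsmallmatrix}$ — and then contracts one leg with a real unary $\mathbf{v}$, so the target coefficients are hit by solving a \emph{linear} problem in $\mathbf{v}$. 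You instead stay at arity $3$, composing the binary $\mathbf{B}_u=\langle\mathbf{F},\mathbf{u}\rangle$ onto all three edges of a fresh copy of $\mathbf{F}$, so the coefficients depend \emph{cubically} on $\mathbf{u}$ and you pay with cube roots, which is harmless. Your gadget uses four copies of $\mathbf{F}$ and three unaries versus the paper's three copies and two unaries; what you buy is a completely elementary verification that avoids the holographic-transformation computation, at the cost of losing the reusable arity-$4$ signature $\mathbf{F}'$ (which the paper's style of argument makes available, e.g.\ for producing higher-arity analogues as in \cref{prop:z-get-all-arity}). Either way the realized signature equals $\mathbf{G}$ exactly, so the reduction $\holts(\mathcal{F}\cup\{\mathbf{G}\})\le_T\holts(\mathcal{F}\cup\{\mathbf{F}\})$ follows with no scalar bookkeeping.
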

\begin{proof}
  Consider the gadget in \cref{fig:Z-normalization}
  \begin{figure}
    \centering
    \begin{subfigure}[b]{0.4\textwidth}
    \centering
    \begin{tikzpicture}[scale=0.8]
      \node[mycirc, label=left:{$\mathbf{F}$}] (f1) at (0, 0) {};
      \node[mycirc, label=below:{$\mathbf{F}$}] (f2) at (2, 0) {};
      \node[mycirc, label=right:{$\mathbf{F}$}] (f3) at (4, 0) {};

      \node[mycirc, label=above:{$\mathbf{u}$}] (u) at (2, 2) {};

      \draw (-1, 1) -- (f1);
      \draw (-1, -1) -- (f1);
      \draw (f1) -- (f2) -- (f3);
      \draw (f3) -- (5, 1);
      \draw (f3) -- (5, -1);
      \draw (f2) -- (u);
    \end{tikzpicture}
    \caption{The gadget}
  \end{subfigure}
  \hfill
    \begin{subfigure}[b]{0.55\textwidth}
    \centering
      \begin{tikzpicture}[scale=0.8]
      \node[mycirc, label=left:{$M \teh \mathbf{F}$}] (f1) at (6.5, 0) {};
      \node[mycirc, label=below:{$M \teh \mathbf{F}$}] (f2) at (9, 0) {};
      \node[mycirc, label=right:{$M \teh \mathbf{F}$}] (f3) at (11.5, 0) {};

      \node[mycirc, label=above:{$M \mathbf{u}$}] (u) at (9, 2) {};

      \node[mysq, label=below:{$(\ne_{\db\dg;\dr})$}] (s1) at (7.5, 0) {};
      \node[mysq, label=below:{$(\ne_{\db\dg;\dr})$}] (s2) at (10.5, 0) {};
      \node[mysq, label=right:{$(\ne_{\db\dg;\dr})$}] (s3) at (9, 1) {};

      \draw (5.5, 1) -- (f1);
      \draw (5.5, -1) -- (f1);
      \draw (f1) -- (f2) -- (f3);
      \draw (f3) -- (12.5, 1);
      \draw (f3) -- (12.5, -1);
      \draw (f2) -- (u);
    \end{tikzpicture}
    \caption{After $M$ transformation}
  \end{subfigure}
    \caption{Realization of $ \boldsymbol{\beta}^{\otimes 4} + \overline{\boldsymbol{\beta}}^{\otimes 4} + \lambda^3 \mathbf{e}_3^{\otimes 4}$. Square nodes are $(\ne_{\db\dg;\dr})$.} \label{fig:Z-normalization}
  \end{figure}
  It is not immediately obvious why the gadget on the left is the signature $\mathbf{F}' = \boldsymbol{\beta}^{\otimes 4} + \overline{\boldsymbol{\beta}}^{\otimes 4} + \lambda^3 \mathbf{e}_3^{\otimes 4}$, or even symmetric.
  To compute its signature, we compute it after applying the holographic transformation $M = \begin{bsmallmatrix}
    Z^{-1} & 0 \\
    0 & 1
    \end{bsmallmatrix}$ where $Z = \frac{1}{\sqrt{2}} \begin{bsmallmatrix}
    1 & 1 \\
    i & -i
    \end{bsmallmatrix}$ and $Z^{-1} = \frac{1}{\sqrt{2}}\begin{bsmallmatrix}
    1 & -i \\
    1 & i
  \end{bsmallmatrix}$.
  Then, $M\teh \mathbf{F} = \mathbf{e}_1 \teh + \mathbf{e}_2 \teh + \lambda \mathbf{e}_3 \teh$.
  Since $M$ is not orthogonal, it transforms the binary equality in the covariant way, $(=_2) (M^{-1})\tew$ which has the matrix form $(M^{-1})\transpose I (M^{-1}) = \begin{bsmallmatrix}
    0 & 1 & 0 \\
    1 & 0 & 0 \\
    0 & 0 & 1
  \end{bsmallmatrix}$, which can be viewed as a Disequality on $\{\db, \dg\}$ and Equality on $\dr$. We use $(\ne_{\db \dg; \dr})$ to denote this signature.
  Then, we can chosoe $\mathbf{u} = (\sqrt{2}, 0, 1)$, which satisfies $(\ne_{\db \dg; \dr}) M \mathbf{u} = (1, 1, 1)$.
  The middle part then becomes $\mathbf{e}_1 \tew + \mathbf{e}_2 \tew + \lambda \mathbf{e}_3 \tew$.
  It can absorb the two $(\ne_{\db \dg ; \dr})$ on both sides and still stay the same.
  So, the whole signature is $\mathbf{e}_1^{\otimes 4} + \mathbf{e}_2^{\otimes 4} + \lambda^3 \mathbf{e}_3^{\otimes 4}$.
  Applying $M^{-1}$ on this gadget, we get $\mathbf{F}' = \boldsymbol{\beta}^{\otimes 4} + \overline{\boldsymbol{\beta}}^{\otimes 4} + \lambda^3 \mathbf{e}_3 ^{\otimes 4}$.

  By \cref{prop-boolean-type-ii-tensor-decomposition}, we know there exists some $c, d \in \mathbb{C}$ and $z \in \mathbb{R}$ such that $\mathbf{G} = c \boldsymbol{\beta} \teh + d \overline{\boldsymbol{\beta}}\teh + \lambda^3 z \mathbf{e}_3$. From the proof of \cref{prop-boolean-type-ii-tensor-decomposition}, $d = \overline{c}$.
  Since $\boldsymbol{\beta}, \overline{\boldsymbol{\beta}}, \mathbf{e}_3$ are linearly independent,
  there exists a unary signature $\mathbf{v} \in \mathbb{R}^3$ so that 
  $\langle \mathbf{F}', \mathbf{v} \rangle = \mathbf{G}$.

\end{proof}

\begin{corollary}\label{cor:z-normalization-orthogonal}
  Let $\mathbf{F} = \boldsymbol{\beta} \teh + \overline{\boldsymbol{\beta}} \teh + \lambda \mathbf{e}_3 \teh$ for $\lambda \in \mathbb{R}$ where $\boldsymbol{\beta} = \frac{1}{\sqrt{2}} (1, i, 0)$.
  Let $\mathbf{G} = T \teh \mathbf{F}$ for some real $3 \times 3$ orthogonal $\db \dg | \dr$ matrix.
  In particular, $\mathbf{G} = (\mathbf{u} + i \mathbf{v})\teh + (\mathbf{u} - i \mathbf{v})\teh + \lambda \mathbf{e}_3\teh$ such that $\mathbf{u}$, $\mathbf{v}$, and $\mathbf{e}_3$ form an orthonormal basis of $\mathbb{R}^3$.
  Let $\mathcal{F}$ be any set of signatures. Then,
  \[
    \holts(\mathcal{F} \cup \{\mathbf{F}\}) =_T \holts(\mathcal{F} \cup \{\mathbf{G}\}) \, .
  \]
\end{corollary}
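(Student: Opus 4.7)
I plan to prove the equivalence by establishing both reductions separately.

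For the forward direction $\holts(\mathcal{F}\cup\{\mathbf{G}\})\le_T\holts(\mathcal{F}\cup\{\mathbf{F}\})$, I would apply Lemma~\ref{lem:Z-normalization} directly to $\mathbf{G}$. Since $T$ is real orthogonal and $\db\dg|\dr$, it has block form $\operatorname{diag}(T',\pm 1)$ with $T'\in O(2)$; consequently $\mathbf{u},\mathbf{v}$ are orthogonal unit vectors lying in the $\db\dg$-plane, so $\mathbf{G}$ is a real-valued symmetric ternary $\db\dg|\dr$ signature, and $\mathbf{G}\domres{\db,\dg}=(\mathbf{u}'+i\mathbf{v}')\teh+(\mathbf{u}'-i\mathbf{v}')\teh$ (where $\mathbf{u}',\mathbf{v}'$ are the $\db\dg$-projections) is of type~$\typeii$ by Proposition~\ref{prop-boolean-type-ii-tensor-decomposition}. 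The case $\lambda\ne 0$ invokes the first clause of the Lemma; the case $\lambda=0$ uses the second clause, noting that $\supp\mathbf{G}\subseteq\{\db,\dg\}^*$ when $\lambda=0$.

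The backward direction $\holts(\mathcal{F}\cup\{\mathbf{F}\})\le_T\holts(\mathcal{F}\cup\{\mathbf{G}\})$ is proved by the same gadget construction as in the Lemma's proof, but with $\mathbf{G}$ in place of $\mathbf{F}$. Define the (non-orthogonal) holographic transformation $M':=MT^{-1}$, where $M=\operatorname{diag}(Z^{-1},1)$ is as in the Lemma's proof. Then $(M')\teh\mathbf{G}=M\teh\mathbf{F}=\mathbf{e}_1\teh+\mathbf{e}_2\teh+\lambda\mathbf{e}_3\teh$, and $((M')^{-1})\transpose(M')^{-1}=(\ne_{\db\dg;\dr})$ by the orthogonality of $T$, so the analysis in the Lemma's proof goes through after replacing $M$ by $M'$. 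The middle unary in the chain-of-three gadget is $\mathbf{w}_0=T(\sqrt{2},0,1)\transpose$, which is real because $T$ is. The resulting gadget value equals, up to a nonzero global scalar,
\[
\mathbf{G}' \;=\; (\mathbf{u}+i\mathbf{v})^{\otimes 4}+(\mathbf{u}-i\mathbf{v})^{\otimes 4}+\lambda^{3}\mathbf{e}_3^{\otimes 4}.
\]
Since $\mathbf{u}+i\mathbf{v}$, $\mathbf{u}-i\mathbf{v}$, $\mathbf{e}_3$ are linearly independent in $\mathbb{C}^3$, connecting a real unary $\mathbf{w}\in\mathbb{R}^3$ to $\mathbf{G}'$ yields a ternary signature of the form $a(\mathbf{u}+i\mathbf{v})\teh+b(\mathbf{u}-i\mathbf{v})\teh+d\,\mathbf{e}_3\teh$ with $a,b,d$ linear in $\mathbf{w}$; choosing $\mathbf{w}$ so that this equals a nonzero scalar multiple of $\mathbf{F}=\boldsymbol{\beta}\teh+\overline{\boldsymbol{\beta}}\teh+\lambda\mathbf{e}_3\teh$ reduces to a small linear system over $\mathbb{R}^3$.

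The main obstacle to watch is keeping every intermediate unary in $\mathbb{R}^3$, since $\hols$ is restricted to real unaries for real-valued $\mathcal{F}$. This is guaranteed by the reality of $T$ together with the conjugate-pair structure of the tensor decompositions: for any real $\mathbf{w}$, the pairings $\langle\mathbf{u}\pm i\mathbf{v},\mathbf{w}\rangle$ are automatically complex conjugates of each other, matching the conjugate symmetry of $\boldsymbol{\beta},\overline{\boldsymbol{\beta}}$ in $\mathbf{F}$'s decomposition, so the extracting linear system admits a real solution.
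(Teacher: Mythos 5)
Your proposal is correct and takes essentially the approach the paper intends: the corollary is meant to follow directly from Lemma~\ref{lem:Z-normalization} (the forward reduction is the lemma applied to $\mathbf{G}$, which is $\db\dg|\dr$ with type $\typeii$ restriction, and the backward reduction is the lemma's chain-of-three gadget rerun under the orthogonal $\db\dg|\dr$ change of basis), which is exactly what your modified transformation $M'=MT^{-1}$, the real middle unary $T(\sqrt{2},0,1)\transpose$, and the final real-unary extraction from the arity-$4$ gadget make explicit. The conjugate-pair observation guaranteeing a real solution of the extracting linear system is the right justification, so I see no gap.
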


The main strategy of proving hardness is by showing that domain restriction can be realized within $\mathcal{F}$ and using the Boolean domain dichotomy.
\begin{proposition}\label{lem:domain-restriction}
  Let $\mathcal{F}$ be a set of signatures such that $(=_{\db \dg}) \in \mathcal{F}$.
  Then,
  \[
    \holbs(\mathcal{F}\domres{\db, \dg}) \le_T \holts(\mathcal{F}) \, .
  \]
\end{proposition}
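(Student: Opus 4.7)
The plan is to give a polynomial-time Turing reduction that rewrites any $\holbs(\mathcal{F}\domres{\db,\dg})$ signature grid into a $\holts(\mathcal{F})$ signature grid with the same Holant value, by deploying $(=_{\db\dg})$ as an edge-wise pinning gadget. The key observation is that $(=_{\db\dg})$ takes value $1$ on $(\db,\db)$ and $(\dg,\dg)$ and value $0$ on every other input pair; inserted as a degree-$2$ vertex on an edge, it forces both half-edges to agree on a value in $\{\db,\dg\}$ and kills any assignment that uses $\dr$.

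Given an arbitrary instance $\Omega = (G, \mathcal{F}\domres{\db,\dg}, \pi)$ of $\holbs(\mathcal{F}\domres{\db,\dg})$, I would construct $\Omega'$ on domain $\{\db,\dg,\dr\}$ by keeping the same vertex set, relabeling every vertex $v$ currently carrying $\mathbf{F}\domres{\db,\dg}$ with the full domain-$3$ signature $\mathbf{F} \in \mathcal{F}$, and subdividing every edge of $G$ with a new degree-$2$ vertex labeled $(=_{\db\dg})$. Any Boolean unary $[u_0, u_1]$ used in $\Omega$ is replaced by the domain-$3$ unary $(u_0, u_1, 0)$, which is available since $\holts$ comes equipped with all unary signatures.

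Then I would verify $\hol_{\Omega'} = \hol_\Omega$. For a nonzero term, each inserted $(=_{\db\dg})$ demands that the two half-edges of the subdivision carry the same value, which must lie in $\{\db,\dg\}$; hence the nonzero assignments $\sigma \colon E(G') \to \{\db,\dg,\dr\}$ are exactly the domain-$3$ liftings of Boolean assignments $\tau \colon E(G) \to \{\db,\dg\}$ under the identification $\db \leftrightarrow 0$ and $\dg \leftrightarrow 1$ fixed in Section~2.3. Along this bijection, every $(=_{\db\dg})$ vertex contributes $1$ and every original vertex $v$ contributes $\mathbf{F}(\sigma|_{E(v)}) = \mathbf{F}\domres{\db,\dg}(\tau|_{E(v)})$ by the definition of the domain restriction. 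So the two Holant values coincide, and the construction is clearly polynomial time.

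I do not anticipate a real obstacle here beyond careful bookkeeping, since this is the standard \emph{pin the domain} technique; the only mild points worth stating explicitly are the treatment of Boolean unaries (extending by $0$ on $\dr$, or equivalently pushing them through a pinning gadget), and the alignment between the domain-restriction convention and the Boolean identification used throughout the paper.
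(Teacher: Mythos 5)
Your construction is exactly the paper's: replace each restricted signature $\mathbf{F}\domres{\db,\dg}$ by $\mathbf{F}$ and subdivide every edge with an $(=_{\db\dg})$ vertex, so that nonzero assignments are forced into $\{\db,\dg\}$ and the Holant values coincide. The proposal is correct and matches the paper's proof, just spelled out in slightly more detail (e.g.\ the explicit handling of unaries).
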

\begin{proof}
  Let $\Omega$ be a signature grid of $\holbs(\mathcal{F}\domres{\db, \dg}) $.
  We may construct a $\holts(\mathcal{F})$ signature grid $\Omega'$ with the same Holant value by.
  For each vertex with signature $\mathbf{F}\domres{\db, \dg}$ in $\Omega$, we replace it with $\mathbf{F}$ in $\Omega'$;
  for each edge in $\Omega$, we replace it with $(=_{\db \dg})$ in $\Omega'$.
  Then, every edge in $\Omega'$ can only take values $\{\db, \dg\}$, so the Holant value of $\Omega$ and $\Omega'$ are the same.
\end{proof}

\begin{proposition}\label{lem:geneq-eqbg}
  Let $\mathbf{F} =  (a, b, 0)\teh + (c, d, 0)\teh$ for $a, b, c, d \in \mathbb{R}$ such that $(a, b, 0)$ and $(c, d, 0)$ are nonzero orthogonal vectors.
  Let $\mathcal{F}$ be any set of signatures containing $\mathbf{F}$.
  Then, 
  \[
    \holts(\mathcal{F} \cup \{(=_{\db \dg})\}) \le_T \holts(\mathcal{F}) \, .
  \]
\end{proposition}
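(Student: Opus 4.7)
The plan is to reduce to a canonical form for $\mathbf{F}$ via a real orthogonal holographic transformation and then realize $(=_{\db \dg})$ by a single unary gadget attached to the transformed signature. Since $(a,b,0)$ and $(c,d,0)$ are nonzero orthogonal vectors lying in the $xy$-plane of $\mathbb{R}^3$, there exists a $2 \times 2$ real orthogonal matrix $R$ sending $(a,b)\transpose$ to $\|(a,b)\|\, \mathbf{e}_1$ and $(c,d)\transpose$ to $\pm \|(c,d)\|\, \mathbf{e}_2$. Embedding $R$ as $T = \begin{bsmallmatrix} R & 0 \\ 0 & 1 \end{bsmallmatrix}$ produces a real orthogonal $3 \times 3$ matrix satisfying
\[
T\teh \mathbf{F} \;=\; c_1 \mathbf{e}_1\teh + c_2 \mathbf{e}_2\teh, \qquad c_1, c_2 \in \mathbb{R} \setminus \{0\}.
\]
By the orthogonal corollary of Valiant's Holant theorem, $\holts(\mathcal{F})$ and $\holts(T \mathcal{F})$ have the same complexity. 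Moreover, in matrix form $(=_{\db \dg}) = \mathrm{diag}(1,1,0)$, and the block structure of $T$ gives $T \, \mathrm{diag}(1,1,0) \, T\transpose = \mathrm{diag}(R R\transpose, 0) = \mathrm{diag}(1,1,0)$, so $(=_{\db \dg})$ is itself $T$-invariant. Hence it suffices to realize $(=_{\db \dg})$ from $T\teh \mathbf{F}$ together with unary signatures.

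For the realization step, I would connect the unary signature $\mathbf{u} = (1/c_1, \, 1/c_2, \, 0)$---available in $\holts$---to one edge of $T\teh \mathbf{F}$. Since $\langle \mathbf{e}_i, \mathbf{u} \rangle = u_i$, the resulting arity-$2$ gadget is
\[
\langle T\teh \mathbf{F}, \, \mathbf{u} \rangle \;=\; c_1 u_1 \, \mathbf{e}_1\tew + c_2 u_2 \, \mathbf{e}_2\tew \;=\; \mathbf{e}_1\tew + \mathbf{e}_2\tew,
\]
whose matrix is exactly $(=_{\db \dg})$. Pulling this reduction back via the inverse transformation $T^{-1}$, which also fixes $(=_{\db \dg})$ by the same block-diagonal calculation, yields the reduction as stated in the proposition.

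There is no serious obstacle here: the orthogonality and nonzeroness of $(a,b,0)$ and $(c,d,0)$ are exactly what make the block form of $T$ legitimate and ensure that $c_1, c_2 \ne 0$ so that the unary $\mathbf{u}$ is well-defined. The only bookkeeping step is the identity $T \tew (=_{\db \dg}) = (=_{\db \dg})$, which reduces to $R R\transpose = I_2$. Everything else is a direct tensor-algebra computation, and the symmetry of $\mathbf{F}$ ensures that the choice of edge at which $\mathbf{u}$ is attached is immaterial.
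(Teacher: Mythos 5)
Your proof is correct and is essentially the paper's argument: both realize $(=_{\db \dg})$ by attaching a single unary signature to $\mathbf{F}$ so that the resulting binary gadget becomes the projector $\mathbf{e}_1^{\otimes 2} + \mathbf{e}_2^{\otimes 2}$, i.e.\ the matrix $\mathrm{diag}(1,1,0)$. The only cosmetic difference is that you rotate to the canonical basis first and check the $T$-invariance of $(=_{\db \dg})$, whereas the paper stays in the original coordinates, normalizes the two vectors to unit length via \cref{lem:geneq-normalization}, and chooses a unary $\mathbf{u}$ with $\langle (a,b,0), \mathbf{u} \rangle = \langle (c,d,0), \mathbf{u} \rangle = 1$ so that $\langle \mathbf{F}, \mathbf{u} \rangle = (a,b,0)^{\otimes 2} + (c,d,0)^{\otimes 2} = \mathrm{diag}(1,1,0)$ directly.
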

\begin{proof}
  By \cref{lem:geneq-normalization}, we may assume that $a^2 + b^2 = c^2 + d^2 = 1$.
  Then, $T = \begin{bsmallmatrix}
    a & b \\
    c & d
    \end{bsmallmatrix}$ is orthogonal, and $T^{\tt T}T = I$.
  Let $\mathbf{u} \in \mathbb{R}^3$ such that $\begin{bsmallmatrix}
    a & b & 0 \\
    c & d & 0
    \end{bsmallmatrix} \mathbf{u} = \begin{bsmallmatrix}
    1 \\ 1 
  \end{bsmallmatrix}$, which exists since $(a, b, 0)$ and $(c, d, 0)$ are linearly independent.
  Then,
  \[
    \langle \mathbf{F}, \mathbf{u} \rangle = (a, b, 0)\tew + (c, d, 0)\tew = \begin{bmatrix}
      a^2 + c^2 & ab + cd & 0 \\
      ab + cd & b^2 + d^2 & 0 \\
      0 & 0 & 0
      \end{bmatrix} = 
      \begin{bmatrix}
          T^{\tt T} T & 0 \\
          0 & 0
      \end{bmatrix}
      = \begin{bmatrix}
      1 & 0 & 0\\
      0 & 1 & 0 \\
      0 & 0 & 0
    \end{bmatrix}.
  \]
\end{proof}

\begin{proposition}\label{lem:z-eqbg}
  Let $\mathbf{F} = \boldsymbol{\beta} \teh + \overline{\boldsymbol{\beta}} \teh$ where $\boldsymbol{\beta} = \frac{1}{\sqrt{2}} (1, i, 0)$.
  Let $\mathcal{F}$ be any set of signatures containing $\mathbf{F}$.
  Then, 
  \[
    \holts(\mathcal{F} \cup \{(=_{\db \dg})\}) \le_T \holts(\mathcal{F}) \, .
  \]
\end{proposition}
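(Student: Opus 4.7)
The plan is to realize $(=_{\db\dg})$ as a gadget built from two copies of $\mathbf{F}$. Once such a gadget is available, the reduction follows exactly as in \cref{lem:domain-restriction}: given any $\holts(\mathcal{F} \cup \{(=_{\db\dg})\})$ signature grid, replace each occurrence of $(=_{\db\dg})$ by the gadget to obtain an equivalent $\holts(\mathcal{F})$ signature grid with the same Holant value.

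The gadget I have in mind is a ``double edge between two ternaries'': take two copies of $\mathbf{F}$ and glue two of the three legs of one to two of the three legs of the other, leaving one dangling edge on each copy. The resulting arity-$2$ gadget $\mathbf{H}$ can be computed using the tensor decomposition $\mathbf{F} = \boldsymbol{\beta}\teh + \overline{\boldsymbol{\beta}}\teh$: expanding multilinearly, each of the four cross terms carries a factor $\langle \cdot, \cdot\rangle^2$ coming from the two internal edges being contracted. The key computation is that $\langle \boldsymbol{\beta}, \boldsymbol{\beta}\rangle = \tfrac{1}{2}(1 + i^2) = 0$ and $\langle \boldsymbol{\beta}, \overline{\boldsymbol{\beta}}\rangle = \tfrac{1}{2}(1 - i^2) = 1$, so the two ``same-species'' terms vanish and only the two cross terms survive, yielding
\[
\mathbf{H} \;=\; \boldsymbol{\beta} \otimes \overline{\boldsymbol{\beta}} \;+\; \overline{\boldsymbol{\beta}} \otimes \boldsymbol{\beta}.
\]

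Writing this out in matrix form gives
\[
\mathbf{H} \;=\; \tfrac{1}{2}\!\begin{bmatrix} 1 & -i & 0 \\ i & 1 & 0 \\ 0 & 0 & 0\end{bmatrix} + \tfrac{1}{2}\!\begin{bmatrix} 1 & i & 0 \\ -i & 1 & 0 \\ 0 & 0 & 0\end{bmatrix} \;=\; \begin{bmatrix} 1 & 0 & 0 \\ 0 & 1 & 0 \\ 0 & 0 & 0\end{bmatrix} \;=\; (=_{\db\dg}),
\]
which completes the reduction. The main subtlety, though routine, is checking that the two conjugate cross terms cancel exactly in their imaginary parts and add up in their real parts to give the desired real-valued $(=_{\db\dg})$ rather than some other $\db\dg | \dr$ signature; this is dictated by the identity $\langle \boldsymbol{\beta}, \overline{\boldsymbol{\beta}}\rangle = 1$, which is what makes $\boldsymbol{\beta}, \overline{\boldsymbol{\beta}}$ behave like a biorthogonal basis for the $\{\db,\dg\}$-subspace. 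Note that, unlike \cref{lem:geneq-eqbg}, here no unary signature is needed, since the vanishing of $\langle \boldsymbol{\beta}, \boldsymbol{\beta}\rangle$ (rather than an externally imposed orthogonality relation) is what kills the unwanted terms. Alternatively, one could first form the binary $\langle \mathbf{F}, (\sqrt{2},0,0)\rangle = \boldsymbol{\beta}\tew + \overline{\boldsymbol{\beta}}\tew = \mathrm{diag}(1,-1,0)$ and then chain two copies (matrix multiplication) to obtain the same $\mathrm{diag}(1,1,0) = (=_{\db\dg})$; I would present whichever of these two gadgets fits most cleanly with the conventions already established in the paper.
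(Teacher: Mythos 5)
Your proposal is correct. Both of your gadgets compute what you claim: for the double-edge gadget, contracting the two internal edges of two copies of $\mathbf{F}=\boldsymbol{\beta}\teh+\overline{\boldsymbol{\beta}}\teh$ indeed produces the four terms weighted by $\langle\boldsymbol{\beta},\boldsymbol{\beta}\rangle^{2}=0$, $\langle\overline{\boldsymbol{\beta}},\overline{\boldsymbol{\beta}}\rangle^{2}=0$ and $\langle\boldsymbol{\beta},\overline{\boldsymbol{\beta}}\rangle^{2}=1$, leaving $\boldsymbol{\beta}\otimes\overline{\boldsymbol{\beta}}+\overline{\boldsymbol{\beta}}\otimes\boldsymbol{\beta}=(=_{\db\dg})$, and the replacement argument via \cref{lem:domain-restriction}-style substitution is standard. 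The route differs mildly from the paper's: the paper first pins with the unary $\mathbf{e}_1$ to get $\langle\mathbf{F},\mathbf{e}_1\rangle\sim\boldsymbol{\beta}\tew+\overline{\boldsymbol{\beta}}\tew$, whose matrix is $\mathrm{diag}(1,-1,0)$ up to a scalar, and then chains two copies to obtain $\mathrm{diag}(1,1,0)$ --- this is exactly the ``alternative'' you sketch at the end (your choice of $(\sqrt{2},0,0)$ versus $\mathbf{e}_1$ only changes a harmless scalar). Your primary gadget buys a unary-free, purely ternary construction that exploits the isotropy $\langle\boldsymbol{\beta},\boldsymbol{\beta}\rangle=0$ directly in one contraction; the paper's version buys uniformity with \cref{lem:geneq-eqbg} (pin a unary, then square a binary), which is why it is the one presented there. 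Either gadget suffices, and no further argument is missing.
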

\begin{proof}
  We can see that $\langle \mathbf{F}, \mathbf{e}_1 \rangle = \frac{1}{\sqrt{2}}( \boldsymbol{\beta} \tew + \overline{\boldsymbol{\beta}} \tew)$, which has the matrix form $\begin{bsmallmatrix}
    1 & 0 & 0 \\
    0 & -1 & 0 \\
    0 & 0 & 0
  \end{bsmallmatrix}$ after normalization.
  Connecting two of them gives $(=_{\db \dg})$.
\end{proof}

\subsection{Rank 2 Type \texorpdfstring{\ternarytractgeneq}{A}}
Let $\mathbf{F}$ be a type \ternarytractgeneq signature of rank $2$.
After reordering the domain, we may assume that $\mathbf{F} = a \mathbf{e}_1 \teh + b \mathbf{e}_2 \teh$, for nonzero $a, b \in \mathbb{R}$.
By \cref{lem:geneq-normalization}, we may assume $a, b = 1$.

\begin{lemma}\label{lem:dichotomy-single-ternary-rank-2-geneq-single-binary}
  Let $\mathbf{F} = \mathbf{e}_1 \teh + \mathbf{e}_2 \teh$.
  Let $\mathbf{G}$ be a nondegenerate, real-valued, symmetric, binary domain $3$ signature.
  Then, $\holts(\mathbf{F}, \mathbf{G})$ is computable in polynomial time if one of the following conditions holds. Otherwise, it is \sph.
  \begin{enumerate}
    \item $\mathbf{G}$ is $\db \dg | \dr$ and $\holbs(\mathbf{G}\domres{\db, \dg}, [1, 0, 0, 1])$ is tractable.
    \item $\mathbf{G}$ is $\db \dr | \dg$ or $\bdgr$.
    \item $\mathbf{G}$ is \swbg, \swgr, or \swbr.
    \item $\mathbf{G}$ is in $\mathcal{D}$.
    \item $\mathbf{G} = c \begin{bmatrix}
        1 & x & -x \alpha \\
        x & x^2 & \alpha \\
        -x \alpha & \alpha & 0
      \end{bmatrix}$ for $\alpha = \pm \sqrt{1 + x^2}$ and nonzero $c, x \in \mathbb{R}$.
  \end{enumerate}
\end{lemma}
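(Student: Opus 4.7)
The plan combines two reductions that feed structural information back and forth. The first is the gadget $\mathbf{G}\teh\mathbf{F}$, whose tensor decomposition equals $\mathbf{u}\teh+\mathbf{v}\teh$, where $\mathbf{u}=\mathbf{G}\mathbf{e}_1$ and $\mathbf{v}=\mathbf{G}\mathbf{e}_2$ are the first two columns of $\mathbf{G}$; applying \cref{cor:non-orthogonal-linearly-independent-hard} (via \cref{thm:dich-single-sym-ter-dom3-real}) to this gadget yields strong constraints on $\mathbf{G}$. The second observation is that $\langle\mathbf{F},(1,1,0)\rangle$ has matrix form $\operatorname{diag}(1,1,0)$, which is $(=_{\db\dg})$; hence by \cref{lem:domain-restriction}, $\holbs(\mathbf{G}\domres{\db,\dg},[1,0,0,1])\le_T\holts(\mathbf{F},\mathbf{G})$, so whenever $\mathbf{G}\domres{\db,\dg}$ is incompatible with $[1,0,0,1]$ under the Boolean dichotomy \cref{thm:dich-sym-Boolean}, hardness follows.

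Now apply \cref{cor:non-orthogonal-linearly-independent-hard} to the central gadget: if $\mathbf{u},\mathbf{v}$ are linearly independent and not orthogonal, hardness follows directly. If one of $\mathbf{u},\mathbf{v}$ vanishes, the corresponding row/column of $\mathbf{G}$ is zero, so $\mathbf{G}$ matches $\bdgr$ or $\db\dr|\dg$ (case (2)). If $\mathbf{u}\sim\mathbf{v}$ both nonzero, the symmetry of $\mathbf{G}$ forces the first two rows to share the same dependence, so $\mathbf{G}\in\mathcal{D}$ (case (4)). Otherwise $\mathbf{u}\perp\mathbf{v}$, both nonzero. Let $\mathbf{w}$ denote the third column. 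Applying $\mathbf{G}\teh$ a second time produces $(\mathbf{G}^2\mathbf{e}_1)\teh+(\mathbf{G}^2\mathbf{e}_2)\teh$, where $\mathbf{G}^2\mathbf{e}_1=\|\mathbf{u}\|^2\mathbf{e}_1+(\mathbf{u}\cdot\mathbf{w})\mathbf{e}_3$ and $\mathbf{G}^2\mathbf{e}_2=\|\mathbf{v}\|^2\mathbf{e}_2+(\mathbf{v}\cdot\mathbf{w})\mathbf{e}_3$; these are always linearly independent, and the same corollary forces them to be orthogonal, yielding $(\mathbf{u}\cdot\mathbf{w})(\mathbf{v}\cdot\mathbf{w})=0$. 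WLOG $\mathbf{u}\cdot\mathbf{w}=0$, so the first column of $\mathbf{G}$ is orthogonal to the other two.

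The remaining case split is on $g_{13},g_{23}$ and on $\mathbf{v}\cdot\mathbf{w}$. If $g_{13}=g_{23}=0$, $\mathbf{G}$ is $\db\dg|\dr$ and the complexity reduces exactly to the $\holbs$ condition on $\mathbf{G}\domres{\db,\dg}$ with $[1,0,0,1]$, giving case (1). If exactly one of $g_{13},g_{23}$ vanishes, the orthogonality relations combined with the symmetry of $\mathbf{G}$ collapse $\mathbf{G}$ into either $\db\dr|\dg$ / $\bdgr$ (case (2)) or one of the \swbg / \swgr / \swbr patterns (case (3)). If $g_{13},g_{23}$ are both nonzero and $\mathbf{v}\perp\mathbf{w}$ as well, the three columns of $\mathbf{G}$ are pairwise orthogonal, so $\mathbf{G}^2$ is diagonal; since the $g_{13},g_{23}\neq 0$ assumption forbids $\mathbf{G}$ from being itself diagonal, $\mathbf{G}^2$ cannot have three distinct diagonal entries (nor only two distinct, by the block-diagonalization argument), hence $\mathbf{G}^2$ must be a scalar multiple of the identity. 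Parametrizing the resulting symmetric-orthogonal family with nonzero $(1,3),(2,3)$ entries recovers the explicit form of case (5). Any residual configuration not matching (1)-(5) (in particular $\mathbf{v}\not\perp\mathbf{w}$ with $\mathbf{u}$ not axis-aligned) is shown hard by constructing auxiliary binary signatures of the form $\operatorname{diag}(u_1',u_2',0)\cdot\mathbf{G}\cdot\operatorname{diag}(u_1'',u_2'',0)$ — realized via $\langle\mathbf{F},\cdot\rangle$ and $\mathbf{G}$-chains — whose $\{\db,\dg\}$-restrictions are shown to be incompatible with $[1,0,0,1]$ via \cref{thm:dich-sym-Boolean}.

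The main obstacle is precisely the last subcase: the naive iteration $\mathbf{G}\teh$ preserves $\mathbf{u}\perp\mathbf{v}$ at every step without directly constraining $\mathbf{v}\cdot\mathbf{w}$, so extracting the somewhat unanticipated explicit form of case (5) — and separating it cleanly from the \swbg-type structure of case (3) — requires auxiliary gadgets mixing $\mathbf{F}$, $\mathbf{G}$, and unary signatures, together with a careful use of the Boolean dichotomy on the induced binary signatures.
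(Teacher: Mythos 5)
Your outline follows the same skeleton as the paper's proof: analyze $\mathbf{G}\teh\mathbf{F}$ (the first two columns $\mathbf{u},\mathbf{v}$ of $\mathbf{G}$ must be dependent, giving $\mathcal{D}$, or orthogonal), realize $(=_{\db\dg})$ from $\mathbf{F}$ and use \cref{lem:domain-restriction}, then iterate with $(\mathbf{G}^2)\teh\mathbf{F}$ to get $(\mathbf{u}\cdot\mathbf{w})(\mathbf{v}\cdot\mathbf{w})=0$. Up to there you are fine. But the endgame overclaims. With $\mathbf{u}\perp\mathbf{v}$, $\mathbf{u}\perp\mathbf{w}$ and exactly one of $g_{13},g_{23}$ zero, symmetry and orthogonality alone do \emph{not} collapse $\mathbf{G}$ into cases (2)/(3): $\mathbf{G}=\begin{bsmallmatrix}1&1&1\\1&-1&0\\1&0&-1\end{bsmallmatrix}$ satisfies all of these and is none of (1)--(5); it is hard only because $\mathbf{G}\domres{\db,\dg}=[1,1,-1]$ is incompatible with $[1,0,0,1]$, a constraint you state at the outset but never apply in this branch. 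Similarly, the "symmetric-orthogonal family with nonzero $(1,3),(2,3)$ entries" is strictly larger than case (5): $\mathbf{G}=I-2\mathbf{n}\mathbf{n}\transpose$ with all $n_i\ne0$ and $n_3^2\ne 1/2$ has pairwise orthogonal columns and $\mathbf{G}^2=I$, yet $g_{33}\ne0$ and a nondegenerate top-left block, so it is not case (5) and is in fact \sph. Case (5) is only pinned down once compatibility of $[g_{11},g_{12},g_{22}]$ with $[1,0,0,1]$ forces the top-left block to be degenerate (the $[*,0,*]$ and $[0,*,0]$ options contradict $\langle\mathbf{u},\mathbf{v}\rangle=0$ when $g_{13}g_{23}\ne0$). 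These two lapses are repairable with the tools you already cite.

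The genuine gap is the residual case that carries the real content of the paper's argument: top-left block degenerate $[1,x,x^2]$ with $x\ne0$, $\mathbf{u}\perp\mathbf{v}$, $\mathbf{u}\perp\mathbf{w}$, but $\mathbf{v}\cdot\mathbf{w}\ne0$. Your proposed certificate, binary gadgets $\mathrm{diag}(u_1',u_2',0)\,\mathbf{G}\,\mathrm{diag}(u_1'',u_2'',0)$ restricted to $\{\db,\dg\}$, provably cannot detect this: the $\{\db,\dg\}$-restriction of such a gadget is a diagonal rescaling of the top-left $2\times 2$ block of $\mathbf{G}$, which is rank one here, hence degenerate and compatible with $[1,0,0,1]$ regardless of $\mathbf{v}\cdot\mathbf{w}$; it sees nothing of the third row and column. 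Appealing to "$\mathbf{G}$-chains" does not close this either, since you give no computation showing that some power's restriction violates \cref{thm:dich-sym-Boolean}. The paper makes a different move exactly here: rotate within the $\{\db,\dg\}$-plane by the orthogonal $T$ aligned with the degenerate direction, note that $\mathbf{H}_1=T\teh(\mathbf{G}\teh\mathbf{F})$ has its two orthogonal tensor vectors supported on coordinates $\{\dg,\dr\}$ so that \cref{lem:geneq-eqbg} yields $(=_{\dg\dr})$, and then show $\mathbf{H}_2=T\teh((\mathbf{G}^2)\teh\mathbf{F})$ restricted to $\{\dg,\dr\}$ is \sph unless $\mathbf{v}\cdot\mathbf{w}=0$; switching to the $\{\dg,\dr\}$ subdomain is what your sketch, confined to $\{\db,\dg\}$, misses. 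Finally, the lemma also has a tractability direction you do not address; case (5) is not visibly one of the tractable classes and needs the explicit $\db\dg|\dr$ rotation sending $\mathbf{G}$ to a multiple of $\begin{bsmallmatrix}0&0&1\\0&1&0\\1&0&0\end{bsmallmatrix}$, which places the pair in class \tractBGGRBR.
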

\begin{proof}
  Observe that $\mathbf{F}$ is $\db \dg | \dr$, $\db \dr | \dg$ and $\bdgr$ simultaneously.
  So, case $1$ is the tractable class \tractBGR.
  Case $2$ and $3$ are also \tractBGR after renaming the domain.
  Case $4$ is the tractable class \tractBG since $\supp \mathbf{F} \subseteq \{\db, \dg\}^*$.
  For case $5$, we may apply a $\db \dg | \dr$ orthogonal holographic transformation $T = \begin{bsmallmatrix}
    -x/\alpha & 1/\alpha & 0 \\
    1/\alpha & x/\alpha & 0 \\
    0 & 0 & 1
  \end{bsmallmatrix}$ and get
  \[
    \mathbf{G}' = T \tew \mathbf{G} = T \mathbf{G} T\transpose = (1 + x^2) \begin{bmatrix}
      0 & 0 & 1 \\
      0 & 1 & 0 \\
      1 & 0 & 0
    \end{bmatrix}
  \] while $\mathbf{F}' = T \teh \mathbf{F} = (-x/\alpha, 1/\alpha, 0) \teh + (1/\alpha, x/\alpha, 0) \teh$.
  We can check that $\hols(\mathbf{F}', \mathbf{G}')$ falls under the class \tractBGGRBR.

  To show that the above cases are the only tractable cases, we will start by assuming that $\holts(\mathbf{F}, \mathbf{G})$ is not \sph.
  Then, we use the known dichotomy theorems to show that $\mathbf{G}$ must take a form stated in the lemma. 
  By \cref{lem:geneq-eqbg} and \cref{lem:domain-restriction}, if $\holts(\mathbf{F}, \mathbf{G})$ is not \sph, then a domain restriction to the Boolean domain $\{\db, \dg\}$ cannot be \sph. 
  Then, we may use \cref{thm:dich-sym-Boolean} on $\hols(\{(\mathbf{G}^k)\domres{\db, \dg} : k \ge 0\}  \cup \{\mathbf{F} \domres{\db, \dg}\})$, which corresponds to looking at $2 \times 2$ submatrix of $\mathbf{G}^k$ corresponding to the domain $\{\db, \dg\}$ for some $k$.
  We may also consider applying \cref{thm:dich-single-sym-ter-dom3-real} on $\holts(\mathbf{G} \teh \mathbf{F})$, which corresponds to looking at the first two columns of the matrix form of $\mathbf{G}$.

  Let $
    \mathbf{G} = \begin{bsmallmatrix}
      a & b & c \\
      b & d & e \\
      c & e & f
    \end{bsmallmatrix}$. 
  For $\mathbf{G}\teh \mathbf{F} = (a, b, c)\teh + (b, d, e)\teh$ to be a tractable ternary signature, by \cref{thm:dich-single-sym-ter-dom3-real}, we need that $(a, b, c)$ and $(b, d, e)$ are either linearly dependent or orthogonal.
  If $(a, b, c)$ and $(b, d, e)$ are linearly dependent, then $\mathbf{G} \in \mathcal{D}$.
  So, we consider the cases when $(a, b, c)$ and $(b, d, e)$ are (non-zero and)
  orthogonal.
  We may also look at $\mathbf{G}\domres{\db, \dg} = [a, b, d]$. 
  This must be compatible with $\mathbf{F}\domres{\db, \dg} = [1, 0, 0, 1]$.
  The only possibilities are then that $[a, b, d]$ is degenerate, or $[*, 0, *]$, or $[0, *, 0]$.

  Suppose $[a, b, d] = [*, 0, *]$. 
  For $(a, 0, c)$ and $(0, d, e)$ to be orthogonal, it must be the case that $ce = 0$.
  If $c = 0$, then $\mathbf{G}$ is $\bdgr$.
  If $e = 0$, then $\mathbf{G}$ is $\db \dr | \dg$.
  Both cases are tractable.

  Suppose $[a, b, d] = [0, *, 0]$.
  For $(0, b, c)$ and $(b, 0, e)$ to be orthogonal, it must be the case that $ce = 0$.
  \begin{itemize}
    \item 
      Suppose $c = 0$.
      We may apply the same analysis to $\mathbf{G}^2 = \begin{bsmallmatrix}
        b^2 & 0 & be \\
        0 & b^2 + e^2 & ef \\
        be & ef & e^2 + f^2
      \end{bsmallmatrix}$.
      The first two columns of $\mathbf{G}^2$ being linearly dependent implies $b = 0$. Then, $\supp \mathbf{G} \subseteq \{\dg, \dr\}^*$, so $\mathbf{G}$ is $\bdgr$.
      The first two columns of $\mathbf{G}^2$ being orthogonal implies $b e^2 f = 0$. 
      We may assume $b \ne 0$.
      If $f = 0$, $\mathbf{G}$ is \swbr.
      If $e = 0$, $\mathbf{G} \in \mathcal{E}$.
    \item
      Suppose $e = 0$.
      We may apply the same analysis to $\mathbf{G}^2 = \begin{bsmallmatrix}
        b^2 + c^2 & 0 & cf \\
        0 & b^2 & bc \\
        cf & bc & c^2 + f^2
      \end{bsmallmatrix}$.
      The first two columns being linearly dependent implies $b = 0$. Then, $\supp \mathbf{G} \subseteq \{\db, \dr\}^*$.
      The first two columns being orthogonal implies $b c^2 f = 0$.
      We may assume $b \ne 0$.
      If $c = 0$, we go back to the previous case.
      If $f = 0$, $\mathbf{G}$ is \swgr.
  \end{itemize}

  Suppose $[a, b, d]$ is degenerate.
  We may assume $b \ne 0$, since that has been already handled.
  If $b \ne 0$ and $[a, b, d]$ is degenerate, then $a, d \ne 0$ as well, so we may normalize $\mathbf{G}$ so that $[a, b, d] = [1, x, x^2]$ for some nonzero $x \in \mathbb{R}$.
  We write $\mathbf{G}$ in the following way:
  \[
    \mathbf{G} = \begin{bmatrix}
      1 & x & c \\
      x & x^2 & e \\
      c & e & f 
      \end{bmatrix} = \begin{bmatrix}
      \vrule & \vrule & \vrule \\
      \mathbf{v}_1 & \mathbf{v}_2 & \mathbf{v}_3 \\
      \vrule & \vrule & \vrule
    \end{bmatrix}
  \]
  where $\mathbf{v}_i$ denotes the $i$th column of $\mathbf{G}$. By assumption, $\langle \mathbf{v}_1, \mathbf{v}_2 \rangle = x + x^3 + ce = 0$, which implies $ce \ne 0$, since $x \ne 0$ and $x^2 + 1 \ne 0$.
  Then,
  \[
    \mathbf{G}^2 = \begin{bmatrix}
      \langle \mathbf{v}_1, \mathbf{v}_1 \rangle & 0 & \langle \mathbf{v}_1, \mathbf{v}_3 \rangle \\
      0 & \langle \mathbf{v}_2, \mathbf{v}_2 \rangle & \langle \mathbf{v}_2, \mathbf{v}_3 \rangle \\
      \langle \mathbf{v}_1, \mathbf{v}_3 \rangle & \langle \mathbf{v}_2, \mathbf{v}_3 \rangle & \langle \mathbf{v}_3, \mathbf{v}_3 \rangle 
      \end{bmatrix} = \begin{bmatrix}
      p & 0 & q \\
      0 & r & s \\
      q & s & t
    \end{bmatrix}
    \, .
  \]
  Since $p, r \ne 0$, the first two columns cannot be linearly dependent. 
  Therefore, the first two columns must be orthogonal for $\holts((\mathbf{G}^2)\teh \mathbf{F})$ to be not \sph, which implies that $q = 0$ or $s = 0$.

  Then we have the following three signatures at our disposal:
  $\mathbf{F} = \mathbf{e}_1 \teh + \mathbf{e}_2 \teh$, $\mathbf{G}\teh \mathbf{F} = (1, x, c)\teh + (x, x^2, e)\teh$ and $(\mathbf{G}^2)\teh \mathbf{F} = (p, 0, q) \teh + (0, r, s)\teh$.
  We apply holographic transformation by $T =
  \begin{bsmallmatrix}
    -x/\alpha & 1/\alpha & 0 \\
    1/\alpha & x/\alpha & 0 \\
    0 & 0 & 1
  \end{bsmallmatrix}$ for $\alpha = \sqrt{1 + x^2}$, which gives us 
  \[
    \mathbf{H}_1 = 
    T \teh (\mathbf{G} \teh \mathbf{F}) = \begin{bmatrix}
      0 \\ (1 + x^2)/\alpha \\ c
      \end{bmatrix}\teh + \begin{bmatrix}
      0 \\ (x + x^3)/\alpha \\ e
    \end{bmatrix}\teh
  \]
  and
  \[
    \mathbf{H}_2 = 
    T\teh ( (\mathbf{G}^2)\teh \mathbf{F}) = \begin{bmatrix}
      -px/\alpha \\ p/\alpha \\ q
      \end{bmatrix}\teh + \begin{bmatrix}
      r/\alpha \\ rx/\alpha \\ s
    \end{bmatrix}\teh
    \, .
  \]
  By \cref{lem:geneq-eqbg},  $\mathbf{H}_1$ can realize $(=_{\dg \dr})$, so by \cref{lem:domain-restriction}, we may apply Boolean domain dichotomy on 
  $\mathbf{H}_1\domres{\dg, \dr}$ and $\mathbf{H}_2\domres{\dg, \dr}$.
  Recall that $qs = 0$.
  Assume $q = 0$.
  Then, since $x, p, r \ne 0$, if $s \ne 0$, then $\mathbf{H}_2\domres{\dg, \dr}$ is nondegenerate.
  Then $\holbs(\mathbf{H}_2\domres{\dg, \dr})$ is \sph  since $prx/\alpha^2 + q s = prx/\alpha^2  \ne 0$.
  Therefore, it must be the case that $s = 0$, meaning $\mathbf{v}_1, \mathbf{v}_2, \mathbf{v}_3$ are pairwise orthogonal.
  It follows that $\mathbf{v}_3 = \lambda(-x, 1, 0)$ for some $\lambda \in \mathbb{R}$, and it must be the case that $\lambda^2 = 1 + x^2$ since $\langle \mathbf{v}_1, \mathbf{v}_2  \rangle = x + x^3 - \lambda^2 x = 0$.
  Therefore, $\mathbf{G}$ is case $5$.
  We may use the same argument for the case when $s = 0$.
\end{proof}

\subsection{Rank 3 Type \texorpdfstring{\ternarytractgeneq}{A}}
We first prove a lemma about $\mathcal{D}$ signatures when a rank $2$ \geneq with a different support is present.
\begin{lemma}\label{lem:gr-equality-D-dichotomy}
  Let $\mathbf{F} = \mathbf{e}_2 \teh + \mathbf{e}_3 \teh$ and $\mathbf{G} \in \mathcal{D}$ be symmetric.
  Then, $\holts(\mathbf{F}, \mathbf{G})$ is computable in polynomial time if $\holbs(\mathbf{G}\domres{\dg, \dr}, [1, 0, 0, 1])$ is tractable and $\mathbf{G}$ is degenerate, $\db \dg | \dr$, \swbg, or \strspt.
  Otherwise, it is \sph.
\end{lemma}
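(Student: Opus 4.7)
The proof follows the template of \cref{lem:dichotomy-single-ternary-rank-2-geneq-single-binary}, but with $\mathbf{F}$ supported on $\{\dg,\dr\}$ instead of $\{\db,\dg\}$, and with the added structural constraint that $\mathbf{G}\in\mathcal{D}$. The plan is to parameterize the symmetric matrices in $\mathcal{D}$ in the explicit form
\[
\mathbf{G} = \begin{bmatrix} ax^2 & axy & bx \\ axy & ay^2 & by \\ bx & by & c \end{bmatrix}
\]
for some $a,b,c,x,y\in\mathbb{R}$, and then to combine two independent pieces of information: (i) \cref{thm:dich-single-sym-ter-dom3-real} applied to the ternary gadget $\mathbf{G}\teh\mathbf{F}$, and (ii) \cref{thm:dich-sym-Boolean} applied to the restriction $\mathbf{G}\domres{\dg,\dr}$ together with $\mathbf{F}\domres{\dg,\dr}=[1,0,0,1]$. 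The key initial observation is the direct computation $\langle\mathbf{F},(0,1,1)\rangle=(=_{\dg\dr})$, which together with \cref{lem:domain-restriction} yields the reduction $\holbs(\mathbf{G}\domres{\dg,\dr},[1,0,0,1])\le_T\holts(\mathbf{F},\mathbf{G})$ and disposes of half of the stated hypothesis.

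For tractability, I verify each listed form separately. If $\mathbf{G}$ is degenerate, it factors into two unaries and the problem reduces to $\holts(\mathbf{F})$; since $\mathbf{F}\in\mathcal{E}$, this is tractable via Class \tractE. If $\mathbf{G}$ is \strspt on $\{\dg,\dr\}$, then every edge must take values in $\{\dg,\dr\}$, so the problem is literally $\holbs(\mathbf{F}\domres{\dg,\dr},\mathbf{G}\domres{\dg,\dr})$, tractable by hypothesis. If $\mathbf{G}$ is \strspt on $\{\db,\dg\}$ or $\{\db,\dr\}$, the unique element common with $\mathbf{F}$'s support pins every $\mathbf{F}$--$\mathbf{G}$ edge, decomposing the grid into independent $\holts(\mathbf{F})$ and $\hols$-of-a-single-binary instances, both polynomial-time. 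If $\mathbf{G}$ is $\db\dg|\dr$ or \swbg, then $\{\mathbf{F},\mathbf{G}\}$ fits directly into Class~\tractBGR with $T=I$, because $\mathbf{F}$ is trivially $\db\dg|\dr$ (its support is pure) and $\mathbf{F}\domres{\db,\dg}=[0,0,0,1]$ is degenerate, hence tractable in $\holbs$ with any binary companion.

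For hardness, assume $\holbs(\mathbf{G}\domres{\dg,\dr},[1,0,0,1])$ is tractable (else the reduction above gives \sph) and assume $\mathbf{G}$ is in none of the listed forms; I derive a contradiction. The cases $x=0$ and $y=0$ directly give $\mathbf{G}$ \strspt on $\{\dg,\dr\}$ or $\{\db,\dr\}$, both excluded, so assume $xy\ne 0$. Form the ternary $\mathbf{G}\teh\mathbf{F}=y^3(ax,ay,b)\teh+(bx,by,c)\teh$. By \cref{thm:dich-single-sym-ter-dom3-real} its tractability requires the two generating vectors to be linearly dependent or orthogonal: the type \ternarytractz possibility is ruled out because if $\mathbf{u}_1\teh+\mathbf{u}_2\teh=\boldsymbol{\beta}\teh+\overline{\boldsymbol{\beta}}\teh$ with $\langle\boldsymbol{\beta},\boldsymbol{\beta}\rangle=0$, \cref{prop:uniqueness-tensor-decomposition} gives $\mathbf{u}_1=\lambda\boldsymbol{\beta}$ for some $\lambda\in\mathbb{C}$, whence $\langle\mathbf{u}_1,\mathbf{u}_1\rangle=\lambda^2\langle\boldsymbol{\beta},\boldsymbol{\beta}\rangle=0$, contradicting that $\mathbf{u}_1$ is a nonzero real vector. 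Linear dependence forces $ac=b^2$, i.e.\ $\mathbf{G}=a(x,y,b/a)\tew$, degenerate (excluded). Orthogonality forces $b(a(x^2+y^2)+c)=0$; if $b=0$ then $\mathbf{G}$ is $\db\dg|\dr$ (excluded), so $b\ne 0$ and $c=-a(x^2+y^2)$.

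It remains to invoke the Boolean dichotomy, which is the closing step. Now $\mathbf{G}\domres{\dg,\dr}=\bigl[ay^2,\,by,\,-a(x^2+y^2)\bigr]$. Since $[1,0,0,1]$ belongs only to type $\typei(0,1)$ among the classes of \cref{thm:dich-sym-Boolean} (the recurrence $ax_k+bx_{k+1}=ax_{k+2}$ forces $a=0$, and $[1,0,0,1]$ is visibly not type $\typeii$), any compatible arity-$2$ symmetric signature must be degenerate, $[\ast,0,\ast]$, or $[0,\ast,0]$. Degeneracy needs $-y^2(a^2(x^2+y^2)+b^2)=0$, impossible as $y,b\ne 0$; $[\ast,0,\ast]$ needs $by=0$, impossible; and $[0,\ast,0]$ needs $ay^2=0$ and $c=0$, forcing $a=c=0$ and producing the \swbg matrix $\mathbf{G}=\begin{bsmallmatrix}0&0&bx\\0&0&by\\bx&by&0\end{bsmallmatrix}$, excluded. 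All possibilities are exhausted, yielding the desired contradiction. The main obstacle is the simultaneous satisfaction of both dichotomy constraints -- carefully pinning down that the type \ternarytractz branch cannot arise from a real rank-$2$ decomposition, and then mechanically eliminating every parameter configuration that is not already one of the listed tractable forms.
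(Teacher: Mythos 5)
Your proof is correct, but it is organized differently from the paper's. The paper's own argument never forms the gadget $\mathbf{G}\teh\mathbf{F}$: it writes $\mathbf{G}=\begin{bsmallmatrix} ax^2 & axy & bx\\ axy & ay^2 & by\\ bx & by & c\end{bsmallmatrix}$, restricts to $\{\dg,\dr\}$ via $(=_{\dg\dr})$, applies \cref{thm:dich-sym-Boolean} to $[ay^2,by,c]$ against $[1,0,0,1]$, and handles the three outcomes directly: $[\ast,0,\ast]$ gives $\db\dg|\dr$ or \strspt, $[0,\ast,0]$ gives \swbg or \strspt, and in the degenerate case (with $y,b\neq 0$) it observes that $(by,c)\sim(ay^2,by)$ makes the bottom two rows of $\mathbf{G}$ linearly dependent, which together with the $\mathcal{D}$ structure forces $\rank \mathbf{G}=1$. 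You instead first impose the single-ternary dichotomy on $\mathbf{G}\teh\mathbf{F}=y^3(ax,ay,b)\teh+(bx,by,c)\teh$ to force either $ac=b^2$ (hence degeneracy of $\mathbf{G}$) or $c=-a(x^2+y^2)$, and only then run the Boolean restriction, where degeneracy and $[\ast,0,\ast]$ die numerically. Both routes work; the paper's is slightly leaner (one dichotomy invocation plus a rank argument exploiting $\mathbf{G}\in\mathcal{D}$), while yours buys an explicit algebraic handle ($c=-a(x^2+y^2)$) and also spells out the tractability direction, which the paper leaves to the tractable classes. Two small patches to your write-up: the identity $\mathbf{G}=a(x,y,b/a)\tew$ presupposes $a\neq 0$ (when $a=0$, $ac=b^2$ forces $b=0$ and $\mathbf{G}=c\,\mathbf{e}_3\tew$, still degenerate); and your ruling-out of the tractable forms for two linearly independent, non-orthogonal real vectors only treats the $\boldsymbol{\beta}\teh+\overline{\boldsymbol{\beta}}\teh$ shape of type \ternarytractz --- the rank-$3$ shapes (type \ternarytractgeneq with three nonzero vectors, or type \ternarytractz with $\boldsymbol{\alpha}\neq 0$) need \cref{lem:linearly-independent-tensor-rank}, or you can simply cite \cref{cor:non-orthogonal-linearly-independent-hard}, which is exactly the statement you are re-deriving.
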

\begin{proof}
  We will assume that $\holts(\mathbf{F}, \mathbf{G})$ is not \sph and show that $\mathbf{G}$ must be of the form stated in the lemma.
  Since $\mathbf{G} \in \mathcal{D}$, we can write $\mathbf{G} = \begin{bsmallmatrix}
    a x^2 & a xy & bx \\
    axy & ay^2 & by \\
    bx & by & c
  \end{bsmallmatrix}$.
  By \cref{lem:geneq-eqbg} and \cref{lem:domain-restriction}, we need $\mathbf{G}\domres{\dg, \dr} = [ay^2, by, c]$ to be compatible with $\mathbf{F}\domres{\dg, \dr} = [1, 0, 0, 1]$.
  For $\holbs([1, 0, 0, 1], [ay^2, by, c])$ to be tractable, $[ay^2, by, c]$ must be degenerate, $[*, 0, *]$ or $[0, *, 0]$ by Boolean domain dichotomy.

  Suppose $[ay^2, by, c] = [*, 0, *]$
  Then, either $b = 0$ or $y = 0$.
  If $b = 0$, then $\mathbf{G}$ is $\db \dg | \dr$.
  If $y = 0$, then $\supp \mathbf{G} \subseteq \{\db, \dr\}^*$.

  Suppose $[ay^2, by, c] = [0, *, 0]$.
  Then, $c = 0$ and either $a = 0$ or $y = 0$.
  If $a = 0$, then $\mathbf{G}$ is \swbg.
  If $y = 0$, then $\supp \mathbf{G} \subseteq \{\db, \dr\}^*$.

  Suppose $[ay^2, by, c]$ is degenerate. 
  We may assume that $y, b \ne 0$, since otherwise, we go back to the previous case.
  Then, the vectors $(ay^2, by)$ and $(by, c)$ are linearly dependent nonzero vectors.
  In particular, there exists some $\lambda \in \mathbb{R}$ such that $(by, c) = \lambda (ay^2, by)$.
  Then, the three vectors $(axy, bx)$, $(ay^2, by)$ and $(by, c)$ are linearly dependent, which means the bottom two rows of $\mathbf{G}$ are linearly dependent.
  This implies that rank of $\mathbf{G}$ is $1$, so $\mathbf{G}$ is degenerate.
\end{proof}

Let $\mathbf{F}$ be a type \ternarytractgeneq signature of rank $3$, so $\mathbf{F} = a \mathbf{e}_1 \teh + b \mathbf{e}_2 \teh + c \mathbf{e}_3\teh$, for nonzero $a, b, c \in \mathbb{R}$.
By \cref{lem:geneq-normalization}, we may assume $a, b, c = 1$.
\begin{lemma}\label{lem:dichotomy-single-ternary-rank-3-geneq-single-binary}
  Let $\mathbf{F} = \mathbf{e}_1 \teh + \mathbf{e}_2 \teh + \mathbf{e}_3 \teh$.
  Let $\mathbf{G}$ be a nondegenerate, real-valued, symmetric, binary domain $3$ signature.
  Then, $\holts(\mathbf{F}, \mathbf{G})$ is computable in polynomial time if one of the following conditions holds. Otherwise, it is \sph.
  \begin{enumerate}
    \item For some $i, j, k \in \{\db, \dg, \dr\}$, $\mathbf{G}$ is $ij | k$ and $\holbs(\mathbf{G}\domres{i, j}, [1, 0, 0, 1])$ is tractable.
    \item $\mathbf{G}$ is \swbg, \swgr, or \swbr.
  \end{enumerate}
\end{lemma}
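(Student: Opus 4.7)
I will follow the same blueprint as in \cref{lem:dichotomy-single-ternary-rank-2-geneq-single-binary}: first verify tractability of the two listed cases, then assume $\holts(\mathbf{F},\mathbf{G})$ is not \sph and use the Boolean-domain dichotomy on the three domain restrictions, combined with the single ternary-dichotomy applied to $\mathbf{G}\teh\mathbf{F}$, to pin down the structure of $\mathbf{G}$. Since $\mathbf{F}=\mathbf{e}_1\teh+\mathbf{e}_2\teh+\mathbf{e}_3\teh$ is $ij|k$ for every partition of $\{\db,\dg,\dr\}$, case 1 matches class \tractBGR directly after a domain permutation that sends $\{i,j\}$ to $\{\db,\dg\}$. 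For case 2 each of \swbg, \swgr, \swbr has $\mathbf{G}\domres{i,j}=0$ for the appropriate pair, so condition (c) of \tractBGR is trivially satisfied. Tractability is immediate.

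\textbf{Setting up the hardness argument.} Using \cref{lem:geneq-eqbg} together with the $S_3$ permutation symmetry of $\mathbf{F}$ I can realize each of $(=_{\db \dg})$, $(=_{\db \dr})$, $(=_{\dg \dr})$, and \cref{lem:domain-restriction} then gives $\holbs(\{[1,0,0,1],\mathbf{G}\domres{i,j}\})\le_T\holts(\mathbf{F},\mathbf{G})$ for each pair. Since $[1,0,0,1]$ is nondegenerate of type $\typei(0,1)$ by \cref{prop:boolean-type-i-tensor-decomposition}, \cref{thm:dich-sym-Boolean} forces $\mathbf{G}\domres{i,j}=[g_{ii},g_{ij},g_{jj}]$ to be degenerate (equivalently $g_{ii}g_{jj}=g_{ij}^2$), of the form $[\ast,0,\ast]$ (equivalently $g_{ij}=0$), or of the form $[0,\ast,0]$ (equivalently $g_{ii}=g_{jj}=0$); otherwise $\holbs$ and hence $\holts$ would be \sph. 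Applied to all three pairs this gives three disjunctive constraints on the six entries of $\mathbf{G}$.

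\textbf{Case analysis on $N$.} Let $N$ be the number of nonzero entries among $g_{12},g_{13},g_{23}$. When $N\le 1$ the constraints pin $\mathbf{G}$ into a support of the form $\{i,j\}^2\cup\{k\}^2$ with $\mathbf{G}\domres{i,j}$ either degenerate or $[0,\ast,0]$, so $\mathbf{G}$ falls into case 1. When $N=2$, say $g_{12},g_{13}\ne 0$, the first two triggered implications force $g_{11}=g_{22}=g_{33}=0$, leaving the symmetric \swgr shape; the two other placements of the nonzero off-diagonals yield \swbr and \swbg (case 2). When $N=3$ the constraints admit only two configurations: either all diagonals are zero (whence $\det\mathbf{G}=2g_{12}g_{13}g_{23}\ne 0$, so $\mathbf{G}$ has rank $3$), or all diagonals are nonzero with $g_{ii}g_{jj}=g_{ij}^2$ for every pair, a short sign-product computation showing $\det\mathbf{G}=2(\epsilon-1)g_{11}g_{22}g_{33}$ where $\epsilon=\epsilon_{12}\epsilon_{13}\epsilon_{23}$; the branch $\epsilon=1$ gives rank $1$ and is excluded as degenerate, so again $\mathbf{G}$ has rank $3$. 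In both configurations the columns of $\mathbf{G}$ are therefore linearly independent real vectors, and they are exactly the tensor summands of $\mathbf{G}\teh\mathbf{F}$. A direct computation of the off-diagonal entries of $\mathbf{G}^2$ shows that in the zero-diagonal configuration $(\mathbf{G}^2)_{12}=g_{13}g_{23}\ne 0$, and in the degenerate-restriction configuration the system $(\mathbf{G}^2)_{ij}=0$ reduces to $g_{ii}+g_{jj}=g_{kk}$ for all permutations, summing to $g_{11}+g_{22}+g_{33}=0$ and ultimately forcing some $g_{ii}=0$, a contradiction. Hence the columns are not pairwise orthogonal, and \cref{cor:non-orthogonal-linearly-independent-hard} applied to $\mathbf{G}\teh\mathbf{F}$ gives \#\P-hardness.

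\textbf{Main obstacle.} The subtlest point is the degenerate-restriction subcase of $N=3$: it survives the Boolean dichotomy and is only eliminated by the ternary analysis of $\mathbf{G}\teh\mathbf{F}$. The fact that $\mathbf{F}$ has rank $3$ is crucial here -- the columns of $\mathbf{G}$ appear as independent tensor summands in $\mathbf{G}\teh\mathbf{F}$, so a single gadget suffices rather than the iterated $\mathbf{G}^2\teh\mathbf{F}$ construction used in the rank-$2$ analog. The bookkeeping of the sign product $\epsilon$ in the determinant calculation is the one place where care is needed to ensure the exhaustive enumeration does not leak any ``exceptional'' $\mathbf{G}$.
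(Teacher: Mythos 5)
Your hardness argument takes a genuinely different route from the paper's (the paper first realizes $\mathbf{e}_i\teh+\mathbf{e}_j\teh$ via \cref{lem:geneq-normalization}, invokes the already-proved rank-$2$ lemma \cref{lem:dichotomy-single-ternary-rank-2-geneq-single-binary}, and then disposes of its cases 2, 4, 5 using \cref{lem:gr-equality-D-dichotomy} and the extra restrictions), and your $N=3$ analysis is sound, but your $N=2$ case has a genuine gap. With $g_{12},g_{13}\ne 0$ and $g_{23}=0$, each of the two triggered Boolean constraints allows the \emph{degenerate} option as well as $[0,\ast,0]$, so it is simply false that they ``force $g_{11}=g_{22}=g_{33}=0$''. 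Concretely, $\mathbf{G}=\begin{bsmallmatrix} 1 & 1 & c \\ 1 & 1 & 0 \\ c & 0 & c^2 \end{bsmallmatrix}$ with $c\ne 0$ has restrictions $[1,1,1]$, $[1,c,c^2]$, $[1,0,c^2]$, all compatible with $[1,0,0,1]$, is nondegenerate ($\det\mathbf{G}=-c^2$), and is neither $ij|k$ for any partition nor a swap signature; your enumeration never addresses such matrices, even though they must be proved \sph for the lemma to hold. (The only consistent mixed configuration, one restriction degenerate and the other $[0,\ast,0]$, is vacuous, so the missed family is exactly ``both triggered restrictions degenerate''.)

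The gap is repairable with the very tool you use at $N=3$: in that surviving configuration $g_{12}^2=g_{11}g_{22}$ and $g_{13}^2=g_{11}g_{33}$ give $\det\mathbf{G}=-g_{11}g_{22}g_{33}\ne 0$, so the columns $\mathbf{v}_1,\mathbf{v}_2,\mathbf{v}_3$ are linearly independent, while $\langle\mathbf{v}_2,\mathbf{v}_3\rangle=g_{12}g_{13}\ne 0$; hence \cref{cor:non-orthogonal-linearly-independent-hard} applied to $\mathbf{G}\teh\mathbf{F}=\mathbf{v}_1\teh+\mathbf{v}_2\teh+\mathbf{v}_3\teh$ yields \#\P-hardness. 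You should add this subcase explicitly (and, as a minor point, note that the three equalities $(=_{ij})$ are obtained either by connecting the unary $(1,1,0)$ and its permutations directly to $\mathbf{F}$, or via \cref{lem:geneq-normalization} followed by \cref{lem:geneq-eqbg}, rather than from \cref{lem:geneq-eqbg} alone, which as stated requires a signature supported on a coordinate plane). With these repairs your entry-wise case analysis does give a correct, self-contained alternative to the paper's proof.
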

\begin{proof}
  We observe that $\mathbf{F}$ is $\db \dg | \dr$, $\db \dr | \dg$, and $\bdgr$.
  Case 1 and 2 are both the tractable class \tractBGR.

Suppose $\holts(\mathbf{F}, \mathbf{G})$ is not \sph.
  By \cref{lem:geneq-normalization}, $\mathbf{F}$ can realize $\mathbf{F}_1 = \mathbf{e}_1\teh + \mathbf{e}_2 \teh$, $\mathbf{F}_2 = \mathbf{e}_1 \teh + \mathbf{e}_3 \teh$ and $\mathbf{F}_3 = \mathbf{e}_2 \teh + \mathbf{e}_3 \teh$.
  $\holts(\mathbf{F}_1, \mathbf{G})$ cannot be \sph, so we may assume that $\mathbf{G}$ is of a tractable form in \cref{lem:dichotomy-single-ternary-rank-2-geneq-single-binary}.

  We only need to consider the case 2, 4, and 5 of \cref{lem:dichotomy-single-ternary-rank-2-geneq-single-binary}, since others are tractable.
  By \cref{lem:geneq-eqbg} and \cref{lem:domain-restriction} on domain $\{\db, \dr\}$ and $\{\dg, \dr\}$ we need $\mathbf{G}\domres{\db, \dr}$ and $\mathbf{G}\domres{\dg, \dr}$ to be compatible with $[1, 0, 0, 1] = \mathbf{F}_2\domres{\db, \dr} = \mathbf{F}_3\domres{\dg, \dr}$.
  Therefore, if $\mathbf{G}$ is in case 2 of \cref{lem:dichotomy-single-ternary-rank-2-geneq-single-binary}, i.e. either $\db \dr | \dg$ or $\bdgr$, its domain restriction must be compatible with $[1, 0, 0, 1]$, meaning it must be in case 1 of this lemma.

  For the case of $\mathbf{G} \in \mathcal{D}$, we use \cref{lem:gr-equality-D-dichotomy}.
  Note that if $\mathbf{G}$ is \strspt on $\{i, j\}$, then it is also $i j | k$.

  If $\mathbf{G}$ is in case 5 of \cref{lem:dichotomy-single-ternary-rank-2-geneq-single-binary}, then $\mathbf{G}\domres{\dg, \dr} = [x^2, \alpha, 0]$ for nonzero $x$ and $\alpha = \pm \sqrt{1 + x^2}$.
  Since both $x^2, \alpha \ne 0$, it is not compatible with $[1, 0, 0, 1]$ by the Boolean domain dichotomy. 
\end{proof}

\subsection{Rank 2 Type \texorpdfstring{\ternarytractz}{B}}
\begin{lemma}\label{lem:dichotomy-single-ternary-rank-2-z-single-binary}
  Let $\mathbf{F} = \boldsymbol{\beta}\teh + \overline{\boldsymbol{\beta}}\teh$ where $\boldsymbol{\beta} = \frac{1}{\sqrt{2}}(1, i, 0)\transpose$.
  Let $\mathbf{G}$ be a nondegenerate, real-valued, symmetric, binary domain $3$ signature.
  Then, $\holts(\mathbf{F}, \mathbf{G})$ is computable in polynomial time if one of the following conditions holds. Otherwise, it is \sph.
  \begin{enumerate}
    \item $\mathbf{G}$ is $\db \dg | \dr$ and $\mathbf{G}\domres{\db, \dg}$ is tractable with type $\typeii$ signatures.
    \item $\mathbf{G}$ is $c \begin{bmatrix}
        1 & 0 & 0 \\
        0 & 0 & \alpha \\
        0 & \alpha & 0
        \end{bmatrix}$ or $c \begin{bmatrix}
        0 & 0 & \alpha \\
        0 & 1 & 0 \\
        \alpha & 0 & 0
      \end{bmatrix}$ for $\alpha = \pm 1$ and nonzero $c \in \mathbb{R}$.
    \item $\mathbf{G}$ is in $\mathcal{D}$.
    \item $\mathbf{G} = c  \begin{bmatrix}
        1 & x & -x \alpha \\
        x & x^2 & \alpha \\
        -x \alpha & \alpha & 0
      \end{bmatrix}$ for $\alpha = \pm \sqrt{1 + x^2}$ and nonzero $c, x \in \mathbb{R}$.
  \end{enumerate}
\end{lemma}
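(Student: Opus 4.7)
The strategy is to first verify each of the four listed cases fits a tractable class of \cref{thm:dichotomy-set-of-domain-3}, then show every other $\mathbf{G}$ makes $\holts(\mathbf{F},\mathbf{G})$ hard. Since $\mathbf{F}$ is supported on $\{\db,\dg\}^*$, cases 1 and 3 both fit class \tractBG: in case 3 the fact that $\mathbf{G} \in \mathcal{D}$ automatically makes $\mathbf{G}\domres{\db,\dg}$ degenerate, so its compatibility with $\mathbf{F}\domres{\db,\dg}$ is automatic; in case 1 it is assumed. The two matrices of case 2 are scalar multiples of symmetric elements of $\octgroup$ with $\mathbf{G}^2$ proportional to $I$, and the three conditions of class \tractBGGRBR can then be verified directly. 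Case 4 reduces to case 2 under the $\db\dg|\dr$ real orthogonal transformation $T$ used in the proof of \cref{lem:dichotomy-single-ternary-rank-2-geneq-single-binary} (case 5), which conjugates $\mathbf{G}$ to $(1+x^2)\begin{bsmallmatrix}0 & 0 & 1\\ 0 & 1 & 0\\ 1 & 0 & 0\end{bsmallmatrix}$ while keeping $T\teh\mathbf{F}$ of type \ternarytractz on $\{\db,\dg\}$; the result again fits class \tractBGGRBR.

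For hardness, assume $\holts(\mathbf{F},\mathbf{G})$ is not \sph, and write $\mathbf{G} = \begin{bsmallmatrix}a & b & c\\ b & d & e\\ c & e & f\end{bsmallmatrix}$. I extract two independent sets of constraints. First, by \cref{lem:z-eqbg} and \cref{lem:domain-restriction}, $\holbs(\mathbf{G}\domres{\db,\dg}, \mathbf{F}\domres{\db,\dg})$ must be tractable; since $\mathbf{F}\domres{\db,\dg}$ is nondegenerate of type $\typeii$, \cref{thm:dich-sym-Boolean} forces $\mathbf{G}\domres{\db,\dg} = [a,b,d]$ to be one of $(\alpha)$ degenerate ($ad=b^2$), $(\beta)$ $[\lambda,0,\lambda]$ ($b=0$, $a=d$), or $(\gamma)$ $[a,b,-a]$ of type $\typeii$ ($a=-d$). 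Second, $\mathbf{G}\teh\mathbf{F} = (\mathbf{G}\boldsymbol{\beta})\teh + (\mathbf{G}\overline{\boldsymbol{\beta}})\teh$ must be a tractable real ternary; writing $\mathbf{G}\boldsymbol{\beta} = \frac{1}{\sqrt{2}}(\mathbf{u}+i\mathbf{v})$ with $\mathbf{u}=(a,b,c)\transpose$ and $\mathbf{v}=(b,d,e)\transpose$ (the first two columns of $\mathbf{G}$), \cref{thm:dich-single-sym-ter-dom3-real} combined with \cref{cor:non-orthogonal-linearly-independent-hard} forces either $(I)$ $\mathbf{u} \sim \mathbf{v}$, or $(III)$ both are nonzero, orthogonal, and of equal norm.

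Crossing these two dichotomies produces a short case analysis. Alternative $(I)$ makes the top two rows and, by symmetry, the top two columns of $\mathbf{G}$ dependent, so $\mathbf{G} \in \mathcal{D}$ and we land in case 3. Under alternative $(III)$, combining with $(\gamma)$ or $(\beta)$ and substituting into $\|\mathbf{u}\|^2=\|\mathbf{v}\|^2$ and $\langle\mathbf{u},\mathbf{v}\rangle=0$ immediately forces $c=e=0$, giving case 1. The delicate subcase is $(III)+(\alpha)$ with $b \ne 0$: after normalizing $a=1$, $b=x$, $d=x^2$, applying the same single-ternary dichotomy to $\mathbf{G}^2\teh\mathbf{F}$ (whose Boolean restriction is automatically of the form $[\lambda,0,\mu]$ with $\lambda = \mu = 1+x^2+c^2$ by the constraints already derived) produces the linear system $c(1+f)+xe=0$, $cx+e(x^2+f)=0$ in $c,e$, with coefficient determinant $f(1+x^2+f)$. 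The branch $f=-(1+x^2)$ forces $c^2 = -(1+x^2)<0$ and is excluded, while $f=0$ combined with the constraint $ce = -x(1+x^2)$ from $\langle\mathbf{u},\mathbf{v}\rangle=0$ pins $e = \pm\sqrt{1+x^2} = \alpha$ and $c = -x\alpha$, which is exactly case 4. The parallel subcase $(III)+(\alpha)$ with $b=0$ (which forces $a=0$ or $d=0$) uses the same $\mathbf{G}^2$-analysis to pin $f=0$ and produces the two matrix forms of case 2.

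The main obstacle is the $(III)+(\alpha)$ analysis with $b \ne 0$: it requires iterating the single-ternary dichotomy one level further (to $\mathbf{G}^2\teh\mathbf{F}$ rather than just $\mathbf{G}\teh\mathbf{F}$), and the algebra must cleanly eliminate the spurious branch $f = -(1+x^2)$ so that only the rigid two-parameter family of case 4 survives, ensuring the four listed cases exhaust all tractable possibilities.
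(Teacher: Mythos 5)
Your proposal is correct, and although it shares the paper's skeleton (check the four cases against the tractable classes; for hardness, restrict to $\{\db,\dg\}$ via \cref{lem:z-eqbg} and \cref{lem:domain-restriction} to constrain $[a,b,d]$, and use tractability of $\mathbf{G}\teh\mathbf{F}$ and $(\mathbf{G}^2)\teh\mathbf{F}$ to constrain the columns), your execution of the hardness analysis is genuinely different. The paper extracts only orthogonality-or-dependence of the first two columns from \cref{cor:non-orthogonal-linearly-independent-hard}, and then in the degenerate $[1,x,x^2]$ subcase it must pass to an orthogonal transformation, form $\mathbf{H}_1,\mathbf{H}_2$, realize $(=_{\dg\dr})$ via \cref{lem:Z-normalization} and \cref{lem:z-eqbg}, and invoke the Boolean dichotomy on the $\{\dg,\dr\}$ restriction to kill $\langle\mathbf{v}_2,\mathbf{v}_3\rangle$ (and it resorts to $\mathbf{G}^4$ in the $[1,0,0]$-type subcase). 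You instead strengthen the column condition to orthogonal \emph{and} of equal norm, apply it at both the $\mathbf{G}$ and $\mathbf{G}^2$ levels, and the whole analysis collapses to short algebra (the $2\times2$ system with determinant $f(1+x^2+f)$, the exclusion of $f=-(1+x^2)$ via $c^2=-(1+x^2)<0$, and the $b=0$ subcases yielding exactly the case-2 matrices); I verified these computations and they are right, and they buy you the elimination of the local holographic step, the $(=_{\dg\dr})$ realization, and the $\mathbf{G}^4$ computation. One point you should make explicit: the equal-norm requirement is not the literal statement of \cref{cor:non-orthogonal-linearly-independent-hard}; it follows from \cref{thm:dich-single-sym-ter-dom3-real} together with \cref{prop:uniqueness-tensor-decomposition} by the same argument as in the corollary's proof (type \ternarytractgeneq is impossible for $(\mathbf{u}+i\mathbf{v})\teh+(\mathbf{u}-i\mathbf{v})\teh$ with $\mathbf{u},\mathbf{v}$ linearly independent and real, and type \ternarytractz forces $\langle\mathbf{u}+i\mathbf{v},\mathbf{u}+i\mathbf{v}\rangle=0$, i.e.\ orthogonality and equal norms), so record this as a one-line strengthening of the corollary before using it.
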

\begin{proof}
  Observe that $\mathbf{F}$ is $\db \dg | \dr$ and $\supp \mathbf{F} \subseteq \{\db, \dg\}^*$.
  Case $2$ is the tractable class \tractBGGRBR.
  Case $3$ is the tractable class \tractBG since $\supp \mathbf{F} \subseteq \{\db, \dg\}^*$.
  For case $5$, we may apply a $\db\dg|\dr$ orthogonal holographic transformation $T = \begin{bsmallmatrix}
    -x/\alpha & 1/\alpha & 0 \\
    1/\alpha & x/\alpha & 0 \\
    0 & 0 & 1
  \end{bsmallmatrix}$ and get
  \[
    T \tew \mathbf{G} = T \mathbf{G} T\transpose = (1 + x^2) \begin{bmatrix}
      0 & 0 & 1 \\
      0 & 1 & 0 \\
      1 & 0 & 0
    \end{bmatrix}
  \] while $\supp T \teh \mathbf{F} \subseteq \{\db, \dg\}^*$.
  We can check that $\hols(T \teh \mathbf{F}, T \teh\mathbf{G})$ falls under the class \tractBGGRBR.

  Suppose $\holts(\mathbf{F}, \mathbf{G})$ is not \sph.
  Similar to the proof of \cref{lem:dichotomy-single-ternary-rank-2-geneq-single-binary}, we will analyze the possible forms of $\mathbf{G}$ using the known dichotomy theorems.
  By \cref{lem:z-eqbg} and \cref{lem:domain-restriction}, we may look the domain restriction.

  Let $\mathbf{G} = \begin{bsmallmatrix}
    a & b & c \\
    b & d & e \\
    c & e & f
    \end{bsmallmatrix} = \begin{bsmallmatrix}
    \vert & \vert & \vert \\
    \mathbf{v}_1 & \mathbf{v}_2 & \mathbf{v}_3 \\
    \vert & \vert & \vert
  \end{bsmallmatrix}$.
  For $\mathbf{G}\teh \mathbf{F} = (\mathbf{v}_1 + i \mathbf{v}_2)\teh + (\mathbf{v}_1 - i \mathbf{v}_2)\teh$ to be a tractable ternary signature, by \cref{cor:non-orthogonal-linearly-independent-hard}, we need either $\mathbf{v}_1, \mathbf{v}_2$ to be orthogonal or linearly dependent.
  If they are linearly dependent, then $\mathbf{G} \in \mathcal{D}$.
  So, we consider the case when $\mathbf{v}_1$ and $\mathbf{v}_2$ are orthogonal.
  We also look at $\mathbf{G}\domres{\db, \dg} = [a, b, d]$.
  It must be compatible with type $\typeii$, so the possibilities are that $[a, b, d]$ is degenerate, $[x, y, -x]$ or $[x, 0, x]$ for some $x, y \in \mathbb{R}$.

  Suppose $[a, b, d] = [x, 0, x]$ for $x \ne 0$.
  Then, for $\mathbf{v}_1$ and $\mathbf{v}_2$ to be orthogonal, we need $c = 0$ or $e = 0$.
  \begin{itemize}
    \item 
      Suppose $c = 0$, then $(\mathbf{G}^2)\domres{\db, \dg} = [x^2, 0, x^2 + e^2]$.
      This can be only compatible with type $\typeii$ if $e = 0$ or $x = 0$.
      If $x = 0$, $\mathbf{v}_1$ and $\mathbf{v}_2$ are linearly dependent.
      If $e = 0$, then $\mathbf{G} = \begin{bsmallmatrix}
          x & 0 & 0 \\
          0 & x & 0 \\
          0 & 0 & f
      \end{bsmallmatrix}$, so it is case 1.
    \item 
      Suppose $e = 0$. The argument is similar. 
  \end{itemize}

  Suppose $[a, b, d] = [x, y, -x]$.
  Then, we have $\langle \mathbf{v}_1, \mathbf{v}_2 \rangle = 0 = ab + bd + ce = xy - yx + ce$, so $ce = 0$.
  \begin{itemize}
    \item Suppose $c = 0$.
      Then, $(\mathbf{G}^2)\domres{\db, \dg} = [x^2 + y^2, 0, x^2 + y^2 + e^2]$.
      For this to be compatible with type $\typeii$, we need $e = 0$ or $x^2 + y^2 = 0$.
      If $x^2 + y^2 = 0$, then $x = y = 0$ so $\mathbf{v}_1$ and $\mathbf{v}_2$ are linearly dependent.
      If $e = 0$, then $\mathbf{G}$ is $\db \dg | \dr$, so it is case 1.

    \item Suppose $e = 0$. The argument is similar.
  \end{itemize}

  Suppose $[a, b, d]$ is $[1, 0, 0]$ after scaling.
  Then, for $\mathbf{v}_1$ and $\mathbf{v}_2$ to be orthogonal, we need $c = 0$ or $e = 0$.
  If $e = 0$, then $\mathbf{v}_2 = 0$, so $\mathbf{v}_1$ and $\mathbf{v}_2$ are linearly dependent.
  So, assume $c = 0$ and $e \ne 0$. 
  Then $(\mathbf{G}^2)\domres{\db, \dg} = [1, 0, e^2]$, so we need $e^2 = 1$ to be compatible with type $\typeii$.
  Taking one more square, we have $(\mathbf{G}^4)\domres{\db, \dg} = [1, 0, e^4 + e^2f^2] = [1, 0, 1 + f^2]$.
  We must have $f = 0$, which implies $\mathbf{G}$ is in case 2.
  The case for $[a, b, d] = [0, 0, 1]$ after scaling follows a similar argument.

  Suppose $[a, b, d]$ is degenerate, and we may now assume that it is $[1, x, x^2]$ after scaling.
  We have
  \[
    \mathbf{G}^2 = \begin{bmatrix}
      \langle \mathbf{v}_1, \mathbf{v}_1 \rangle & 0 & \langle \mathbf{v}_1, \mathbf{v}_3 \rangle \\
      0 & \langle \mathbf{v}_2, \mathbf{v}_2 \rangle & \langle \mathbf{v}_2, \mathbf{v}_3 \rangle \\
      \langle \mathbf{v}_1, \mathbf{v}_3 \rangle & \langle \mathbf{v}_2, \mathbf{v}_3 \rangle & \langle \mathbf{v}_3, \mathbf{v}_3 \rangle 
      \end{bmatrix} = \begin{bmatrix}
      p & 0 & q \\
      0 & r & s \\
      q & s & t
    \end{bmatrix}
    \, .
  \]
  Since $p, r \ne 0$, the first two columns cannot be linearly dependent. Therefore, to be tractable, they must be orthogonal, which implies that $q = 0$ or $s = 0$.
  Also, since $p, r > 0$, it must be the case that $p = r$.

  Then we have the following three signatures at our disposal:
  $\mathbf{F} = \boldsymbol{\beta}\teh + \overline{\boldsymbol{\beta}}\teh$, $\mathbf{G}\teh \mathbf{F} = (\mathbf{v}_1 + i \mathbf{v}_2)\teh + (\mathbf{v}_1 - i \mathbf{v}_2)\teh$ and $(\mathbf{G}^2)\teh \mathbf{F} = \left(\begin{bmatrix}
    p \\ 0 \\ q
    \end{bmatrix} + i \begin{bmatrix}
    0 \\ r \\ s
    \end{bmatrix}\right) \teh + \left(\begin{bmatrix}
    p \\ 0 \\ q
    \end{bmatrix} - i \begin{bmatrix}
    0 \\ r \\ s
  \end{bmatrix} \right)\teh$.
  We apply holographic transformation by $T =
  \begin{bsmallmatrix}
    -x/\alpha & 1/\alpha & 0 \\
    1/\alpha & x/\alpha & 0 \\
    0 & 0 & 1
  \end{bsmallmatrix}$ for $\alpha = \sqrt{1 + x^2}$ which gives us 
  \[
    \mathbf{H}_1 = 
    T \teh (\mathbf{G} \teh \mathbf{F}) = \left(\begin{bmatrix}
        0 \\ \frac{1 + x^2}{\alpha} \\ c
        \end{bmatrix} + i \begin{bmatrix}
        0 \\ \frac{x + x^3}{\alpha} \\ e
    \end{bmatrix}\right)\teh + \left(
    \begin{bmatrix}
      0 \\ \frac{1 + x^2}{\alpha} \\ c
    \end{bmatrix}
    -i  \begin{bmatrix}
      0 \\ \frac{x + x^3}{\alpha} \\ e
    \end{bmatrix}
  \right)\teh
\]
and
\[
  \mathbf{H}_2 = 
  T\teh ( (\mathbf{G}^2)\teh \mathbf{F}) = \left(\begin{bmatrix}
      -px/\alpha \\ p/\alpha \\ q
      \end{bmatrix} + i \begin{bmatrix}
      r/\alpha \\ rx/\alpha \\ s
  \end{bmatrix}\right)
  \teh + \left(
    \begin{bmatrix}
      -px/\alpha \\ p/\alpha \\ q
      \end{bmatrix} -i      \begin{bmatrix}
      r/\alpha \\ rx/\alpha \\ s
  \end{bmatrix} \right)\teh
  \, .
\]
By \cref{lem:Z-normalization} and \cref{lem:z-eqbg}, 
we can realize $(=_{\dg \dr})$ using $\mathbf{H}_1$, so by \cref{lem:domain-restriction}, we may apply Boolean domain dichotomy on 
$\mathbf{H}_1\domres{\dg, \dr}$ and $\mathbf{H}_2\domres{\dg, \dr}$.
Assume $q = 0$.
Then, since $x, r, p \ne 0$, if $s \ne 0$, then $\mathbf{H}_2\domres{\dg, \dr}$ is nondegenerate.
Then $\hols(\mathbf{H}_2\domres{\dg, \dr})$ is \sph since $prx/\alpha^2 + q \cdot s \ne 0$.
Therefore, it must be the case that $s = 0$, meaning $\mathbf{v}_1, \mathbf{v}_2, \mathbf{v}_3$ are pairwise orthogonal.
It follows that $\mathbf{v}_3 = \lambda(-x, 1, 0)$ for some $\lambda \in \mathbb{R}$, and it must be the case that $\lambda^2 = 1 + x^2$ by orthogonality of $\mathbf{v}_1$ and $\mathbf{v}_2$, which is of the aforementioned tractable form.
We may use the same argument for the case when $s = 0$.
\end{proof}

\subsection{Rank 3 Type \texorpdfstring{\ternarytractz}{B}}
Let $\mathbf{F}$ be a type \ternarytractz signature of rank $3$, so $\mathbf{F} = (1, i, 0)\teh + (1, -i, 0)\teh + \lambda \mathbf{e}_3\teh$ for some $\lambda \in \mathbb{R}$.
By \cref{lem:Z-normalization}, we may assume that $\lambda = 1$.
\begin{lemma}\label{lem:dichotomy-single-ternary-rank-3-z-single-binary}
  Let $\mathbf{F} = \boldsymbol{\beta}\teh + \overline{\boldsymbol{\beta}}\teh + \mathbf{e}_3 \teh$ where $\boldsymbol{\beta} = \frac{1}{\sqrt{2}}(1, i, 0)\transpose$.
  Let $\mathbf{G}$ be a nondegenerate, real-valued, symmetric, binary domain $3$ signature.
  Then, $\holts(\mathbf{F}, \mathbf{G})$ is computable in polynomial time if one of the following conditions holds. Otherwise, it is \sph.
  \begin{enumerate}
    \item $\mathbf{G}$ is $\db \dg | \dr$ and $\mathbf{G}\domres{\db, \dg}$ is tractable with type $\typeii$ signatures.
    \item $\mathbf{G}$ is \swbg.
  \end{enumerate}
\end{lemma}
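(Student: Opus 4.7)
My plan is to verify that $\{\mathbf{F},\mathbf{G}\}$ fits class \tractBGR of \cref{thm:dichotomy-set-of-domain-3} with $T = I$ in both cases: $\mathbf{F}$ is already $\db\dg|\dr$ with $\mathbf{F}\domres{\db,\dg}$ of type \typeii, $\mathbf{G}$ is either $\db\dg|\dr$ or \swbg by hypothesis, and the Boolean problem on $\{\db,\dg\}$ is tractable because $\mathbf{G}\domres{\db,\dg}$ is either degenerate (the \swbg case) or type-\typeii compatible (Case 1 by assumption).

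\textbf{Hardness.} The plan begins with a reduction to the rank-$2$ dichotomy. Since $\lambda = 1 \ne 0$, \cref{lem:Z-normalization} lets $\mathbf{F}$ realize $\mathbf{F}_0 = \boldsymbol{\beta}\teh + \overline{\boldsymbol{\beta}}\teh$, so $\holts(\mathbf{F}_0,\mathbf{G}) \le_T \holts(\mathbf{F},\mathbf{G})$, and under the assumption that $\holts(\mathbf{F},\mathbf{G})$ is not \sph, $\mathbf{G}$ must fall into one of the four tractable forms of \cref{lem:dichotomy-single-ternary-rank-2-z-single-binary}. From here the goal is to rule out every form except our two, leveraging the extra $\mathbf{e}_3\teh$ summand of $\mathbf{F}$.

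For Cases 2 and 4 of the rank-$2$ lemma, I will compute $\mathbf{G}\teh\mathbf{F}$ directly and show it is itself a rank-$3$ type \ternarytractz signature but with type-\typeii plane rotated out of $\{\db,\dg\}$ (in Case 2 because $\mathbf{G}$ is a signed swap sending $\boldsymbol{\beta}$ to an isotropic vector in a different coordinate plane and $\mathbf{e}_3$ to a standard basis vector; in Case 4 by the explicit computation analogous to the one in \cref{lem:dichotomy-single-ternary-rank-2-z-single-binary}). Realizing $(=_{\db,\dg})$ from $\mathbf{F}_0$ via \cref{lem:z-eqbg} and restricting via \cref{lem:domain-restriction}, I expect $\mathbf{F}\domres{\db,\dg}$ to be of type \typeii while the rotated $\mathbf{e}_3\teh$ summand forces $(\mathbf{G}\teh\mathbf{F})\domres{\db,\dg}$ to contain a nondegenerate type \ternarytractgeneq (\geneq) component, giving \sph by \cref{thm:dich-sym-Boolean}.

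For Case 3 of the rank-$2$ lemma, write a symmetric $\mathcal{D}$-element as $\mathbf{G} = \begin{bsmallmatrix} ax^2 & axy & bx \\ axy & ay^2 & by \\ bx & by & c \end{bsmallmatrix}$, so that a short computation gives $\mathbf{G}\teh\mathbf{F} = \kappa\mathbf{u}\teh + \mathbf{v}_3\teh$ with $\mathbf{u}=(ax,ay,b)$, $\mathbf{v}_3=(bx,by,c)$, and $\kappa=x(x^2-3y^2)/\sqrt{2}$. The degenerate sub-cases ($b=0$ or $x=y=0$) directly collapse $\mathbf{G}$ into our Case 1, and applying \cref{thm:dich-single-sym-ter-dom3-real} to $\mathbf{G}\teh\mathbf{F}$ in the remaining cases forces $\mathbf{u}\perp\mathbf{v}_3$, i.e.\ $b(a(x^2+y^2)+c)=0$; the branch $a=0$ then collapses to $c=0$ and yields the \swbg form of Case 2. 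The main obstacle will be the residual branch $a\neq 0$ with $c=-a(x^2+y^2)$, in which $\mathbf{G}\teh\mathbf{F}$ is tractable on its own. My plan there is to use that $\mathbf{G}$ is symmetric, traceless, of rank $2$ with eigenvalues $\lambda,-\lambda,0$: diagonalize by an orthogonal $T$ so that $T\tew\mathbf{G}=\mathrm{diag}(\lambda,-\lambda,0)$ is $\db\dg|\dr$, realize $(=_{\db,\dg})$ via $(T\tew\mathbf{G})^2 = \lambda^2(=_{\db,\dg})$, and observe that in this new frame $T\teh\mathbf{F}$ is a rank-$3$ type \ternarytractz signature whose type-\typeii plane is generically not $\{\db,\dg\}$. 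Then $(T\teh\mathbf{F})\domres{\db,\dg}$ is a real type \ternarytractgeneq signature, pairing with the type-\typeii signature $(T\tew\mathbf{G})\domres{\db,\dg}=[\lambda,0,-\lambda]$ to give \sph via \cref{thm:dich-sym-Boolean}, contradicting the assumption.
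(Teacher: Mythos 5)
Your tractability argument and your opening reduction (use \cref{lem:Z-normalization} to realize $\boldsymbol{\beta}\teh+\overline{\boldsymbol{\beta}}\teh$ and then invoke \cref{lem:dichotomy-single-ternary-rank-2-z-single-binary}) coincide with the paper. The hardness half, however, has two genuine gaps. First, in Cases 2 and 4 your only gadget is $(\mathbf{G}\teh\mathbf{F})\domres{\db,\dg}$, and this can be degenerate: for the second matrix of Case 2, $\mathbf{G}=\begin{bsmallmatrix}0&0&\alpha\\0&1&0\\\alpha&0&0\end{bsmallmatrix}$ sends $\boldsymbol{\beta}$ to $\tfrac{1}{\sqrt2}(0,i,\alpha)$, so the two isotropic cubes cancel upon restriction to $\{\db,\dg\}$ and the restriction is just $\alpha\,\mathbf{e}_1\teh$ — degenerate, no hardness follows; the same collapse occurs for Case 4 (which after the $\db\dg|\dr$ rotation is exactly this swap), and even in your direct Case 4 computation the restricted signature is $\tfrac{1-3x^2}{\sqrt2}(1,x)\teh+\alpha^3(-x,1)\teh$, which degenerates at $x^2=1/3$. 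The paper avoids this by switching domains: it realizes $(=_{\db\dr})$ from $\mathbf{G}\teh(\boldsymbol{\beta}\teh+\overline{\boldsymbol{\beta}}\teh)$ via the \cref{lem:z-eqbg}-type construction and uses \cref{lem:Z-normalization} to produce an $\mathbf{A}$ with $\mathbf{A}\domres{\db,\dr}=[1,0,0,1]$, pairing two arity-$3$ signatures of incompatible Boolean types.

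Second, Case 3 has two problems. (i) Your orthogonality constraint on $\mathbf{u},\mathbf{v}_3$ only follows when $\kappa=x(x^2-3y^2)/\sqrt2\ne0$; if $x=0$ or $x^2=3y^2$ (with $b\ne0$), $\mathbf{G}\teh\mathbf{F}$ is degenerate and \cref{thm:dich-single-sym-ter-dom3-real} gives no information. This is precisely why the paper replaces $\mathbf{F}$ by the rotated family $\mathbf{F}_{p,q}$ (\cref{cor:z-normalization-orthogonal}) and uses a polynomial argument to make the analogous coefficient nonzero. (ii) In your residual branch the claimed contradiction is false: the binary $[\lambda,0,-\lambda]$ is simultaneously of type $\typeii$, of type $\typei(0,b)$, and the exceptional binary $[2a\mu,0,-2a\mu]$ for type $\typei(a,0)$, so it is compatible with any \geneq-type (and any $45^\circ$-rotated) arity-$3$ Boolean signature; pairing it with $(T\teh\mathbf{F})\domres{\db,\dg}$ therefore does not yield \sph in general. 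Moreover, after diagonalizing $\mathbf{G}$ you lose control of $\mathbf{e}_3$: $(T\teh\mathbf{F})\domres{\db,\dg}$ is a type-$\typeii$ part plus an extra rank-one term and need not be of type \ternarytractgeneq at all (it only simplifies when $\mathbf{e}_2$ or $\mathbf{e}_1$ happens to lie in $\ker\mathbf{G}$). The paper instead rotates only within the $xy$-plane (fixing $\mathbf{e}_3$), obtaining $\mathbf{B}'=z(a\gamma,0,b)\teh+(b\gamma,0,c)\teh$ with an explicit decomposition, and plays it against the \geneq $[1,0,0,1]$ on $\{\db,\dr\}$ to force $b=0$ or $a=c=0$; some such replacement is needed to close your branch.
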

\begin{proof}
We see that both cases fall in the tractable class \tractBGR.

Suppose $\holts(\mathbf{F}, \mathbf{G})$ is not \sph.
  By \cref{lem:Z-normalization}, we can realize $\mathbf{H} = \boldsymbol{\beta}\teh + \overline{\boldsymbol{\beta}}\teh$ using $\mathbf{F}$, so we may assume that $\mathbf{G}$ is of the tractable form in \cref{lem:dichotomy-single-ternary-rank-2-z-single-binary} which lists four cases. Case 1 is 
  stated here to be tractable.


  For case 2, suppose $\mathbf{G} = \begin{bsmallmatrix}
    1 & 0 & 0 \\
    0 & 0 & \alpha \\
    0 & \alpha & 0
  \end{bsmallmatrix}$ for $\alpha = \pm 1$.
  Then, $\mathbf{G}\teh \mathbf{H} = (1, 0, i)\teh + (1, 0, -i)\teh$.
  By \cref{lem:z-eqbg}, $\mathbf{G}\teh \mathbf{H}$ can realize $(=_{\db \dr})$.
  By \cref{lem:Z-normalization}, we can realize a signature $\mathbf{A}$ using $\mathbf{F}$ such that $\mathbf{A}\domres{\db, \dr} = [1, 0, 0, 1]$.
  We see that $\holbs(\mathbf{A}\domres{\db, \dr}, (\mathbf{G}\teh \mathbf{H})\domres{\db, \dr})$ is \sph. 
  A similar argument works for other forms of case 2.

  For case 3, we write $\mathbf{G} = \begin{bsmallmatrix}
    a x^2 & axy & bx \\
    axy & ay^2 & by \\
    bx & by & c
    \end{bsmallmatrix} = \begin{bsmallmatrix}
    \vert & \vert & \vert \\
    x \mathbf{v} & y \mathbf{v} & \mathbf{u} \\
    \vert & \vert & \vert
  \end{bsmallmatrix}$ where $\mathbf{v} = (ax, ay, b)\transpose$.
  We may assume that $\mathbf{v}$ and $\mathbf{u}$ are linearly independent since otherwise $\mathbf{G}$ is degenerate.
  Also, $(x, y) \ne (0, 0)$.
  Let
  \[
    \mathbf{F}_{p,q} = \left(\begin{bmatrix}
      p \\ -q \\ 0
    \end{bmatrix} + i \begin{bmatrix}
    q \\ p \\ 0
    \end{bmatrix}\right)\teh + 
\left(\begin{bmatrix}
      p \\ -q \\ 0
    \end{bmatrix} - i \begin{bmatrix}
    q \\ p \\ 0
    \end{bmatrix}\right)\teh + 
    \mathbf{e}_3\teh \, ,
  \]
  for some $p, q \in \mathbb{R}$ such that $p^2 + q^2 = 1$ to be determined later.
  By \cref{cor:z-normalization-orthogonal}, $\mathbf{F}_{p, q}$ is realizable using $\mathbf{F}$.
  Let, $\mathbf{B}_{p,q} = \mathbf{G}\teh \mathbf{F}_{p,q}= z_{p, q} \mathbf{v}\teh + \mathbf{u}\teh$, where $z_{p, q} = 2 \Re( (px - qy) + i (qx + py) )^3$.
  Note that since $(x, y) \ne (0, 0)$, $z_{p, q} = 2 (px - qy)^3 - 6 (px - qy)(qx + py)^2$ is not an identically zero polynomial in $p, q$.
  Therefore, there exists some $p_0, q_0 \in \mathbb{R}$ such that $(p_0, q_0) \ne (0, 0)$ and $z_{p_0, q_0} \ne 0$.
  Then, since $z_{p,q}$ is homogeneous in $p, q$, for $(p_1, q_1) = \frac{1}{\sqrt{p_0^2 + q_0^2}}(p_0, q_0)$, we have $z_{p_1, q_1} \ne 0$ and $p_1^2 + q_1^2 = 1$.
  Let $\mathbf{B} = \mathbf{B}_{p_1, q_1} = z \mathbf{v}\teh + \mathbf{u}\teh$ for $z = z_{p_1, q_1} \ne 0$.

  If $\mathbf{v}$ and $\mathbf{u}$ are not orthogonal, then by \cref{cor:non-orthogonal-linearly-independent-hard}, $\holts(\mathbf{B})$ is \sph.
  If they are orthogonal, we apply the orthogonal holographic transformation $T = \begin{bsmallmatrix}
    x/\gamma & y/\gamma & 0 \\
    -y /\gamma & x /\gamma & 0 \\
    0 & 0 & 1
  \end{bsmallmatrix}$ where $\gamma = \sqrt{x^2 + y^2}$.
  Then, $\mathbf{B}' = T \teh \mathbf{B} = z(a \gamma, 0, b)\teh + (b \gamma, 0, c)\teh$, which can realize $(=_{\db \dr})$ by \cref{lem:geneq-eqbg}.
  By \cref{cor:z-normalization-orthogonal}, $T \teh \mathbf{F}$ can realize a signature $\mathbf{A}$ such that $\mathbf{A}\domres{\db, \dr} = [1, 0, 0, 1]$.
  For $\holbs((\mathbf{B}')\domres{\db, \dr}, \mathbf{A}\domres{\db, \dr})$ to be tractable, we must have $((a \gamma, b), (b \gamma, c)) \sim (\mathbf{e_1}, \mathbf{e_2})$.
  This implies that either $b = 0$ or $a, c = 0$.
  If $b = 0$, $\mathbf{G}$ is $\db \dg | \dr$.
  If $a, c = 0$, $\mathbf{G}$ is \swbg.

  For case 4, we may apply orthogonal holographic transformation by $T = \begin{bsmallmatrix}
    -x/\alpha & 1/\alpha & 0 \\
    1/\alpha & x/\alpha & 0 \\
    0 & 0 & 1
    \end{bsmallmatrix}$ and get $\mathbf{G} = \begin{bsmallmatrix}
    0 & 0 & 1 \\
    0 & 1 & 0 \\
    1 & 0 & 0
  \end{bsmallmatrix}$ after normalization. 
  By \cref{cor:z-normalization-orthogonal}, $\mathbf{F'} = T \teh \mathbf{F}$ can realize $\mathbf{F}$, so we go back to case 2.
\end{proof}

\section{Two Ternary Signatures}\label{sec:two-ternary}
We show the dichotomy of $\holts(\mathbf{F}, \mathbf{G})$ for two ternary signatures $\mathbf{F}$ and $\mathbf{G}$.
After an orthogonal transformation, we may assume that $\mathbf{F}$ is in the canonical form in \cref{thm:dich-single-sym-ter-dom3-real}.
We separate into the cases when at least one of the signatures is rank $3$ and when both are rank $2$.
Note that we do not need to consider a rank $1$ signature since such signature is degenerate.
Further, we separate into the cases when $\mathbf{F}$ is of type \ternarytractgeneq and type \ternarytractz.
Similar to \cref{sec:single-ternary-signle-binary}, we normalize the individual constants occurring in tensor decomposition using \cref{lem:geneq-normalization} and \cref{lem:Z-normalization}. 
Note that in the proof in this section, we do not refer to the dichotomy theorems proved in \cref{sec:single-ternary-signle-binary}.

\subsection{Rank 2 Type \texorpdfstring{\ternarytractgeneq}{A}}
\begin{lemma}\label{lem:dichotomy-ternary-ternary-rank-2-geneq}
  Let $\mathbf{F} = \mathbf{e}_1\teh + \mathbf{e}_2\teh$.
  Let $\mathbf{G}$ be a nondegenerate, real-valued, symmetric, ternary signature of rank $2$.
  Then, $\holts(\mathbf{F}, \mathbf{G})$ is computable in polynomial time if one of the following conditions holds. Otherwise, it is \sph.
  \begin{enumerate}
    \item $\mathbf{G}$ is \geneq.
    \item For some nonzero $c \in \mathbb{R}$, $c \mathbf{G} = \mathbf{e}_i\teh + \mathbf{v} \teh$ for $\mathbf{v} \in \mathbb{R}^3$ such that $\langle \mathbf{e}_i, \mathbf{v} \rangle = 0$.
    \item $\mathbf{G} = \mathbf{u}\teh + \mathbf{v}\teh$ for $\mathbf{v} = (v_1, v_2, v_3), \mathbf{u} = (u_1, u_2, u_3) \in \mathbb{R}^3$ such that $(u_1, u_2)$ and $(v_1, v_2)$ are linearly dependent and $\langle \mathbf{u}, \mathbf{v} \rangle = 0$.
    \item $\mathbf{G} = (\mathbf{u} + i \mathbf{v})\teh + (\mathbf{u} - i \mathbf{v})\teh$ for $\mathbf{v} = (v_1, v_2, v_3), \mathbf{u} = (u_1, u_2, u_3) \in \mathbb{R}^3$ such that $(u_1, u_2)$ and $(v_1, v_2)$ are linearly dependent, $\langle \mathbf{u}, \mathbf{v} \rangle = 0$ and $\langle \mathbf{u}, \mathbf{u} \rangle = \langle \mathbf{v}, \mathbf{v} \rangle$.
  \end{enumerate}
\end{lemma}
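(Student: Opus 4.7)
The plan is to establish both directions of the dichotomy by first checking that each of the four listed cases falls into a tractable class of \cref{thm:dichotomy-set-of-domain-3}, then using the Boolean-domain dichotomy together with the hypothesis that $\mathbf{G}$ alone must be tractable to narrow down all remaining possibilities. For tractability: in case 1 both $\mathbf{F}$ and $\mathbf{G}$ lie in $\mathcal{E}$, so class \tractE applies with $T = I$. For case 2 with (say) $i = 1$, $\supp \mathbf{G} \subseteq \{\db\}^* \cup \{\dg, \dr\}^*$ and $\supp \mathbf{F} \subseteq \{\db\}^* \cup \{\dg\}^* \subseteq \{\db\}^* \cup \{\dg, \dr\}^*$, so a domain renaming puts the pair into the $\db\dg | \dr$ pattern required by class \tractBGR, and the two induced Boolean restrictions are both degenerate so the sub-condition is immediate. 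For cases 3 and 4, take an orthogonal $T$ that rotates in the $xy$-plane so that the common direction of the parallel projections $(u_1, u_2)$ and $(v_1, v_2)$ aligns with $\mathbf{e}_1$; then every tensor-decomposition vector of $T\teh \mathbf{G}$ lies in the $xz$-plane, hence $\supp T\teh \mathbf{G} \subseteq \{\db, \dr\}^*$, while $\supp T\teh \mathbf{F} \subseteq \{\db, \dg\}^*$. This gives class \tractBGGRBR with $\mathcal{R} = \emptyset$, and each required Boolean restriction is either a single tractable type-$\typei$ signature or degenerate.

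For the hardness direction, assume $\holts(\mathbf{F}, \mathbf{G})$ is not \sph. Then $\holts(\mathbf{G})$ is not \sph either, so by \cref{thm:dich-single-sym-ter-dom3-real} and the rank-$2$ hypothesis, $\mathbf{G}$ has one of two shapes: $\mathbf{G} = \mathbf{u}\teh + \mathbf{v}\teh$ with real orthogonal $\mathbf{u}, \mathbf{v}$, or $\mathbf{G} = (\mathbf{u} + i\mathbf{v})\teh + (\mathbf{u} - i\mathbf{v})\teh$ with real $\mathbf{u} \perp \mathbf{v}$ and $\|\mathbf{u}\| = \|\mathbf{v}\|$. Next, \cref{lem:geneq-eqbg} applied to $\mathbf{F} = \mathbf{e}_1\teh + \mathbf{e}_2\teh$ realizes $(=_{\db \dg})$, so by \cref{lem:domain-restriction} the problem $\holbs(\{\mathbf{F}, \mathbf{G}\}\domres{\db, \dg})$ is not \sph. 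Since $\mathbf{F}\domres{\db, \dg} = [1,0,0,1]$ is nondegenerate and only of type $\typei(0, b)$, the Boolean dichotomy (\cref{thm:dich-sym-Boolean}) together with \cref{prop:boolean-type-i-tensor-decomposition} forces $\mathbf{G}\domres{\db, \dg}$ to be either degenerate or of the form $c \mathbf{e}_1\teh + d \mathbf{e}_2\teh$.

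The final step splits on these two possibilities and the two shapes of $\mathbf{G}$. If $\mathbf{G}\domres{\db, \dg}$ is degenerate then $(u_1, u_2)$ and $(v_1, v_2)$ are linearly dependent; in the complex-conjugate shape a one-line determinant gives the equivalent real condition $a_2 b_1 - a_1 b_2 = 0$. Combined with the shape of $\mathbf{G}$, this yields exactly case 3 (real) or case 4 (complex). If $\mathbf{G}\domres{\db, \dg}$ is nondegenerate and equal to $c \mathbf{e}_1\teh + d \mathbf{e}_2\teh$, then \cref{prop:uniqueness-tensor-decomposition} forces the rank-$1$ summands of $\mathbf{G}\domres{\db, \dg}$ to match $\{\mathbf{e}_1, \mathbf{e}_2\}$ up to scalars. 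For the real shape, the extra constraint $\langle \mathbf{u}, \mathbf{v}\rangle = u_3 v_3 = 0$ kills one of $u_3, v_3$ (giving case 2) or both (giving case 1). For the complex shape, the axis-alignment forces both the real and imaginary parts of $\mathbf{u}$ onto the $z$-axis, which with $\mathbf{u} \perp \mathbf{v}$ and $\|\mathbf{u}\| = \|\mathbf{v}\|$ forces $\mathbf{G}$ to be zero or degenerate, contradicting the rank-$2$ hypothesis. The main obstacle is the careful bookkeeping across these four combinations, particularly the complex-conjugate subcase where uniqueness of tensor decomposition must be invoked over $\mathbb{C}$, and verifying that the rotation used for cases 3 and 4 genuinely satisfies all three sub-conditions of class \tractBGGRBR rather than merely the support condition on $\mathbf{G}$.
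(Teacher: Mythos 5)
Your proposal is correct and follows essentially the same route as the paper's proof: tractability by matching cases to classes \tractE, \tractBGR, and \tractBGGRBR via a rotation aligning the common projection direction (the paper rotates it onto $\mathbf{e}_2$ rather than $\mathbf{e}_1$, which is immaterial), and hardness by realizing $(=_{\db \dg})$ through \cref{lem:geneq-eqbg}, restricting via \cref{lem:domain-restriction}, and letting the Boolean dichotomy force $\mathbf{G}\domres{\db,\dg}$ to be degenerate or a Boolean \geneq, exactly as in the paper. One cosmetic slip: in case 4 the relevant Boolean restriction of $T\teh\mathbf{G}$ is of type $\typeii$ rather than type $\typei$, but it is still tractable as a single signature, so the argument stands.
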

\begin{proof}
  For tractability, we see that case 1 is class \tractE.
  Case 2 is class \tractBGR.
  For case $3$, we write $(u_1, u_2) = (px, py)$ and $(v_1, v_2) = (qx, qy)$ for some $p, q, x, y \in \mathbb{R}^3$.
  We may assume that $x^2 + y^2 = 1$ by scaling $p$ and $q$.
  We apply $\db \dg | \dr$ orthogonal holographic transformation $T = \begin{bsmallmatrix}
    -y & x & 0 \\
    x & y & 0 \\
    0 & 0 & 1
  \end{bsmallmatrix}$.
  Then, $\supp T\teh \mathbf{F} \subseteq \{\db, \dg\}^*$ and $\supp T\teh \mathbf{G} \subseteq \{\dg, \dr\}^*$.
  We see that 
  \[\holts(T\teh \mathbf{F}) =_T \holbs((T \teh \mathbf{F})\domres{\db, \dg}) = \holbs((-y, x)\teh + (x, y)\teh) \] 
  \[\holts(T\teh \mathbf{G}) =_T \holbs((T \teh \mathbf{G})\domres{\dg, \dr}) = \holbs( (p, u_3)\teh + (q, v_3)\teh)\]
  are tractable since $pq + u_3 v_3 = pq (x^2 + y^2) + u_3 v_3 = \langle \mathbf{u}, \mathbf{v} \rangle = 0$.
  Thus, this falls under class \tractBGGRBR.
  Same argument shows that case 4 is also class \tractBGGRBR.

Suppose $\holts(\mathbf{F}, \mathbf{G})$ is not \sph.
Then, we may assume that $\mathbf{G}$ is of a tractable form in \cref{thm:dich-single-sym-ter-dom3-real}. 
We analyze each case. 

  Suppose $\mathbf{G}$ is of type \ternarytractgeneq, so $\mathbf{G} = \mathbf{u}\teh + \mathbf{v}\teh$ for some $\mathbf{u} = (u_1, u_2, u_3), \mathbf{v} = (v_1, v_2, v_3) \in \mathbb{R}^3$ with $\langle \mathbf{u}, \mathbf{v} \rangle = 0$.
  By \cref{lem:geneq-eqbg} and \cref{lem:domain-restriction}, $\holbs([1, 0, 0, 1], (u_1, u_2) \teh + (v_1, v_2)\teh)$ must be not \sph.
  If $(u_1, u_2)$ and $(v_1, v_2)$ are linearly dependent, then we get case 3.
  If $(u_1, u_2)$ and $(v_1, v_2)$ are linearly independent, then it must be the case that $((u_1, u_2), (v_1, v_2)) \sim (\mathbf{e}_1, \mathbf{e}_2)$ to be tractable by \cref{cor:non-orthogonal-linearly-independent-hard}.
  Without loss of generality, we may assume that $u_2 = 0$ and $v_1 = 0$.
  Then, for $\langle \mathbf{u}, \mathbf{v} \rangle = 0$, we must have $u_3 v_3 = 0$.
  This implies that $\mathbf{G}$ is in case 2.

  Suppose $\mathbf{G}$ is of type \ternarytractz, so 
  $\mathbf{G} = (\mathbf{u} + i \mathbf{v})\teh + (\mathbf{u} - i \mathbf{v})\teh$ for $\mathbf{v} = (v_1, v_2, v_3), \mathbf{u} = (u_1, u_2, u_3) \in \mathbb{R}^3$ such that
  $\langle \mathbf{u}, \mathbf{v} \rangle = 0$ and $\langle \mathbf{u}, \mathbf{u} \rangle = \langle \mathbf{v}, \mathbf{v} \rangle$.
  Again, we may look at $\holbs([1, 0, 0, 1], ((u_1, u_2) + i (v_1, v_2))\teh + ((u_1, u_2) - i (v_1, v_2))\teh)$, which is \sph if $(u_1, u_2)$ and $(v_1, v_2)$ are linearly independent.
  Therefore, they must be linearly dependent and hence case 4.
\end{proof}

\subsection{Rank 3 Type \texorpdfstring{\ternarytractgeneq}{A}}
\begin{lemma}\label{lem:dichotomy-ternary-ternary-rank-3-geneq}
  Let $\mathbf{F} = \mathbf{e}_1\teh + \mathbf{e}_2\teh + \mathbf{e}_3\teh$.
  Let $\mathbf{G}$ be a nondegenerate, real-valued, symmetric, ternary signature.
  Then, $\holts(\mathbf{F}, \mathbf{G})$ is computable in polynomial time if one of the following conditions holds. Otherwise, it is \sph.
  \begin{enumerate}
    \item $\mathbf{G}$ is \geneq.
    \item For some nonzero $c \in \mathbb{R}$, $c \mathbf{G} = \mathbf{e}_i\teh + \mathbf{v} \teh$ for $\mathbf{v} \in \mathbb{R}^3$ such that $\langle \mathbf{e}_i, \mathbf{v} \rangle = 0$.
  \end{enumerate}
\end{lemma}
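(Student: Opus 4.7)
The plan is to first establish tractability of the two listed cases and then prove hardness by reducing to the rank-$2$ dichotomy \cref{lem:dichotomy-ternary-ternary-rank-2-geneq}. Case~1 places both $\mathbf{F}$ and $\mathbf{G}$ in $\mathcal{E}$ under the identity transformation, which is class \tractE. For case~2, a permutation of the domain sending $\mathbf{e}_i$ to $\mathbf{e}_3$ makes both $\mathbf{F}$ and $\mathbf{G}$ $\db\dg|\dr$ with $\mathbf{G}\domres{\db,\dg}$ degenerate of rank~$1$, which is tractable with $\mathbf{F}\domres{\db,\dg}=[1,0,0,1]$ by the Boolean dichotomy, so $\{\mathbf{F},\mathbf{G}\}$ lies in class \tractBGR.

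For hardness, assume $\holts(\mathbf{F},\mathbf{G})$ is not \sph. By \cref{lem:geneq-normalization} applied to $\mathbf{F}$, we realize the three rank-$2$ \geneq signatures $\mathbf{F}_{ij}=\mathbf{e}_i\teh+\mathbf{e}_j\teh$, so each $\holts(\mathbf{F}_{ij},\mathbf{G})$ is not \sph either. I split on the rank of $\mathbf{G}$. If $\mathbf{G}$ has rank $2$, apply \cref{lem:dichotomy-ternary-ternary-rank-2-geneq} to each $(\mathbf{F}_{ij},\mathbf{G})$ after a domain relabelling. Its cases~1 and~2 are pair-independent global properties of $\mathbf{G}$ matching our cases~1 and~2. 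Its cases~3 and~4 are pair-dependent linear-dependence conditions on the coordinates of the tensor decomposition vectors $\mathbf{u},\mathbf{v}$ (with $\mathbf{u}\perp\mathbf{v}$, or the analogous condition for type \ternarytractz). By \cref{prop:uniqueness-tensor-decomposition}, a rank-$2$ type \ternarytractgeneq $\mathbf{G}$ satisfies only cases~1, 2, or 3, while a type \ternarytractz $\mathbf{G}$ satisfies only case~4. If global cases~1 and~2 both fail, then cases~3 or~4 must hold for all three pairs; combining the three pairwise linear-dependence conditions with orthogonality forces $\mathbf{u}$ and $\mathbf{v}$ to be linearly dependent, contradicting rank~$2$. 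Thus $\mathbf{G}$ is case~1 or~2.

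If $\mathbf{G}$ has rank $3$ of type \ternarytractgeneq, write $\mathbf{G}=\sum_{k=1}^{3}\mathbf{u}_k\teh$ with pairwise orthogonal $\mathbf{u}_k$. By \cref{lem:geneq-normalization} applied to $\mathbf{G}$ itself, each rank-$2$ sub-signature $\mathbf{G}_{kl}=\mathbf{u}_k\teh+\mathbf{u}_l\teh$ is realizable; the preceding rank-$2$ argument then forces each $\mathbf{G}_{kl}$ to be our case~1 or~2, and \cref{prop:uniqueness-tensor-decomposition} implies that each pair $\{\mathbf{u}_k,\mathbf{u}_l\}$ contains a projective standard basis vector. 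Combining over the three pairs together with pairwise orthogonality pins the three $\mathbf{u}_k$ down (up to scaling) to the three standard basis vectors, so $\mathbf{G}$ is \geneq, i.e., case~1. If $\mathbf{G}$ has rank $3$ of type \ternarytractz, then using $(=_{ij})$ realized from $\mathbf{F}$ via \cref{lem:geneq-eqbg} and \cref{lem:domain-restriction}, each $\mathbf{G}\domres{i,j}$ must be Boolean-tractable with $[1,0,0,1]$ of type $\typei(0,1)$, hence either degenerate or of the form $[\ast,0,0,\ast]$. By \cref{prop-boolean-type-ii-tensor-decomposition} a real nondegenerate type $\typeii$ Boolean signature is neither; tracking how the complex conjugate pair and the real rank-$1$ term $\lambda\mathbf{e}_3'\teh$ in $\mathbf{G}$'s decomposition contribute to each pair restriction shows this compatibility fails for at least one pair, contradicting the assumption.

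The main obstacle is the rank-$3$ type \ternarytractz sub-case: the canonical-form hypotheses of \cref{lem:Z-normalization} and \cref{cor:z-normalization-orthogonal} mean that reducing $\mathbf{G}$ to a rank-$2$ type \ternarytractz signature via holographic normalization would also move $\mathbf{F}$ out of standard \geneq form, so the cleanest route is the direct pairwise Boolean restriction argument sketched above, where the delicate step is ruling out the potential cancellation between the type $\typeii$ contribution and the rank-$1$ term.
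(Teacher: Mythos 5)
Your tractability argument, your rank-$2$ case, and your rank-$3$ type \ternarytractgeneq case are sound and close to the paper's own proof (noting that all three $2\times 2$ minors of the matrix with rows $\mathbf{u},\mathbf{v}$ would have to vanish is a clean repackaging of the paper's case analysis). The genuine gap is the rank-$3$ type \ternarytractz subcase: the step you flag as delicate is in fact false, so no contradiction can be extracted from the coordinate-pair restrictions of $\mathbf{G}$ alone. Take $\omega=e^{2\pi i/3}$, $\mathbf{w}=(1,\omega,\omega^2)$, $\mathbf{u}=(1,1,1)$, and $\mathbf{G}=\mathbf{w}\teh+\overline{\mathbf{w}}\teh+\mathbf{u}\teh$. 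Then $\langle\mathbf{w},\mathbf{w}\rangle=\langle\mathbf{w},\mathbf{u}\rangle=0$, so $\mathbf{G}$ is a nondegenerate rank-$3$ type \ternarytractz signature, yet $2\Re(w_i^2w_j)=-1=-u_i^2u_j$ for all $i\neq j$, so every entry of $\mathbf{G}$ with index pattern $iij$, $i\neq j$, vanishes: $\mathbf{G}$ equals $3$ on inputs $(i,i,i)$ and on permutations of $(\db,\dg,\dr)$, and $0$ elsewhere. Consequently all three restrictions $\mathbf{G}\domres{i,j}$ equal $[3,0,0,3]$, a Boolean \geneq that is perfectly compatible with $[1,0,0,1]$ (both are type $\typei(0,1)$). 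So for this $\mathbf{G}$ every necessary condition you propose is satisfied, while the lemma asserts (and the paper proves) that $\holts(\mathbf{F},\mathbf{G})$ is \sph; your argument cannot certify this, and "tracking the contributions" cannot be repaired, because the conjugate pair and the term $\mathbf{u}\teh$ really can combine into a diagonal restriction on all three pairs simultaneously.

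The missing ingredient is exactly the gadget you chose to avoid. Apply \cref{lem:Z-normalization} (with \cref{cor:z-normalization-orthogonal}) to $\mathbf{G}$ to realize its rank-$2$ part $\mathbf{G}_1=(\mathbf{v}_1+i\mathbf{v}_2)\teh+(\mathbf{v}_1-i\mathbf{v}_2)\teh$, and only then restrict: your own rank-$2$ analysis already shows $\holts(\mathbf{F},\mathbf{G}_1)$ is \sph for every rank-$2$ type \ternarytractz signature (in the example above, the restriction of $\mathbf{G}_1$ to $\{\db,\dg\}$ is $(1,\omega)\teh+(1,\overline{\omega})\teh$, whose real and imaginary parts are linearly independent and non-orthogonal, so \cref{cor:non-orthogonal-linearly-independent-hard} applies). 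Your stated reason for avoiding this route rests on a misunderstanding: the reduction in \cref{lem:Z-normalization} is a gadget built from three copies of $\mathbf{G}$ and a unary signature, and conjugating that gadget by the orthogonal map taking $\mathbf{G}$ to canonical form (which preserves binary equality) realizes $\mathbf{G}_1$ in the original frame while leaving $\mathbf{F}$ completely untouched; no global holographic transformation of the instance, and hence no change to $\mathbf{F}$, is involved. With that step restored, the rank-$3$ type \ternarytractz case reduces to your rank-$2$ case, which is how the paper concludes.
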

\begin{proof}
  For tractability, we see that case 1 is class \tractE.
  Case 2 is class \tractBGR.

  Suppose $\holts(\mathbf{F}, \mathbf{G})$ is not \sph.
  Then, we may assume that $\mathbf{G}$ is a tractable signature in \cref{thm:dich-single-sym-ter-dom3-real}, so its rank must be $2$ or $3$.

  Suppose $\mathbf{G}$ is rank $2$.
  We may assume that $\mathbf{G}$ is of one of the tractable cases in \cref{lem:dichotomy-ternary-ternary-rank-2-geneq}, since $\mathbf{F}$ can realize $\mathbf{e}_1\teh + \mathbf{e}_2\teh$ by \cref{lem:geneq-normalization}.
  Case 1 and case 2 are stated here to be tractable.
  
  Suppose $\mathbf{G}$ is case 3.
  By \cref{lem:geneq-normalization}, $\mathbf{F}$ can realize $\mathbf{e}_2\teh + \mathbf{e}_3\teh$, and we may consider $\mathbf{G}\domres{\dg, \dr}$.
  By the same analysis as in the proof of \cref{lem:dichotomy-ternary-ternary-rank-2-geneq}, or using it directly, we see that $(u_2, u_3)$ and $(v_2, v_3)$ must be either $\mathbf{e}_1$ and $\mathbf{e}_2$ up to scaling or linearly dependent.
  If $(u_2, u_3) \sim \mathbf{e}_2$ and $(v_2, v_3) \sim \mathbf{e}_3$, then $u_3 = 0$ and $v_2 = 0$.
  Orthogonality of $\mathbf{u}$ and $\mathbf{v}$ implies $u_1 v_1 = 0$, putting $\mathbf{G}$ in case 2 of this lemma. 
  If $(u_2, u_3)$ and $(v_2, v_3)$ are linearly dependent, then we look at the matrix 
  $M = \begin{bsmallmatrix}
    u_1 & u_2 & u_3 \\
    v_1 & v_2 & v_3
  \end{bsmallmatrix}$.
  $M$ has rank $2$ by assumption, but its two $2 \times 2$ minors corresponding to columns $1$ and $2$ and columns $2$ and $3$ have determinant $0$.
  So, it must be the case that $(u_1, u_3)$ and $(v_1, v_3)$ are linearly independent.
  By \cref{lem:geneq-normalization}, $\mathbf{F}$ can realize $\mathbf{e}_1 \teh + \mathbf{e}_3 \teh$, so we may consider $\mathbf{G}\domres{\db, \dr}$ and apply the same argument.

  Suppose $\mathbf{G}$ is case 4.
  We may apply the above argument, that there must exist some $1 \le i < j \le 3$ such that $(u_i, u_j)$ and $(v_i, v_j)$ are linearly independent.
  Then, the corresponding domain restriction of $\mathbf{G}$ is $\mathbf{G}' = ((u_i, u_j) + i (v_i, v_j))\teh + ((u_i, u_j) - i (v_i, v_j))\teh$, 
  and $\holbs(\mathbf{G}', [1, 0, 0, 1])$ is \sph.

  Suppose $\mathbf{G}$ is rank $3$ type \ternarytractgeneq.
  Then $\mathbf{G} = \mathbf{v}_1\teh + \mathbf{v}_2\teh + \mathbf{v}_3\teh$ for nonzero $\mathbf{v}_1, \mathbf{v}_2, \mathbf{v}_3 \in \mathbb{R}^3$ such that $\langle \mathbf{v}_i, \mathbf{v}_j \rangle = 0$ for $i \ne j$.
  By \cref{lem:geneq-normalization}, we may realize $\mathbf{G}_1 = \mathbf{v}_1\teh + \mathbf{v}_2\teh$.
  Since $\mathbf{G}_1$ is rank $2$, we may apply the above, which implies $\mathbf{G}_1$ is a \geneq or $\mathbf{e}_i\teh + \mathbf{u}\teh$ for some $i$ and $\mathbf{u} \in \mathbb{R}^3$ such that $\langle \mathbf{u}, \mathbf{e}_i \rangle = 0$.
  By \cref{prop:uniqueness-tensor-decomposition}, if $\mathbf{G}_1$ is a \geneq, then $\mathbf{v}_1, \mathbf{v}_2$ are scalar multiples of standard basis vectors.
  By orthogonality, it must be that $\{\mathbf{v}_1, \mathbf{v}_2, \mathbf{v}_3\} \sim \{\mathbf{e}_1, \mathbf{e}_2, \mathbf{e}_3\}$, which implies $\mathbf{G}$ is a \geneq.
  For the other case, assume $i = 1$ without loss of generality, so $\mathbf{G}_1 = \mathbf{e}_1\teh + (0, a, b)\teh$ for $a, b \ne 0$.
  This implies that $\mathbf{v}_3 \sim (0, -b, a)$ by orthogonality, and we may realize $\mathbf{G}_2 = (0, a, b)\teh + (0, -b, a)\teh$.
  However, $\holts(\mathbf{F}, \mathbf{G}_2)$ is \sph by the above.

  Suppose $\mathbf{G}$ is rank $3$ type \ternarytractz.
  We may write $\mathbf{G} = (\mathbf{v}_1 + i \mathbf{v}_2) \teh + (\mathbf{v}_1 - i \mathbf{v}_2)\teh + \mathbf{v}_3\teh$ for nonzero $\mathbf{v}_1, \mathbf{v}_2, \mathbf{v}_3 \in \mathbb{R}^3$ such that $\langle \mathbf{v}_i, \mathbf{v}_j \rangle = 0$ for $i \ne j$.
  By \cref{lem:Z-normalization}, we may realize $\mathbf{G}_1 = (\mathbf{v}_1 + i \mathbf{v}_2)\teh + (\mathbf{v}_1 - i \mathbf{v}_2)\teh$, for which $\holts(\mathbf{F}, \mathbf{G}_1)$ is \sph.
\end{proof}

\subsection{Rank 2 Type \texorpdfstring{\ternarytractz}{B}}
Before the main lemma, we prove a simple proposition about a type $\typeii$ Boolean domain signature.
\begin{proposition}\label{prop:typeii-same-norm}
  Let $\mathbf{F}$ be a real-valued type $\typeii$ Boolean domain signature.
  If $\mathbf{F} = (\mathbf{u} + i \mathbf{v}) \teh + (\mathbf{u} - i \mathbf{v})\teh$ for some linearly independent $\mathbf{u}, \mathbf{v} \in \mathbb{R}^2$,
  then it must be the case that $\langle \mathbf{u}, \mathbf{u} \rangle = \langle \mathbf{v}, \mathbf{v} \rangle$ and $\langle \mathbf{u}, \mathbf{v} \rangle = 0$.
\end{proposition}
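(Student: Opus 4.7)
The plan is to reduce everything to the uniqueness of symmetric tensor decomposition, which applies because $\teh$ means arity $k=3 \ge 3$. First I would invoke the second half of \cref{prop-boolean-type-ii-tensor-decomposition}: since $\mathbf{F}$ is a real-valued type $\typeii$ signature, there exist $\mathbf{u}_0, \mathbf{v}_0 \in \mathbb{R}^2$ which are orthogonal and of equal norm such that
\[
  \mathbf{F} = (\mathbf{u}_0 + i\mathbf{v}_0)\teh + (\mathbf{u}_0 - i\mathbf{v}_0)\teh.
\]
The assumed linear independence of $\mathbf{u}, \mathbf{v}$ forces $\mathbf{F}$ to be nondegenerate, which in turn forces $\mathbf{u}_0, \mathbf{v}_0$ to be nonzero, and hence (being orthogonal) also linearly independent. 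An elementary check shows that the pairs $\{\mathbf{u}+i\mathbf{v}, \mathbf{u}-i\mathbf{v}\}$ and $\{\mathbf{u}_0+i\mathbf{v}_0, \mathbf{u}_0-i\mathbf{v}_0\}$ are each linearly independent over $\mathbb{C}$: from $\alpha(\mathbf{u}+i\mathbf{v}) + \beta(\mathbf{u}-i\mathbf{v}) = 0$ one obtains $(\alpha+\beta)\mathbf{u} + i(\alpha-\beta)\mathbf{v} = 0$, and real linear independence forces $\alpha = \beta = 0$.

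Next I would apply \cref{prop:uniqueness-tensor-decomposition} (whose hypothesis $k \ge 3$ is met) to the two equal decompositions. This gives, after a possible reordering, nonzero scalars $c_1, c_2 \in \mathbb{C}$ with either
\[
  \mathbf{u}+i\mathbf{v} = c_1(\mathbf{u}_0+i\mathbf{v}_0),\qquad \mathbf{u}-i\mathbf{v} = c_2(\mathbf{u}_0-i\mathbf{v}_0),
\]
or the same identities with $\pm i\mathbf{v}_0$ swapped on the right. The finishing observation is the isotropy
\[
  \langle \mathbf{u}_0 \pm i\mathbf{v}_0,\, \mathbf{u}_0 \pm i\mathbf{v}_0 \rangle \;=\; \|\mathbf{u}_0\|^2 - \|\mathbf{v}_0\|^2 \pm 2i\langle\mathbf{u}_0,\mathbf{v}_0\rangle \;=\; 0,
\]
which uses exactly the properties of $\mathbf{u}_0, \mathbf{v}_0$ supplied by \cref{prop-boolean-type-ii-tensor-decomposition}. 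Applying the complex bilinear form to both sides of the matching for $\mathbf{u}+i\mathbf{v}$ yields $\langle \mathbf{u}+i\mathbf{v}, \mathbf{u}+i\mathbf{v}\rangle = c_1^2 \cdot 0 = 0$, and expanding the left-hand side as $\|\mathbf{u}\|^2 - \|\mathbf{v}\|^2 + 2i\langle\mathbf{u}, \mathbf{v}\rangle$ and comparing real and imaginary parts delivers the two required conclusions $\langle\mathbf{u},\mathbf{u}\rangle = \langle\mathbf{v},\mathbf{v}\rangle$ and $\langle\mathbf{u},\mathbf{v}\rangle = 0$.

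I do not expect any real obstacle: the entire argument is a short application of tensor uniqueness combined with the standard isotropy calculation used earlier in \cref{cor:non-orthogonal-linearly-independent-hard}. The only place to be careful is the verification that the input to \cref{prop:uniqueness-tensor-decomposition} is valid, namely that both decompositions use linearly independent vectors over $\mathbb{C}^2$; this is exactly where the hypothesis that $\mathbf{u}$ and $\mathbf{v}$ are $\mathbb{R}$-linearly independent is used.
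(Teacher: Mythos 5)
Your proof is correct, but it follows a genuinely different route from the paper's. The paper argues directly from the recurrence definition of type $\typeii$: it writes $\mathbf{F}=[x,y,-x,-y]$, observes that the two conditions $f_0=-f_2$, $f_1=-f_3$ assemble into a $2\times 2$ linear system whose coefficient matrix $A$ has columns $\mathbf{u}\pm i\mathbf{v}$, and uses the linear independence of $\mathbf{u},\mathbf{v}$ to invert $A$ and conclude $(u_1+iv_1)^2=-(u_2+iv_2)^2$, i.e.\ $\langle \mathbf{u}+i\mathbf{v},\mathbf{u}+i\mathbf{v}\rangle=0$, which is exactly your final isotropy identity. You reach the same identity by a more structural path: the canonical form from \cref{prop-boolean-type-ii-tensor-decomposition} plus uniqueness of symmetric tensor decompositions (\cref{prop:uniqueness-tensor-decomposition}), matching $\mathbf{u}\pm i\mathbf{v}$ projectively to the isotropic vectors $\mathbf{u}_0\pm i\mathbf{v}_0$. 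Both are sound; the paper's computation is more elementary and self-contained (only a $2\times2$ determinant and the recurrence), while yours buys a cleaner conceptual statement — the decomposition vectors are forced, up to scalars, to be the canonical isotropic pair — at the cost of invoking heavier lemmas and of the bookkeeping you correctly flag: nondegeneracy of $\mathbf{F}$ (via \cref{lem:linearly-independent-tensor-rank}) to ensure $\mathbf{u}_0,\mathbf{v}_0\neq 0$, $\mathbb{C}$-linear independence of $\mathbf{u}\pm i\mathbf{v}$, and the fact that the proof of \cref{prop-boolean-type-ii-tensor-decomposition} does not itself rely on \cref{prop:typeii-same-norm}, so there is no circularity.
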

\begin{proof}
  We write $\mathbf{u} = (u_1, u_2)$ and $\mathbf{v} = (v_1, v_2)$.
  By assumption, $\mathbf{F}$ has the form $[x, y, -x, -y]$ for some $x, y \in \mathbb{R}$.
  $((u_1, u_2) + i (v_1, v_2))\teh + ((u_1, u_2) - i (v_1, v_2))\teh = [x, y, -x, -y]$ yields the following linear system.
  Let $A = \begin{bsmallmatrix}
  u_1 + i v_1 & u_1 - i v_1 \\
  u_2 + i v_2 & u_2 - i v_2
  \end{bsmallmatrix}$, then
  \[
    A \begin{bmatrix}
      (u_1 + i v_1)^2 \\ (u_1 - i v_1)^2
      \end{bmatrix} = - A \begin{bmatrix}
      (u_2 + i v_2)^2 \\ (u_2 - i v_2)^2
    \end{bmatrix} 
    \, .
  \]
  $A$ is full rank because $(u_1, u_2)$ and $(v_1, v_2)$ are linearly independent, so it must be the case that 
  $(u_1 + i v_1)^2 = - (u_2 + i v_2)^2$.
  This implies that $u_1^2 + u_2^2 = v_1^2 + v_2^2$ and $u_1 v_1 + u_2 v_2 = 0$.
\end{proof}

\begin{lemma}\label{lem:dichotomy-terneray-ternary-rank-2-z}
  Let $\mathbf{F} = \boldsymbol{\beta} \teh + \overline{\boldsymbol{\beta}} \teh$ where $\boldsymbol{\beta} = \frac{1}{\sqrt{2}} (1, i, 0)$.
  Let $\mathbf{G}$ be a nondegenerate, real-valued, symmetric, ternary signature of rank $2$.
  Then, $\holts(\mathbf{F}, \mathbf{G})$ is computable in polynomial time if one of the following conditions holds. Otherwise, it is \sph.
  \begin{enumerate}
    \item $\mathbf{G}$ is $\db \dg | \dr$ and $\mathbf{G}\domres{\db, \dg}$ is compatible with type $\typeii$ signatures.
    \item $\mathbf{G} = \mathbf{u}\teh + \mathbf{v}\teh$ for $\mathbf{v} = (v_1, v_2, v_3), \mathbf{u} = (u_1, u_2, u_3) \in \mathbb{R}^3$ such that $(u_1, u_2)$ and $(v_1, v_2)$ are linearly dependent and $\langle \mathbf{u}, \mathbf{v} \rangle = 0$.
    \item $\mathbf{G} = (\mathbf{u} + i \mathbf{v})\teh + (\mathbf{u} - i \mathbf{v})\teh$ for $\mathbf{v} = (v_1, v_2, v_3), \mathbf{u} = (u_1, u_2, u_3) \in \mathbb{R}^3$ such that $(u_1, u_2)$ and $(v_1, v_2)$ are linearly dependent, $\langle \mathbf{u}, \mathbf{v} \rangle = 0$ and $\langle \mathbf{u}, \mathbf{u} \rangle = \langle \mathbf{v}, \mathbf{v} \rangle$.
  \end{enumerate}
\end{lemma}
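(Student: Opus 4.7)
The plan is to verify that Case 1 falls in Class \tractBGR directly, while Cases 2 and 3 fall in Class \tractBGGRBR after an appropriate rotation. For Case 1, both $\mathbf{F}$ and $\mathbf{G}$ are $\db \dg | \dr$ ternary signatures (since $\supp \mathbf{F} \subseteq \{\db, \dg\}^*$) and $\mathbf{G} \domres{\db, \dg}$ is compatible with the nondegenerate type $\typeii$ signature $\mathbf{F} \domres{\db, \dg}$ by hypothesis; no binary signatures are present, so all three conditions of \tractBGR hold. For Cases 2 and 3, let $(x, y)$ be a unit vector along the common direction of $(u_1, u_2)$ and $(v_1, v_2)$, and apply $T = \begin{bsmallmatrix} -y & x & 0 \\ x & y & 0 \\ 0 & 0 & 1 \end{bsmallmatrix}$. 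A short computation gives $T \boldsymbol{\beta} = i(x + iy) \overline{\boldsymbol{\beta}}$, so $T \teh \mathbf{F}$ remains supported in $\{\db, \dg\}^*$, while $T\mathbf{u} = (0, p, u_3)$ and $T\mathbf{v} = (0, q, v_3)$ shift $T \teh \mathbf{G}$ into $\{\dg, \dr\}^*$-support. The Boolean restriction $(T \teh \mathbf{G}) \domres{\dg, \dr} = (p, u_3) \teh + (q, v_3) \teh$ is of type $\typei$ in Case 2 (orthogonality following from $\langle \mathbf{u}, \mathbf{v} \rangle = 0$), and of type $\typeii$ in Case 3 since the intrinsic type \ternarytractz conditions translate exactly to 2D orthogonality and equal norm for $(p, u_3)$ and $(q, v_3)$. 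Both setups thus fit \tractBGGRBR with $\mathcal{R} = \emptyset$.

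\textbf{Plan for hardness.} Assume $\holts(\mathbf{F}, \mathbf{G})$ is not \sph. By \cref{thm:dich-single-sym-ter-dom3-real}, either $\mathbf{G} = \mathbf{u} \teh + \mathbf{v} \teh$ (type \ternarytractgeneq) with $\langle \mathbf{u}, \mathbf{v} \rangle = 0$, or $\mathbf{G} = (\mathbf{u} + i \mathbf{v}) \teh + (\mathbf{u} - i \mathbf{v}) \teh$ (type \ternarytractz) with $\langle \mathbf{u}, \mathbf{v} \rangle = 0$ and $\langle \mathbf{u}, \mathbf{u} \rangle = \langle \mathbf{v}, \mathbf{v} \rangle$. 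First I would use \cref{lem:z-eqbg} to realize $(=_{\db \dg})$, then \cref{lem:domain-restriction} to reduce to $\holbs(\mathbf{F} \domres{\db, \dg}, \mathbf{G} \domres{\db, \dg})$. A direct calculation yields $\mathbf{F} \domres{\db, \dg} \propto [1, 0, -1, 0]$, a nondegenerate type $\typeii$ signature, so by \cref{thm:dich-sym-Boolean} $\mathbf{G} \domres{\db, \dg}$ must be degenerate or of type $\typeii$.

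For type \ternarytractgeneq, $\mathbf{G} \domres{\db, \dg} = (u_1, u_2) \teh + (v_1, v_2) \teh$. If these real 2-vectors were linearly independent, then by \cref{prop:uniqueness-tensor-decomposition} $\mathbf{G} \domres{\db, \dg}$ could not equal any real type $\typeii$ signature $(\mathbf{p} + i \mathbf{q}) \teh + (\mathbf{p} - i \mathbf{q}) \teh$, since matching would force a real vector to be a complex scalar multiple of $\mathbf{p} \pm i \mathbf{q}$, which in turn would force $\mathbf{p}$ and $\mathbf{q}$ to be parallel, contradicting the type $\typeii$ nondegeneracy. Hence $(u_1, u_2)$ and $(v_1, v_2)$ must be linearly dependent, which combined with $\langle \mathbf{u}, \mathbf{v} \rangle = 0$ places $\mathbf{G}$ in Case 2 (or Case 1 when one of the pair is zero).

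For type \ternarytractz, $\mathbf{G} \domres{\db, \dg} = ((u_1, u_2) + i(v_1, v_2)) \teh + ((u_1, u_2) - i(v_1, v_2)) \teh$. Linear dependence of $(u_1, u_2), (v_1, v_2)$ makes $\mathbf{G} \domres{\db, \dg}$ degenerate, giving Case 3. Otherwise $\mathbf{G} \domres{\db, \dg}$ must be type $\typeii$ to avoid \cref{cor:non-orthogonal-linearly-independent-hard}, and then \cref{prop:typeii-same-norm} forces the 2-vectors $(u_1, u_2), (v_1, v_2)$ to be orthogonal with equal norm. Combined with the ambient conditions $\langle \mathbf{u}, \mathbf{v} \rangle = 0$ and $\langle \mathbf{u}, \mathbf{u} \rangle = \langle \mathbf{v}, \mathbf{v} \rangle$, this yields $u_3 v_3 = 0$ and $u_3^2 = v_3^2$, forcing $u_3 = v_3 = 0$, so $\mathbf{u}, \mathbf{v}$ lie in the $\{\db, \dg\}$-plane and $\mathbf{G}$ is supported in $\{\db, \dg\}^*$ with type $\typeii$ restriction --- precisely Case 1. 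The main obstacle will be this last interplay: two different orthogonality-plus-norm constraints (3D from the type \ternarytractz form, 2D from Boolean type $\typeii$) must be shown to combine to force the 3D vectors into the $\{\db, \dg\}$-plane, cleanly isolating Case 1.
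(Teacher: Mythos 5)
Your proposal is correct and follows essentially the same route as the paper: tractability by placing Case 1 in class \tractBGR and Cases 2–3 in class \tractBGGRBR via the rotation $T$, and hardness by realizing $(=_{\db\dg})$ through \cref{lem:z-eqbg}, restricting to the Boolean domain against the type $\typeii$ signature $[1,0,-1,0]$, and then forcing linear dependence of the projected vectors (type \ternarytractgeneq) or, via \cref{cor:non-orthogonal-linearly-independent-hard} and \cref{prop:typeii-same-norm}, the combination $u_3v_3=0$ and $u_3^2=v_3^2$ giving $u_3=v_3=0$ (type \ternarytractz). Your extra explicit appeal to \cref{prop:uniqueness-tensor-decomposition} in the \ternarytractgeneq case is a minor elaboration of a step the paper leaves implicit.
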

\begin{proof}
  For tractability, we see that case 1 is class \tractBGR.
  Case 2 and 3 are class \tractBGGRBR by applying an orthogonal transformation as in the proof of \cref{lem:dichotomy-ternary-ternary-rank-2-geneq}.

  Suppose $\holts(\mathbf{F}, \mathbf{G})$ is not \sph. 
  Then, we may assume that $\mathbf{G}$ is of a tractable form in \cref{thm:dich-single-sym-ter-dom3-real}. 
  We analyze each case. 

  Suppose $\mathbf{G}$ is of type \ternarytractgeneq, so $\mathbf{G} = \mathbf{u}\teh + \mathbf{v}\teh$ for some $\mathbf{u} = (u_1, u_2, u_3), \mathbf{v} = (v_1, v_2, v_3) \in \mathbb{R}^3$ with $\langle \mathbf{u}, \mathbf{v} \rangle = 0$.
  By \cref{lem:z-eqbg} and \cref{lem:domain-restriction}, $\holbs([1, 0, -1, 0], (u_1, u_2)\teh + (v_1, v_2)\teh)$ must not be \sph.
  Therefore, $(u_1, u_2)$ and $(v_1, v_2)$ must be linearly dependent, in which case $\mathbf{G}$ is case 2.
  Note that this includes the case where $\mathbf{G} = (a, b, 0)\teh + \mathbf{e}_3 \teh$, which may also fall under case 1.

  Suppose $\mathbf{G}$ is of type \ternarytractz, so
  $\mathbf{G} = (\mathbf{u} + i \mathbf{v})\teh + (\mathbf{u} - i \mathbf{v})\teh$ for $\mathbf{v} = (v_1, v_2, v_3), \mathbf{u} = (u_1, u_2, u_3) \in \mathbb{R}^3$ such that
  $\langle \mathbf{u}, \mathbf{v} \rangle = 0$ and $\langle \mathbf{u}, \mathbf{u} \rangle = \langle \mathbf{v}, \mathbf{v} \rangle$.
  Then $\mathbf{G}\domres{\db, \dg} = ((u_1, u_2) + i (v_1, v_2))\teh + ((u_1, u_2) - i (v_1, v_2))\teh$.
  By \cref{cor:non-orthogonal-linearly-independent-hard}, $\holbs(\mathbf{G}\domres{\db \dg})$ is \sph unless $(u_1, u_2)$ and $(v_1, v_2)$ are linearly dependent or orthogonal.
  If they are linearly dependent, then $\mathbf{G}$ is case 3.
  If $(u_1, u_2)$ and $(v_1, v_2)$ are linearly independent and orthogonal, then $\mathbf{G}\domres{\db, \dg}$ must be a $\typeii$ signature, so by \cref{prop:typeii-same-norm}, $u_1^2 + u_2^2 = v_1^2 + v_2^2$.
  Recall that $\langle \mathbf{u}, \mathbf{v} \rangle = 0$ and $\langle \mathbf{u}, \mathbf{u} \rangle = \langle \mathbf{v}, \mathbf{v} \rangle$.
  Since $u_1 v_1 + u_2 v_2 = 0$ by assumption, we get $u_3 v_3 = 0$.
  The only way for  $u_1^2 + u_2^2 = v_1^2 + v_2^2$
  and $\langle \mathbf{u}, \mathbf{u} \rangle = \langle \mathbf{v}, \mathbf{v} \rangle$ to both hold is  $u_3 = v_3 = 0$.
  Then, $\supp \mathbf{G} \subseteq \{\db, \dg\}^*$, and thus $\mathbf{G}$ is $\db \dg | \dr$, so it is case 1.
\end{proof}

\subsection{Rank 3 Type \texorpdfstring{\ternarytractz}{B}}
\begin{lemma}\label{lem:dichotomy-terneray-ternary-rank-3-z}
  Let $\mathbf{F} = \boldsymbol{\beta} \teh + \overline{\boldsymbol{\beta}} \teh + \mathbf{e}_3 \teh$ where $\boldsymbol{\beta} = \frac{1}{\sqrt{2}} (1, i, 0)$.
  Let $\mathbf{G}$ be a nondegenerate, real-valued, symmetric, ternary signature.
  Then, $\holts(\mathbf{F}, \mathbf{G})$ is computalbe in polynomial time if $\mathbf{G}$ is $\db \dg | \dr$ and $\mathbf{G}\domres{\db, \dg}$ is compatible with type $\typeii$ signatures.
  Otherwise, it is \sph.
\end{lemma}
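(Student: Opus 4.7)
The plan is to follow the same template as Lemmas \ref{lem:dichotomy-terneray-ternary-rank-2-z} and \ref{lem:dichotomy-single-ternary-rank-3-z}: verify tractability by exhibiting the pair inside one of the tractable classes of \cref{thm:dichotomy-set-of-domain-3}, then prove hardness by combining the rank-$2$ case (\cref{lem:dichotomy-terneray-ternary-rank-2-z}) with the extra leverage provided by the rank-$3$ term $\mathbf{e}_3\teh$ of $\mathbf{F}$. For tractability, note that $\mathbf{F}$ is already $\db\dg|\dr$ and $\mathbf{F}\domres{\db,\dg} = \frac{1}{\sqrt 2}[1,0,-1,0]$ is type $\typeii$ by \cref{prop-boolean-type-ii-tensor-decomposition}. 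Hence, if $\mathbf{G}$ is $\db\dg|\dr$ with $\mathbf{G}\domres{\db,\dg}$ compatible with type $\typeii$, the pair $\{\mathbf{F},\mathbf{G}\}$ sits in class \tractBGR with $T=I$.

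For hardness, applying \cref{lem:Z-normalization} with $\lambda = 1 \neq 0$ shows that $\mathbf{F}$ realizes $\mathbf{F}_1 = \boldsymbol{\beta}\teh + \overline{\boldsymbol{\beta}}\teh$ (whose own $\{\db,\dg\}$-restriction is type $\typeii$), so $\holts(\mathbf{F}_1, \mathbf{G}) \leq_T \holts(\mathbf{F}, \mathbf{G})$. If the right-hand side is not \sph, \cref{lem:dichotomy-terneray-ternary-rank-2-z} forces $\mathbf{G}$ into one of its three tractable cases. Case $1$ there is exactly the conclusion we seek; it remains to eliminate cases $2$ and $3$. In both of these cases $\mathbf{G}$'s tensor-decomposition vectors have proportional $\db\dg$-projections, so by an orthogonal rotation in the $\db\dg$-plane (which preserves the form of $\mathbf{F}$ by \cref{cor:z-normalization-orthogonal}) we may rotate this common direction onto $\mathbf{e}_1$. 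Two situations arise: either one decomposition vector lies on the $\dr$-axis while the other lies in the $\db\dg$-plane, in which case $\mathbf{G}$ is $\db\dg|\dr$ with $\mathbf{G}\domres{\db,\dg}$ degenerate (hence trivially compatible with type $\typeii$, putting $\mathbf{G}$ in the desired case $1$); or both vectors lie in the $\db\dr$-plane, so $\supp\mathbf{G} \subseteq \{\db,\dr\}^*$. In the second situation, $\mathbf{G}\domres{\db,\dr}$ is a nondegenerate rank-$2$ ternary whose orthogonal tensor decomposition, after the domain renaming $\dg \leftrightarrow \dr$, lets us realize $(=_{\db,\dr})$ by \cref{lem:geneq-eqbg}; then \cref{lem:domain-restriction} reduces the Boolean problem $\holbs(\mathbf{F}\domres{\db,\dr}, \mathbf{G}\domres{\db,\dr})$ to $\holts(\mathbf{F},\mathbf{G})$.

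A direct computation gives $\mathbf{F}\domres{\db,\dr} = [1/\sqrt 2, 0, 0, 1]$, which is a nondegenerate type $\typei(0,1)$ signature with unique orthogonal basis $\{\mathbf{e}_1,\mathbf{e}_2\}$ (by \cref{prop:boolean-type-i-tensor-decomposition}) and is not of type $\typeii$. In case $2$, $\mathbf{G}\domres{\db,\dr}$ is a nondegenerate rank-$2$ type $\typei$ signature, and uniqueness of the orthogonal basis forces the basis vectors $(a,u_3),(b,v_3)$ of $\mathbf{G}\domres{\db,\dr}$ to be proportional to $\mathbf{e}_1,\mathbf{e}_2$; tracing this back to $\mathbf{u},\mathbf{v}\in\mathbb{R}^3$ collapses one of them to zero, contradicting nondegeneracy. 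In case $3$, the conditions $\mathbf{u}\perp\mathbf{v}$ and $\langle\mathbf{u},\mathbf{u}\rangle = \langle\mathbf{v},\mathbf{v}\rangle$ transfer verbatim to the $\{\db,\dr\}$-coordinates, so by \cref{prop-boolean-type-ii-tensor-decomposition} the restriction $\mathbf{G}\domres{\db,\dr}$ is nondegenerate of type $\typeii$, which is incompatible with $\mathbf{F}\domres{\db,\dr}$ by \cref{thm:dich-sym-Boolean}; a contradiction. The main obstacle to watch for is the bookkeeping of boundary subcases in case $3$ (for instance, when one of the pairs $(u_1,u_2)$, $(v_1,v_2)$ is zero): here the equal-norm condition interacts with orthogonality and may seemingly avoid the $\{\db,\dr\}$-supported form, but a careful further reduction shows it always lands in one of the situations already treated.
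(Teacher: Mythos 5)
Your tractability argument and your treatment of the rank-$2$ cases follow the paper's route (with a nice shortcut: restricting to $\{\db,\dr\}$, where $\mathbf{F}\domres{\db,\dr}=[1/\sqrt2,0,0,1]$ is already nondegenerate, avoids the paper's auxiliary signature $\mathbf{A}$ realized via \cref{cor:z-normalization-orthogonal}). But there is a genuine gap: the lemma allows $\mathbf{G}$ to be \emph{any} nondegenerate ternary signature, while your hardness argument only invokes \cref{lem:dichotomy-terneray-ternary-rank-2-z}, whose hypothesis requires $\mathbf{G}$ to have rank $2$. You never treat rank-$3$ $\mathbf{G}$. For example, $\mathbf{G}=\mathbf{e}_1\teh+\mathbf{e}_2\teh+\mathbf{e}_3\teh$ is not among the three cases of the rank-$2$ lemma, yet $\holts(\mathbf{F},\mathbf{G})$ must be shown \sph; and $\mathbf{G}=(\mathbf{v}_1+i\mathbf{v}_2)\teh+(\mathbf{v}_1-i\mathbf{v}_2)\teh+\mathbf{v}_3\teh$ of type \ternarytractz is tractable precisely when $\mathbf{v}_3\sim\mathbf{e}_3$ and must be shown \sph otherwise. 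The paper closes this by splitting on the rank of $\mathbf{G}$: rank-$3$ type \ternarytractgeneq is killed by \cref{lem:dichotomy-ternary-ternary-rank-3-geneq}, and for rank-$3$ type \ternarytractz one uses \cref{lem:Z-normalization} to realize $(\mathbf{v}_1+i\mathbf{v}_2)\teh+(\mathbf{v}_1-i\mathbf{v}_2)\teh$ from $\mathbf{G}$ and then falls back on the rank-$2$ analysis, where $\mathbf{v}_3\not\sim\mathbf{e}_3$ forces hardness. Your proof needs this extra branch; without it the statement is only proved for rank-$2$ $\mathbf{G}$.

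A secondary flaw is in your case-$2$ analysis: when the orthogonal basis of $\mathbf{G}\domres{\db,\dr}$ is forced to align with $\{\mathbf{e}_1,\mathbf{e}_2\}$, tracing back gives $\mathbf{u}\sim(px,py,0)$ and $\mathbf{v}\sim\mathbf{e}_3$ (or vice versa), i.e.\ $\mathbf{G}=(a,b,0)\teh+c\,\mathbf{e}_3\teh$ with both vectors nonzero. This does \emph{not} contradict nondegeneracy; it is exactly the $\db\dg|\dr$ signature with degenerate $\{\db,\dg\}$-restriction, which the paper routes into the tractable conclusion (your "situation (i)" and "situation (ii)" are not exclusive). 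Declaring it a contradiction asserts hardness for a tractable pair; the subcase should instead be absorbed into case $1$ of the statement.
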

\begin{proof}
  If $\mathbf{G}$ satisfies the conditions, then it is in the tractability class \tractBGR.

  Suppose $\holts(\mathbf{F}, \mathbf{G})$ is not \sph.
  Then, we may assume that $\mathbf{G}$ is a tractable signature in \cref{thm:dich-single-sym-ter-dom3-real}, so its rank must be $2$ or $3$.
  
  Suppose $\mathbf{G}$ is rank $2$.
  By \cref{lem:Z-normalization}, since $\mathbf{F}$ can realize $\boldsymbol{\beta}\teh + \overline{\boldsymbol{\beta}}\teh$, we may assume that $\mathbf{G}$ is in one of the tractable cases of \cref{lem:dichotomy-terneray-ternary-rank-2-z}.

  Suppose $\mathbf{G}$ is case 2, so $\mathbf{G} = \mathbf{u}\teh + \mathbf{v}\teh$ for some $\mathbf{u} = (u_1, u_2, u_3), \mathbf{v} = (v_1, v_2, v_3) \in \mathbb{R}^3$ with $\langle \mathbf{u}, \mathbf{v} \rangle = 0$.
  Writing $(u_1, u_2) = (px, py)$ and $(v_1, v_2) = (qx, qy)$ with $x^2 + y^2 = 1$, we apply $\db \dg | \dr$ orthogonal holographic transformation $T = \begin{bsmallmatrix}
    -y & x & 0 \\
    x & y & 0 \\
    0 & 0 & 1
  \end{bsmallmatrix}$.
  Let $\mathbf{G}' = T \teh \mathbf{G} = (0, p, u_3)\teh + (0, q, v_3)\teh $.
  Then, $\supp \mathbf{G}' \subseteq \{\dg, \dr\}^*$.
  By \cref{lem:geneq-eqbg}, $\mathbf{G}'$ can realize $(=_{\dg \dr})$.
  On the other hand, by \cref{cor:z-normalization-orthogonal}, $T\teh \mathbf{F}$ can realize a ternary signature $\mathbf{A}$ such that $\mathbf{A}\domres{\dg, \dr} = [1, 0, 0, 1]$.
  For $\holbs((\mathbf{G}')\domres{\dg, \dr}, \mathbf{A}\domres{\dg, \dr})$ to be tractable, it must be the case that $((p, u_3), (q, v_3)) \sim (\mathbf{e}_1, \mathbf{e}_2)$.
  $p = 0$ or $q = 0$ both imply that $\mathbf{G}$ is $\db \dg | \dr$ and $\mathbf{G}\domres{\db, \dg}$ is degenerate.
  Therefore, we get the tractable case.

  Suppose $\mathbf{G}$ is case 3.
  We apply the same transform as case 2 and get $\mathbf{G}' = T \teh \mathbf{G}$ such that $\supp \mathbf{G}' \subseteq \{\dg, \dr\}^*$ and $(\mathbf{G}')\domres{\db, \dg}$ is of type $\typeii$. By the same argument as above,
  we can realize a signature $\mathbf{A}$ from $T \teh \mathbf{F}$ such that $\mathbf{A}\domres{\dg, \dr} = [1, 0, 0, 1]$.
  Then, $\holbs((\mathbf{G}')\domres{\dg, \dr}, \mathbf{A}\domres{\dg, \dr})$ is \sph. 

  Suppose $\mathbf{G}$ is rank $3$.
  By \cref{lem:dichotomy-ternary-ternary-rank-3-geneq}, if $\mathbf{G}$ is a rank $3$ type \ternarytractgeneq signature, then $\holts(\mathbf{F}, \mathbf{G})$ is \sph.
  Thus, assume $\mathbf{G}$ is type \ternarytractz. 
  We may write $\mathbf{G} = (\mathbf{v}_1 + i \mathbf{v}_2) \teh + (\mathbf{v}_1 - i \mathbf{v}_2)\teh + \mathbf{v}_3\teh$ for nonzero $\mathbf{v}_1, \mathbf{v}_2, \mathbf{v}_3 \in \mathbb{R}^3$ such that $\langle \mathbf{v}_i, \mathbf{v}_j \rangle = 0$ for $i \ne j$.
  By \cref{lem:Z-normalization}, we can realize $\mathbf{G}_1 = (\mathbf{v}_1 + i \mathbf{v}_2)\teh + (\mathbf{v}_1 - i \mathbf{v}_2)\teh$.
  If $\mathbf{v}_3 \not \sim \mathbf{e}_3$, then $\supp \mathbf{G}_1 \not \subseteq \{\db, \dg\}^*$.
  However, this brings us back to the previous case, where there is no tractable case.
  If $\mathbf{v}_3 \sim \mathbf{e}_3$, then $\mathbf{G}$ is $\db \dg | \dr$ and $\mathbf{G}\domres{\db, \dg}$ is $\typeii$, so tractable.
\end{proof}

\section{A Single Signature Dichotomy}\label{sec:higher-arity}
In this section, we prove a dichotomy of $\holts(\mathbf{F})$ for an  
$\mathbf{F}$ of arbitrary arity $n$. The case $n\le 3$ was proved
in~\cite{cai_dichotomy_2013}. Let $n \ge 4$.
Similar to the Boolean domain case, it is natural to expect that higher arity tractable signatures also have the same tensor decomposition form, i.e. $a\mathbf{e}_1^{\otimes n} + b\mathbf{e}_2^{\otimes n} + c\mathbf{e}_3^{\otimes n}$ or $\boldsymbol{\beta}^{\otimes n} + \overline{\boldsymbol{\beta}}^{\otimes n} + \lambda \mathbf{e}_3^{\otimes n}$ after some orthogonal transformation.
We show that indeed this is the case.

The proof is an induction on the arity.
First, we show that a signature of arity $4$ must be of the same form, using the dichotomy of two ternary signatures.
To do so, we use the fact that the set $\{\langle \mathbf{F}, \mathbf{u} \rangle : \mathbf{u} \in \mathbb{R}^3\}$ is a vector space.
That allows us to add signatures, and we argue that unless the signatures are perfectly aligned, the sum of signatures cannot be tractable using the dichotomy of two ternary signatures of \cref{sec:two-ternary}.
The reason for this roundabout strategy, compared to the proof of the Boolean domain case, is that it is hard to characterize how a non-tractable signature may look like,
since the dichotomy theorems we have are not explicitly defined on the entries of the signatures, but talk about the tensor decomposition form.

Once we show that arity $4$ signatures must have tensor decomposition of  at most $3$ linearly independent vectors, 
then we argue that the dichotomy of two arity $4$ signatures must be essentially same as the dichotomy of the two ternary signatures, since unary signatures allow us to decrease the arity, while preserving the vectors in the tensor decomposition.
Finally, we notice that in the proof of the above two statements, the fact that the arities of the signatures were $3$ and $4$ does not matter, and the proof can be made inductive. 
The logical dependency is visualized in the dashed box of \cref{fig:logical-dependency}.

\subsection{Subspace of Signatures}\label{subsec:subspace-signatures}
Let $\mathbf{F}$ be a real-valued symmetric signature of arity $4$.
Consider the set $\mathscr{F} = \{\langle \mathbf{F}, \mathbf{u} \rangle : \mathbf{u} \in \mathbb{R}^3\}$. 
Note that $\mathbf{u} \mapsto \langle \mathbf{F}, \mathbf{u} \rangle$ is a linear map from $\mathbb{R}^3$ to $\mathsf{S}^3(\mathbb{C}^3)$, the space of complex-valued symmetric signatures of arity $3$.
In particular, $\mathscr{F}$ is a vector space, so it is closed under linear combinations.
Also,  $\mathscr{F}$ only consists of real-valued signatures if $\mathbf{F}$ is a real-valued signature.
\newcommand{\symtersig}{\mathsf{S}^3_{\mathbb{R}}}
Let $\symtersig$ denote the set of all real-valued symmetric ternary signatures.

We show a dichotomy for an arbitrary subspace $\mathscr{F} \subseteq \symtersig$.
For a nonempty $\mathscr{F}$, each ternary signature must be a tractable signature, otherwise
the problem is already \#\P-hard.
We prove this dichotomy for a subspace $\mathscr{F}$
by considering each canonical form such a tractable signature can take
under a holographic transformation $T$. Note that $T \mathscr{F}$ is also a subspace.

\begin{proposition}\label{prop:geneq-sum-hard}
Let $\mathbf{u} \in \mathbb{R}^2$ be a nonzero vector.
Let $\mathbf{F} = \mathbf{v}_1\teh + \mathbf{v}_2 \teh$ be a Boolean domain signature for nonzero $\mathbf{v}_1, \mathbf{v}_2 \in \mathbb{R}^2$ such that $\langle \mathbf{v}_1, \mathbf{v}_2 \rangle = 0$.
Let $\mathbf{G} = \mathbf{F} + \mathbf{u}\teh$.
Then, $\holbs(\mathbf{F}, \mathbf{G})$ is \sph unless $\mathbf{u} \sim \mathbf{v}_1$ or $\mathbf{u} \sim \mathbf{v}_2$.
\end{proposition}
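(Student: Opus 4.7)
The plan is to apply the Boolean dichotomy \cref{thm:dich-sym-Boolean} directly to the set $\{\mathbf{F}, \mathbf{G}\}$ and show that, under the hypothesis $\mathbf{u} \not\sim \mathbf{v}_1$ and $\mathbf{u} \not\sim \mathbf{v}_2$, none of its tractable cases can contain both signatures. First I would establish that both $\mathbf{F}$ and $\mathbf{G}$ are nondegenerate. For $\mathbf{F}$, this is immediate from \cref{lem:linearly-independent-tensor-rank}, since $\mathbf{v}_1, \mathbf{v}_2$ are orthogonal and nonzero, hence linearly independent. For $\mathbf{G}$, suppose toward contradiction $\mathbf{G} = \mathbf{w}\teh$ for some nonzero $\mathbf{w}$; then $\mathbf{v}_1\teh + \mathbf{v}_2\teh = \mathbf{w}\teh - \mathbf{u}\teh$. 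If $\mathbf{w}$ and $\mathbf{u}$ are linearly independent, \cref{prop:uniqueness-tensor-decomposition} forces $(\mathbf{v}_1, \mathbf{v}_2) \sim (\mathbf{w}, -\mathbf{u})$ and hence $\mathbf{u} \sim \mathbf{v}_i$ for some $i$, a contradiction; if $\mathbf{w} \sim \mathbf{u}$, the right-hand side has symmetric rank at most $1$, contradicting \cref{lem:linearly-independent-tensor-rank}.

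Since both $\mathbf{F}$ and $\mathbf{G}$ are nondegenerate of arity $3$, Case~1 of \cref{thm:dich-sym-Boolean} (all signatures arity $\le 2$) does not apply. For Case~3, both signatures would need to be of type $\typeii$; but $\mathbf{F} = [f_0, f_1, f_2, f_3]$ is never of type $\typeii$, because the required relations $f_k + f_{k+2} = 0$ for $k = 0, 1$, after expanding $f_k$ in coordinates of $\mathbf{v}_1, \mathbf{v}_2$, reduce to $\|\mathbf{v}_1\|^2 \mathbf{v}_1 + \|\mathbf{v}_2\|^2 \mathbf{v}_2 = \mathbf{0}$, which is incompatible with $\mathbf{v}_1 \perp \mathbf{v}_2$ and both nonzero.

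The only remaining possibility is Case~2, where both $\mathbf{F}$ and $\mathbf{G}$ are of type $\typei(a, b)$ for a common pair $(a, b)$. By \cref{prop:boolean-type-i-tensor-decomposition}, the orthogonal pair $\{\mathbf{v}_1, \mathbf{v}_2\}$ that appears in the decomposition of $\mathbf{F}$ is pinned down (up to scalar) by $(a, b)$, so any arity-$3$ signature of type $\typei(a,b)$ must decompose as $p \mathbf{v}_1\teh + q \mathbf{v}_2\teh$ for some $p, q$. Equating with $\mathbf{G} = \mathbf{v}_1\teh + \mathbf{v}_2\teh + \mathbf{u}\teh$ gives
\[
\mathbf{u}\teh = (p - 1)\mathbf{v}_1\teh + (q - 1)\mathbf{v}_2\teh.
\]
By \cref{lem:linearly-independent-tensor-rank}, the right-hand side has symmetric rank equal to the number of nonzero coefficients, while the left-hand side has rank $1$; hence exactly one of $p - 1$, $q - 1$ is nonzero, which forces $\mathbf{u}\teh \sim \mathbf{v}_i\teh$ and thus $\mathbf{u} \sim \mathbf{v}_i$, contradicting the hypothesis. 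All tractable cases being ruled out, $\holbs(\mathbf{F}, \mathbf{G})$ is \sph.

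I do not anticipate a serious technical obstacle; the principal subtlety is recognizing that \cref{prop:boolean-type-i-tensor-decomposition} rigidly determines the pair of decomposition directions from $(a, b)$ alone, which is exactly what collapses Case~2 to a tensor-rank comparison. The nondegeneracy step is the only place where the hypothesis $\mathbf{u} \not\sim \mathbf{v}_i$ must be invoked before the dichotomy can be legitimately applied.
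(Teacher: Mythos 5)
Your proof is correct and takes essentially the same route as the paper's: use the Boolean $\hols$ dichotomy (\cref{thm:dich-sym-Boolean}) to force $\mathbf{G}$ to be either degenerate or of the form $p\,\mathbf{v}_1\teh + q\,\mathbf{v}_2\teh$, and then rule both out by linear independence of symmetric tensor powers, which is only avoidable when $\mathbf{u} \sim \mathbf{v}_i$. The paper carries this out after an orthogonal normalization of $\mathbf{F}$ to $c_1\mathbf{e}_1\teh + c_2\mathbf{e}_2\teh$ and cites \cref{cor:pairiwse-linearly-independent-tensor}, while you stay in the original basis and make the nondegeneracy of $\mathbf{G}$ and the exclusion of type $\typeii$ explicit; these are differences of presentation, not of substance.
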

\begin{proof}
Suppose $\mathbf{u} \not\sim \mathbf{v}_1$ and $\mathbf{u} \not \sim \mathbf{v}_2$.
Then, we may apply an orthogonal transformation $T = [\mathbf{v}_1' \, \mathbf{v}_2']\transpose$ where $\mathbf{v}_i' = \frac{1}{\|\mathbf{v}_i\|}\mathbf{v}_i$.
Then, $\mathbf{F}' = T \teh \mathbf{F}  = c_1 \mathbf{e}_1\teh + c_2 \mathbf{e}_2\teh$ for some nonzero $c_1, c_2 \in \mathbb{R}$ and $T \mathbf{u} = (a, b)\transpose$ where $a, b \ne 0$.
For $\mathbf{G}' = T\teh \mathbf{G}$ to be tractable with $\mathbf{F}'$, it must be the case that $\mathbf{G}'$ is degenerate or equal to $d_1 \mathbf{e}_1 \teh + d_2 \mathbf{e}_2$ for some $d_1, d_2 \in \mathbb{C}$.
By \cref{cor:pairiwse-linearly-independent-tensor}, for any $\mathbf{v}_1, \mathbf{v}_2, \mathbf{v}_3, \mathbf{v}_4 \in \mathbb{C}^3$ that are pairwise linearly independent, 
$\mathbf{v}_1\teh, \mathbf{v}_2 \teh, \mathbf{v}_3 \teh, \mathbf{v}_4\teh$ are linearly independent.
Therefore, it cannot be the case that $\mathbf{G}' = c_1 \mathbf{e}_1\teh + c_2 \mathbf{e}_2 \teh + (a, b)\teh = \mathbf{v}\teh$ for some $\mathbf{v} \in \mathbb{C}^3$ whether $\mathbf{v}$ is pairwise linearly independent to the other three or not. 
Also, it cannot be the case that $\mathbf{G} = d_1 \mathbf{e}_1 \teh + d_2 \mathbf{e}_2 \teh$ by a similar reason.
\end{proof}
\begin{proposition}\label{prop:z-sum-hard}
Let $(a, b) \in \mathbb{R}^2$ be a nonzero vector.
Let $\mathbf{F} = \boldsymbol{\beta}\teh + \overline{\boldsymbol{\beta}}\teh$ be a Boolean domain signature where $\boldsymbol{\beta} = \frac{1}{\sqrt{2}}(1, i)\transpose$.
Let $\mathbf{G} = \mathbf{F} + (a, b)\teh$.
Then, $\holbs(\mathbf{F}, \mathbf{G})$ is \sph.
\end{proposition}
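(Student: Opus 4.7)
The plan is to apply \cref{thm:dich-sym-Boolean} to $\mathcal{F} = \{\mathbf{F}, \mathbf{G}\}$, showing that no tractable class can contain both signatures. First, I would compute $\mathbf{F} = \frac{1}{\sqrt{2}}[1, 0, -1, 0]$ and verify that $\mathbf{F}$ is a nondegenerate type $\typeii$ signature of arity $3$; a direct check on the defining recurrence $a' f_k + b' f_{k+1} = a' f_{k+2}$ shows that $\mathbf{F}$ is not of type $\typei(a', b')$ for any $(a', b') \neq (0, 0)$. Since $\mathbf{F}$ has arity $> 2$, the only tractable class in \cref{thm:dich-sym-Boolean} that can contain $\mathbf{F}$ is case $3$ (type $\typeii$).

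Next I would rule out $\mathbf{G}$ being degenerate. If $\mathbf{G} = \mathbf{u}\teh$ for some $\mathbf{u} \in \mathbb{C}^2$, then rearranging $\mathbf{u}\teh = \boldsymbol{\beta}\teh + \overline{\boldsymbol{\beta}}\teh + (a,b)\teh$ yields a nontrivial linear combination of tensor cubes of vectors drawn from $\{\mathbf{u}, \boldsymbol{\beta}, \overline{\boldsymbol{\beta}}, (a, b)\}$ equal to zero. Because $(a, b)$ is a nonzero real vector while $\boldsymbol{\beta}, \overline{\boldsymbol{\beta}}$ are non-real, the triple $\{\boldsymbol{\beta}, \overline{\boldsymbol{\beta}}, (a, b)\}$ is pairwise linearly independent in $\mathbb{C}^2$. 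A case analysis on the proportionality class of $\mathbf{u}$ (proportional to one of these three, or to none of them) always produces a pairwise linearly independent subset of at least three vectors whose tensor cubes satisfy a nontrivial linear relation, contradicting \cref{cor:pairiwse-linearly-independent-tensor} (with $k = 3 \ge r - 1$).

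With $\mathbf{G}$ nondegenerate, \cref{thm:dich-sym-Boolean} forces $\mathbf{G}$ (of arity $3$) to be type $\typeii$, and \cref{prop-boolean-type-ii-tensor-decomposition} supplies $c, d \in \mathbb{C}$ with $\mathbf{G} = c \boldsymbol{\beta}\teh + d \overline{\boldsymbol{\beta}}\teh$. Subtracting the decomposition of $\mathbf{F}$ from both sides of $\mathbf{G} = \mathbf{F} + (a,b)\teh$ yields
\[
  (a, b)\teh = (c - 1)\,\boldsymbol{\beta}\teh + (d - 1)\,\overline{\boldsymbol{\beta}}\teh
\]
in $\mathsf{S}^3(\mathbb{C}^2)$, and \cref{cor:pairiwse-linearly-independent-tensor} applied to the pairwise linearly independent triple $\{(a, b), \boldsymbol{\beta}, \overline{\boldsymbol{\beta}}\}$ rules this out because the coefficient of $(a, b)\teh$ on the left is $1 \neq 0$.

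There is no substantive obstacle beyond the small case analysis in the degeneracy step; the whole argument reduces to the linear independence of tensor cubes of pairwise linearly independent vectors.
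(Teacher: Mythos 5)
Your proposal is correct and is essentially the paper's own argument: the paper proves this by saying the proof is the same as \cref{prop:geneq-sum-hard}, i.e., use \cref{thm:dich-sym-Boolean} to force $\mathbf{G}$ to be either degenerate or of the canonical type-$\typeii$ form $c\boldsymbol{\beta}\teh + d\overline{\boldsymbol{\beta}}\teh$, and then rule out both via the linear independence of tensor cubes of pairwise linearly independent vectors (\cref{cor:pairiwse-linearly-independent-tensor}), exactly as you do. The only cosmetic point is that the pairwise linear independence of $\{\boldsymbol{\beta}, \overline{\boldsymbol{\beta}}, (a,b)\}$ is better justified by noting the entry ratio of $\boldsymbol{\beta}$ is $i \notin \mathbb{R}$ (rather than just ``non-real''), but this is a one-line verification and not a gap.
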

\begin{proof}
  The proof is same as \cref{prop:geneq-sum-hard}.
\end{proof}
\begin{proposition}\label{prop:bg-gr-sum-hard}
Let $\mathbf{F}$ and $\mathbf{G}$ be nondegenerate, real valued, symmetric, ternary signatures such that $\supp \mathbf{F} \subseteq \{\db, \dg\}^*$ and $\supp \mathbf{G} \subseteq \{\dg, \dr\}^*$.
Let $\mathbf{H} = \mathbf{F} + \mathbf{G}$.
Then, $\holts(\mathbf{F}, \mathbf{G}, \mathbf{H})$ is \sph unless $\mathbf{F}, \mathbf{G}$ are both \geneq.
\end{proposition}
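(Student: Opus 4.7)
The plan is to argue the contrapositive: assume $\holts(\mathbf{F},\mathbf{G},\mathbf{H})$ is tractable; I will derive that both $\mathbf{F}$ and $\mathbf{G}$ are \geneq. Since each of $\holts(\mathbf{F})$ and $\holts(\mathbf{G})$ must be tractable, \cref{thm:dich-single-sym-ter-dom3-real} combined with the support hypotheses forces $\mathbf{F}$ to decompose as either (Case~A) $\mathbf{F}=\boldsymbol{\alpha}\teh+\boldsymbol{\beta}\teh$ with $\boldsymbol{\alpha}\perp\boldsymbol{\beta}$ nonzero, or (Case~B) $\mathbf{F}=\boldsymbol{\beta}\teh+\overline{\boldsymbol{\beta}}\teh$ with $\langle\boldsymbol{\beta},\boldsymbol{\beta}\rangle=0$, where in both cases all vectors lie in $\spn\{\mathbf{e}_1,\mathbf{e}_2\}$; an analogous statement (Case A$'$/Case B$'$) holds for $\mathbf{G}$ in $\spn\{\mathbf{e}_2,\mathbf{e}_3\}$. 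Note that $\mathbf{F}$ is \geneq precisely when it is in Case~A with decomposition vectors along the coordinate axes; Case~B is never \geneq.

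The central step is to reduce to the Boolean dichotomy on $\{\db,\dg\}$. Using \cref{lem:geneq-eqbg} (Case~A) or \cref{lem:z-eqbg} after an $xy$-orthogonal normalization (Case~B), $\mathbf{F}$ realizes $(=_{\db,\dg})$. Applying \cref{lem:domain-restriction} gives
\[
  \holbs\bigl(\mathbf{F}\domres{\db,\dg},\mathbf{H}\domres{\db,\dg}\bigr)\le_T\holts(\mathbf{F},\mathbf{G},\mathbf{H}).
\]
Since $\supp\mathbf{G}\subseteq\{\dg,\dr\}^*$, the restriction $\mathbf{G}\domres{\db,\dg}$ is supported only at the all-$\dg$ input, so viewed as a Boolean signature (with $\db\mapsto 0,\dg\mapsto 1$) it equals $g_0\,\mathbf{e}_2\teh$ with $g_0=\mathbf{G}(\dg,\dg,\dg)$. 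Hence $\mathbf{H}\domres{\db,\dg}=\mathbf{F}\domres{\db,\dg}+g_0\,\mathbf{e}_2\teh$. If $g_0\neq 0$, \cref{prop:geneq-sum-hard} (Case~A) or \cref{prop:z-sum-hard} (Case~B) applies: Case~B gives unconditional hardness, while Case~A gives hardness unless $\mathbf{e}_2$ is projectively equal to one of $\boldsymbol{\alpha},\boldsymbol{\beta}$, which together with $\boldsymbol{\alpha}\perp\boldsymbol{\beta}$ forces $\{\boldsymbol{\alpha},\boldsymbol{\beta}\}\sim\{\mathbf{e}_1,\mathbf{e}_2\}$, i.e.\ $\mathbf{F}$ is \geneq. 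The symmetric argument applied to $\mathbf{G}$ and $f_0=\mathbf{F}(\dg,\dg,\dg)$ yields an analogous conclusion. A nondegenerate \geneq signature with support in $\{\db,\dg\}^*$ necessarily has a nonzero $\mathbf{e}_2\teh$-coefficient, so it has $f_0\neq 0$; likewise for $g_0$ when $\mathbf{G}$ is \geneq. Combining, the cases $(g_0\neq 0,\text{$\mathbf{F}$ not \geneq})$ and $(f_0\neq 0,\text{$\mathbf{G}$ not \geneq})$ both yield \#\P-hardness, while the mixed cases $(f_0\neq 0,g_0=0)$ and $(f_0=0,g_0\neq 0)$ are internally inconsistent once the corresponding signature is forced \geneq, again yielding \#\P-hardness.

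The remaining — and main — obstacle is the residual case $f_0=g_0=0$. In this case neither $\mathbf{F}$ nor $\mathbf{G}$ can be \geneq, so I need to show \#\P-hardness directly; my plan is to prove that $\holts(\mathbf{H})$ alone is already \#\P-hard by showing $\mathbf{H}$ has symmetric rank $4$, ruling out both tractable forms in \cref{thm:dich-single-sym-ter-dom3-real}. Writing $\mathbf{H}=\mathbf{v}_1\teh+\mathbf{v}_2\teh+\mathbf{v}_3\teh+\mathbf{v}_4\teh$ with $\{\mathbf{v}_1,\mathbf{v}_2\}$ and $\{\mathbf{v}_3,\mathbf{v}_4\}$ the tensor decompositions of $\mathbf{F}$ and $\mathbf{G}$, the hypotheses ($\mathbf{F}$ not \geneq, hence the vectors of $\mathbf{F}$ each have nonzero $\mathbf{e}_1$-coordinate, and the vectors of $\mathbf{G}$ have nonzero $\mathbf{e}_3$-coordinate because $\mathbf{G}$ is nondegenerate with a rank-2 $\{\dg,\dr\}$-decomposition) make the four vectors pairwise linearly independent, so by \cref{cor:pairiwse-linearly-independent-tensor} the four tensors $\mathbf{v}_i\teh$ are linearly independent in $\mathsf{S}^3(\mathbb{C}^3)$. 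To preclude rank $\le 3$, I would analyze the slice matrices $M_j(i,k):=\mathbf{H}(i,j,k)$ for $j\in\{\dg,\dr\}$: any rank-$3$ decomposition $\mathbf{H}=\sum_{l=1}^{3}\mathbf{x}_l\teh$ would give $M_j=\sum_l x_{l,j}\mathbf{x}_l\mathbf{x}_l^{\mathsf T}$, so the ranks $\rank M_\dg,\rank M_\dr$ together with the values of these matrices (computed directly from $\mathbf{F}+\mathbf{G}$, using the $f_0=g_0=0$ cancellations) pin down strong structural conditions on the $\mathbf{x}_l$. Showing these conditions are unsatisfiable — in each of the four subcases A-A$'$, A-B$'$, B-A$'$, B-B$'$ of the residual case — is the heart of the argument, and is where the pairwise linear independence of the $\mathbf{v}_i$'s (and hence the rigidity guaranteed by \cref{prop:uniqueness-tensor-decomposition} and \cref{cor:pairiwse-linearly-independent-tensor}) is used to turn each candidate rank-$3$ decomposition into a short vanishing combination of pairwise independent rank-$1$ tensors with nonzero coefficients, which must be $0$ by the linear independence of the tensor powers. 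This contradiction yields $\rank(\mathbf{H})=4$, so $\holts(\mathbf{H})$ is \#\P-hard, completing the proof.
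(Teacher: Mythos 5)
Your reduction of the non-residual cases (where $\mathbf{F}(\dg,\dg,\dg)\ne 0$ or $\mathbf{G}(\dg,\dg,\dg)\ne 0$) to \cref{prop:geneq-sum-hard} and \cref{prop:z-sum-hard} via $(=_{\db\dg})$, $(=_{\dg\dr})$ and \cref{lem:domain-restriction} is sound and essentially matches the paper. The genuine gap is the residual case $f_0=g_0=0$, which you yourself call the heart of the argument and leave as a plan rather than a proof. Worse, the mechanism you sketch for it cannot work as stated: a hypothetical decomposition $\mathbf{H}=\sum_{l=1}^{3}\mathbf{x}_l\teh$ together with your four-term decomposition gives a vanishing combination of up to seven rank-one cubes, and \cref{cor:pairiwse-linearly-independent-tensor} only guarantees linear independence of $\mathbf{v}_i\teh$ for at most $k+1=4$ pairwise independent vectors (nor need the $\mathbf{x}_l$ be independent of the $\mathbf{v}_i$), so no contradiction follows from pairwise independence alone. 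Indeed, pairwise independence of the summands never certifies large rank: $\mathbf{e}_1\teh+\mathbf{e}_2\teh+(\mathbf{e}_1+\mathbf{e}_2)\teh+(\mathbf{e}_1-\mathbf{e}_2)\teh$ is a sum of four pairwise independent cubes of rank at most $3$. So your central claim, that $\rank(\mathbf{H})=4$ in every residual subcase (hence that $\holts(\mathbf{H})$ \emph{alone} is \sph), is unsubstantiated; establishing it would require a real apolarity/invariant computation across the subcases A--A$'$, A--B$'$, B--A$'$, B--B$'$, and it is not even clear it is the right statement, since the proposition only needs joint hardness of $\{\mathbf{F},\mathbf{G},\mathbf{H}\}$.

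For comparison, the paper avoids any tensor-rank analysis in the residual case. Writing the entries as in its \cref{fig:f-g-h} with $d=x=0$, it uses the free unaries of $\hols$: for $\mathbf{u}=(p,q,r)$ the binary contraction $\mathbf{H}_{\mathbf{u}}=\langle\mathbf{H},\mathbf{u}\rangle$ is realizable, and by choosing $p,q,r$ one makes $(\mathbf{H}_{\mathbf{u}})\domres{\db,\dg}$ (when $y\ne 0$) or $(\mathbf{H}_{\mathbf{u}})\domres{\dg,\dr}$ (when $c\ne 0$) incompatible with $\mathbf{F}\domres{\db,\dg}$ resp.\ $\mathbf{G}\domres{\dg,\dr}$ under \cref{thm:dich-sym-Boolean}, with a short separate argument for $c=y=0$; hardness thus comes from the pair $(\mathbf{F},\mathbf{H}_{\mathbf{u}})$, not from $\mathbf{H}$ by itself. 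If you want to salvage your route, you would either have to carry out the rank computation in full (e.g.\ via catalecticants/Aronhold-type invariants for each subcase) or, more robustly, switch to an argument that, like the paper's, exploits $\mathbf{F}$ and $\mathbf{G}$ together with $\mathbf{H}$ rather than trying to show $\mathbf{H}$ is intrinsically hard.
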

\begin{proof}
Assume $\holts(\mathbf{F}, \mathbf{G}, \mathbf{H})$ is not \sph.
Then, we may assume that $\holts(\mathbf{F})$ and $\holts(\mathbf{G})$ are tractable.
  We write $\mathbf{F}, \mathbf{G}, \mathbf{H}$ as in \cref{fig:f-g-h}.
\begin{figure}
  \centering
  \begin{subfigure}[b]{0.3\textwidth}
    \centering
  \begin{tikzpicture}[scale=0.6]
  \node at (0, 0) {$a$};
  \node at (-1, -1) {$b$};
  \node at (1, -1) {$0$};
  \node at(-2, -2) {$c$};
  \node at (0, -2) {$0$};
  \node at (2, -2) {$0$};
  \node at (-3, -3) {$d$};
  \node at (-1, -3) {$0$};
  \node at (1, -3) {$0$};
  \node at (3, -3) {$0$};
\end{tikzpicture}
\caption{$\mathbf{F}$}
\end{subfigure}
\hfill
\begin{subfigure}[b]{0.3\textwidth}
  \centering
  \begin{tikzpicture}[scale=0.6]
  \node at (0, 0) {$0$};
  \node at (-1, -1) {$0$};
  \node at (1, -1) {$0$};
  \node at(-2, -2) {$0$};
  \node at (0, -2) {$0$};
  \node at (2, -2) {$0$};
  \node at (-3, -3) {$x$};
  \node at (-1, -3) {$y$};
  \node at (1, -3) {$z$};
  \node at (3, -3) {$w$};
\end{tikzpicture}
\caption{$\mathbf{G}$}
\end{subfigure}
\hfill
\begin{subfigure}[b]{0.35\textwidth}
  \centering
  \begin{tikzpicture}[scale=0.6]
  \node at (0, 0) {$a$};
  \node at (-1, -1) {$b$};
  \node at (1, -1) {$0$};
  \node at(-2, -2) {$c$};
  \node at (0, -2) {$0$};
  \node at (2, -2) {$0$};
  \node at (-3, -3) {$d + x$};
  \node at (-1, -3) {$y$};
  \node at (1, -3) {$z$};
  \node at (3, -3) {$w$};
\end{tikzpicture}
\caption{$\mathbf{H}$}
\end{subfigure}
  \caption{$\mathbf{F}$, $\mathbf{G}$, and $\mathbf{H} = \mathbf{F} + \mathbf{G}$.} \label{fig:f-g-h}
\end{figure}
By \cref{lem:geneq-eqbg} and \cref{lem:z-eqbg}, we may realize $(=_{\db \dg})$ using $\mathbf{F}$ and $(=_{\dg \dr})$ using $\mathbf{G}$, regardless of the types of $\mathbf{F}$ and $\mathbf{G}$.

Suppose $d \ne 0$, then $\holbs(\mathbf{G}\domres{\dg, \dr}, \mathbf{H}\domres{\dg, \dr})$ is \sph by \cref{prop:geneq-sum-hard} and \cref{prop:z-sum-hard} unless $\mathbf{G}$ is \geneq because $\mathbf{H}\domres{\dg, \dr} = d \mathbf{e}_1\teh + \mathbf{G}\domres{\dg, \dr}$.
If $\mathbf{G}$ is \geneq, then $y, z = 0$ and thus it must be the case that $x \ne 0$ because otherwise $\mathbf{G}$ is degenerate.
Then, we may make the same argument on $\holbs(\mathbf{F}\domres{\db, \dg}, \mathbf{H}\domres{\db, \dg})$, which implies that the only way to escape hardness is for $\mathbf{F}$ to be \geneq as well.

Therefore, we may assume that $d, x = 0$.
Then, since $\mathbf{F}$ and $\mathbf{G}$ are assumed to be nondegenerate, $\mathbf{F}$ and $\mathbf{G}$ cannot be \geneq, so $b$ and $c$ cannot both be zero and $y$ and $z$ cannot both be zero.
Consider a unary signature $\mathbf{u} = (p, q, r)$ for $p, q, r$ to be determined later.
We will use the polynomial argument as in \cite{cai_dichotomy_2013}.
Let,
\[
  \mathbf{H}_{\mathbf{u}} = 
  \langle \mathbf{H}, \mathbf{u} \rangle = \begin{bmatrix}
    ap + bq & bp + cq & 0 \\
    bp + cq & cp + yr & yq + zr \\
    0 & yq + zr & zq + wr
  \end{bmatrix} \,.
\]
Regardless of the Boolean domain tractable type of $\mathbf{F}\domres{\db, \dg}$, if $y \ne 0$, we can choose some $p, q, r$ such that $(\mathbf{H}_{\mathbf{u}})\domres{\db, \dg} = [ap + bq, bp + cq, cp + yr]$ is not compatible with $\mathbf{F}\domres{\db, \dg}$ because $bp + cq$ and $cp + yr$ can be made to arbitrary values.
Similarly, if $c \ne 0$, we can choose some $p, q, r$ such that $\holbs(\mathbf{G}\domres{\dg, \dr}, (\mathbf{H}_{\mathbf{u}})\domres{\dg, \dr})$ is \sph.
If $c, y = 0$, then it must be that $b, z \ne 0$.
We have $\mathbf{F}\domres{\db, \dg} = [a, b, 0, 0]$ and $\mathbf{G}\domres{\dg, \dr} = [0, 0, x, y]$.
If $\mathbf{F}\domres{\db, \dg}$ is a $\typei(\alpha, \beta)$ signature, then $\alpha \ne 0$ since $\typei(0, \beta)$ corresponds to \geneq.
However, the recurrence implies $\alpha b + \beta \cdot 0 = \alpha \cdot 0$, so $b = 0$.
This contradicts the assumption that $\mathbf{F}$ is nondegenerate.
If $\mathbf{F}\domres{\db, \dg}$ is a $\typeii$ signature, then we must have $a = 0, b = 0$, again a contradiction.
Therefore, $\holbs(\mathbf{F}\domres{\db, \dg})$ is \sph, contrary to the assumption made in the beginning.
%
%
\end{proof}

\subsubsection{Rank 3 Type \texorpdfstring{\ternarytractgeneq}{A}}
Note that in the following lemmas in \cref{subsec:subspace-signatures}, we do not normalize the individual constants using \cref{lem:geneq-normalization} and \cref{lem:Z-normalization}. This is because the although the normalized signature is \textit{realizable} using $\mathscr{F}$, it may not \textit{belong} to $\mathscr{F}$.
\begin{lemma}\label{lem:dichotomy-subspace-rank-3-geneq}
  Let $\mathscr{F}$ be a subspace of $\symtersig$.
  Suppose $\mathbf{F} = c_1 \mathbf{e}_1\teh + c_2 \mathbf{e}_2\teh + c_3 \mathbf{e}_3 \teh \in \mathscr{F}$ for nonzero $c_1, c_2, c_3 \in \mathbb{R}$.
  Then, $\holts(\mathscr{F})$ is \sph unless every $\mathbf{G} \in \mathscr{F}$ is a \geneq.
\end{lemma}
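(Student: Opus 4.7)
The plan is to take an arbitrary nondegenerate $\mathbf{G} \in \mathscr{F}$ and show that if $\mathbf{G}$ is not a \geneq, then $\holts(\mathscr{F})$ is \sph. Since both $\mathbf{F}$ and $\mathbf{G}$ lie in $\mathscr{F}$, any hardness of $\holts(\mathbf{F},\mathbf{G})$ yields hardness of $\holts(\mathscr{F})$. By \cref{lem:geneq-normalization}, $\mathbf{F}$ realizes $\mathbf{e}_1\teh+\mathbf{e}_2\teh+\mathbf{e}_3\teh$, so \cref{lem:dichotomy-ternary-ternary-rank-3-geneq} applies: $\mathbf{G}$ must either be a \geneq (case~1) or satisfy $c\mathbf{G}=\mathbf{e}_i\teh+\mathbf{v}\teh$ with $\langle\mathbf{e}_i,\mathbf{v}\rangle=0$ (case~2). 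It remains to eliminate case~2 when $\mathbf{G}$ is not a \geneq. Degenerate $\mathbf{G}$ needs no separate treatment because unary signatures are free in $\hols$.

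Suppose $\mathbf{G}$ is in case~2 but not a \geneq. Rescaling inside the subspace and permuting coordinates, we may assume $\mathbf{G}=\mathbf{e}_1\teh+\mathbf{v}\teh$ with $\mathbf{v}=(0,a,b)$ and $ab\ne 0$ (otherwise $\mathbf{v}$ is a scalar multiple of a standard basis vector and $\mathbf{G}$ is a \geneq). Since $\mathscr{F}$ is a linear subspace, the signature
\[
\mathbf{H}\;:=\;\mathbf{F}-c_1\mathbf{G}\;=\;c_2\,\mathbf{e}_2\teh+c_3\,\mathbf{e}_3\teh-c_1\,\mathbf{v}\teh
\]
also lies in $\mathscr{F}$. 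Its support is contained in $\{\dg,\dr\}^*$. Because $\mathbf{e}_2,\mathbf{e}_3,\mathbf{v}$ are pairwise linearly independent, \cref{cor:pairiwse-linearly-independent-tensor} gives that the three tensors $\mathbf{e}_2\teh,\mathbf{e}_3\teh,\mathbf{v}\teh$ are linearly independent; a direct check on the symmetric entries restricted to $\{\dg,\dr\}$ also rules out $\mathbf{H}$ being rank~$1$ (matching ratios of consecutive entries forces $c_2=0$, contradicting our hypothesis). Hence $\mathbf{H}$ is a nondegenerate element of $\mathscr{F}$.

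Now apply \cref{lem:dichotomy-ternary-ternary-rank-3-geneq} to the pair $(\mathbf{F},\mathbf{H})$. If $\mathbf{H}$ is a \geneq then, using $\supp\mathbf{H}\subseteq\{\dg,\dr\}^*$, we have $\mathbf{H}=\alpha_2\,\mathbf{e}_2\teh+\alpha_3\,\mathbf{e}_3\teh$; rewriting our formula for $\mathbf{H}$ yields
\[
c_1\,\mathbf{v}\teh \;=\; (c_2-\alpha_2)\,\mathbf{e}_2\teh+(c_3-\alpha_3)\,\mathbf{e}_3\teh,
\]
which contradicts the linear independence noted above since $c_1\ne 0$. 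In the remaining case, $c\mathbf{H}=\mathbf{e}_j\teh+\mathbf{v}'\teh$ with $\langle\mathbf{e}_j,\mathbf{v}'\rangle=0$. The support constraint excludes $j=1$, and for $j\in\{2,3\}$ the same support constraint forces $\mathbf{v}'$ to lie on the remaining coordinate axis, collapsing $\mathbf{H}$ to a \geneq and reducing us to the previous contradiction.

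The main obstacle is to make productive use of the fact that $\mathscr{F}$ is a linear subspace rather than merely a set: the key observation is that the specific linear combination $\mathbf{F}-c_1\mathbf{G}$ produces an element whose support is concentrated on $\{\dg,\dr\}$, and this support restriction collides sharply with both tractable tensor-decomposition forms available to a second ternary signature compatible with $\mathbf{F}$. All remaining steps are clean linear-independence arguments using \cref{cor:pairiwse-linearly-independent-tensor}, and the three choices of $i\in\{1,2,3\}$ in case~2 for $\mathbf{G}$ are handled symmetrically owing to the symmetry of $\mathbf{F}$'s canonical form in $c_1,c_2,c_3$.
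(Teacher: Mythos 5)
Your handling of nondegenerate $\mathbf{G}$ is correct, and it takes a genuinely different route from the paper. You subtract, forming $\mathbf{H}=\mathbf{F}-c_1\mathbf{G}\in\mathscr{F}$ supported on $\{\dg,\dr\}^*$, check it is nondegenerate, and re-apply the domain-$3$ pair dichotomy \cref{lem:dichotomy-ternary-ternary-rank-3-geneq} to $(\mathbf{F},\mathbf{H})$, finishing with the linear independence of $\mathbf{e}_2\teh,\mathbf{e}_3\teh,\mathbf{v}\teh$ from \cref{cor:pairiwse-linearly-independent-tensor}. The paper instead adds: it takes $\mathbf{F}+\mathbf{G}\in\mathscr{F}$, realizes the binary equality on a two-element subdomain via \cref{lem:geneq-eqbg} and \cref{lem:domain-restriction}, and invokes the Boolean-domain hardness statement \cref{prop:geneq-sum-hard} on the restrictions. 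Both arguments use the subspace structure in the same essential way; yours stays on domain $3$ and leans on tensor uniqueness, the paper's drops to the Boolean domain and needs only the already-proved two-spin proposition, which is slightly lighter machinery and also covers the degenerate case uniformly.

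There is, however, a genuine gap: the sentence ``degenerate $\mathbf{G}$ needs no separate treatment because unary signatures are free in $\hols$'' does not prove the lemma as stated. The conclusion quantifies over \emph{every} $\mathbf{G}\in\mathscr{F}$, and a degenerate element $\mathbf{v}\teh\in\mathscr{F}$ with $\mathbf{v}$ not parallel to a coordinate axis is not a \geneq. Freeness of unaries only says such a signature does not by itself change the complexity; it does not show that it cannot lie in $\mathscr{F}$ when $\holts(\mathscr{F})$ is tractable, which is exactly what the lemma asserts. This matters downstream: in \cref{lem:dichotomy-single-arity-4} the elements of $\mathscr{F}$ are the subsignatures $\langle\mathbf{F}',\mathbf{e}_i\rangle$, which may be degenerate, and the argument there needs them to be \geneq in order to zero out the non-corner entries of $\mathbf{F}'$. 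The paper closes this case explicitly, and the fix is short: if $\mathbf{G}=\mathbf{v}\teh$ with, say, $v_1,v_2\ne 0$, then $\mathbf{F}+\mathbf{G}\in\mathscr{F}$, and restricting to $\{\db,\dg\}$ (using $(=_{\db\dg})$ from \cref{lem:geneq-eqbg} together with \cref{lem:domain-restriction}) yields $c_1\mathbf{e}_1\teh+c_2\mathbf{e}_2\teh+(v_1,v_2)\teh$, which together with $[c_1,0,0,c_2]$ is \sph by \cref{prop:geneq-sum-hard}. Note your subtraction trick does not apply verbatim here, since a degenerate $\mathbf{G}$ need not have the form $\mathbf{e}_i\teh+\mathbf{v}\teh$; you must add this case separately.
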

\begin{proof}
Assume $\holts(\mathscr{F})$ is not \sph.
  Then, we may assume that every $\mathbf{G} \in \mathscr{F}$ is of the tractable form in \cref{lem:dichotomy-ternary-ternary-rank-3-geneq}.
  Without loss of generality, assume $\mathbf{G} = d_1 \mathbf{e}_1 \teh + d_2 (0, a, b)\teh$ for nonzero $a, b, d_1, d_2 \in \mathbb{R}$.
  By assumption, we have $\mathbf{H} = \mathbf{F} + \mathbf{G} = (c_1 + d_1)\mathbf{e}_1\teh + c_2 \mathbf{e}_2\teh + c_3 \mathbf{e}_3 \teh + d_2(0, a, b)\teh \in \mathscr{F}$.
  By \cref{lem:geneq-eqbg}, $\mathbf{F}$ can realize $(=_{\dg \dr})$, and by \cref{lem:domain-restriction}, we may look at the domain restrictions $\mathbf{F}\domres{\db, \dr} = [c_2, 0, 0, c_3]$ and $\mathbf{H}\domres{\db, \dr} = c_2 \mathbf{e}_1\teh + c_3 \mathbf{e}_2 \teh + d_2(a, b)\teh$.
  By \cref{prop:geneq-sum-hard}, $\holbs(\mathbf{F}\domres{\dg, \dr}, \mathbf{H}\domres{\dg, \dr})$ is \sph.
  By a similar argument, if $\mathbf{G}$ is a degenerate signature not a scalar multiple $\mathbf{e}_i \teh$, we may restrict to a subdomain and show hardness. 
\end{proof}

\subsubsection{Rank 3 Type \texorpdfstring{\ternarytractz}{B}}
\begin{lemma}\label{lem:dichotomy-subspace-rank-3-z}
  Let $\mathscr{F}$ be a subspace of $\symtersig$.
  Suppose $\mathbf{F} = c(\boldsymbol{\beta}\teh + \overline{\boldsymbol{\beta}}\teh ) + \lambda \mathbf{e}_3 \teh \in \mathscr{F}$ where $\boldsymbol{\beta} = \frac{1}{\sqrt{2}}(1, i, 0)\transpose$ and nonzero $c, \lambda \in \mathbb{R}$.
  Then, $\holts(\mathscr{F})$ is \sph unless every $\mathbf{G} \in \mathscr{F}$ is such that
  $\mathbf{G}$ is $\db \dg | \dr$ and $\mathbf{G}\domres{\db, \dg} = [x, y, -x, -y]$ for some $x, y \in \mathbb{R}$.
\end{lemma}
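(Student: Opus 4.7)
The plan is to parallel the proof of \cref{lem:dichotomy-subspace-rank-3-geneq}, with \cref{lem:dichotomy-terneray-ternary-rank-3-z} playing the role of \cref{lem:dichotomy-ternary-ternary-rank-3-geneq}, and \cref{prop:z-sum-hard} playing the role of \cref{prop:geneq-sum-hard}. Without loss of generality we rescale so that $c = 1$. Assume $\holts(\mathscr{F})$ is not \sph. For any $\mathbf{G} \in \mathscr{F}$, since $\holts(\mathbf{F}, \mathbf{G})$ is not \sph, \cref{lem:dichotomy-terneray-ternary-rank-3-z} forces $\mathbf{G}$ to be $\db\dg|\dr$ with $\mathbf{G}\domres{\db,\dg}$ compatible with type $\typeii$ signatures. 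Since $\mathbf{G}\domres{\db,\dg}$ has arity $3$, class 3 of the Boolean domain dichotomy \cref{thm:dich-sym-Boolean} pins it down to either being degenerate, of the form $(p, q)\teh$ for some $(p,q) \in \mathbb{R}^2$, or of type $\typeii$, of the form $[x, y, -x, -y]$.

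What remains is to exclude the degenerate-and-nonzero case. A direct computation shows that $\mathbf{F}\domres{\db,\dg}$ is a nonzero scalar multiple of $[1,0,-1,0]$: the $\lambda \mathbf{e}_3\teh$ summand vanishes under restriction to $\{\db, \dg\}$ since $\mathbf{e}_3$ has zero first two coordinates, and only the nondegenerate type $\typeii$ part inherited from $\boldsymbol{\beta}\teh + \overline{\boldsymbol{\beta}}\teh$ survives. Suppose for contradiction that some $\mathbf{G} \in \mathscr{F}$ has $\mathbf{G}\domres{\db,\dg} = (p, q)\teh$ with $(p, q) \ne 0$. Since $\mathscr{F}$ is a subspace, $\mathbf{F} + \mathbf{G} \in \mathscr{F}$, so by the preceding paragraph $(\mathbf{F} + \mathbf{G})\domres{\db, \dg}$ must also be degenerate or type $\typeii$. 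On the other hand, using \cref{lem:Z-normalization} (valid since $\lambda \ne 0$) to realize $\boldsymbol{\beta}\teh + \overline{\boldsymbol{\beta}}\teh$ from $\mathbf{F}$, then \cref{lem:z-eqbg} to realize $(=_{\db,\dg})$, and finally \cref{lem:domain-restriction}, we obtain
\[
\holbs\bigl(\mathbf{F}\domres{\db,\dg},\, (\mathbf{F} + \mathbf{G})\domres{\db,\dg}\bigr) \le_T \holts(\mathscr{F}).
\]
As $(\mathbf{F} + \mathbf{G})\domres{\db,\dg} = \mathbf{F}\domres{\db,\dg} + (p, q)\teh$ with $(p, q) \ne 0$, \cref{prop:z-sum-hard} declares the left-hand side \sph, contradicting the hypothesis. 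Hence every $\mathbf{G} \in \mathscr{F}$ must have $\mathbf{G}\domres{\db,\dg} = [x, y, -x, -y]$.

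The main obstacle is simply the housekeeping needed to invoke \cref{prop:z-sum-hard} cleanly: one must observe that a nonzero real scalar multiple of $\boldsymbol{\beta}\teh + \overline{\boldsymbol{\beta}}\teh$ can be absorbed into $\boldsymbol{\beta}$ via a real cube root, so the hypothesis of \cref{prop:z-sum-hard} is matched up to relabeling and a nonzero rescaling of $(p, q)$, neither of which affects the $\#\P$-hardness conclusion. With that observation in place, the argument is a mechanical parallel of the preceding lemma, so no further conceptual ingredient is needed.
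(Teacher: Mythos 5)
Your argument follows the paper's own route almost exactly: constrain each $\mathbf{G}\in\mathscr{F}$ via \cref{lem:dichotomy-terneray-ternary-rank-3-z}, then use closure of the subspace to form $\mathbf{F}+\mathbf{G}$ and kill the degenerate-on-$\{\db,\dg\}$ possibility with \cref{prop:z-sum-hard} after restricting via \cref{lem:Z-normalization}, \cref{lem:z-eqbg} and \cref{lem:domain-restriction}. The scaling housekeeping is fine (replacing $\mathbf{F}$ by $\tfrac1c\mathbf{F}$ stays inside the subspace, and nonzero scalars on the Boolean-domain pair are harmless for \cref{prop:z-sum-hard}), and the identification of $\mathbf{F}\domres{\db,\dg}$ with a multiple of $[1,0,-1,0]$ is correct.

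There is one genuine slip: \cref{lem:dichotomy-terneray-ternary-rank-3-z} is stated only for \emph{nondegenerate} $\mathbf{G}$, so your opening claim that it ``forces $\mathbf{G}$ to be $\db\dg|\dr$'' for every $\mathbf{G}\in\mathscr{F}$ is false for rank-$1$ elements of the subspace. For example, $\mathbf{G}=(1,1,1)\teh$ is degenerate, so $\holts(\mathbf{F},\mathbf{G})$ is not \sph, yet $\mathbf{G}$ is not $\db\dg|\dr$; your first step therefore does not deliver the $\db\dg|\dr$ half of the lemma's conclusion for such signatures, and your write-up never returns to them. The gap is easily closed, and in fact your own second step closes it: the contradiction argument there only uses that $\mathbf{G}\domres{\db,\dg}=(p,q)\teh$ with $(p,q)\ne 0$, which applies verbatim to any degenerate $\mathbf{G}=\mathbf{v}\teh$ whose first two coordinates are not both zero, and the surviving degenerate signatures are multiples of $\mathbf{e}_3\teh$, which do satisfy the stated form with $x=y=0$. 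This is precisely how the paper handles it, by treating ``$\mathbf{G}$ degenerate'' as an explicit separate case; you should add that sentence rather than quantify the pairwise dichotomy over all of $\mathscr{F}$.
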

\begin{proof}
Assume $\holts(\mathscr{F})$ is not \sph.
  Then, we may assume that every $\mathbf{G} \in \mathscr{F}$ is of the tractable form in \cref{lem:dichotomy-terneray-ternary-rank-3-z}.
  Only cases we need to consider are if $\mathbf{G}$ is degenerate or $\db \dg | \dr$ and degenerate on $\{\db, \dg\}$.
  Suppose $\mathbf{G} = (a, b, 0)\teh + c \mathbf{e}_3 \teh$ for some $a, b, c \in \mathbb{R}$ such that not both of $a, b$ are zero.
  By assumption, $\mathbf{H} = \mathbf{F} + \mathbf{G} \in \mathscr{F}$.
  By \cref{lem:z-eqbg}, $\mathbf{F}$ can realize $(=_{\db \dg})$, and by \cref{lem:domain-restriction}, we may look at $\mathbf{H}\domres{\db, \dg} = c (1, i)\teh + c (1, -i)\teh + (a, b)\teh$.
  By \cref{prop:z-sum-hard}, $\holbs(\mathbf{F}\domres{\dg, \dr}, \mathbf{H}\domres{\dg, \dr})$ is \sph.

  Suppose $\mathbf{G}$ is degenerate, so $\mathbf{G} = (a, b, c)\teh$ for some $a, b, c \in \mathbb{R}$.
  If $a, b = 0$, then $\mathbf{G}$ is of the form in the lemma statement.
  Otherwise, we may apply the same argument as the previous case.
\end{proof}

\subsubsection{Rank 2 Type \texorpdfstring{\ternarytractgeneq}{A}}
\begin{lemma}\label{lem:dichotomy-subspace-rank-2-geneq}
Let $\mathscr{F}$ be a subspace of $\symtersig$ such that all signatures in $\mathscr{F}$ are of rank at most $2$.
  Suppose $\mathbf{F} = c_1 \mathbf{e}_1 \teh + c_2 \mathbf{e}_2\teh \in \mathscr{F}$ for nonzero $c_1, c_2 \in \mathbb{R}$.
  Then, $\holts(\mathscr{F})$ is \sph unless every $\mathbf{G} \in \mathscr{F}$ is $d_1 \mathbf{e}_1 \teh + d_2 \mathbf{e}_2\teh$ for some $d_1, d_2 \in \mathbb{R}$.
\end{lemma}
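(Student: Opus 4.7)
The plan is to fix an arbitrary $\mathbf{G} \in \mathscr{F}$ and show it must take the form $d_1 \mathbf{e}_1\teh + d_2\mathbf{e}_2\teh$. If $\mathbf{G}$ is nondegenerate, I apply \cref{lem:dichotomy-ternary-ternary-rank-2-geneq} to the pair $(\mathbf{F}, \mathbf{G})$, both in $\mathscr{F}$ with $\holts(\mathbf{F}, \mathbf{G}) \le_T \holts(\mathscr{F})$ assumed not \sph, which gives four candidate tractable forms for $\mathbf{G}$. The crucial additional lever is the subspace property: $\mathbf{F} + \lambda \mathbf{G} \in \mathscr{F}$ has symmetric rank at most $2$ for every $\lambda \in \mathbb{R}$. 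I will use this rank constraint, together with \cref{lem:linearly-independent-tensor-rank} and \cref{cor:pairiwse-linearly-independent-tensor}, to eliminate all candidate forms except the desired one.

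The \geneq case is the cleanest: $\mathbf{G} = d_1 \mathbf{e}_1\teh + d_2 \mathbf{e}_2\teh + d_3 \mathbf{e}_3\teh$ gives $\mathbf{F} + \lambda\mathbf{G} = (c_1 + \lambda d_1)\mathbf{e}_1\teh + (c_2 + \lambda d_2)\mathbf{e}_2\teh + \lambda d_3 \mathbf{e}_3\teh$, whose rank equals the number of nonzero coefficients by \cref{lem:linearly-independent-tensor-rank}. Since $c_1, c_2 \ne 0$, requiring rank at most $2$ for all $\lambda \in \mathbb{R}$ forces $d_3 = 0$. For the three remaining cases, $\mathbf{G}$ has the form $\mathbf{u}\teh + \mathbf{v}\teh$ (cases 2, 3) or the complex analogue $(\mathbf{u} + i\mathbf{v})\teh + (\mathbf{u} - i\mathbf{v})\teh$ (case 4) with $\mathbf{u}, \mathbf{v} \in \mathbb{R}^3$ linearly independent. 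I split on whether $\mathbf{u}$ or $\mathbf{v}$ is parallel to a standard basis vector: if so, the analysis collapses back to the \geneq case through \cref{prop:uniqueness-tensor-decomposition}; if not, I aim to show that $\mathbf{F} + \lambda\mathbf{G}$ has rank at least $3$ for some $\lambda$, contradicting the subspace hypothesis.

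The main obstacle I anticipate is the subcase of case 3 (and its analogue for case 4) with $\mathbf{u} = (px, py, u_3)$ and $\mathbf{v} = (qx, qy, v_3)$ where $p, q, u_3, v_3$ are all nonzero and $pq + u_3 v_3 = 0$: here $\mathbf{e}_1, \mathbf{e}_2, \mathbf{u}, \mathbf{v}$ are pairwise linearly independent but do not form a linearly independent family in $\mathbb{R}^3$, so \cref{lem:linearly-independent-tensor-rank} does not directly give a rank-$4$ lower bound on $\mathbf{F} + \lambda\mathbf{G}$. To resolve this, I will exploit the fact that symmetric rank cannot increase under domain restriction: the restriction $(\mathbf{F} + \lambda\mathbf{G})\domres{\db, \dg} = c_1 \mathbf{e}_1\teh + c_2 \mathbf{e}_2\teh + \lambda(p^3 + q^3)(x, y)\teh$ is a binary cubic form that must itself have Waring rank at most $2$. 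A direct discriminant computation on this $\lambda$-parametrized family of binary cubics should show that unless $(x, y)$ is aligned with $\mathbf{e}_1$ or $\mathbf{e}_2$, the discriminant vanishes at a real $\lambda$ without producing a triple root, yielding a binary cubic of Waring rank $3$ and the desired contradiction. With the alignment established, the orthogonality condition together with the linear dependence of $(u_1, u_2)$ and $(v_1, v_2)$ forces $u_3 = v_3 = 0$ and reduces $\mathbf{G}$ to the \geneq case, where $d_3 = 0$ has already been forced, completing the classification.
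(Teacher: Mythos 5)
Your proposal leans almost entirely on the subspace/rank lever (``$\mathbf{F}+\lambda\mathbf{G}\in\mathscr{F}$ must have rank at most $2$ for all $\lambda$''), and that lever is provably too weak to give the lemma. First, you never treat degenerate $\mathbf{G}\in\mathscr{F}$, even though the conclusion quantifies over every element of $\mathscr{F}$. This is not a cosmetic omission: take $\mathbf{F}=\mathbf{e}_1\teh-\mathbf{e}_2\teh$ and $\mathbf{G}=(1,1,0)\teh$. Every element $\alpha\mathbf{F}+\beta\mathbf{G}$ of their span corresponds to the binary cubic $\alpha s^3-\alpha t^3+\beta(s+t)^3$, whose discriminant equals $-108\alpha^2\beta^2-27\alpha^4$, strictly negative for $\alpha\neq 0$; so every member of the pencil has three distinct roots and complex symmetric rank at most $2$ (and rank $1$ when $\alpha=0$). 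Thus $\mathscr{F}=\spn\{\mathbf{F},\mathbf{G}\}$ satisfies all of your constraints, yet $\mathbf{G}$ is not of the allowed form and the lemma asserts \sph{}ness. No rank argument can produce this; you must run a hardness reduction, which is exactly what the paper does: for degenerate and case-2 perturbations it restricts to $\{\db,\dg\}$ and invokes \cref{prop:geneq-sum-hard} (i.e.\ the Boolean dichotomy on $\mathbf{F}\domres{\db,\dg}$ together with $(\mathbf{F}+\mathbf{G})\domres{\db,\dg}$), and for your cases 3 and 4 it rotates in the $xy$-plane so that $\mathbf{G}$ becomes supported on $\{\dg,\dr\}$ and invokes \cref{prop:bg-gr-sum-hard}, falling back on a rank-$3$ contradiction only in the residual \geneq subcase.

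The mechanism you propose for the nondegenerate cases 3/4 also does not work as described. (i) The restriction $(\mathbf{F}+\lambda\mathbf{G})\domres{\db,\dg}=c_1\mathbf{e}_1\teh+c_2\mathbf{e}_2\teh+\lambda(p^3+q^3)(x,y)\teh$ carries no information when $p^3+q^3=0$, which genuinely occurs: $\mathbf{u}=\tfrac{1}{\sqrt2}(1,1,\sqrt2)$, $\mathbf{v}=\tfrac{1}{\sqrt2}(-1,-1,\sqrt2)$ are orthogonal with $q=-p$. In that example \emph{all three} coordinate restrictions of $\mathbf{F}+\lambda\mathbf{G}$ have rank at most $2$ for every real $\lambda$ (the relevant discriminants are of the form $-1728a^6\lambda^4-108c_i^2\lambda^2\le 0$ with equality only at $\lambda=0$), even though the full ternary form $c_1X^3+c_2Y^3+\lambda\bigl(2Z^3+6(aX+aY)^2Z\bigr)$ has three linearly independent first partials and hence rank at least $3$. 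So planar restrictions plus discriminants cannot detect the obstruction; you would need a genuinely three-variable argument (conciseness of the form, as just used) or the paper's route through \cref{prop:bg-gr-sum-hard}. (ii) Your closing step ``alignment of $(x,y)$ with an axis, plus orthogonality, forces $u_3=v_3=0$'' is false: $\mathbf{u}=(1,0,1)$, $\mathbf{v}=(1,0,-1)$ satisfy all case-3 conditions with $(x,y)=(1,0)$ and $u_3,v_3\neq 0$. (iii) Likewise ``if $\mathbf{u}$ or $\mathbf{v}$ is parallel to a standard basis vector the analysis collapses to the \geneq case'' is not right: case 2 of \cref{lem:dichotomy-ternary-ternary-rank-2-geneq} with $\mathbf{G}=d\,\mathbf{e}_3\teh+(a,b,0)\teh$ has an axis-aligned tensor factor but is not \geneq, and needs its own argument (a conciseness/rank argument does succeed there, as does the paper's use of \cref{prop:geneq-sum-hard}). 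In short, the \geneq case of your plan is fine, but the remaining cases need the hardness propositions (or a substantially more careful rank analysis than restrictions and discriminants), and the degenerate case cannot be avoided.
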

\begin{proof}
Assume $\holts(\mathscr{F})$ is not \sph.
  Then, we may assume that every $\mathbf{G} \in \mathscr{F}$ is of the tractable form in \cref{lem:dichotomy-ternary-ternary-rank-2-geneq}.
  Suppose $\mathbf{G} = \mathbf{v}\teh$ for $\mathbf{v} \in \mathbb{R}^3$ such that $\mathbf{v}$ is not a scalar multiple of $\mathbf{e}_1$ or $\mathbf{e}_2$.
  If $\mathbf{e}_1, \mathbf{e}_2, \mathbf{v}$ are linearly independent, then $\mathbf{F} + \mathbf{G}$ has rank $3$ by \cref{lem:linearly-independent-tensor-rank}, contradicting the assumption that $\mathscr{F}$ does not contain a rank $3$ signature.
  Otherwise, $\mathbf{v} = (a, b, 0)$ for nonzero $a, b \in \mathbb{R}$, and we get hardness by \cref{prop:geneq-sum-hard}.

  Suppose $\mathbf{G} = d \mathbf{e}_i \teh + \mathbf{v}\teh$ for some $\mathbf{v} \in \mathbb{R}^3$ such that $\langle \mathbf{e}_i, \mathbf{v} \rangle = 0$ and $\mathbf{v}$ is not a scalar multiple of any $\mathbf{e}_j$.
  If $i = 3$, then we may apply \cref{prop:geneq-sum-hard} on $(\mathbf{F} + \mathbf{G})\domres{\db, \dg}$.
  If $i = 1$, then $\mathbf{H} = \mathbf{F} + t \mathbf{G} = (c_1 + td) \mathbf{e}_1 + c_2 \mathbf{e}_2 + t \mathbf{v}\teh \in \mathscr{F}$, and we may choose a nonzero $t$ such that $c_1 + td \ne 0$.
  By assumption, $\mathbf{v} \not \sim \mathbf{e}_2$ and is orthogonal to $\mathbf{e}_1$, which means $\mathbf{v} = (0, c, d)$ for nonzero $d$.
  In particular, $\mathbf{e}_1, \mathbf{e}_2, \mathbf{v}$ are linearly independent, and thus $\mathbf{H}$ has rank $3$ by \cref{lem:linearly-independent-tensor-rank}, a contradiction.

  Suppose $\mathbf{G}$ is of case 3 or 4 in \cref{lem:dichotomy-ternary-ternary-rank-2-geneq}.
  In both cases, we may apply a $\db \dg | \dr$ orthogonal holographic transformation $T$ such that for $\mathbf{F}' = T \teh \mathbf{F}$ and $\mathbf{G}' = T \teh \mathbf{G}$, we have $\supp \mathbf{F}' \subseteq \{\db, \dg\}^*$ and $\supp \mathbf{G}' \subseteq \{\dg, \dr\}^*$.
The set $T \mathscr{F}$ is also a vector space, so $\mathbf{H}' = \mathbf{F}' + \mathbf{G}' \in T \mathscr{F}$.
  Then, by \cref{prop:bg-gr-sum-hard}, $\holts(\mathbf{F}', \mathbf{G}', \mathbf{H}')$ is \sph unless $\mathbf{F}'$ and $\mathbf{G}'$ are \geneq.
  But in that case, $\mathbf{F}' + t \mathbf{G}' \in T \mathscr{F}$ is a rank $3$ \geneq for some $t \in \mathbb{R}$.
  Since holographic transformation preserves the rank, we get a contradiction since we started by assuming that $\mathscr{F}$ does not contain a rank $3$ signature.
  \end{proof}

\subsubsection{Rank 2 Type \texorpdfstring{\ternarytractz}{B}}
\begin{lemma}\label{lem:dichotomy-subspace-rank-2-z}
  Let $\mathscr{F}$ be a subspace of $\symtersig$ such that all signatures in $\mathscr{F}$ are of rank at most $2$.
  Suppose $\mathbf{F} = \boldsymbol{\beta}\teh + \overline{\boldsymbol{\beta}}\teh \in \mathscr{F}$ where $\boldsymbol{\beta} = \frac{1}{\sqrt{2}}(1, i, 0)\transpose$.
  Then, $\holts(\mathscr{F})$ is \sph unless every $\mathbf{G} \in \mathscr{F}$ is such that
  $\supp \mathbf{G} \subseteq \{\db, \dg\}^*$ and $\mathbf{G}\domres{\db, \dg} = [x, y, -x, -y]$ for some $x, y \in \mathbb{R}$.
\end{lemma}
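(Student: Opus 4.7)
The plan is to follow the template of \cref{lem:dichotomy-subspace-rank-2-geneq}: assume $\holts(\mathscr{F})$ is not \#\P-hard, so every $\mathbf{G} \in \mathscr{F}$ takes one of the three tractable forms of \cref{lem:dichotomy-terneray-ternary-rank-2-z}, and show that each such $\mathbf{G}$ must be supported on $\{\db,\dg\}^*$ with $\mathbf{G}\domres{\db,\dg}$ of type $\typeii$.

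For degenerate $\mathbf{G} = \mathbf{w}\teh$, the rank-$\le 2$ constraint on $\mathbf{F}+\mathbf{G}\in\mathscr{F}$ combined with \cref{lem:linearly-independent-tensor-rank} (noting that $\boldsymbol{\beta},\overline{\boldsymbol{\beta}},\mathbf{w}$ are linearly independent in $\mathbb{C}^3$ whenever $w_3 \neq 0$) forces $w_3 = 0$. Then using $(=_{\db\dg})$ realized from $\mathbf{F}$ via \cref{lem:z-eqbg} together with \cref{lem:domain-restriction}, \cref{prop:z-sum-hard} applied to $\mathbf{F}\domres{\db,\dg}$ and $(\mathbf{F}+\mathbf{G})\domres{\db,\dg}$ forces $(w_1,w_2) = 0$, so $\mathbf{G} = 0$, which vacuously satisfies the target form.

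For nondegenerate $\mathbf{G}$ in cases 2 or 3 of \cref{lem:dichotomy-terneray-ternary-rank-2-z}, the common projection direction $(x,y)$ of $(u_1,u_2),(v_1,v_2)$ in $\mathbb{R}^2$ can be normalized to $x^2+y^2=1$, and I apply the $\db\dg|\dr$ orthogonal transformation $T = \begin{bsmallmatrix}-y & x & 0 \\ x & y & 0 \\ 0 & 0 & 1\end{bsmallmatrix}$, rotating the projection direction to $\mathbf{e}_2$. Then $\mathbf{G}' = T\teh\mathbf{G}$ has $\supp\mathbf{G}' \subseteq \{\dg,\dr\}^*$, while $\mathbf{F}' = T\teh\mathbf{F}$ retains $\supp\mathbf{F}' \subseteq \{\db,\dg\}^*$ with $\mathbf{F}'\domres{\db,\dg}$ still type $\typeii$ (since orthogonal maps preserve this form, by \cref{prop-boolean-type-ii-tensor-decomposition}). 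Since $T\mathscr{F}$ is a vector space, $\mathbf{H}' = \mathbf{F}' + \mathbf{G}' \in T\mathscr{F}$, and \cref{prop:bg-gr-sum-hard} yields \#\P-hardness unless both $\mathbf{F}',\mathbf{G}'$ are \geneq; but any \geneq supported on $\{\db,\dg\}^*$ has decomposition vectors mutually orthogonal in $\spn(\mathbf{e}_1,\mathbf{e}_2)$ and restricts on $\{\db,\dg\}$ to a signature of $\typei(0,\ast)$ form, which is never a nondegenerate type $\typeii$, giving the contradiction. Degenerate subcases where exactly one of $(u_1,u_2),(v_1,v_2)$ vanishes reduce to the degenerate case via the orthogonality $\langle\mathbf{u},\mathbf{v}\rangle=0$.

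For case 1 of \cref{lem:dichotomy-terneray-ternary-rank-2-z}, write $\mathbf{G} = \mathbf{G}_{\db\dg} + c\mathbf{e}_3\teh$ where $\mathbf{G}_{\db\dg}$ is supported on $\{\db,\dg\}^*$; the main obstacle is to show $c = 0$. Assuming $c \neq 0$, any rank-$\le 2$ decomposition $\mathbf{F}+\mathbf{G} = \mathbf{w}_1\teh + \mathbf{w}_2\teh$ must satisfy the vanishing constraints $(\mathbf{F}+\mathbf{G})_{ij\dr} = (\mathbf{F}+\mathbf{G})_{i\dr\dr} = 0$ for $i,j\in\{\db,\dg\}$ together with $(\mathbf{F}+\mathbf{G})_{\dr\dr\dr} = c \neq 0$; a short case analysis on whether $w_{1,3}$ and/or $w_{2,3}$ vanish forces $(\mathbf{F}+\mathbf{G})\domres{\db,\dg}$ to have rank at most $1$. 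However, adding a type-$\typeii$ or degenerate $\mathbf{G}_{\db\dg}$ to the nondegenerate type-$\typeii$ signature $\mathbf{F}\domres{\db,\dg}$ yields a rank-$1$ signature only when the sum is identically zero (by a direct check on the four Boolean-ternary entries). In that remaining subcase $\mathbf{F}+\mathbf{G} = c\mathbf{e}_3\teh \in \mathscr{F}$; combined with $\mathbf{F} \in \mathscr{F}$ this gives $\mathbf{F}+c\mathbf{e}_3\teh \in \mathscr{F}$ of symmetric rank $3$ by \cref{lem:linearly-independent-tensor-rank} applied to the linearly independent $\boldsymbol{\beta},\overline{\boldsymbol{\beta}},\mathbf{e}_3$, contradicting the assumption that $\mathscr{F}$ contains only rank-$\le 2$ signatures. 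Thus $c = 0$, and nondegeneracy of $\mathbf{G}$ then makes $\mathbf{G}\domres{\db,\dg}$ a nondegenerate ternary type-$\typeii$ signature $[x,y,-x,-y]$, completing the proof.
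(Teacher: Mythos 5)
Your proposal is correct, and for the degenerate case and for cases 2--3 of \cref{lem:dichotomy-terneray-ternary-rank-2-z} it is essentially the paper's own argument: rule out a third-coordinate component via the rank-$\le 2$ hypothesis and \cref{lem:linearly-independent-tensor-rank} together with \cref{prop:z-sum-hard}, and for cases 2--3 rotate by a $\db \dg | \dr$ orthogonal matrix so that $\supp \mathbf{F}' \subseteq \{\db,\dg\}^*$ and $\supp \mathbf{G}' \subseteq \{\dg,\dr\}^*$ and invoke \cref{prop:bg-gr-sum-hard}, noting that $\mathbf{F}'$ can never be \geneq. Where you genuinely diverge is case 1 with nonzero $\dr$-corner $c$: the paper notes that such a nondegenerate $\mathbf{G}$ must equal $(a,b,0)\teh + c\,\mathbf{e}_3\teh$ with $(a,b)\ne 0$ and dispatches it in one line by applying \cref{prop:z-sum-hard} to $(\mathbf{F}+\mathbf{G})\domres{\db,\dg} = \mathbf{F}\domres{\db,\dg} + (a,b)\teh$ after realizing $(=_{\db\dg})$ via \cref{lem:z-eqbg}; you instead argue algebraically that any rank-$\le 2$ decomposition of the $\db \dg | \dr$ signature $\mathbf{F}+\mathbf{G}$ with $c \ne 0$ forces its $\{\db,\dg\}$-restriction to have rank $\le 1$, which (for real type $\typeii$ plus type $\typeii$ or degenerate) can only happen if that restriction vanishes, and the vanishing subcase produces the rank-$3$ element $\mathbf{F} + c\,\mathbf{e}_3\teh \in \mathscr{F}$, a contradiction. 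I checked the rank computation and it is sound; it buys you an argument that avoids the hardness gadget for that subcase, at the cost of length, while the paper's route is shorter and uniform with the degenerate case. One small blemish: your aside that the subcase where exactly one of $(u_1,u_2), (v_1,v_2)$ vanishes ``reduces to the degenerate case'' is not right as stated (such a $\mathbf{G}$, e.g.\ $\mathbf{e}_1\teh + \mathbf{e}_3\teh$, is nondegenerate), but it is also unnecessary: your rotation argument already covers it, since the common projection direction is still well defined and \cref{prop:bg-gr-sum-hard} applies because $\mathbf{F}'$ is not \geneq.
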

\begin{proof}
Assume $\holts(\mathscr{F})$ is not \sph.
  Then, we may assume that every $\mathbf{G} \in \mathscr{F}$ is of the tractable form in \cref{lem:dichotomy-terneray-ternary-rank-2-z}.
  Suppose $\mathbf{G} = \mathbf{v}\teh$ for nonzero $\mathbf{v} \in \mathbb{R}^3$.
  Then, if $\mathbf{v} \not \sim \mathbf{e}_3$, then by \cref{prop:z-sum-hard}, $\holts(\mathbf{F}, \mathbf{F} + \mathbf{G})$ is \sph.
  If $\mathbf{v} \sim \mathbf{e}_3$, then $\mathbf{F} + \mathbf{G}$ has rank 3, contrary to the assumption.

  Suppose $\mathbf{G}$ is not degenerate.
  If $\mathbf{G}$ is rank $2$, $\db \dg | \dr$ and compatible with $\typeii$ on the domain $\{\db, \dg\}$, there are two possibilities:
  $\mathbf{G}$ is $(a, b, 0)\teh + c \mathbf{e}_3\teh$ for some $a, b, c \in \mathbb{R}$ or $\supp \mathbf{G} \subseteq \{\db, \dg\}^*$ and $\mathbf{G}\domres{\db, \dg} = [x, y, -x, -y]$ for some $x, y \in \mathbb{R}$. 
  The second case is the form stated in the lemma, while the first case is \sph by \cref{prop:z-sum-hard}.

  For $\mathbf{G}$ of the case 2 and 3 in \cref{lem:dichotomy-terneray-ternary-rank-2-z}, we may use the same argument as in the proof of \cref{lem:dichotomy-subspace-rank-2-geneq} by using \cref{prop:bg-gr-sum-hard} to derive \#\P-hardness.
\end{proof}

\subsubsection{Rank 1}
\begin{lemma}\label{lem:subspace-rank-1}
Let $\mathscr{F}$ be a subspace of $\symtersig$ such that all signatures in $\mathscr{F}$ are of rank at most $1$.
Then, there exists some $\mathbf{v} \in \mathbb{R}^3$ such that $\mathbf{F} = \{\lambda \mathbf{v}\teh : \lambda \in \mathbb{R}\}$.
\end{lemma}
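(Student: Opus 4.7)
The plan is to show that a subspace consisting entirely of rank-at-most-1 symmetric tensors is necessarily contained in a single line through a rank-1 tensor, and the argument is quite direct once we invoke \cref{lem:linearly-independent-tensor-rank}.

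First I would handle the trivial case: if $\mathscr{F} = \{0\}$, take $\mathbf{v} = 0$ and we are done. Otherwise, pick any nonzero $\mathbf{F}_0 \in \mathscr{F}$. By hypothesis $\mathbf{F}_0$ has symmetric rank exactly $1$, so $\mathbf{F}_0 = \mathbf{v}\teh$ for some nonzero $\mathbf{v}$. A priori the tensor decomposition guaranteed by the definition lives in $\mathbb{C}^3$, but since $\mathbf{F}_0$ is real-valued, \cref{prop:real-degenerate-real-vector} lets us take $\mathbf{v} \in \mathbb{R}^3$ after absorbing a real scalar.

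The core step is to show every other element of $\mathscr{F}$ is a real scalar multiple of $\mathbf{v}\teh$. Take an arbitrary $\mathbf{G} \in \mathscr{F}$. If $\mathbf{G} = 0$, then $\mathbf{G} = 0 \cdot \mathbf{v}\teh$. Otherwise, by the same argument as above, $\mathbf{G} = \mathbf{u}\teh$ for some nonzero $\mathbf{u} \in \mathbb{R}^3$. Now consider the tensor $\mathbf{v}\teh + \mathbf{u}\teh$, which lies in $\mathscr{F}$ by the subspace assumption. Suppose for contradiction that $\mathbf{u}$ and $\mathbf{v}$ are linearly independent over $\mathbb{R}$; then they are linearly independent over $\mathbb{C}$ as well, so by \cref{lem:linearly-independent-tensor-rank} (with $s=2$, $k=3$), $\rank(\mathbf{v}\teh + \mathbf{u}\teh) = 2$. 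This contradicts the hypothesis that every element of $\mathscr{F}$ has rank at most $1$. Hence $\mathbf{u}$ and $\mathbf{v}$ are linearly dependent, so $\mathbf{u} = c \mathbf{v}$ for some $c \in \mathbb{R}$, and $\mathbf{G} = \mathbf{u}\teh = c^3 \mathbf{v}\teh \in \{\lambda \mathbf{v}\teh : \lambda \in \mathbb{R}\}$.

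Conversely, $\{\lambda \mathbf{v}\teh : \lambda \in \mathbb{R}\} \subseteq \mathscr{F}$ is immediate since $\mathscr{F}$ is a subspace containing $\mathbf{v}\teh$. Combining both inclusions gives $\mathscr{F} = \{\lambda \mathbf{v}\teh : \lambda \in \mathbb{R}\}$ as required. There is no serious obstacle in this proof; the only subtlety is remembering to invoke \cref{prop:real-degenerate-real-vector} so that the vector $\mathbf{v}$ can be chosen real (which matters because $\mathscr{F} \subseteq \symtersig$ is a real subspace, so scalar multiples must be taken over $\mathbb{R}$), and to note that rank-$1$ vectors $\mathbf{u}, \mathbf{v}$ being linearly dependent as vectors is equivalent to $\mathbf{u}\teh, \mathbf{v}\teh$ being linearly dependent as tensors in the odd-arity case.
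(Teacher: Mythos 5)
Your proof is correct and follows essentially the same route as the paper: pick a nonzero rank-1 element $\mathbf{v}\teh$, and use \cref{lem:linearly-independent-tensor-rank} to show any other rank-1 element with a linearly independent vector would force a rank-2 sum inside $\mathscr{F}$, a contradiction. Your extra care with \cref{prop:real-degenerate-real-vector} and the reverse inclusion are fine additions but do not change the argument.
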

\begin{proof}
  If $\mathscr{F} = \{\mathbf{0}\}$, then we are done.
  Suppose there exists some nonzero $\mathbf{v} \in \mathbb{R}^3$ such that $\mathbf{v}\teh \in \mathscr{F}$.
  If there exists some $\mathbf{u} \in \mathbb{R}^3$ such that $\mathbf{u} \not \sim \mathbf{v}$ and $\mathbf{u}\teh \in \mathscr{F}$, then $\mathbf{u}\teh + \mathbf{v}\teh \in \mathscr{F}$ but $\mathbf{u}\teh + \mathbf{v}\teh$ has rank $2$ by \cref{lem:linearly-independent-tensor-rank}. 
  Therefore, any $\mathbf{u}$ such that $\mathbf{u}\teh \in \mathscr{F}$ must be such that $\mathbf{u} \sim \mathbf{v}$, proving the statement.
\end{proof}

\subsection{A Single Signature of Arity 4}
\begin{lemma}\label{lem:dichotomy-single-arity-4}
Let $\mathbf{F}$ be a nondegenerate, real-valued, symmetric signature of arity $4$.
Then, $\holts(\mathbf{F})$ is computable in polynomial time if there exists some real orthogonal matrix $T$ such that one of the following conditions holds.
Otherwise, it is \sph.
\begin{enumerate}
  \item $T^{\otimes 4} \mathbf{F} = a \mathbf{e}_1^{\otimes 4} + b \mathbf{e}_2^{\otimes 4} + c \mathbf{e}_3^{\otimes 4}$ for some $a, b, c \in \mathbb{R}$.
  \item $T^{\otimes 4}\mathbf{F} = \boldsymbol{\beta}^{\otimes 4} + \overline{\boldsymbol{\beta}}^{\otimes 4} + \lambda \mathbf{e}_3^{\otimes 4}$ where $\boldsymbol{\beta} = \frac{1}{\sqrt{2}}(1, i, 0)\transpose$ for some $\lambda \in \mathbb{R}$.
\end{enumerate}
\end{lemma}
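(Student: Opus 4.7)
Tractability is a direct verification against \cref{thm:dichotomy-set-of-domain-3}: case 1 is a \geneq signature, placing the problem in class \tractE, and case 2 is a $\db \dg | \dr$ signature whose $\{\db,\dg\}$-restriction is the rank-$2$ type $\typeii$ Boolean signature $\boldsymbol{\beta}_0^{\otimes 4} + \overline{\boldsymbol{\beta}_0}^{\otimes 4}$, placing it in class \tractBGR. So the work is all on the hardness side, and I will leverage the subspace dichotomies of \cref{subsec:subspace-signatures}.

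Assume $\holts(\mathbf{F})$ is not \sph. Form the linear subspace $\mathscr{F} = \{\langle \mathbf{F}, \mathbf{u}\rangle : \mathbf{u}\in \mathbb{R}^3\}\subseteq \symtersig$. Since unaries are freely available in $\holts$, every $\mathbf{G}\in\mathscr{F}$ is realizable from $\mathbf{F}$, so $\holts(\mathscr{F})$ is not \sph either; and for a real orthogonal $T$ the subspace transforms as $T\mathscr{F}=\{\langle T^{\otimes 4}\mathbf{F},\mathbf{u}\rangle:\mathbf{u}\in\mathbb{R}^3\}$. By \cref{thm:dich-single-sym-ter-dom3-real}, each $\mathbf{G}\in\mathscr{F}$ is tractable of type \ternarytractgeneq or \ternarytractz, so after choosing $T$ to bring a representative to canonical form, one of \cref{lem:dichotomy-subspace-rank-3-geneq,lem:dichotomy-subspace-rank-3-z,lem:dichotomy-subspace-rank-2-geneq,lem:dichotomy-subspace-rank-2-z,lem:subspace-rank-1} applies, split according to the maximum symmetric rank attained in $\mathscr{F}$.

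In each case the subspace lemma forces every $\langle T^{\otimes 4}\mathbf{F}, \mathbf{u}\rangle$ to vanish outside a specific support set $S \subseteq \{\db,\dg,\dr\}^3$; varying $\mathbf{u}$ over a basis gives $(T^{\otimes 4}\mathbf{F})(x,y,z,w) = 0$ whenever $(x,y,z)\notin S$, and the full symmetry of $\mathbf{F}$ then promotes this to a constraint on all $4$-tuples (every $3$-subset must lie in $S$). Applied to a rank-$3$ \ternarytractgeneq representative, \cref{lem:dichotomy-subspace-rank-3-geneq} makes every element of $T\mathscr{F}$ a \geneq, forcing $T^{\otimes 4}\mathbf{F} = a\mathbf{e}_1^{\otimes 4} + b\mathbf{e}_2^{\otimes 4} + c\mathbf{e}_3^{\otimes 4}$ (case 1). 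Applied to a rank-$3$ \ternarytractz representative, \cref{lem:dichotomy-subspace-rank-3-z} first gives that $T^{\otimes 4}\mathbf{F}$ is $\db\dg|\dr$, so $T^{\otimes 4}\mathbf{F} = H + c'\mathbf{e}_3^{\otimes 4}$ with $\supp H\subseteq\{\db,\dg\}^4$; then the two pointwise contractions with $\mathbf{e}_\db$ and $\mathbf{e}_\dg$ both being arity-$3$ Boolean type $\typeii$ force the recurrence $h_k = -h_{k+2}$ for $k=0,1,2$ on the Boolean arity-$4$ signature $H\domres{\db,\dg}$, making it itself type $\typeii$; \cref{prop-boolean-type-ii-tensor-decomposition} together with a further $2\times 2$ orthogonal transformation on the $\{\db,\dg\}$-block (extended by identity on $\dr$) aligns $H$ into a scalar multiple of $\boldsymbol{\beta}^{\otimes 4} + \overline{\boldsymbol{\beta}}^{\otimes 4}$ without touching the $\mathbf{e}_3^{\otimes 4}$ piece, giving case 2 after rescaling. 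The two max-rank-$2$ subcases are handled identically via \cref{lem:dichotomy-subspace-rank-2-geneq} or \cref{lem:dichotomy-subspace-rank-2-z} and yield case 1 with $c = 0$ or case 2 with $\lambda = 0$. Finally, the max-rank-$\le 1$ possibility forces $\mathscr{F} = \mathbb{R}\mathbf{v}\teh$ by \cref{lem:subspace-rank-1}, whence every slice of $\mathbf{F}$ is a multiple of $\mathbf{v}\teh$, so $\mathbf{F}$ factors as $\mathbf{v}^{\otimes 3}\otimes\mathbf{w}$ with $\mathbf{w}\sim \mathbf{v}$ by symmetry, contradicting nondegeneracy.

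\textbf{Main obstacle.} The subtlety is concentrated in the rank-$3$ \ternarytractz branch: from the family of ternary-support conditions produced by \cref{lem:dichotomy-subspace-rank-3-z} I must (i) deduce that $T^{\otimes 4}\mathbf{F}$ itself is $\db\dg|\dr$ by the 3-subset argument, (ii) extract the Boolean arity-$4$ type-$\typeii$ structure of $H\domres{\db,\dg}$ from the pair of arity-$3$ $\typeii$ contractions, and (iii) carry out the final $2\times 2$ orthogonal normalization on the $\{\db,\dg\}$-block while keeping the $\mathbf{e}_3^{\otimes 4}$ component intact, which is why I extend this normalization by the identity on $\dr$ rather than by a general orthogonal map.
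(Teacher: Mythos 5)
Your proposal is correct and follows essentially the same route as the paper: reduce to the contraction subspace $\mathscr{F}$, invoke the subspace dichotomy lemmas of \cref{subsec:subspace-signatures} according to the maximal rank present, and translate the resulting support/recurrence constraints on the ternary contractions into the two canonical arity-$4$ forms. Your ``every $3$-subset must lie in $S$'' symmetry argument and the clean tensor-factorization handling of the rank-$\le 1$ case are just streamlined versions of the paper's explicit triangle-entry bookkeeping, so there is no substantive difference in approach.
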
 
\begin{proof}
  The tractability is immediate.

Assume $\holts(\mathbf{F})$ is not \sph.
Let $\mathscr{F} = \{\langle \mathbf{F}, \mathbf{u} \rangle : \mathbf{u} \in \mathbb{R}^3\}$.
Then, $\mathscr{F}$ is a vector space and $\holts(\mathscr{F}) \le_T \holts(\mathbf{F})$, so we may apply the lemmas that we have proven in \cref{subsec:subspace-signatures}.
We may assume that for any $\mathbf{G} \in \mathscr{F}$, $\holts(\mathbf{G})$ is tractable.
Fix any $\mathbf{G} \in \mathscr{F}$.
We may apply a real orthogonal holographic transformation $T$ such that $\mathbf{G}' = T\teh \mathbf{G}$ is of the canonical form.
Note that $\mathbf{G}' \in \mathscr{F}' = T \mathscr{F}$, and $\mathscr{F}'$ is a subspace of $\symtersig$.

Let $\mathbf{F}' = T^{\otimes 4} \mathbf{F}$.
We will look at the following representation of $\mathbf{F}'$.
\begin{figure}
\centering
\begin{minipage}[t]{0.5\linewidth}
\centering
  \begin{tikzpicture}[scale=0.5, every node/.style={scale=1}]
  \node at (0, 0) {$a$};
  \node at (-1, -1) {$b$};
  \node at (1, -1) {$c$};
  \node at(-2, -2) {$d$};
  \node at (0, -2) {$e$};
  \node at (2, -2) {$f$};
  \node at (-3, -3) {$g$};
  \node at (-1, -3) {$h$};
  \node at (1, -3) {$j$};
  \node at (3, -3) {$k$};
  \node at (-4, -4) {$l$};
  \node at (-2, -4) {$m$};
  \node at (0, -4) {$n$};
  \node at (2, -4) {$o$};
  \node at (4, -4) {$p$};

  \draw[dashed] (-1, 0) -- (3.5, -4.5) -- (-5.5, -4.5) -- cycle;

  \draw (0, 1.4) -- (6.1, -4.7) -- (-6.1, -4.7) -- cycle;

\end{tikzpicture}
\caption{$\mathbf{F}'$ and $\langle \mathbf{F}', \mathbf{e}_2 \rangle$} \label{fig:arity-4-signature}
\end{minipage}
\hfill
\begin{minipage}[t]{0.4\linewidth}
\centering
  \begin{tikzpicture}[scale=0.5, every node/.style={scale=0.7}]
  \node at (0, 0) {$*$};
  \node at (-1, -1) {$*$};
  \node at (1, -1) {$0$};
  \node at(-2, -2) {$*$};
  \node at (0, -2) {$0$};
  \node at (2, -2) {$0$};
  \node at (-3, -3) {$*$};
  \node at (-1, -3) {$0$};
  \node at (1, -3) {$0$};
  \node at (3, -3) {$*$};
\end{tikzpicture}
\caption{A generic $\db \dg | \dr$ signature} \label{fig:generic-bg-r}
\end{minipage}

\end{figure}
We categorize the entries in  \cref{fig:arity-4-signature} in the following way.
A \textit{corner entry} is an entry at the corner of the triangle: $a, l, p$.
An \textit{outer entry} is an entry at the perimeter of the triangle: 
$a,b,d,g,l,m,n,o,p,k,f,c$.
An \textit{inner entry} is an entry at the inside of the triangle: $e, h , j$.
The entries of a subsignature correspond to a smaller 
triangle. 
For example, the signature $\langle \mathbf{F}', \mathbf{e}_2 \rangle$ is  
the  triangle given by  the dashed lines in \cref{fig:arity-4-signature}. 
Note that $\langle \mathbf{F}', \mathbf{e}_i \rangle \in \mathscr{F}'$ for all $1 \le i \le 3$.

Suppose $\mathbf{G}'$ is rank $3$ type \ternarytractgeneq.
Then, by \cref{lem:dichotomy-subspace-rank-3-geneq}, $\mathscr{F}'$ only contains \geneq signatures.
We claim that $\mathbf{F}'$ must be a \geneq.
The idea is simple. 
Note that a \geneq signature has nonzero value at only the corner entries.
The three subsignatures, $\langle \mathbf{F}', \mathbf{e}_i \rangle$ for $1 \le i \le 3$ must all be a \geneq.
For that to be possible, all internal entries of $\mathbf{F}'$ must be $0$, since otherwise, we have some subsignature with a nonzero value at a non-corner entry.
Similarly, all outer entries of $\mathbf{F}'$ except the corner must be $0$ as well.
Therefore, $\mathbf{F}'$ is a \geneq.

Suppose $\mathbf{G}'$ is rank $3$ type \ternarytractz.
Then, by \cref{lem:dichotomy-subspace-rank-3-z}, $\mathscr{F}'$ only contains $\db \dg | \dr$ signatures with $\typeii$ on the domain $\{\db, \dg\}$.
If an internal entry is nonzero, then there is some subsignature that is not $\db \dg | \dr$ (see \cref{fig:generic-bg-r}).
Also, $\mathbf{F}'$ cannot have nonzero entries at the outer entries $c, f, k, m, n, o$, because that means there is a subsignature that is not $\db \dg | \dr$.
Therefore, all nonzero entries must be at the domain $\{\db, \dg\}$ part and $p$.
Further, $[a, b, d, g, l]$ must satisfy the type $\typeii$ recurrence.
This is because $\langle \mathbf{F}', \mathbf{e}_1 \rangle \in \mathscr{F}'$,
and thus $(\langle \mathbf{F}', \mathbf{e}_1 \rangle)\domres{\db, \dg} = [a, b, d, g]$ satisfies $a = -d$ and $b = -g$, and similarly, $[b, d, g, f]$ satisfies $b = -g$ and $d = -f$.

Now, suppose $\mathscr{F}'$ does not contain any signature of rank $3$.
If there is a signature of rank $2$, we may apply the same arguments by using \cref{lem:dichotomy-subspace-rank-2-geneq} and \cref{lem:dichotomy-subspace-rank-2-z}.
The only difference is that $p$ must be $0$.

Suppose $\mathscr{F}'$ only contains rank $1$ signatures.
Then, by \cref{lem:subspace-rank-1}, $\mathscr{F}' = \{\lambda \mathbf{v}\teh : \lambda \in \mathbb{R}\}$ for some $\mathbf{v} \in \mathbb{R}^3$.
We claim that $\mathbf{F}' = c \mathbf{v}^{\otimes 4}$ for some $c$, so degenerate.
Suppose $\langle \mathbf{F}', \mathbf{e}_i \rangle = \lambda_i \mathbf{v} \teh$.
Write $\mathbf{v} = (x, y, z)$.
If the corner entries are all $0$, then since each of the subsignatures are degenerate, we must have $\mathbf{F}' = 0$.
This is because, for example, $a = 0$ and $\langle \mathbf{F}', \mathbf{e}_1 \rangle$ being degenerate imply that $a, b, c, d, e, f = 0$.
So, the size $3$ subtriangles at the corners are all $0$, making all the entries of $\mathbf{F}'$ to be $0$.
Therefore, without loss of generality, assume that $a \ne 0$, which implies $\lambda_1, x \ne 0$.
If $b = 0$, then the triangle with corners $b, g, j$ must be all $0$ since $\langle \mathbf{F}', \mathbf{e}_2 \rangle$ is degenerate.
Then, it cannot be the case that $l \ne 0$ because $g = \lambda_1 y^3 = 0$ implies $y = 0$ so $l = \lambda_2 y^3 = 0$.
Therefore, $m, n, o = 0$ as well.
If $c = 0$, then, by the same argument, we must have $c, f, k, p = 0$, so the only nonzero entry is at $a$, and thus $\mathbf{F}'$ is degenerate.
If $c \ne 0$, then $\lambda_3, z \ne 0$ and $\lambda_3 x^3 = \lambda_1 x^2 z$.
So, $\lambda_3 = \lambda_1 z/x$.
We can see that $\mathbf{F}' = \frac{\lambda_1}{x} (x,0, z)^{\otimes 4}$ gives consistent result since $p = \lambda_3 z^3 = \lambda_1 z^4/x$ from $\langle \mathbf{F}', \mathbf{e}_3 \rangle = \lambda_3 \mathbf{v}\teh$.

Suppose $b \ne 0$.
Then, $\lambda_2, y \ne 0$ and $\lambda_2 x^3 = \lambda_1 x^2 y$, so $\lambda_2 = \lambda_1 y/x$.
If $c = 0$, we get the case symmetric as the above when we considered $b = 0$ and $c \ne 0$.
If $c \ne 0$, again we have $\lambda_3 = \lambda_1 z/x$.
Then, we have $\mathbf{F}' = \frac{\lambda_1}{x}(x, y, z)^{\otimes 4}$.
We can check the bottom row gives consistent result by the same calculation as above.
\end{proof}

\subsection{Arbitrary Arity}\label{subsec:arbitrary-arity}
\begin{proposition}\label{prop:geneq-get-all-arity}
  Let $\mathbf{v}_1, \mathbf{v}_2, \mathbf{v}_3 \in \mathbb{R}^3$ be pairwise orthogonal vectors.
  Let $\mathbf{F}_k = \mathbf{v}_1^{\otimes k} + \mathbf{v}_2^{\otimes k}  + \mathbf{v}_3^{\otimes k}$.
  Then, for any $k \ge 3$ and any set of signatures $\mathscr{F}$,
  \[
    \holts(\mathscr{F} \cup \{\mathbf{F}_k\}) =_T \holts(\mathscr{F} \cup \{\mathbf{F}_i : i \ge 3\}) \, .
  \]
\end{proposition}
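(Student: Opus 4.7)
The reduction $\holts(\mathscr{F} \cup \{\mathbf{F}_k\}) \le_T \holts(\mathscr{F} \cup \{\mathbf{F}_i : i \ge 3\})$ is immediate because $\mathbf{F}_k$ appears in the larger signature set. The content lies in the opposite direction, which amounts to realizing every $\mathbf{F}_n$ with $n \ge 3$ as a gadget built from $\mathbf{F}_k$ and the unary signatures available in $\hols$.

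My plan is first to normalize via an orthogonal transformation. Since $\mathbf{v}_1, \mathbf{v}_2, \mathbf{v}_3$ are pairwise orthogonal in $\mathbb{R}^3$, there is a real orthogonal matrix $T$ and scalars $\lambda_1, \lambda_2, \lambda_3 \in \mathbb{R}$ with $T\mathbf{v}_i = \lambda_i \mathbf{e}_i$ (extending the nonzero normalized $\mathbf{v}_i$'s to an orthonormal basis of $\mathbb{R}^3$). Since orthogonal holographic transformations preserve Holant complexity, we may assume $\mathbf{v}_i = \lambda_i \mathbf{e}_i$ throughout, so that $\mathbf{F}_n = \sum_i \lambda_i^n \mathbf{e}_i^{\otimes n}$ for every $n$.

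Next, for each $m \ge 1$, I would build a chain gadget consisting of $m$ copies of $\mathbf{F}_k$, each adjacent pair sharing exactly one edge. This gadget has arity $N := m(k-2) + 2$, and because $\langle \mathbf{e}_i, \mathbf{e}_j \rangle = 0$ for $i \ne j$, only the diagonal terms $i_1 = \cdots = i_m = i$ survive in the sum, yielding
\[
  \mathbf{H}_m \;=\; \sum_i \|\mathbf{v}_i\|^{2(m-1)}\, \mathbf{v}_i^{\otimes N} \;=\; \sum_i \lambda_i^{mk}\, \mathbf{e}_i^{\otimes N}.
\]
Given a target arity $n \ge 3$, choose $m$ large enough that $N > n$ (possible since $k - 2 \ge 1$), and attach $N - n$ unary signatures $\mathbf{u}^{(1)}, \dots, \mathbf{u}^{(N-n)}$ to $\mathbf{H}_m$. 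The resulting signature has arity $n$ and equals $\sum_i \lambda_i^{mk} \bigl(\prod_j u_i^{(j)}\bigr) \mathbf{e}_i^{\otimes n}$. Taking $\mathbf{u}^{(j)} = (1,1,1)$ for $j \ge 2$, and $u_i^{(1)} = \lambda_i^{\,n - mk}$ for each $i$ with $\lambda_i \ne 0$ (with an arbitrary value such as $1$ otherwise), the coefficient in front of $\mathbf{e}_i^{\otimes n}$ becomes $\lambda_i^n$ exactly, recovering $\mathbf{F}_n$.

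The proof is essentially a direct computation, so there is no major obstacle. The only subtlety is the case $\lambda_i = 0$: negative powers of $\lambda_i$ would be undefined, but in that direction both $\mathbf{H}_m$ and $\mathbf{F}_n$ already have coefficient zero, so the corresponding entry of $\mathbf{u}^{(1)}$ is unconstrained. The inequality $N > n$ (rather than $N \ge n$) is needed precisely so that at least one unary is available to rescale the coefficients; without that extra edge one would be stuck with $\lambda_i^{mk}$ in place of $\lambda_i^n$.
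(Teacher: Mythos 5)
Your proof is correct and uses essentially the same ingredients as the paper's: connecting copies of the signature along edges (orthogonality keeps the tensor decomposition diagonal) and attaching unaries chosen to fix the arity and scalar coefficients. The paper merely orders the steps differently — it first trims $\mathbf{F}_k$ down to $\mathbf{F}_3$ with $k-3$ copies of a dual unary and then builds all higher arities from $\mathbf{F}_3$, whereas you chain copies of $\mathbf{F}_k$ to overshoot the arity and then trim — a cosmetic difference.
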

\begin{proof}
  Suppose all $\mathbf{v}_i$ are nonzero.
  Then, since they are pairwise orthogonal, they are linearly independent as well, so 
  for any $c_1, c_2, c_3 \in \mathbb{R}$,
  there exists some $\mathbf{u} \in \mathbb{R}^3$ such that $\langle \mathbf{v}_i, \mathbf{u} \rangle = c_i$ for all $i$.
  We may connect $k - 3$ copies of $\mathbf{u}$ to $\mathbf{F}_k$ to obtain $\mathbf{F}_3$, so $\holts(\mathscr{F} \cup \{\mathbf{F}_3\}) \le_T \holts(\mathscr{F} \cup \{\mathbf{F}_k\})$.
  By the construction in \cref{lem:geneq-normalization}, $\mathbf{F}_3$ can be used to obtain $\mathbf{F}_i$ for all $i \ge 3$, proving the claim.

  A similar argument works also when $\mathbf{v}_i = 0$ for some $i$.
\end{proof}

\begin{proposition}\label{prop:z-get-all-arity}
  Let $\boldsymbol{\beta} = \frac{1}{\sqrt{2}}(1, i, 0)\transpose$ and $\lambda \in \mathbb{R}$.
  Let $\mathbf{F}_k = \boldsymbol{\beta}^{\otimes k} + \overline{\boldsymbol{\beta}}^{\otimes k} + \lambda \mathbf{e}_3^{\otimes k}$.
  Then, for any $k \ge 3$ and any set of signatures $\mathscr{F}$,
  \[
    \holts(\mathscr{F} \cup \{\mathbf{F}_k\}) =_T \holts(\mathscr{F} \cup \{\mathbf{F}_i : i \ge 3\}) \, .
  \]
\end{proposition}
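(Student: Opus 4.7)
The direction $\holts(\mathscr{F} \cup \{\mathbf{F}_k\}) \le_T \holts(\mathscr{F} \cup \{\mathbf{F}_i : i \ge 3\})$ is trivial. For the converse, I will show that using only $\mathbf{F}_k$ and unary signatures, one can realize both $\mathbf{F}_{k-1}$ (arity decrement) and $\mathbf{F}_{2k-2}$ (arity doubling); iterating these two operations then reaches every $\mathbf{F}_i$ with $i \ge 3$. The overall structure parallels \cref{prop:geneq-get-all-arity}, but the construction is more delicate because $\boldsymbol{\beta}$ is complex and self-isotropic ($\langle \boldsymbol{\beta}, \boldsymbol{\beta}\rangle = 0$) while the cross-pairing $\langle \boldsymbol{\beta}, \overline{\boldsymbol{\beta}}\rangle = 1$ does not vanish, so the chosen binary in the middle of the doubling gadget cannot be taken to be a scaling of identity.

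Throughout, write $\mathbf{F}_k = \sum_{i=1}^3 a_i \mathbf{w}_i^{\otimes k}$ with $(\mathbf{w}_1, \mathbf{w}_2, \mathbf{w}_3) = (\boldsymbol{\beta}, \overline{\boldsymbol{\beta}}, \mathbf{e}_3)$ and $(a_1, a_2, a_3) = (1, 1, \lambda)$. For arity decrement, attach the real unary $\mathbf{u}_0 = (\sqrt{2}, 0, 1)\transpose$ to one edge: since $\langle \mathbf{w}_i, \mathbf{u}_0\rangle = 1$ for each $i$, the resulting signature is exactly $\mathbf{F}_{k-1}$. For arity doubling, form the gadget $\mathbf{F}_k - B - \mathbf{F}_k$ by connecting one edge of each copy through a middle degree-$2$ binary $B$; this gadget has arity $2(k-1)$ and its signature equals
\[
    \sum_{i, j} a_i a_j (\mathbf{w}_i\transpose B \mathbf{w}_j) \, \mathbf{w}_i^{\otimes(k-1)} \otimes \mathbf{w}_j^{\otimes(k-1)}.
\]
Taking $B = \mathrm{diag}(1, -1, 1/\lambda)$ when $\lambda \ne 0$ and $B = \mathrm{diag}(1, -1, 0)$ when $\lambda = 0$, a direct calculation (using $B\boldsymbol{\beta} = \overline{\boldsymbol{\beta}}$, $\langle \boldsymbol{\beta}, \boldsymbol{\beta}\rangle = 0$, $\langle \boldsymbol{\beta}, \overline{\boldsymbol{\beta}}\rangle = 1$, and $\boldsymbol{\beta} \perp \mathbf{e}_3$) gives $\mathbf{w}_i\transpose B \mathbf{w}_j = \delta_{ij}/a_i$ (with $a_3^2/a_3 = a_3$ for $\lambda = 0$ interpreted as $0$); all cross terms vanish and the gadget collapses to the symmetric tensor $\sum_i a_i \mathbf{w}_i^{\otimes 2(k-1)} = \mathbf{F}_{2(k-1)}$.

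It remains to realize $B$ itself. Attaching $k - 2$ copies of the real unary $\mathbf{u}_r = (\sqrt{2}, 0, r)\transpose$ to $\mathbf{F}_k$ yields $\boldsymbol{\beta}\tew + \overline{\boldsymbol{\beta}}\tew + \lambda r^{k-2} \mathbf{e}_3\tew = \mathrm{diag}(1, -1, \lambda r^{k-2})$; for $\lambda \ne 0$ pick $r$ as the positive real $(k-2)$-th root of $1/\lambda^2$ (which exists since $1/\lambda^2 > 0$), giving $\lambda r^{k-2} = 1/\lambda$, and for $\lambda = 0$ take $r = 0$. Iterating the doubling gadget $n$ times from $\mathbf{F}_k$ produces $\mathbf{F}_{2^n(k-2)+2}$, eventually exceeding any target arity $i \ge 3$, after which arity decrement finishes the task. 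The main obstacle is verifying the off-diagonal cancellation $\mathbf{w}_i\transpose B \mathbf{w}_j = 0$ for $i \ne j$, which is what makes the gadget a fully symmetric tensor rather than one merely symmetric within each half (a naive direct edge contraction would give the halves-symmetric $\boldsymbol{\beta}^{\otimes(k-1)} \otimes \overline{\boldsymbol{\beta}}^{\otimes(k-1)} + \overline{\boldsymbol{\beta}}^{\otimes(k-1)} \otimes \boldsymbol{\beta}^{\otimes(k-1)}$ term); this cancellation is forced by the nonstandard entry $b_{22} = -1$ in $B$, so that $\boldsymbol{\beta}\transpose B \overline{\boldsymbol{\beta}} = (b_{11} + b_{22})/2 = 0$ despite the non-vanishing cross-pairing of $\boldsymbol{\beta}$ with $\overline{\boldsymbol{\beta}}$.
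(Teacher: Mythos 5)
Your proof is correct: the contraction unary $(\sqrt{2},0,1)\transpose$ does give $\mathbf{F}_{k-1}$ exactly, the binary $B=\mathrm{diag}(1,-1,1/\lambda)$ (resp.\ $\mathrm{diag}(1,-1,0)$) is realizable from $\mathbf{F}_k$ by unary contractions since $\boldsymbol{\beta}\tew+\overline{\boldsymbol{\beta}}\tew=\mathrm{diag}(1,-1,0)$, and the bilinear forms $\mathbf{w}_i\transpose B\mathbf{w}_j$ indeed vanish off the diagonal, so the two-copy join yields exactly $\mathbf{F}_{2(k-1)}$ and doubling plus decrement reaches every arity $\ge 3$. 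This is essentially the paper's construction --- the chain gadget of \cref{lem:Z-normalization} that the paper invokes is precisely two copies of the signature joined through the binary $\langle \mathbf{F},\mathbf{u}\rangle$, verified there via a holographic transformation --- with your only (harmless) variations being that you tune the middle binary so that the coefficient $\lambda$ is preserved exactly rather than correcting a $\lambda^3$ afterwards, and you double the arity instead of incrementing it by one.
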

\begin{proof}
  We choose $\mathbf{u} = (1, 0, 1)$ to realize $\mathbf{F}_3$ from $\mathbf{F}_k$.
  Then, we use the construction in \cref{lem:Z-normalization} to realize $\mathbf{F}_i$ for all $i \ge 3$.
\end{proof}

We state a meta claim about the dichotomy theorems of two signatures of same arity.
\begin{claim}\label{lem:meta-push-arity}
  Let $\mathbf{F}, \mathbf{G}$ be a pair of real-valued symmetric arity $4$ signatures.
  Then, the dichotomy statement of $\holts(\mathbf{F}, \mathbf{G})$ is same as that of two ternary signatures in \cref{lem:dichotomy-ternary-ternary-rank-2-geneq,lem:dichotomy-ternary-ternary-rank-3-geneq,lem:dichotomy-terneray-ternary-rank-2-z,lem:dichotomy-terneray-ternary-rank-3-z}, after changing the tensor powers from $3$ to $4$.
\end{claim}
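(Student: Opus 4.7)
The plan is to reduce the arity-4 dichotomy to the ternary dichotomy proved in \cref{lem:dichotomy-ternary-ternary-rank-2-geneq,lem:dichotomy-ternary-ternary-rank-3-geneq,lem:dichotomy-terneray-ternary-rank-2-z,lem:dichotomy-terneray-ternary-rank-3-z}, exploiting the fact that connecting a unary signature to an arity-4 signature produces an arity-3 signature whose tensor decomposition uses the \emph{same} vectors (only rescaled). Tractability will follow directly: each of the four target arity-4 forms fits verbatim into the classes \tractbinary--\tractBGGRBR of \cref{thm:dichotomy-set-of-domain-3} by exactly the same verification carried out for the ternary versions in the previous section, since that verification only uses orthogonality relations and supports of the tensor-decomposition vectors, neither of which depends on the tensor power.

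For the hardness direction, assume $\holts(\mathbf{F}, \mathbf{G})$ is not \sph. Then $\holts(\mathbf{F})$ is not \sph, so \cref{lem:dichotomy-single-arity-4} gives a real orthogonal $T$ and $\mathbf{F}' = T^{\otimes 4}\mathbf{F}$ is either $a\mathbf{e}_1^{\otimes 4} + b\mathbf{e}_2^{\otimes 4} + c\mathbf{e}_3^{\otimes 4}$ or $\boldsymbol{\beta}^{\otimes 4}+\overline{\boldsymbol{\beta}}^{\otimes 4}+\lambda\mathbf{e}_3^{\otimes 4}$. After replacing $\mathbf{F},\mathbf{G}$ by $\mathbf{F}', \mathbf{G}' = T^{\otimes 4}\mathbf{G}$ (preserving tractability), I invoke \cref{prop:geneq-get-all-arity} or \cref{prop:z-get-all-arity} to realize from $\mathbf{F}'$ the arity-3 signature $\mathbf{F}_3$ of the corresponding canonical ternary form, matching the hypothesis of the relevant ternary dichotomy lemma. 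For $\mathbf{G}'$, connecting a unary $\mathbf{u}\in\mathbb{R}^3$ yields $\langle \mathbf{G}', \mathbf{u}\rangle$, a real-valued symmetric ternary signature whose decomposition vectors coincide with those of $\mathbf{G}'$ (scaled by $\langle \mathbf{v}_i, \mathbf{u}\rangle$), so the set of vectors, their orthogonality structure, and the support pattern are preserved for generic $\mathbf{u}$.

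Next I apply the appropriate ternary dichotomy to the pair $(\mathbf{F}_3, \langle \mathbf{G}', \mathbf{u}\rangle)$. The resulting constraints--that the decomposition vectors of $\langle \mathbf{G}', \mathbf{u}\rangle$ are pairwise orthogonal, or lie in a coordinate plane, or have certain coordinates vanishing--transfer directly to $\mathbf{G}'$ because the vectors themselves are the same up to nonzero scalars. By choosing $\mathbf{u}$ to avoid the (measure-zero) set of bad choices that would collapse vectors or change the rank of $\langle \mathbf{G}', \mathbf{u}\rangle$ relative to that of $\mathbf{G}'$, I can ensure the ternary dichotomy sees the full structure of $\mathbf{G}'$. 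This establishes that $\mathbf{G}'$ satisfies exactly the analog of the ternary tractability condition with $\otimes 3$ replaced by $\otimes 4$. Undoing $T$ gives the claim.

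The main obstacle will be controlling the reduction from arity 4 to arity 3 uniformly in the four cases. Specifically, if $\mathbf{G}'$ has tensor rank $2$ but the chosen $\mathbf{u}$ is orthogonal to one of its decomposition vectors, $\langle \mathbf{G}', \mathbf{u}\rangle$ drops to rank $1$ and becomes degenerate, losing information. I would handle this by first fixing a generic $\mathbf{u}_0$ to realize a non-degenerate $\mathbf{G}_3 = \langle \mathbf{G}', \mathbf{u}_0\rangle$ whose decomposition vectors are nonzero multiples of those of $\mathbf{G}'$, apply the ternary dichotomy to recover orthogonality/support constraints, and then use a second generic $\mathbf{u}_1$ (or a linear combination argument as in \cref{subsec:subspace-signatures}) to pin down any remaining coordinates that a single $\mathbf{u}$ could miss. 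An auxiliary check, analogous to the one in the proof of \cref{lem:dichotomy-single-arity-4}, confirms that no new arity-4 tractable form appears beyond the four listed, completing the meta-claim.
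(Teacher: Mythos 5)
Your overall route---put the signatures in canonical form, drop to arity $3$ while keeping the decomposition vectors, then quote the two-ternary dichotomy---is the paper's route, but as written your treatment of $\mathbf{G}$ has a gap. You speak of ``the decomposition vectors $\mathbf{v}_i$ of $\mathbf{G}'$'' and of $\langle \mathbf{G}',\mathbf{u}\rangle$ inheriting them, yet you never establish that $\mathbf{G}'$ admits any tensor decomposition into at most three rank-one terms: a general real symmetric arity-$4$ signature on domain $3$ has symmetric rank well above $3$, and for such a $\mathbf{G}'$ the phrase is undefined. Recovering the arity-$4$ structure of a signature from its ternary contractions is exactly the nontrivial content of \cref{subsec:subspace-signatures} and \cref{lem:dichotomy-single-arity-4} (the subspace argument plus the triangle analysis); it is not supplied by choosing $\mathbf{u}$ generically, and your closing appeal to ``an auxiliary check, analogous to the one in the proof of \cref{lem:dichotomy-single-arity-4}'' defers precisely the crux rather than a detail.

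The repair is short and is what the paper does: since $\holts(\mathbf{G}) \le_T \holts(\mathbf{F},\mathbf{G})$ is also not \sph, \cref{lem:dichotomy-single-arity-4} applies to $\mathbf{G}$ as well, and because the property of being a sum of at most three fourth powers of pairwise orthogonal real vectors (or the conjugate-pair form $\boldsymbol{\beta}'^{\otimes 4}+\overline{\boldsymbol{\beta}'}^{\otimes 4}+\lambda\mathbf{w}^{\otimes 4}$) is invariant under any real orthogonal transformation, $\mathbf{G}' = T^{\otimes 4}\mathbf{G}$ has such a decomposition in your fixed basis. With that in hand, \cref{prop:geneq-get-all-arity} (stated for arbitrary pairwise orthogonal vectors) together with \cref{prop:z-get-all-arity} and \cref{cor:z-normalization-orthogonal} already realize ternary signatures $\mathbf{F}_3$ and $\mathbf{G}_3$ with the same decomposition vectors---no generic $\mathbf{u}$, second unary $\mathbf{u}_1$, or auxiliary check is needed---and the two-ternary lemmas then determine the vectors of both signatures up to scalars, which (scalars being absorbable up to the usual sign caveat for even powers) is the statement of the claim; your tractability paragraph is fine as is.
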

\begin{proof}
  By \cref{lem:dichotomy-single-arity-4}, after a holographic transformation, any nondegenerate tractable arity $4$ signature has tensor decomposition as
  $a\mathbf{e}_1^{\otimes 4} + b\mathbf{e}_2^{\otimes 4} + c\mathbf{e}_3^{\otimes 4}$ or $(1, i, 0)^{\otimes 4} + (1, -i, 0)^{\otimes 4} + \lambda \mathbf{e}_3^{\otimes 4}$ for some $a, b, c, \lambda \in \mathbb{R}$.
  By \cref{prop:geneq-get-all-arity} and \cref{prop:z-get-all-arity}, we may realize ternary signatures $\mathbf{F}_3$ and $\mathbf{G}_3$ that has the same tensor decomposition vectors.
  Then, we may apply the dichotomy of two ternary signatures on $\mathbf{F}_3$ and $\mathbf{G}_3$, which determines the vectors tensor decomposition of $\mathbf{F}$ and $\mathbf{G}$ up to a scalar.
\end{proof}

One can inductively prove the following using the arguments in \cref{subsec:subspace-signatures} and \cref{lem:dichotomy-single-arity-4}:
\begin{lemma}\label{lem:dichotomy-single-arity-n}
Let $\mathbf{F}$ be a nondegenerate real-valued symmetric signature of arity $n \ge 3$.
Then, $\holts(\mathbf{F})$ is tractable if there exists some real orthogonal matrix $T$ such that one of the following conditions holds.
\begin{enumerate}
  \item $T^{\otimes n} \mathbf{F} = a \mathbf{e}_1^{\otimes n} + b \mathbf{e}_2^{\otimes n} + c \mathbf{e}_3^{\otimes n}$ for some $a, b, c \in \mathbb{R}$.
  \item $T^{\otimes n}\mathbf{F} = \boldsymbol{\beta}^{\otimes n} + \overline{\boldsymbol{\beta}}^{\otimes n} + \lambda \mathbf{e}_3^{\otimes n}$ where $\mathbf{\beta} = \frac{1}{\sqrt{2}}(1, i, 0)\transpose$ for some $\lambda \in \mathbb{R}$.
\end{enumerate}
Otherwise $\holts(\mathbf{F})$ is \sph.
\end{lemma}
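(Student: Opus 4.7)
The plan is an induction on $n \ge 3$. The tractability direction is immediate: instances of form 1 lie in class \tractE after applying $T$, while instances of form 2 reduce, by contracting $n-3$ copies of the unary $(1,0,1)$ to every constraint (\cref{prop:geneq-get-all-arity} and \cref{prop:z-get-all-arity}), to the tractable ternary type \ternarytractz case. The base cases $n=3$ and $n=4$ are \cref{thm:dich-single-sym-ter-dom3-real} and \cref{lem:dichotomy-single-arity-4}. Fix $n \ge 5$ and assume the full statement for every arity in $[3, n-1]$; we prove hardness at arity $n$ by imitating the argument of \cref{lem:dichotomy-single-arity-4}.

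The first step is to extend the subspace dichotomy results of \cref{lem:dichotomy-subspace-rank-3-geneq,lem:dichotomy-subspace-rank-3-z,lem:dichotomy-subspace-rank-2-geneq,lem:dichotomy-subspace-rank-2-z,lem:subspace-rank-1} from ternary signatures to the space $\mathsf{S}^{n-1}_{\mathbb{R}}$ of real symmetric signatures of arity $n-1$. Their proofs rely only on (i) the sum-hardness propositions \cref{prop:geneq-sum-hard,prop:z-sum-hard,prop:bg-gr-sum-hard}, which are fundamentally Boolean-domain-restriction arguments that carry through at any arity via \cref{lem:geneq-eqbg,lem:z-eqbg,lem:domain-restriction} together with the linearity of unary contraction on $\mathbf{F}+\mathbf{G}$; (ii) the arity-independent rank facts \cref{cor:pairiwse-linearly-independent-tensor,lem:linearly-independent-tensor-rank,prop:uniqueness-tensor-decomposition}; and (iii) a pairwise dichotomy for two individually tractable signatures of the ambient arity. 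Ingredient (iii) is furnished by the arity-$(n-1)$ version of \cref{lem:meta-push-arity}, whose proof transfers verbatim: the inductive hypothesis gives each signature a canonical tensor decomposition, and \cref{prop:geneq-get-all-arity,prop:z-get-all-arity} reduce pairwise compatibility to the ternary dichotomies of \cref{lem:dichotomy-ternary-ternary-rank-2-geneq,lem:dichotomy-ternary-ternary-rank-3-geneq,lem:dichotomy-terneray-ternary-rank-2-z,lem:dichotomy-terneray-ternary-rank-3-z}.

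With the subspace dichotomy at arity $n-1$ in hand, I would set $\mathscr{F} = \{\langle \mathbf{F}, \mathbf{u}\rangle : \mathbf{u} \in \mathbb{R}^3\}$, a linear subspace of $\mathsf{S}^{n-1}_{\mathbb{R}}$ satisfying $\holts(\mathscr{F}) \le_T \holts(\mathbf{F})$. Assuming $\holts(\mathbf{F})$ is not \sph, every element of $\mathscr{F}$ is tractable. Picking a $\mathbf{G} \in \mathscr{F}$ of maximum rank, applying the orthogonal $T$ that puts $\mathbf{G}$ into canonical form, and replacing $\mathbf{F}$ by $\mathbf{F}' = T^{\otimes n}\mathbf{F}$, the extended subspace dichotomy forces every $\mathbf{H} \in T\mathscr{F}$ — in particular every first-order contraction $\langle \mathbf{F}', \mathbf{e}_i\rangle$ — to be either a \geneq (type \ternarytractgeneq case) or a $\db\dg\mid\dr$ signature whose $\{\db,\dg\}$-restriction satisfies the $\typeii$ recurrence (type \ternarytractz case). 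Laying out $\mathbf{F}'$ as the arity-$n$ analogue of the weight triangle in \cref{fig:arity-4-signature} and running the corner/outer/inner entry analysis used in \cref{lem:dichotomy-single-arity-4} then forces every non-canonical entry of $\mathbf{F}'$ to vanish; the rank-$1$ degenerate subcase reproduces the argument that either $\mathbf{F}' = 0$ or $\mathbf{F}' = c\mathbf{v}^{\otimes n}$, each contradicting nondegeneracy. The main obstacle is the verification in the first step — cleanly extending the five subspace lemmas of Section 7.1 to arity $n-1$ — since the triangle propagation, though notationally bulkier for side length $n$, is structurally identical to the arity-$4$ argument.
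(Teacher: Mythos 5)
Your plan coincides with the paper's own (only sketched) argument for this lemma: the paper likewise handles tractability by noting both canonical forms fall into the tractable classes of \cref{thm:dichotomy-set-of-domain-3}, and obtains hardness by precisely the induction you describe --- lifting the pair dichotomy one arity at a time via \cref{lem:meta-push-arity} together with \cref{prop:geneq-get-all-arity,prop:z-get-all-arity}, re-running the subspace lemmas of \cref{subsec:subspace-signatures} for contractions $\langle \mathbf{F}, \mathbf{u} \rangle$ of arity $n-1$, and repeating the corner/outer/inner triangle analysis of \cref{lem:dichotomy-single-arity-4}. The only nit is your tractability phrasing for form 2: contracting unaries into an instance is not value-preserving, so it is cleaner to note that $\boldsymbol{\beta}^{\otimes n} + \overline{\boldsymbol{\beta}}^{\otimes n} + \lambda \mathbf{e}_3^{\otimes n}$ is $\db \dg | \dr$ with a type $\typeii$ restriction and hence lies directly in class \tractBGR (or to invoke both directions of \cref{prop:z-get-all-arity} to get $\holts(\mathbf{F}_n) \le_T \holts(\mathbf{F}_3)$).
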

From now on, for notational convenience, we may only consider signatures of arity $2$ and $3$ when proving a dichotomy of a set of signatures $\mathcal{F}$. 
\begin{corollary}\label{cor:replace-arity-3}
    Let $\mathcal{F}$ be an arbitrary set of nondegenerate, real-valued, symmetric signatures.
    Suppose that for all $\mathbf{F} \in \mathcal{F}$, the arity of $\mathbf{F}$ is $\ge 3$ and $\holts(\mathbf{F})$ is tractable.
    Then, there exists a set $\mathcal{F}'$ such that $\mathcal{F}'$ consists of nondegenerate, real-valued, symmetric signatures of arity $3$ and 
    $\holts(\mathcal{F}') =_T \holts(\mathcal{F})$.
\end{corollary}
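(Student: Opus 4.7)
The plan is to replace each $\mathbf{F} \in \mathcal{F}$ by an arity-$3$ signature sharing its tensor decomposition vectors, and invoke \cref{prop:geneq-get-all-arity} and \cref{prop:z-get-all-arity} to show each such replacement preserves $\holts$ complexity.

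For each $\mathbf{F} \in \mathcal{F}$ of arity $n \ge 3$, since $\holts(\mathbf{F})$ is tractable, \cref{lem:dichotomy-single-arity-n} supplies a real orthogonal matrix $T_\mathbf{F}$ such that $T_\mathbf{F}^{\otimes n} \mathbf{F}$ is either the type-A canonical form $a\mathbf{e}_1^{\otimes n} + b\mathbf{e}_2^{\otimes n} + c\mathbf{e}_3^{\otimes n}$ or the type-B canonical form $\boldsymbol{\beta}^{\otimes n} + \overline{\boldsymbol{\beta}}^{\otimes n} + \lambda\mathbf{e}_3^{\otimes n}$, where the constants $a, b, c, \lambda$ depend on $\mathbf{F}$. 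Define $\mathbf{F}'$ by applying $(T_\mathbf{F}^{-1})^{\otimes 3}$ to the arity-$3$ signature of the same canonical form with the same constants, and set $\mathcal{F}' = \{\mathbf{F}' : \mathbf{F} \in \mathcal{F}\}$. Each $\mathbf{F}'$ is a real-valued symmetric arity-$3$ signature, and is nondegenerate because by \cref{lem:linearly-independent-tensor-rank} the symmetric rank equals the number of linearly independent tensor-decomposition vectors, which is preserved between $\mathbf{F}$ and $\mathbf{F}'$ and is at least $2$ by nondegeneracy of $\mathbf{F}$.

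To verify $\holts(\mathcal{F}) =_T \holts(\mathcal{F}')$, it suffices to show the local replacement $\holts(\mathcal{G} \cup \{\mathbf{F}\}) =_T \holts(\mathcal{G} \cup \{\mathbf{F}'\})$ for each $\mathbf{F}$ and any auxiliary signature set $\mathcal{G}$, because any Holant signature grid touches only finitely many signatures and these local equivalences can be composed one at a time. Apply the orthogonal holographic transformation $T_\mathbf{F}$ globally: since $T_\mathbf{F}$ is real orthogonal this preserves $\holts$ complexity, and in the transformed setting $T_\mathbf{F}^{\otimes n}\mathbf{F}$ is the arity-$n$ canonical form while $T_\mathbf{F}^{\otimes 3}\mathbf{F}'$ is the corresponding arity-$3$ canonical form. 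Now \cref{prop:geneq-get-all-arity} in the type-A case or \cref{prop:z-get-all-arity} in the type-B case, invoked with $\mathscr{F}$ equal to $T_\mathbf{F}\mathcal{G}$, directly yields the required Turing equivalence in the transformed setting; undoing $T_\mathbf{F}$ returns to the original signatures.

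The main obstacle I anticipate is that the orthogonal transformations $T_\mathbf{F}$ can differ across signatures in $\mathcal{F}$, ruling out a single global transformation that puts all of $\mathcal{F}$ into canonical form at once. This is resolved precisely by the local, one-signature-at-a-time nature of the replacement: each swap is justified by applying and then undoing only the single $T_\mathbf{F}$ relevant to that signature, and the finiteness of each signature grid makes iterating these local equivalences sufficient to assemble the global Turing equivalence $\holts(\mathcal{F}) =_T \holts(\mathcal{F}')$.
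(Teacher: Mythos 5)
Your proposal is correct and follows essentially the same route as the paper: invoke \cref{lem:dichotomy-single-arity-n} to put each signature in canonical tensor-decomposition form, replace it by the ternary signature with the same decomposition vectors, and justify each swap via \cref{prop:geneq-get-all-arity} and \cref{prop:z-get-all-arity}. Your extra care about the per-signature orthogonal transformations and one-at-a-time replacement is just an explicit spelling-out of what the paper's terse proof leaves implicit.
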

\begin{proof}
By assumption, any signature $\mathbf{F} \in \mathcal{F}$ of arity $n \ge 3$ is of the tractable form in \cref{lem:dichotomy-single-arity-n}.
Then, by \cref{prop:geneq-get-all-arity} and \cref{prop:z-get-all-arity}, 
$\holts(\mathcal{F} \cup \{\mathbf{F}_3\} - \{\mathbf{F}\}) =_T \holts(\mathcal{F})$, where $\mathbf{F}_3$ is the ternary signature that has the same vectors in a tensor decomposition form as $\mathbf{F}$.
Therefore, we may obtain a new set $\mathcal{F}'$ such that $\mathcal{F}'$ consists of signatures of arity $3$, and $\holts(\mathcal{F}') =_T \holts(\mathcal{F})$.
\end{proof}

\section{Set of Signatures}\label{sec:set-of-signatures}
We are now almost ready to prove the main theorem of the paper, a complexity dichotomy 
for an arbitrary set of real-valued symmetric signatures, \cref{thm:dichotomy-set-of-domain-3}.

\begin{proposition}\label{prop:normalize-lin-dep-orthogonal-vectors}
  Let
  $\mathbf{u} = (u_1, u_2, u_3), \mathbf{v} = (v_1, v_2, v_3) \in \mathbb{R}^3$ be nonzero vectors such that $(u_1, u_2)$ and $(v_1, v_2)$ are linearly dependent and $\langle u, v \rangle = 0$.
  Then, $\mathbf{u} \sim (px, py, -q)$ and $\mathbf{v} \sim (qx, qy, p)$ for some $p, q, x, y$ such that $x^2 + y^2 = 1$ and $p^2 + q^2 = 1$.
\end{proposition}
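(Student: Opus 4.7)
The plan is to reduce the claim to an orthogonality statement in $\mathbb{R}^2$. Since $(u_1, u_2)$ and $(v_1, v_2)$ are linearly dependent, they lie on a common line through the origin. I would first handle the generic case where both $(u_1, u_2)$ and $(v_1, v_2)$ are nonzero, and treat the degenerate cases separately at the end.

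In the generic case, choose a unit vector $(x, y) \in \mathbb{R}^2$ with $x^2 + y^2 = 1$ such that $(u_1, u_2) = a(x, y)$ and $(v_1, v_2) = b(x, y)$ for some $a, b \in \mathbb{R}$ (nonzero in the generic case, but allowed to be zero to cover one of the degenerate subcases). The dot product condition $\langle \mathbf{u}, \mathbf{v} \rangle = 0$ then becomes $ab(x^2 + y^2) + u_3 v_3 = ab + u_3 v_3 = 0$, which says exactly that the two vectors $(a, u_3)$ and $(b, v_3)$ in $\mathbb{R}^2$ are orthogonal.

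Now the key step: since $(a, u_3)$ is nonzero, set $(p, -q) = \frac{1}{\|(a, u_3)\|}(a, u_3)$, so $p^2 + q^2 = 1$ and $(a, u_3) = \lambda (p, -q)$ for $\lambda = \|(a, u_3)\| \ne 0$. The unit vector obtained from $(p, -q)$ by a $\pi/2$ counter-clockwise rotation is precisely $(q, p)$, so the orthogonal complement of $(p, -q)$ in $\mathbb{R}^2$ is spanned by $(q, p)$. Hence $(b, v_3) = \mu (q, p)$ for some $\mu \in \mathbb{R}$ (the sign being absorbed into $\mu$). Unpacking these two identities coordinate by coordinate gives $\mathbf{u} = \lambda(px, py, -q)$ and $\mathbf{v} = \mu(qx, qy, p)$, which is the desired conclusion.

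It remains to check the degenerate case where one of $(u_1, u_2)$ or $(v_1, v_2)$ is zero; by symmetry assume $(u_1, u_2) = 0$. Then $\mathbf{u} = (0, 0, u_3)$ with $u_3 \ne 0$, and orthogonality forces $v_3 = 0$, so $\mathbf{v} = (v_1, v_2, 0)$ with $(v_1, v_2) \ne 0$. Taking $(p, q) = (0, 1)$ and $(x, y) = \frac{1}{\|(v_1, v_2)\|}(v_1, v_2)$ yields $\mathbf{u} \sim (0, 0, -1) = (px, py, -q)$ and $\mathbf{v} \sim (x, y, 0) = (qx, qy, p)$, as required. The case where both $(u_1, u_2)$ and $(v_1, v_2)$ vanish cannot occur since it would force $u_3 v_3 = 0$ and contradict nondegeneracy of both vectors. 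No step looks to be a genuine obstacle; the only subtle point is recognizing that the orthogonality condition in $\mathbb{R}^3$ collapses to a clean orthogonality condition in $\mathbb{R}^2$ once the shared planar direction $(x, y)$ is factored out.
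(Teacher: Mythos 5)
Your proof is correct and follows essentially the same route as the paper: factor out a common unit direction $(x,y)$ for the planar parts, observe that $\langle \mathbf{u}, \mathbf{v}\rangle = 0$ collapses to orthogonality of $(a,u_3)$ and $(b,v_3)$ in $\mathbb{R}^2$, and normalize to $(p,-q)$ and $(q,p)$. Your explicit treatment of the degenerate subcase $(u_1,u_2)=0$ is a minor refinement the paper handles implicitly by allowing $a$ or $b$ to vanish.
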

\begin{proof}
  We write $(u_1, u_2) = (ax, ay)$ and $(v_1, v_2) = (bx, by)$ for some $a, b, x, y \in \mathbb{R}$.
  It cannot be the case that $x, y = 0$ since then $\mathbf{u}$ and $\mathbf{v}$ are linearly dependent nonzero vectors, which cannot be orthogonal.
  We may choose $x, y$ to be such that $x^2 + y^2 = 1$ by absorbing the scalar to $a, b$.
  By orthogonality, we have $ab(x^2 + y^2) + u_3v_3 = ab + u_3 v_3 = 0$.
  Thus, $(a, u_3)$ and $(b, v_3)$ are orthogonal vectors, which we may normalize to $(p, -q) \sim (a, u_3)$ and $(q, p) \sim (b, v_3)$.
  Therefore, $\mathbf{u} \sim (px, py, -q)$ and $\mathbf{v} \sim (qx, qy, p)$.
\end{proof}

\begin{proposition}\label{prop:dichotomy-3-signatures-geneq-b-gr}
  Let $\mathbf{F} = \mathbf{e}_1\teh + \mathbf{e}_2\teh$ and $\mathbf{G} = \mathbf{e}_1\teh + (0, a, b)\teh$ for nonzero $a, b \in \mathbb{R}$.
  Let $\mathbf{H}$ be a nondegenerate, real-valued, symmetric, ternary signature of rank $2$.
  Then, $\holts(\mathbf{F}, \mathbf{G}, \mathbf{H})$ is tractable if $\mathbf{H}$ is $\bdgr$.
  Otherwise, it is \sph.
\end{proposition}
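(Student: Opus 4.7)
The plan is to show tractability by placing $\{\mathbf{F}, \mathbf{G}, \mathbf{H}\}$ in class \tractBGR under the domain separation $ij \mid k = \dg\dr \mid \db$, and to prove hardness by applying the rank-2 ternary pair dichotomy \cref{lem:dichotomy-ternary-ternary-rank-2-geneq} twice: once to $\{\mathbf{F}, \mathbf{H}\}$ directly, and once to $\{\mathbf{G}, \mathbf{H}\}$ after rotating $\mathbf{G}$ into \geneq canonical form. The geometric picture is that both $\mathbf{F}$ and $\mathbf{G}$ decompose as $\mathbf{e}_1\teh$ plus a rank-one tensor on a vector lying in the $\{\dg, \dr\}$ plane, so they jointly constrain the tensor decomposition vectors of any compatible $\mathbf{H}$ to lie in $\spn(\mathbf{e}_1) \cup \mathbf{e}_1^{\perp}$, which is precisely the $\bdgr$ condition.

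For the tractable direction, I would first note that $\supp \mathbf{F} \subseteq \{\db\}^* \cup \{\dg\}^*$ and $\supp \mathbf{G} \subseteq \{\db\}^* \cup \{\dg,\dr\}^*$, so both are $\dg\dr \mid \db$. If $\mathbf{H}$ is additionally $\bdgr$, then conditions (a) and (b) of \tractBGR with $T = I$ hold trivially (there are no binary signatures). For condition (c), $\mathbf{F}\domres{\dg,\dr}$ and $\mathbf{G}\domres{\dg,\dr}$ are both degenerate, and $\mathbf{H}\domres{\dg,\dr}$ inherits the orthogonal rank-2 decomposition structure from the tractable form of $\mathbf{H}$ given by \cref{thm:dich-single-sym-ter-dom3-real}, making it either degenerate or a Boolean type-$\typei$ signature and hence tractable.

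For hardness, I would assume $\mathbf{H}$ is not $\bdgr$ and may assume $\holts(\mathbf{H})$ is itself tractable (else we are done), so $\mathbf{H}$ lies in one of the four tractable shapes of \cref{lem:dichotomy-ternary-ternary-rank-2-geneq} applied to $\{\mathbf{F}, \mathbf{H}\}$. Next I let $T = \mathrm{diag}(1, R)$, where $R$ is the planar rotation sending $(a, b)$ to $(\sqrt{a^2 + b^2}, 0)$; this $T$ is a real orthogonal matrix fixing $\mathbf{e}_1$ and sending $\mathbf{G}$ to $c_1 \mathbf{e}_1\teh + c_2 \mathbf{e}_2\teh$. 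Applying \cref{lem:dichotomy-ternary-ternary-rank-2-geneq} a second time to the pair $\{T\teh \mathbf{G}, T\teh \mathbf{H}\}$ forces $T\teh \mathbf{H}$ into one of the four tractable shapes as well. The combined requirement is that the projective set of rank-one summands of $\mathbf{H}$ be $T$-invariant, which restricts every summand direction to lie in $\spn(\mathbf{e}_1) \cup \mathbf{e}_1^{\perp}$, precisely the $\bdgr$ condition.

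The main obstacle will be cases 3 and 4 of \cref{lem:dichotomy-ternary-ternary-rank-2-geneq}, where the decomposition is parametrized by vectors satisfying $(u_1, u_2) \sim (v_1, v_2)$ together with an orthogonality condition; using \cref{prop:normalize-lin-dep-orthogonal-vectors} to write $\mathbf{u} \sim (px, py, -q)$, $\mathbf{v} \sim (qx, qy, p)$ and then tracking how the first two components transform under $R$, the plan is to show that preserving the parallelism condition under $R$ forces $x = 0$ (both vectors in the $\{\dg, \dr\}$ plane) or the degenerate endpoint $q = 0$ (an $\mathbf{e}_1$-aligned summand paired with an $\mathbf{e}_1^{\perp}$ summand); any other parameter values produce after rotation a linearly independent non-orthogonal pair, which is \sph by \cref{cor:non-orthogonal-linearly-independent-hard}. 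Cases 1 and 2 of that lemma should be handled by direct enumeration of which standard-basis patterns survive rotation in the $\{\dg, \dr\}$ block, noting that rank-2 \geneq signatures are automatically $\bdgr$ and that the subcases $i = 2, 3$ of the $\mathbf{e}_i\teh + \mathbf{w}\teh$ form collapse, under $T$-invariance, to patterns where $\mathbf{w}$ loses its $\mathbf{e}_1$-component.
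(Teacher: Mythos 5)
Your proposal is correct and, at its core, runs on the same engine as the paper's proof: assume $\mathbf{H}$ survives the pair dichotomy against $\mathbf{F}$ (the four shapes of \cref{lem:dichotomy-ternary-ternary-rank-2-geneq}, with \cref{prop:normalize-lin-dep-orthogonal-vectors} parametrizing cases 3 and 4), then use compatibility with $\mathbf{G}$ to kill everything that is not $\bdgr$, and get tractability from class \tractBGR. Where you differ is in how the second pairwise constraint is exploited: you rotate inside the $\{\dg,\dr\}$ block so that $\mathbf{G}$ becomes a canonical \geneq (after rescaling via \cref{lem:geneq-normalization}) and re-invoke the pair lemma on $(T\teh\mathbf{G}, T\teh\mathbf{H})$, whereas the paper, in cases 3 and 4, rotates inside the $\{\db,\dg\}$ block so that $\mathbf{H}$ becomes strictly supported on $\{\dg,\dr\}$, realizes $(=_{\dg\dr})$ from $\mathbf{H}$ via \cref{lem:geneq-eqbg,lem:domain-restriction}, and restricts $\mathbf{G}$. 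Both routes give exactly the same constraints (the parallelism test yields $xb=0$, hence $x=0$, plus the degenerate endpoints $y=0$, $pq=0$), and yours has the advantage of reusing the pair lemma wholesale instead of redoing Boolean restrictions by hand. Two statements should be tightened, though neither breaks the argument: the combined requirement is \emph{not} that the projective summand set of $\mathbf{H}$ be $T$-invariant (e.g.\ $\mathbf{e}_1\teh+\mathbf{e}_3\teh$ is compatible with both $\mathbf{F}$ and $\mathbf{G}$ yet its summands are not $T$-invariant); it is that $\mathbf{H}$ be of a tractable shape relative to \emph{both} frames, which is what your case analysis actually verifies. Likewise, \cref{cor:non-orthogonal-linearly-independent-hard} cannot be applied to $T\teh\mathbf{H}$ itself, since an orthogonal $T$ preserves orthogonality of the summands; the non-orthogonal independent pair appears only in the Boolean restriction $(T\teh\mathbf{H})\domres{\db,\dg}$ after $(=_{\db\dg})$ is realized from $T\teh\mathbf{G}$ -- or, more simply, cite the failure of all four cases of the pair lemma. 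Finally, as in the paper, the tractable direction implicitly needs $\mathbf{H}$ itself to be of the single-signature tractable form of \cref{thm:dich-single-sym-ter-dom3-real}, so that condition (c) of \tractBGR (tractability of the $\{\dg,\dr\}$ restrictions) actually holds; your write-up states this assumption, which is the right call.
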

\begin{proof}
  If $\mathbf{H}$ is $\bdgr$, then $\holts(\mathbf{F}, \mathbf{G}, \mathbf{H})$ is in the tractable class \tractBGR.

  By \cref{lem:geneq-normalization}, we may assume that $a^2 + b^2 = 1$. 
  Assume $\holts(\mathbf{F}, \mathbf{G}, \mathbf{H})$ is not \sph.
  Then, 
  we may assume that $\mathbf{H}$ is of the tractable form in \cref{lem:dichotomy-terneray-ternary-rank-2-z} since $\holts(\mathbf{F}, \mathbf{H})$ must not be \sph.
  \begin{enumerate}
    \item If $\mathbf{H}$ is \geneq, then it is $\bdgr$.
    \item Suppose $\mathbf{H}$ is $\mathbf{e}_i\teh + \mathbf{v}\teh$ for some $\mathbf{v} \in \mathbb{R}^3$ such that $\langle \mathbf{v}, \mathbf{e}_i \rangle = 0$ and $\mathbf{v} \not \sim \mathbf{e}_j$ for all $j$.
      If $i = 1$, then $\mathbf{H}$ is $\bdgr$.
      Otherwise, assume without loss of generality that $i = 3$, so $\mathbf{H} = \mathbf{e}_3\teh + (c, d, 0)\teh$ for some nonzero $c, d \in \mathbb{R}$.
      We apply the orthogonal holographic transformation $T = \begin{bsmallmatrix}
        1 & 0 & 0 \\
        0 & a & b \\
        0 & -b & a
      \end{bsmallmatrix}$, and get $T \teh \mathbf{G} = \mathbf{e}_1\teh + \mathbf{e}_2\teh$ and $T \teh \mathbf{H} = (c, ad, -bd)\teh + (0, b, a)\teh$.
      Since $a, b, c, d \ne 0$, $\holbs((T \teh \mathbf{H})\domres{\db, \dg})$ is \sph.
    \item By \cref{prop:normalize-lin-dep-orthogonal-vectors}, we may assume that $\mathbf{H} = (px, py, -q)\teh + (qx, qy, p)\teh$ with $x^2 + y^2 = p^2 + q^2 = 1$.
      We apply the orthogonal holographic transformation $T = \begin{bsmallmatrix}
        -y & x & 0 \\
        x & y & 0 \\
        0 & 0 & 1
      \end{bsmallmatrix}$ and get $T\teh \mathbf{H} = (0, p, -q)\teh + (0, q, p)\teh$ and $T \teh \mathbf{G} = (-y, x, 0)\teh + (ax, ay, b)$.
      We may restrict to the domain $\{\dg, \dr\}$ and obtain $\mathbf{G}' = (T \teh \mathbf{G})\domres{\dg, \dr} = (x, 0)\teh + (ay, b)\teh$. 
      $\holbs(\mathbf{G}')$ is tractable only when $x = 0$ by degeneracy or $y = 0$, so the two vectors are orthogonal.
      If $x = 0$, then $\mathbf{H}$ is $\bdgr$ since $\supp \mathbf{H} \subseteq \{\dg, \dr\}^*$.
      If $y = 0$, then $\mathbf{G}'$ is a \geneq, so $(T \teh \mathbf{H})\domres{\dg, \dr} = (p, -q)\teh + (q, p)\teh$ must also be a \geneq, implying $p = 0$ or $q = 0$.
      Either case, we get $\mathbf{H} = \pm \mathbf{e}_3\teh \pm (x, y, 0)\teh$ which goes back to case (2).
    \item By \cref{prop:normalize-lin-dep-orthogonal-vectors}, we may assume that $\mathbf{H} = (\mathbf{u}+i \mathbf{v})\teh + (\mathbf{u} - i \mathbf{v})\teh$ where $\mathbf{u} = (px, py, -q)$ and $\mathbf{v} = (qx, qy, p)$ with $x^2 + y^2 = p^2 + q^2 = 1$.
      We apply the orthogonal holograhpic transformation $T = \begin{bsmallmatrix}
        -y & x & 0 \\
        x & y & 0 \\
        0 & 0 & 1
      \end{bsmallmatrix}$ and get $T \teh \mathbf{H} = (\mathbf{u}' + i\mathbf{v}')\teh + (\mathbf{u}' - i \mathbf{v}')\teh$ and $T \teh \mathbf{G} = (-y, x, 0)\teh + (ax, ay, b)\teh$
      where $\mathbf{u}' = (0, p, -q)$ and $\mathbf{v}' = (0, q, p)$.
      By \cref{lem:z-eqbg} and \cref{lem:domain-restriction}, $T \teh \mathbf{H}$ can realize $(=_{\dg \dr})$ so we may restrict to domain $\{\dg, \dr\}$.
      The only way for $\holbs((T \teh \mathbf{H})\domres{\dg, \dr}, (T \teh \mathbf{G})\domres{\dg, \dr})$ to be not \sph is when $(T \teh \mathbf{G})\domres{\dg, \dr}$ is degenerate since otherwise it is a type $\typei$ signature.
      This happens only when $x = 0$ since $b \ne 0$.
      If $x = 0$, then $\mathbf{H}$ is $\bdgr$ since $\supp \mathbf{H} \subseteq \{\dg, \dr\}^*$.
      \qedhere
  \end{enumerate}
\end{proof}

\begin{proposition}\label{prop:bg-r-d-do-not-mix}
  Let $\mathbf{F} = (a, b, 0)\teh + \mathbf{e}_3\teh$ for nonzero $a, b \in \mathbb{R}$ such that $a^2 + b^2 = 1$.
  Let $\mathbf{G}$ be a nondegenerate symmetric signature in $\mathcal{D}$.
  Then, $\holts(\mathbf{F}, \mathbf{G})$ is \sph unless one of the following conditions holds.
  \begin{enumerate}
    \item $\mathbf{G}$ is \swbg.
    \item $\mathbf{G}$ is $\db \dg | \dr$.
    \item for $T = \begin{bmatrix}
        -b & a & 0 \\
        a & b & 0 \\
        0 & 0 & 1
      \end{bmatrix}$, $T\tew \mathbf{G}$ is \strspt.
  \end{enumerate}
\end{proposition}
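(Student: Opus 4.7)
The plan is to reduce this proposition to Lemma~\ref{lem:gr-equality-D-dichotomy} via the orthogonal holographic transformation $T$ specified in condition~(3). Since $T$ has $\mathbf{e}_3$ as its third row and column, it is a $\db\dg|\dr$ orthogonal matrix, and such matrices preserve the class $\mathcal{D}$ under conjugation, so $T\tew\mathbf{G}\in\mathcal{D}$. A direct computation using $a^2+b^2=1$ yields $T(a,b,0)\transpose=\mathbf{e}_2$ and $T\mathbf{e}_3=\mathbf{e}_3$, hence $T\teh\mathbf{F}=\mathbf{e}_2\teh+\mathbf{e}_3\teh$. Because this holographic transformation is orthogonal, $\holts(\mathbf{F},\mathbf{G})$ and $\holts(\mathbf{e}_2\teh+\mathbf{e}_3\teh,\,T\tew\mathbf{G})$ have the same complexity, bringing us to the exact setting of Lemma~\ref{lem:gr-equality-D-dichotomy}.

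Invoking Lemma~\ref{lem:gr-equality-D-dichotomy}, the transformed problem is \sph unless $T\tew\mathbf{G}$ is degenerate, $\db\dg|\dr$, \swbg, or \strspt, together with the side condition that $(T\tew\mathbf{G})\domres{\dg,\dr}$ be compatible with $[1,0,0,1]$. Since $\mathbf{G}$ is nondegenerate and $T$ is invertible, $T\tew\mathbf{G}$ is also nondegenerate, so the degenerate case is ruled out. I would then pull back each remaining tractable case to a property of $\mathbf{G}$: (i) $T\tew\mathbf{G}$ is $\db\dg|\dr$ iff $\mathbf{G}$ is $\db\dg|\dr$, because conjugation by the block-diagonal $T$ preserves the $\db\dg|\dr$ pattern; (ii) $T\tew\mathbf{G}$ is \swbg iff $\mathbf{G}$ is \swbg, verified by a short direct matrix computation showing that $T\mathbf{H}T\transpose$ is \swbg whenever $\mathbf{H}$ is; (iii) the \strspt alternative becomes condition~(3) verbatim. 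This yields the hardness direction: if none of (1), (2), (3) holds, then none of the corresponding cases of Lemma~\ref{lem:gr-equality-D-dichotomy} holds for $T\tew\mathbf{G}$, and the problem is \sph.

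The main obstacle is the tractability direction, where one must verify the compatibility side condition of Lemma~\ref{lem:gr-equality-D-dichotomy} in each claimed tractable case. For condition~(1), \swbg yields $(T\tew\mathbf{G})\domres{\dg,\dr}$ of shape $[0,*,0]$, which is compatible with $[1,0,0,1]$ (type $\typei(0,1)$). For condition~(2), the $\db\dg|\dr$ structure forces the restriction to be diagonal $[*,0,*]$, a linear combination of $\mathbf{e}_1\tew$ and $\mathbf{e}_2\tew$, again compatible. For condition~(3), I would split on the support of $T\tew\mathbf{G}$: support contained in $\{\db,\dg\}^*$ forces $T\tew\mathbf{G}$ to be rank one and thus degenerate, a contradiction; support contained in $\{\db,\dr\}^*$ makes $(T\tew\mathbf{G})\domres{\dg,\dr}$ supported on a single element and hence degenerate, so compatibility is trivial; support contained in $\{\dg,\dr\}^*$ reduces the whole problem to a Boolean $\holbs$ instance on $\{\dg,\dr\}$ whose compatibility with $[1,0,0,1]$ forces the $2\times 2$ restriction to have shape $[*,0,*]$ or $[0,*,0]$, which after pulling back through $T$ places $\mathbf{G}$ into case~(2) or case~(1), respectively.
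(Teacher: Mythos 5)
Your reduction—applying the orthogonal $\db\dg|\dr$ transformation $T$ so that $T\teh\mathbf{F}=\mathbf{e}_2\teh+\mathbf{e}_3\teh$ and $T\tew\mathbf{G}\in\mathcal{D}$, invoking \cref{lem:gr-equality-D-dichotomy}, and pulling the forms $\db\dg|\dr$ and \swbg back through $T$—is exactly the paper's proof, and the hardness direction (the only claim the proposition makes) is handled correctly. Note that the statement is ``\sph unless'' and asserts no converse, so your third paragraph is unnecessary; indeed conditions (1)--(3) are not sufficient for tractability in general (for instance, $T\tew\mathbf{G}$ supported on $\{\dg,\dr\}^*$ with restriction incompatible with $[1,0,0,1]$ satisfies condition (3) yet gives a \sph problem), which is consistent with the proposition but would break the stronger dichotomy you set out to verify.
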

\begin{proof}
  Recall that since $T$ is $\db \dg | \dr$, $T \mathcal{D} = \mathcal{D}$.
  Applying the orthogonal transformation by $T$, we have $T\teh \mathbf{F} = \mathbf{e}_2\teh + \mathbf{e}_3\teh$, 
  and $T\tew \mathbf{G} \in \mathcal{D}$.
  Therefore, we may apply \cref{lem:gr-equality-D-dichotomy} on $T \tew \mathbf{G}$.
  It must be the case that $T\tew \mathbf{G}$ is $\db \dg | \dr$, \swbg, or strictly supported.
  Note that $T$ is a $\db \dg | \dr$ matrix, so if $T\tew \mathbf{G}$ is $\db \dg | \dr$ or \swbg,
  $\mathbf{G}$ must also be $\db \dg | \dr$ or \swbg respectively.
\end{proof}

We are now ready to formalize the idea behind \cref{fig:geometric-intuition-3-signatures}. 
We define a notion of a plane of a rank $2$ signature.
A rank 2 signature of type \ternarytractgeneq or type \ternarytractz of \cref{thm:dich-single-sym-ter-dom3-real} has a symmetric tensor decomposition such that the two vectors occurring inside it are orthogonal, i.e. $\mathbf{v}_1\teh + \mathbf{v}_2\teh$ or $(\mathbf{v}_1 + i \mathbf{v}_2)\teh + (\mathbf{v}_1 - i \mathbf{v}_2)\teh$.
The two vectors are unique up to a scalar, so the plane of a signature $\mathbf{F}$, denoted $P_{\mathbf{F}} = \spn(\mathbf{v}_1, \mathbf{v}_2)$ is well defined.
\begin{definition}
  Let $\mathbf{F}$ be a rank $2$ signature of type \ternarytractgeneq or type \ternarytractz.
  If $\mathbf{F}$ is type \ternarytractgeneq, we may write $\mathbf{F} = \mathbf{v}_1\teh + \mathbf{v}_2\teh$.
  If $\mathbf{F}$ is type \ternarytractz, we may write $\mathbf{F} = (\mathbf{v}_1 + i \mathbf{v}_2)\teh + (\mathbf{v}_1 - i \mathbf{v}_2)\teh$.
  We denote the plane of $\mathbf{F}$ as $P_{\mathbf{F}} = \spn(\mathbf{v}_1, \mathbf{v}_2)$.
\end{definition}
Note that $\mathbf{F}$ is \strspt if and only if $P_{\mathbf{F}}$ is one of the coordinate planes, i.e. $xy$-plane, $yz$-plane or $xz$-plane.
We will say that two planes are orthogonal if their normal vectors are orthogonal.
Note that this notion is well defined because a normal vector to a plane in $3$ dimensional space is unique up to scalar.

\begin{proposition}\label{prop:cannot-strspt-plane}
  Let $\mathcal{F}$ be a set of rank $2$ signatures of type \ternarytractgeneq or type \ternarytractz.
  The following two statements are equivalent:
  \begin{enumerate}
    \item For any orthogonal $T$, there is some signature $\mathbf{F} \in T \mathcal{F}$ that is not \strspt.
    \item There exist $\mathbf{F}, \mathbf{G} \in \mathcal{F}$ such that $P_{\mathbf{F}}$ and $P_{\mathbf{G}}$ are not equal or orthogonal.
  \end{enumerate}
\end{proposition}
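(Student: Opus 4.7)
The plan is to prove the equivalence by contrapositive, so I aim to show: there exists an orthogonal $T$ such that every signature in $T\mathcal{F}$ is \strspt, if and only if every pair $P_{\mathbf{F}}, P_{\mathbf{G}}$ with $\mathbf{F}, \mathbf{G} \in \mathcal{F}$ is either equal or orthogonal. The key dictionary between the two formulations is the fact that for a rank $2$ signature $\mathbf{F}$ of type \ternarytractgeneq or \ternarytractz, being \strspt is equivalent to $P_{\mathbf{F}}$ being one of the three coordinate planes $\spn(\mathbf{e}_i,\mathbf{e}_j)$. One direction is immediate from the definition: if $P_{\mathbf{F}} = \spn(\mathbf{e}_i,\mathbf{e}_j)$, then the tensor components defining $\mathbf{F}$ (real for type \ternarytractgeneq, complex conjugate pair for type \ternarytractz) lie in $\spn(\mathbf{e}_i,\mathbf{e}_j)$, and $\supp \mathbf{F} \subseteq \{i,j\}^*$. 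For the converse, one applies the uniqueness of the symmetric tensor decomposition (\cref{prop:uniqueness-tensor-decomposition}): if $\mathbf{F}$ is supported on $\{i,j\}^*$, then $\mathbf{F}$ admits a rank-$2$ decomposition with vectors drawn from $\spn(\mathbf{e}_i,\mathbf{e}_j)$, so by uniqueness the canonical defining vectors also lie in that plane. Combined with the fact that orthogonal transformations map planes to planes and preserve orthogonality, being \strspt after applying $T$ translates exactly to $TP_{\mathbf{F}}$ being a coordinate plane.

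The easy direction is $(\neg 1) \Rightarrow (\neg 2)$. Suppose an orthogonal $T$ makes every $T\mathbf{F}$ \strspt. Then $TP_{\mathbf{F}}$ is a coordinate plane for every $\mathbf{F} \in \mathcal{F}$. Any two coordinate planes are either equal or orthogonal, and since $T$ is orthogonal it preserves both relations; hence every pair $P_{\mathbf{F}}, P_{\mathbf{G}}$ is equal or orthogonal.

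The substantive direction is $(\neg 2) \Rightarrow (\neg 1)$. Passing to unit normal vectors $\mathbf{n}_{\mathbf{F}} \in \mathbb{R}^3$ of the planes $P_{\mathbf{F}}$ (defined up to sign), the hypothesis says that for any two signatures, their normals are parallel or orthogonal. I will argue that any set of unit vectors in $\mathbb{R}^3$ with this pairwise property contains at most three distinct (up to sign) directions, and those directions are mutually orthogonal. Indeed, if all normals are parallel, pick any orthonormal basis with $\mathbf{n}_{\mathbf{F}}$ as one axis. Otherwise fix two non-parallel (hence orthogonal) normals $\mathbf{n}_1, \mathbf{n}_2$. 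Any third normal $\mathbf{n}$ must be parallel or orthogonal to each of $\mathbf{n}_1,\mathbf{n}_2$; being orthogonal to both forces $\mathbf{n}$ parallel to $\mathbf{n}_1 \times \mathbf{n}_2$. Thus all normals lie in a fixed set $\{\pm \mathbf{u}_1, \pm \mathbf{u}_2, \pm \mathbf{u}_3\}$ of three mutually orthogonal directions. Choose an orthogonal $T$ sending $\mathbf{u}_i \mapsto \mathbf{e}_i$; then $T$ sends each $P_{\mathbf{F}}$ to a coordinate plane, i.e., every $T\mathbf{F}$ is \strspt.

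The main obstacle I foresee is the careful justification of the dictionary between \strspt and ``coordinate plane'' for type \ternarytractz signatures, where the tensor decomposition uses complex conjugate pairs $\mathbf{v}_1 \pm i\mathbf{v}_2$ rather than real vectors; the argument relies on \cref{prop:uniqueness-tensor-decomposition} applied in $\mathbb{C}^3$ together with the observation that a rank-$2$ decomposition using vectors in $\spn(\mathbf{e}_i,\mathbf{e}_j)$ must coincide (up to scalar) with the canonical one, so both $\mathbf{v}_1$ and $\mathbf{v}_2$ must lie in that coordinate plane. Once this is in place the rest of the argument is a short geometric exercise.
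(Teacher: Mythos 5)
Your proposal is correct and follows essentially the same route as the paper: one direction via the observation that coordinate planes are pairwise equal or orthogonal and orthogonal $T$ preserves these relations between the planes (equivalently, their normals), and the other by noting that pairwise parallel-or-orthogonal normals span at most three mutually orthogonal directions, which an orthogonal $T$ sends to the coordinate axes, making every signature \strspt. The only addition is your sketch of the ``\strspt iff $P_{\mathbf{F}}$ is a coordinate plane'' dictionary, which the paper simply records as a remark; that fact is indeed true (and follows most directly by contracting with $\mathbf{e}_k$ and using linear independence of the squared tensor factors).
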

\begin{proof}
  We first show that $(2)$ implies $(1)$.
  Suppose there are such $\mathbf{F}$ and $\mathbf{G}$. 
  For $T \teh \mathbf{F}$ to be \strspt, it must be the case that $P_{T \teh \mathbf{F}}$ is a coordinate plane.
  Since $T$ is orthogonal, it preserves the angle between vectors.
  In particular, it preserves the angle between the normal vectors of the planes of signatures.
  Therefore, $P_{T\teh \mathbf{G}}$ cannot be a coordinate plane since the normal vectors of coordinate planes are orthogonal to each other.
  Therefore, $T\teh \mathbf{G}$ cannot be \strspt if $T \teh \mathbf{F}$ is.

  To show that $(1)$ implies $(2)$, consider the contrapositive. 
  Suppose for all $\mathbf{F}, \mathbf{G} \in \mathcal{F}$, either $P_{\mathbf{F}}$ and $P_{\mathbf{G}}$ are same or orthogonal.
  Consider the set
  \[
    N =   \{\mathbf{n}_{\mathbf{F}} : \mathbf{n}_{\mathbf{F}} \text{ is a normal vector to } P_{\mathbf{F}}\} \, .
  \]
  By assumption, all vectors in $N$ are either orthogonal or scalar multiple of each other.
  Therefore, there exists three pairwise orthogonal nonzero vectors $\mathbf{v}_1, \mathbf{v}_2, \mathbf{v}_3 \in \mathbb{R}^3$ such that for all $\mathbf{u} \in N$, $\mathbf{u} \sim \mathbf{v}_i$ for some $i$.
  Then, we may apply an orthogonal transformation $T$ mapping $\mathbf{v}_i \mapsto \mathbf{e}_i$ which brings each of the vectors in $N$ to $x$, $y$, or $z$ axis.
  Then, $P_{T\teh \mathbf{F}}$ is a coordinate plane for all $\mathbf{F} \in \mathcal{F}$, and thus all signatures in $T \mathcal{F}$ are \strspt.
\end{proof}

The next lemma encapsulates the essence of the dichotomy theorems of two rank $2$ ternary signatures in \cref{sec:two-ternary}.
\begin{lemma}\label{lem:set-of-ternary-signatures-must-bg-r-or-separate}
  Let $\mathcal{F}$ be a set of rank $2$ signatures of type \ternarytractgeneq or type \ternarytractz.
  Suppose that for every orthogonal matrix $T$, neither of the following holds:
  \begin{enumerate}
    \item for all $\mathbf{F} \in T \mathcal{F}$, $\mathbf{F}$ is $\bdgr$.
    \item all signatures in $T \mathcal{F}$ are \strspt.
  \end{enumerate}
  Then, $\holts(\mathcal{F})$ is \sph.
\end{lemma}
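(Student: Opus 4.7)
The plan is to argue the contrapositive. Assuming $\holts(\mathcal{F})$ is tractable, I will exhibit an orthogonal $T$ for which condition (1) or (2) of the hypothesis holds.

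Suppose first that every pair $\mathbf{F}, \mathbf{G} \in \mathcal{F}$ satisfies $P_\mathbf{F} = P_\mathbf{G}$ or $P_\mathbf{F} \perp P_\mathbf{G}$. Then \cref{prop:cannot-strspt-plane} furnishes an orthogonal $T$ making every signature in $T\mathcal{F}$ \strspt, which is condition (2). Otherwise there are $\mathbf{F}_0, \mathbf{G}_0 \in \mathcal{F}$ with $P_{\mathbf{F}_0}$ and $P_{\mathbf{G}_0}$ neither equal nor orthogonal, and since $\holts(\mathbf{F}_0, \mathbf{G}_0)$ is tractable the pair must fall into one of the tractable cases of \cref{lem:dichotomy-ternary-ternary-rank-2-geneq} or \cref{lem:dichotomy-terneray-ternary-rank-2-z}.

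The key geometric observation is that almost every tractable case actually forces $P_{\mathbf{F}_0}$ and $P_{\mathbf{G}_0}$ to be equal or orthogonal. For a pivot $\mathbf{F}_0$ of type \ternarytractz (\cref{lem:dichotomy-terneray-ternary-rank-2-z}), cases 2 and 3 put $\mathbf{e}_3$ inside $P_{\mathbf{G}_0}$ via \cref{prop:normalize-lin-dep-orthogonal-vectors}, while case 1 confines $P_{\mathbf{G}_0}$ either to the $xy$-plane or to a plane through $\mathbf{e}_3$; each of these gives equal or orthogonal planes. Hence $\mathbf{F}_0$ cannot be of type \ternarytractz, and by symmetry neither can $\mathbf{G}_0$. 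Applying \cref{lem:dichotomy-ternary-ternary-rank-2-geneq} with $\mathbf{F}_0$ of type \ternarytractgeneq, cases 1, 3, 4 again yield equal or orthogonal planes, leaving only case 2 in which the basis vector $\mathbf{e}_i$ from the decomposition lies inside $P_{\mathbf{F}_0}$ (i.e.\ $i \in \{1, 2\}$) and the orthogonal vector $\mathbf{v}$ has nonzero components in both remaining axes. After an orthogonal $T$ putting $\mathbf{F}_0$ into $\mathbf{e}_1\teh + \mathbf{e}_2\teh$ and a possible swap $\mathbf{e}_1 \leftrightarrow \mathbf{e}_2$, this gives $T\teh \mathbf{G}_0 = \mathbf{e}_1\teh + (0, a, b)\teh$ for some nonzero $a, b \in \mathbb{R}$.

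Finally, for every other $\mathbf{H} \in \mathcal{F}$, tractability of $\holts(T\teh \mathbf{F}_0, T\teh \mathbf{G}_0, T\teh \mathbf{H})$ together with \cref{prop:dichotomy-3-signatures-geneq-b-gr} forces $T\teh \mathbf{H}$ to be $\bdgr$. Since $T\teh \mathbf{F}_0$ and $T\teh \mathbf{G}_0$ are themselves $\bdgr$ (their supports lie in $\{\db\}^* \cup \{\dg, \dr\}^*$), every signature in $T\mathcal{F}$ is $\bdgr$, giving condition (1) and contradicting the hypothesis. I expect the main obstacle to be the plane-computation case analysis of the previous paragraph: each case reduces to a short direct calculation once the canonical tensor decomposition is written out, but the bookkeeping across the subcases of both lemmas---and the invocation of \cref{prop:normalize-lin-dep-orthogonal-vectors} to normalize the linearly-dependent subcases into a form where $P_{\mathbf{G}_0}$ visibly contains $\mathbf{e}_3$---must be organized carefully.
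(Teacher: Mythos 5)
Your proposal is correct and follows essentially the same route as the paper: use \cref{prop:cannot-strspt-plane} to extract two signatures whose planes are neither equal nor orthogonal (else condition (2) holds), rule out a type-\ternarytractz pivot and all but case 2 of \cref{lem:dichotomy-ternary-ternary-rank-2-geneq} by checking that every other tractable case forces equal or orthogonal planes, and then invoke \cref{prop:dichotomy-3-signatures-geneq-b-gr} to force every remaining signature to be $\bdgr$, contradicting the failure of condition (1). The only cosmetic difference is that you phrase the contrapositive as ``assume tractable'' rather than ``assume not \sph'' and you spell out the plane computations (via \cref{prop:normalize-lin-dep-orthogonal-vectors}) slightly more explicitly than the paper does; neither affects correctness.
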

\begin{proof}
  Since $(2)$ does not hold, for any $T$, there is some signature in $T \mathcal{F}$ that is not strictly supported. 
  Then, by \cref{prop:cannot-strspt-plane}, there exist two signatures $\mathbf{F}, \mathbf{G} \in \mathcal{F}$ such that $P_{\mathbf{F}}$ and $P_{\mathbf{G}}$ are not orthogonal or same.

  Suppose $\mathbf{F}$ is type \ternarytractz.
  Then, we may apply some orthogonal $T$ such that $T\teh \mathbf{F} = (1, i, 0)\teh + (1, -i, 0)\teh$.
  If $\holts(T \teh \mathbf{F}, T \teh \mathbf{G})$ is \sph, then we are done.
  Otherwise,
  considering each case in \cref{lem:dichotomy-terneray-ternary-rank-2-z}, if $P_{T \teh \mathbf{G}}$ is not the same or orthogonal to $P_{T \teh \mathbf{F}}$, then it must be the case that $\holts(T \teh \mathbf{F}, T \teh \mathbf{G})$ is \sph.

  Suppose $\mathbf{F}$ is type \ternarytractgeneq.
  Then, we may apply some orthogonal $T$ such that $T \teh \mathbf{F} = \mathbf{e}_1\teh + \mathbf{e}_2\teh$.
  If $\holts(T \teh \mathbf{F}, T \teh \mathbf{G})$ is \sph, then we are done.
  Otherwise,
  by \cref{lem:dichotomy-ternary-ternary-rank-2-geneq}, 
  $T\teh \mathbf{G}$ must be $\mathbf{e}_1\teh + (0, a, b)\teh$ or $\mathbf{e}_2\teh + (a, 0, b)\teh$
  for nonzero $a, b \in \mathbb{R}$.
  Without loss of generality, assume $T \teh \mathbf{G} = \mathbf{e}_1\teh + (0, a, b)\teh$.
  Then, by \cref{prop:dichotomy-3-signatures-geneq-b-gr}, any other signature $\mathbf{H} \in T\mathcal{F}$ must be $\bdgr$ for $\holts(T \mathcal{F})$ to be not \sph.
  However, this contradicts the assumption that $(1)$ does not hold.
\end{proof}

\subsection{Proof of the Main Theorem}
We are now ready to prove the main theorem.
We restate it here for an easy reference.
\maintheorem*

We will assume that $\holts(\mathcal{F})$ is not \sph, and show that then, $\mathcal{F}$ must fall into one of the tractable classes in the theorem statement.
We may assume that for all $\mathbf{F} \in \mathcal{F}$, $\holts(\mathbf{F})$ is tractable.
By \cref{lem:geneq-normalization}, \cref{lem:Z-normalization} and \cref{cor:replace-arity-3}, we may assume that all signatures are either ternary or binary, 
and all ternary signatures are written in the tensor decomposed form with unit vectors.

We first explain the organization of the proof.
$\mathcal{O}$ is defined in \cref{eq:g-set} as the orbit of the ternary signatures under the monoid action by the binary signatures.
We list the assumptions to be used subsequently.
\begin{enumerate}
    \item $\mathcal{F}$ contains a rank $3$ type \ternarytractz signature.
    \item $\mathcal{F}$ contains a rank $3$ type \ternarytractgeneq signature.
    \item $\mathcal{O}$ can be transformed such that all signatures are supported on $\{\db, \dg\}$.
    \item
    $\mathcal{O}$ can be transformed such that all signatures are in $\mathcal{E}$.
    \item 
    $\mathcal{O}$ can be transformed such that all signatures are $\db \dg | \dr$.
    \item $\mathcal{O}$ can be transformed such that all signatures are \strspt.
\end{enumerate}
With these assumptions, we can show that the following sequence covers all possible cases of $\holts(\mathcal{F})$ being tractable using the dichotomy statements proven so far.
\begin{itemize}
    \item \cref{subsec:contains-rank-3-z}: Assume 1 holds.
    \item \cref{subsec:contains-rank-3-geneq}: Assume 1 does not hold. Assume 2 holds.
    \item \cref{subsec:can-be-made-bg}: Assume 1, 2 do not hold. Assume 3 holds.
    \item \cref{subsec:cannot-be-made-bg}: Assume 1, 2, 3 do not hold. Assume 4 holds.
    \item \cref{subsec:can-be-made-bg-r}: Assume 1, 2, 3, 4 do not hold. Assume 5 holds.
    \item \cref{subsec:cannot-be-made-bg-r}: Assume 1, 2, 3, 4, 5 do not hold. Assume 6 holds.
\end{itemize}

Let $\boldsymbol{\beta} = \frac{1}{\sqrt{2}}(1, i, 0)\transpose$ for the rest of this paper.

\subsubsection{
Rank 3 Type \texorpdfstring{\ternarytractz}{B}
}\label{subsec:contains-rank-3-z}

Suppose $\mathcal{F}$ contains a rank $3$ type \ternarytractz signature.
Then, there exists some real orthogonal matrix $T$ such that $T \mathcal{F}$ contains $\mathbf{F} = \boldsymbol{\beta}\teh + \overline{\boldsymbol{\beta}}\teh + \mathbf{e}_3\teh $.
We may assume that all other ternary signatures in $T \mathcal{F}$ are of the tractable form in \cref{lem:dichotomy-terneray-ternary-rank-3-z}.
Also, we may assume that all binary signatures in $T \mathcal{F}$ are of the tractable form in \cref{lem:dichotomy-single-ternary-rank-3-z-single-binary}.
Therefore, for all $\mathbf{G} \in T \mathcal{F}$, $\mathbf{G}$ is either $\db \dg | \dr$ and $\mathbf{G} \domres{\db, \dg}$ is tractable with $\typeii$ signature, or $\mathbf{G}$ is binary and is \swbg. 
This means $\mathcal{F}$ is in the tractable class \tractBGR.

\subsubsection{
Rank 3 Type \texorpdfstring{\ternarytractgeneq}{A}
}\label{subsec:contains-rank-3-geneq}

Suppose $\mathcal{F}$ contains a rank $3$ type \ternarytractgeneq signature.
Then, there exists some real orthogonal matrix $T$ such that $T \mathcal{F}$ contains $\mathbf{F} = \mathbf{e}_1\teh + \mathbf{e}_2 \teh + \mathbf{e}_3 \teh$.   
We may assume that all other ternary signatures in $T \mathcal{F}$ are of the tractable form in \cref{lem:dichotomy-ternary-ternary-rank-3-geneq}.
Also, we may assume that all binary signatures in $T \mathcal{F}$ are of the tractable form in \cref{lem:dichotomy-single-ternary-rank-3-geneq-single-binary}.
If $T \mathcal{F} \subseteq \mathcal{E}$, then we are done since $\mathcal{F}$ is in the tractable class \tractE.
So, assume that $T \mathcal{F} \not \subseteq \mathcal{E}$.

Suppose there exists a ternary signature $\mathbf{G} \in T \mathcal{F} \setminus \mathcal{E}$.
Since $\holts(\mathbf{F}, \mathbf{G})$ is assumed to be not \sph, by \cref{lem:dichotomy-ternary-ternary-rank-3-geneq}, $\mathbf{G}$ must be of the form $\mathbf{e}_i\teh + \mathbf{v}\teh$ for some $\mathbf{v} \in \mathbb{R}^3$ such that $\langle \mathbf{e}_i, \mathbf{v} \rangle = 0$ and $\mathbf{v} \not \sim \mathbf{e}_j$ for all $j$.
Without loss of generality, assume $i = 3$, so $\mathbf{G} = (a, b, 0)\teh + \mathbf{e}_3\teh$ for some nonzero $a, b \in \mathbb{R}$.
By \cref{prop:dichotomy-3-signatures-geneq-b-gr}, all other non-\geneq ternary signatures in $T \mathcal{F}$ must also be of the form $(c, d, 0)\teh + \mathbf{e}_3\teh$ for some $c, d \in \mathbb{R}$.
Suppose there is a binary signature $\mathbf{H} \in T \mathcal{F}$.
Then, it must be domain separated or one of \swbg, \swbr, or \swgr.
We claim that $\mathbf{H}$ must be $\db \dg | \dr$ or \swbg.
By \cref{prop:bg-r-d-do-not-mix}, if $\mathbf{H}$ is \swbr or \swgr, then it must be \swbg or $\db \dg | \dr$ to be tractable.
Suppose $\mathbf{H}$ is $\db \dr | \dg$ and not $\db \dg | \dr$.
Then, $\mathbf{H} = \begin{bsmallmatrix}
  x & 0 & y \\
  0 & z & 0 \\
  y & 0 & w
\end{bsmallmatrix}$ for some $x, y, z, w \in \mathbb{R}$ with $y \ne 0$.
If $z \ne 0$, then it must be the case that $x = 0$ since we may realize $\mathbf{H}\teh (\mathbf{e}_1\teh + \mathbf{e}_2\teh) = (x, 0, y)\teh + (0, z, 0)\teh$, which is not $\db \dg | \dr$. 
Then by \cref{prop:dichotomy-3-signatures-geneq-b-gr}, $\holts(\mathbf{G}, \mathbf{H}\teh(\mathbf{e}_1\teh + \mathbf{e}_2\teh)$ is \sph, contrary to the assumption.
Similarly, if $z \ne 0$, then it must be the case that $w = 0$ to be not \sph.
Then, $\mathbf{H}\teh \mathbf{G} = (0, bz, ay)\teh + (y, 0, 0)\teh$, which again is not of $\db \dg | \dr$ form.
So, it must be the case that $z = 0$ to be not \sph.

If $z = 0$, then, $\mathbf{H}\teh \mathbf{G} = (ax, 0, ay)\teh + (y, 0, w)\teh$.
If $x, w = 0$, then $\mathbf{H}$ is \swbg.
Otherwise, $(\mathbf{H}\teh \mathbf{G})\domres{\db, \dr} = (ax, ay)\teh + (y, w)\teh$, 
and it is not degenerate if $\mathbf{H}$ is not degenerate.
Since $(x, w) \ne (0, 0)$, $\holbs(\mathbf{F}\domres{\db, \dr}, (\mathbf{H}\teh \mathbf{G})\domres{\db, \dr})$ is \sph
because $(\mathbf{H}\teh \mathbf{G})\domres{\db, \dr})$ is not a \geneq.

Therefore, any binary $\db \dr | \dg$ signature that is not $\db \dg | \dr$ must be \swbg.
Similar argument shows that any binary $\bdgr$ signature that is not $\db \dg | \dr$ must be \swbg.
Hence $\mathcal{F}$ is in the tractable case \tractBGR.

Suppose there is no ternary signature in $T \mathcal{F} \setminus \mathcal{E}$.
Suppose there is a binary signature $\mathbf{G} \in T \mathcal{F} \setminus \mathcal{E}$.
Suppose $\mathbf{G}$ is $\db \dg | \dr$, so $\mathbf{G} = \begin{bsmallmatrix}
  x & y & 0 \\
  y & z & 0 \\
  0 & 0 & w
\end{bsmallmatrix}$ for some $x, y, z, w \in \mathbb{R}$ with $y \ne 0$.
Also, we must have $x, z \ne 0$, since if $x, z = 0$, $\mathbf{G} \in \mathcal{E}$, and if only one of them is $0$, $\holbs(\mathbf{G}\domres{\db, \dg}, [1, 0, 0, 1])$ is \sph.
That means $[x, y, z]$ must be degenerate, and for $\mathbf{G}$ to be nondegenerate, we need $w \ne 0$.
Then, $\mathbf{G}\teh(\mathbf{e}_2\teh + \mathbf{e}_3\teh) = (y, z, 0)\teh + (0, 0, w)\teh$, so we go back to the previous case of having some non-\geneq ternary signature.

Suppose all binary signatures in $T \mathcal{F} \setminus \mathcal{E}$ are \swbg, \swbr, \swgr.
If $\mathbf{G}$ is a such \swbg signature, then $\mathbf{G} = \begin{bsmallmatrix}
  0 & 0 & x \\
  0 & 0 & y \\
  x & y & 0
\end{bsmallmatrix}$.
By assumption, we must have $x, y \ne 0$, and $\mathbf{G}\teh (\mathbf{e_2}\teh + \mathbf{e}_3\teh) = (x, y, 0)\teh + (0, 0, y)\teh$, so we go back to the previous case.

Therefore, if $T \mathcal{F} \not \subseteq \mathcal{E}$ and is tractable, it must be in class \tractBGR.

\subsubsection{Rank 2} \label{subsec:contains-rank-2}
Suppose $\mathcal{F}$ does not contain a rank $3$ ternary signature. 
If there is no rank $2$ ternary signature as well, then $\mathcal{F}$ only consists of binary signatures, so it is in the tractable class \tractbinary.
So, assume that there exists some rank $2$ ternary signature.

Let $\mathcal{T}$ be the set of ternary signatures in $\mathcal{F}$.
Let $\mathcal{B}$ be the set of binary signatures in $\mathcal{F}$.
Let $\langle \mathcal{B} \rangle$ be the monoid generated by $\mathcal{B}$ under multiplication.
We define $\mathcal{O}$ to be
\begin{equation}\label{eq:g-set}
  \mathcal{O} := \{\mathbf{G}\teh \mathbf{F} : \mathbf{F} \in \mathcal{T}, \mathbf{G} \in \langle \mathcal{B} \rangle, \mathbf{G}\teh \mathbf{F} \text{ is non-degenerate} \} \, .
\end{equation}
Combinatorially, $\mathcal{O}$ is the set of all gadgets constructible from connecting the same chain of binary signatures to the three edges of a ternary signatures,
$(\mathbf{G}_1 \mathbf{G}_2 \cdots \mathbf{G}_k)\teh \mathbf{F}$ for some $\mathbf{G}_i \in \mathcal{B}$ and $\mathbf{F} \in \mathcal{T}$.
It can also be viewed as the orbit (ignoring degenerate signatures) of $\mathcal{T}$ under the monoid action of $\langle \mathcal{B} \rangle$, where the action is defined by $\mathbf{G} : \mathbf{F} \mapsto \mathbf{G} \teh \mathbf{F}$ for $\mathbf{G} \in \langle \mathcal{B} \rangle$ and $\mathbf{F} \in \mathcal{T}$.
Note that $\mathcal{O}$ is a set of symmetric ternary signatures, and if $\mathbf{G} \in \mathcal{B}$ and $\mathbf{F} \in \mathcal{O}$, then $\mathbf{G}\teh \mathbf{F} \in \mathcal{O}$ as well.
In addition, for any orthogonal matrix $T$, $T \mathcal{O}$ is the result of the same construction with $T \mathcal{F}$ and $\langle T \mathcal{B} \rangle$.
Also, we have $\holts(\mathcal{O} \cup \mathcal{B}) =_T \holts(\mathcal{F})$.

\subsubsection{Rank 2 Class \texorpdfstring{\tractBG}{C}}
\label{subsec:can-be-made-bg}
Suppose there exists an orthogonal $T$ such that $\supp \mathbf{F} \subseteq \{\db, \dg\}^*$ for all $\mathbf{F} \in T \mathcal{O}$.
We claim that $\mathcal{F}$ must be in the tractable class \tractBG.
\begin{enumerate}
  \item Suppose there is $\mathbf{F} \in T \mathcal{F}$ such that it is of type \ternarytractgeneq.
    Since $\supp \mathbf{F} \subseteq \{\db, \dg\}^*$, we may apply a $\db \dg | \dr$ orthogonal matrix $S$ such that $\mathbf{F}' = S\teh \mathbf{F} = \mathbf{e}_1\teh + \mathbf{e}_2\teh$.
    Note that all signatures in $\mathcal{O}' = S T \mathcal{O}$ are supported on $\{\db, \dg\}$ since $S$ is $\db \dg | \dr$.
    By \cref{lem:geneq-eqbg}, \cref{lem:domain-restriction} and \cref{thm:dich-sym-Boolean}, it must be the case that all signatures in $\mathcal{O}'$ are also \geneq.
    In particular, all signatures in $\mathcal{F}' = S T \mathcal{F}$ are also \geneq.

    We need to show that $\mathcal{B}' = S T \mathcal{B}$ only consists of binary signatures that are $\db \dg | \dr$ or $\mathcal{D}$.
    Suppose $\mathbf{G} \in \mathcal{B}'$. 
    If $\mathbf{G}\teh \mathbf{F}'$ is non-degenerate, then $\mathbf{G}\teh \mathbf{F}' \in \mathcal{O}'$, so by assumption we need $\supp \mathbf{G}\teh \mathbf{F}' \subseteq \{\db, \dg\}^*$.
    This condition is equivalent to the statement that if the first two columns of $\mathbf{G}$ are linearly independent, then the third row of the first two columns must be $0$. 
    In other words, if $\mathbf{G} \notin \mathcal{D}$, then it must be $\db \dg | \dr$.

  \item Suppose there is $\mathbf{F} \in T \mathcal{F}$ such that it is of type \ternarytractz.
    Since $\supp \mathbf{F} \subseteq \{\db, \dg\}^*$, we may apply a $\db \dg | \dr$ orthogonal matrix $S$ such that $\mathbf{F}' = S\teh \mathbf{F} = \boldsymbol{\beta}\teh + \overline{\boldsymbol{\beta}}\teh$.
    By \cref{lem:z-eqbg}, \cref{lem:domain-restriction}, and \cref{thm:dich-sym-Boolean}, it must be the case that all signatures in $\mathcal{O}'$ are also type $\typeii$.
    In particular, all signatures in $\mathcal{F}' = S T \mathcal{F}$ are also type $\typeii$.

    Same argument as case (1) shows that $ST \mathcal{B}$ only consists of binary signatures that are $\db \dg | \dr$ or $\mathcal{D}$.
\end{enumerate}

\subsubsection{Rank 2 Class \texorpdfstring{\tractE}{B} or \texorpdfstring{\tractBGR}{D}}
\label{subsec:cannot-be-made-bg}
From now on, we assume that under any orthogonal transformation $T$, there exists some $\mathbf{F} \in T \mathcal{O}$ such that $\supp \mathbf{F} \not \subseteq \{\db, \dg\}^*$.
Suppose there exists some orthogonal $T$ such that $T \mathcal{O} \subseteq \mathcal{E}$.
We claim that $\mathcal{F}$ must be in the tractable class \tractE or \tractBGR.
Let $\mathcal{O}' = T \mathcal{O}$.
We may assume that there exists $\mathbf{F} = \mathbf{e}_1\teh + \mathbf{e}_2\teh \in \mathcal{O}'$ after permuting the domains.
Also, by the assumption that not all signatures can be supported on $\{\db, \dg\}$, we may assume that there is some $\mathbf{G} = \mathbf{e}_2\teh + \mathbf{e}_3\teh \in \mathcal{O}'$.
We need to show that one of the following holds: 
(1) every signature in $\mathcal{B}' = T \mathcal{B}$ is in $\mathcal{E}$;
or 
(2) every signature in $\mathcal{B}'$ is $\db \dr | \dg$ or \swbr. 

Suppose $\mathbf{H} \in \mathcal{B}'$, 
and let $\mathbf{v}_1, \mathbf{v}_2, \mathbf{v}_3$ be the columns of $\mathbf{H}$.
Then $\mathbf{H}\teh \mathbf{F} = \mathbf{v}_1\teh + \mathbf{v}_2\teh$.
If $\mathbf{H}\teh \mathbf{F} = \mathbf{v}_1\teh + \mathbf{v}_2\teh$ is not degenerate, then it must be the case that it is also a \geneq since $\mathbf{H} \teh \mathbf{F} \in \mathcal{O}' \subseteq \mathcal{E}$.
This implies that $\mathbf{v}_1 \sim \mathbf{e}_i$ and $\mathbf{v}_2 \sim \mathbf{e}_j$ for $i \ne j$.
Since $\mathbf{H}$ is symmetric, this implies $\mathbf{H} \in \mathcal{E}$ immediately, except in one case of $\mathbf{v}_1 \sim \mathbf{e}_1$ and $\mathbf{v}_2 \sim \mathbf{e}_3$, where $\mathbf{H} = \begin{bsmallmatrix}
  a & 0 & 0 \\
  0 & 0 & b \\
  0 & b & c
\end{bsmallmatrix}$.
Since $\mathbf{H} \teh \mathbf{G} = (0, 0, b)\teh + (0, b, c)\teh$ is in $\mathcal{E}$ by assumption, it must be the case that $b = 0$ or $c = 0$, which implies $\mathbf{H} \in \mathcal{E}$.
Suppose $\mathbf{v}_1$ and $\mathbf{v}_2$ are linearly dependent.
Then we may look at $\mathbf{H}\teh \mathbf{G} = \mathbf{v}_2\teh + \mathbf{v}_3\teh$.
If $\mathbf{v}_2$ and $\mathbf{v}_3$ are linearly independent, then by the same argument, $\mathbf{H} \in \mathcal{E}$.
So the only uncovered case is if $\mathbf{v}_2$ and $\mathbf{v}_3$ are also linearly dependent.
If $\mathbf{v}_2 \ne 0$, then $\mathbf{H}$ is degenerate.
Otherwise, if $\mathbf{v}_2 = 0$, then $\mathbf{H}$ must have the form $\begin{bsmallmatrix}
  a & 0 & b \\
  0 & 0 & 0 \\
  b & 0 & c
\end{bsmallmatrix}$.

There are two possible cases.
If $\mathbf{e}_1\teh + \mathbf{e}_3\teh \in \mathcal{O}'$, then by the same analysis, the above $\mathbf{H}$ must be degenerate or in $\mathcal{E}$.
Therefore, $\mathcal{B}' \subseteq \mathcal{E}$, and thus we get the tractable class \tractE.
Otherwise, if $\mathbf{e}_1\teh + \mathbf{e}_3\teh \notin \mathcal{O}'$, $\mathcal{B}'$ may contain binary signatures that are supported on $\{\db, \dr\}$.
Then, we claim that every signature in $\mathcal{B}'$ must be $\db \dr | \dg$ or \swbr.
We have already shown above that the binary signatures in $\mathcal{B}'$ are either \genperm or possibly supported on $\{\db, \dr\}$.
Therefore, only cases we need to further rule out are the symmetric \genperm matrices not supported on $\{\db, \dr\}$, which are of the form $\mathbf{H}_1 = \begin{bsmallmatrix}
  0 & x & 0 \\
  x & 0 & 0 \\
  0 & 0 & y
  \end{bsmallmatrix}$ and $\mathbf{H}_2 = \begin{bsmallmatrix}
  y & 0 & 0 \\
  0 & 0 & x \\
  0 & x & 0
\end{bsmallmatrix}$ for nonzero $x, y$, since if $y = 0$, they are both \swbr, and if $x = 0$, they are both degenerate.
We see that then $\mathbf{H}_1\teh \mathbf{G} = x^3 \mathbf{e}_1\teh + y^3 \mathbf{e}_3\teh$ and $\mathbf{H}_2\teh \mathbf{F} = y^3 \mathbf{e}_1\teh + x^3 \mathbf{e}_3\teh$, which can realize $\mathbf{e}_1\teh + \mathbf{e}_3\teh$. 
Then, we may assume it is in $\mathcal{O}'$ to show that $\mathcal{B}' \subseteq \mathcal{E}$.
Otherwise such symmetric \genperm matrices cannot exist in $\mathcal{B}'$, so $\mathcal{F}$ must be in the tractable class \tractBGR.

\subsubsection{Rank 2 Class \texorpdfstring{\tractBGR}{D} or \texorpdfstring{\tractBGGRBR}{E}}
\label{subsec:can-be-made-bg-r}
From now on, we further assume that under any orthogonal transformation $T$, there exists some $\mathbf{F} \in T \mathcal{O}$ such that $\mathbf{F}$ is not a \geneq.
By \cref{lem:set-of-ternary-signatures-must-bg-r-or-separate}, we may assume that there is some orthogonal $T$ such that all signatures in $T \mathcal{O}$ are $\db \dg | \dr$.
\begin{enumerate}
  \item Suppose there is no signature in $T \mathcal{O}$ that is supported on $\{\db, \dg\}$.
    This means that all signatures in $T \mathcal{O}$ must have the form $(a, b, 0)\teh + \mathbf{e}_3\teh$ for some $a, b \in \mathbb{R}$.
    Note that no type \ternarytractz signature can exist since a rank $2$ type \ternarytractz signature cannot be $\db \dg | \dr$ without being supported on $\{\db, \dg\}$.
    Fix one such $\mathbf{F} = (a, b, 0)\teh + \mathbf{e}_3\teh \in T \mathcal{O}$.
    We may assume that $a^2 + b^2 = 1$.
    We may apply a $\db \dg | \dr$ orthogonal transformation $S = \begin{bsmallmatrix}
      -b & a & 0 \\
      a & b & 0 \\
      0 & 0 & 1
    \end{bsmallmatrix}$ so that $\mathbf{F}' = S\teh \mathbf{F} = \mathbf{e}_2\teh + \mathbf{e}_3\teh$.
    Note that every signature in $\mathcal{O}' = S T \mathcal{O}$ is $\db \dg | \dr$.
    By the assumption that $\mathcal{O}$ cannot be transformed into $\mathcal{E}$, we may assume that there exists some $\mathbf{G}' \in \mathcal{O}$ such that 
    $\mathbf{G}' = (c, d, 0)\teh + \mathbf{e}_3\teh$ for nonzero $c, d \in \mathbb{R}$.
    We claim that $\mathcal{F}$ must be in the tractable class \tractBGR.
    We need to show that all binary signatures in $\mathcal{B}' = ST \mathcal{B}$ are $\db \dg | \dr$ or \swbg.

    Suppose $\mathbf{H} \in \mathcal{B}'$ and let $\mathbf{v}_1, \mathbf{v}_2, \mathbf{v}_3$ be the columns of $\mathbf{H}$.
    If $\mathbf{v}_2, \mathbf{v}_3$ are linearly independent, then we need $\mathbf{H}\teh \mathbf{F}' = \mathbf{v}_2\teh + \mathbf{v}_3\teh$ to also have the form $(e, f, 0)\teh + \mathbf{e}_3\teh$.
    This means that either we have $\mathbf{v}_2 \sim (e, f, 0)$ and $\mathbf{v}_3 \sim \mathbf{e}_3$ or $\mathbf{v}_2 \sim \mathbf{e}_3$ and $\mathbf{v}_3 \sim (e, f, 0)$.
    The first case implies $\mathbf{H}$ is $\db \dg | \dr$ by symmetry.
    The second case implies $\mathbf{H}$ is of the form $\begin{bsmallmatrix}
      x & 0 & e \\
      0 & 0 & f \\
      e & f & 0
    \end{bsmallmatrix}$ for some $x$.
    If $x = 0$, then $\mathbf{H}$ is \swbg, so assume otherwise.
    Then, $\mathbf{H}\teh \mathbf{G}' = (cx, 0, ce + df)\teh + (e, f, 0)\teh$.
    If this is degenerate, then we must have $f = 0$ and $e = 0$, which implies $\mathbf{H}$ is degenerate.
    Otherwise, $e$ must be $0$ for the two vectors to be orthogonal, but if $f \ne 0$, then this signature is not $\db \dg | \dr$, contrary to assumption.

    Now, suppose $\mathbf{v}_2$ and $\mathbf{v}_3$ are linearly dependent.
    Then, $\mathbf{H}\teh \mathbf{G}' = (c \mathbf{v}_1 + d \mathbf{v}_2)\teh + \mathbf{v}_3\teh$.
    If $c \mathbf{v}_1 + d \mathbf{v}_2$ and $\mathbf{v}_3$ are linearly independent, we need either $c \mathbf{v}_1 + d \mathbf{v}_2 \sim (e, f, 0)$ and $\mathbf{v}_3 \sim \mathbf{e}_3$ or $c \mathbf{v}_1 + d \mathbf{v}_2 \sim \mathbf{e}_3$ and $\mathbf{v}_3 \sim (e, f, 0)$.
    The first case implies that $\mathbf{H}$ is $\db \dg | \dr$ by symmetry of $\mathbf{H}$.
    The second case implies $\mathbf{H}$ is of the form $\begin{bsmallmatrix}
      * & * & e \\
      * & * & f \\
      e & f & 0
    \end{bsmallmatrix}$ and linear dependence of $\mathbf{v}_2, \mathbf{v}_3$ further implies $f = 0$.
    So, $\mathbf{H}$ is $\begin{bsmallmatrix}
      x & y & e \\
      y & 0 & 0 \\
      e & 0 & 0
    \end{bsmallmatrix}$ for some $x, y \in \mathbb{R}$.
    For $c \mathbf{v}_1 + d \mathbf{v}_2 \sim \mathbf{e}_3$ to be true we must have $y = 0$, which then implies $x = 0$, so $\mathbf{H}$ is \swbg.
    If $c \mathbf{v}_1 + d \mathbf{v}_2$ and $\mathbf{v}_3$ are linearly dependent, then $\mathbf{v}_1, \mathbf{v}_3$ are linearly dependent as well, so $\mathbf{H}$ is degenerate.
  \item Now, we may assume there is some signature in $T \mathcal{O}$ that is supported on $\{\db, \dg\}$.
    Assume $\mathbf{F}$ is supported on $\{\db, \dg\}$ and is of type \ternarytractgeneq.
    By applying a $\db \dg | \dr$ orthogonal transformation $S$, we have $\mathbf{F}' = S \teh \mathbf{F} = \mathbf{e}_1\teh + \mathbf{e}_2\teh$.
    Let $\mathcal{O}' = S T \mathcal{O}$.
    By the assumptions, we may assume that there is some $\mathbf{G}' \in \mathcal{O}'$ such that $\mathbf{G}' = (a, b, 0)\teh + \mathbf{e}_3\teh$ for nonzero $a, b \in \mathbb{R}$.
    We claim that $\mathcal{F}$ must be in the tractable class \tractBGR or \tractBGGRBR.
    We will narrow down the possible forms of binary signatures in $\mathcal{B}' = ST \mathcal{B}$.
    We will show that either: $\mathcal{B}'$ consists of $\db \dg | \dr$ and \swbg signatures; or $\mathcal{O}'$ and $\mathcal{B}'$ can be transformed so that every signature becomes \strspt or in $\octgroup$.

    Let $\mathbf{H} \in \mathcal{B}'$ and $\mathbf{v}_1, \mathbf{v}_2, \mathbf{v}_3$ the columns.
    Assume $\mathbf{H}$ is not $\db \dg | \dr$.
    Suppose $\mathbf{v}_1$ and $\mathbf{v}_2$ are linearly independent.
    Then $\mathbf{H}\teh \mathbf{F}' = \mathbf{v}_1\teh + \mathbf{v}_2\teh$ is not supported on $\{\db, \dg\}$ since $\mathbf{H}$ is not $\db \dg | \dr$.
    Therefore, it must be the case that $\mathbf{v}_1 \sim (e, f, 0)$ and $\mathbf{v}_2 \sim \mathbf{e}_3$ or $\mathbf{v}_1\sim \mathbf{e}_3$ and $\mathbf{v}_2 \sim (e, f, 0)$ for some $e, f \in \mathbb{R}$ to be $\db \dg | \dr$.
    The first case implies $\mathbf{H} = \begin{bsmallmatrix}
      e & 0 & 0 \\
      0 & 0 & x \\
      0 & x & y
    \end{bsmallmatrix}$ for some $x, y$ with $x \ne 0$.
    Then, $\mathbf{H}\teh \mathbf{G}' = (ae, 0, bx)\teh + (0, x, y)\teh$.
    Since $x \ne 0$, $\mathbf{H}\teh \mathbf{G}'$ cannot be degenerate.
    So,  $\mathbf{H}\teh \mathbf{G}'$ must be $\db \dg | \dr$, which implies that either $e, y = 0$ or $x = 0$.
    In the first case, $\mathbf{H}$ is \swbg, and in the second case, $\mathbf{H}$ is $\db \dg | \dr$.

    If $\mathbf{v}_1 \sim \mathbf{e}_3$ and $\mathbf{v}_2 \sim (e, f, 0)$, then $\mathbf{H}$ must be of the form
    $\begin{bsmallmatrix}
      0 & 0 & x \\
      0 & f & 0 \\
      x & 0 & y
    \end{bsmallmatrix}$ for some $x, y$ with $x \ne 0$.
    Then, $\mathbf{H}\teh \mathbf{G}' = (0, bf, ax)\teh + (x, 0, y)$.
    This cannot be degenerate since $a, x \ne 0$.
    Since $\mathbf{H}\teh \mathbf{G}' \in \mathcal{O}'$, we must have $f = 0$ and $y = 0$, implying that $\mathbf{H}$ is a \swbg.

    So far, we have shown that if $\mathbf{H} \notin \mathcal{D}$, then it is \swbg or $\db \dg | \dr$.
    If $\mathbf{H} \in \mathcal{D}$, then
    by \cref{prop:bg-r-d-do-not-mix}, for $\holts(\mathbf{G}', \mathbf{H})$ to be not \sph, $\mathbf{H}$ must be $\db \dg | \dr$, \swbg, or for $U = \begin{bsmallmatrix}
      -b & a & 0 \\
      a & b & 0 \\
      0 & 0 & 1
    \end{bsmallmatrix}$, $U \tew \mathbf{H}$ is \strspt.
    Note that being $\db \dg | \dr$ and \swbg are invariant under $\db \dg | \dr$ transform, 
    so every signature in $U \mathcal{B}'$ are $\db \dg | \dr$, \swbg, or \strspt.
    We will now analyze
    the \strspt signatures in $U \mathcal{B}'$.
    Also note that $U\teh \mathbf{G}' = \mathbf{e}_2\teh + \mathbf{e}_3\teh$.

    Suppose $\mathbf{H} \in U \mathcal{B}'$ and is supported on $\{\dg, \dr\}$.
    Since $U \teh \mathbf{G}'$ is a \geneq on $\{\dg, \dr\}$, it must be the case that $\mathbf{H}\domres{\dg, \dr}$ is degenerate, $[0, *, 0]$ or $[*, 0, *]$.
    $\mathbf{H}\domres{\dg, \dr}$ being degenerate implies that $\mathbf{H}$ is degenerate.
    Others imply that $\mathbf{H}$ is \swbg and $\db \dg | \dr$ respectively.
    If there is no $\mathbf{H} \in U \mathcal{B}'$ such that $\supp \mathbf{H} \subseteq \{\db, \dr\}^*$, then we are in the tractable case \tractBGR.

    Now, assume that $\mathbf{H} = \begin{bsmallmatrix}
      x & 0 & y \\
      0 & 0 & 0 \\
      y & 0 & z
    \end{bsmallmatrix} \in U \mathcal{B}'$ such that it is not $\db \dg | \dr$ or \swbg.
    This means $x, z$ cannot be both $0$ and $y \ne 0$.
    If such $\mathbf{H}$ exists, then we show $\mathcal{F}$ must be in the tractable class \tractBGGRBR.
    Note that since $U$ is $\db \dg | \dr$, every signature in $U \mathcal{O}'$ is still $\db \dg | \dr$ by assumption.

    We need to show two things: (a) in $U \mathcal{O}'$, there is no signature of the form $\mathbf{I} = (c, d, 0)\teh + \mathbf{e}_3\teh$ for nonzero $c, d$ and hence all signatures are \strspt;
    (b) the binary signatures in $U \mathcal{B}'$ are \strspt or in $\octgroup$.
    For (a), suppose such signature exists, and we have $\mathbf{H}\teh \mathbf{I} = (cx, 0, cy)\teh + (y, 0, z)\teh$.
    By assumption, this needs to be $\db \dg | \dr$ since $\mathbf{H}$ is nondegenerate, but that would require $y = 0$ or $x, z = 0$ contrary to assumption.
    For (b), we already know that the binary signatures that are not \strspt are of the form $\db \dg | \dr$ or \swbg.
    If they are not \strspt, then by composing with $U\teh \mathbf{G}' = \mathbf{e}_2\teh + \mathbf{e}_3\teh$, we get a ternary signature of the above form in (a) which we already have shown to be not in $U \mathcal{O}'$.

  \item We assume now that $\mathbf{F} \in T \mathcal{O}$ is supported on $\{\db, \dg\}$ and is of type \ternarytractz.
    By the assumptions, we may assume that there is some $\mathbf{G} \in T \mathcal{O}$ that is not supported on $\{\db, \dg\}$.
    It must have the form $(a, b, 0)\teh + \mathbf{e}_3\teh$.
    Let $S = \begin{bsmallmatrix}
      -b & a & 0 \\
      a & b & 0 \\
      0 & 0 & 1
    \end{bsmallmatrix}$, and let $\mathcal{O}' = S T \mathcal{O}$ and $\mathcal{B}' = S T \mathcal{B}$.
    Then, $\mathbf{G}' = S \teh \mathbf{G} = \mathbf{e}_2\teh + \mathbf{e}_3\teh$, and by \cref{cor:z-normalization-orthogonal},
    we may assume that there is $\mathbf{F}' \in \mathcal{O}'$ such that $\mathbf{F}' = \boldsymbol{\beta}\teh + \overline{\boldsymbol{\beta}}\teh$.

    By \cref{lem:dichotomy-single-ternary-rank-2-z-single-binary}, the only other possible forms of binary signatures are: 
    (a) $\begin{bsmallmatrix}
      1 & 0 & 0 \\
      0 & 0 & \alpha \\
      0 & \alpha & 0
      \end{bsmallmatrix}$ or $\begin{bsmallmatrix}
      0 & 0 & \alpha \\
      0 & 1 & 0 \\
      \alpha & 0 & 0
    \end{bsmallmatrix}$ for $\alpha = \pm 1$;
    (b) $\mathcal{D}$;
    (c) $\begin{bsmallmatrix}
      1 & x & -x \gamma \\
      x & x^2 & \gamma \\
      - x \gamma & \gamma & 0
    \end{bsmallmatrix}$ for $\gamma = \pm \sqrt{1 + x^2}$ and a nonzero $x \in \mathbb{R}$.
    Let $\mathbf{H} \in \mathcal{B}'$.
    For (a) and (c),  we see that $\mathbf{H}\teh \mathbf{F}$ is not $\db \dg | \dr$.
    For (b), if $\mathbf{H} \in \mathcal{D}$, then by \cref{lem:gr-equality-D-dichotomy} with $\mathbf{G}'$, $\mathbf{H}$ must be $\db \dg | \dr$, \swbg, or \strspt.
    Only case we need to consider is the signatures supported on $\{\dg, \dr\}$.
    We may write $\mathbf{H} = \begin{bsmallmatrix}
      0 & 0 & 0 \\
      0 & x & y \\
      0 & y & z
    \end{bsmallmatrix}$, and since it must be tractable with $\mathbf{G}'$, $[x, y, z]$ must be $[0, *, 0]$, $[*, 0, *]$ or degenerate.
    However, it cannot be degenerate since then $\mathbf{H}$ is degenerate.
    If $y = 0$, then $\mathbf{H}$ is $\db \dg | \dr$.
    If $x, z = 0$, then it is \swbg.

    We have shown so far that all signatures that are not supported on $\{\db, \dr\}$ are either $\db \dg | \dr$ or \swbg. 
    The existence of a binary signature supported on $\{\db, \dr\}$ determines the tractable class between \tractBGR and \tractBGGRBR, and the argument is similar to case (2).
\end{enumerate}

\subsubsection{Rank 2 Class \texorpdfstring{\tractBGGRBR}{E}}
\label{subsec:cannot-be-made-bg-r}
From now on, we further assume that under any orthogonal transformation $T$, there always is a non $\db \dg | \dr$ signature in $T \mathcal{O}$.
By \cref{lem:set-of-ternary-signatures-must-bg-r-or-separate}, this implies that there must exist some $T$ such that all the signatures in $\mathcal{O}' = T \mathcal{O}$ are \strspt to escape \#\P-hardness.
Let $\mathcal{O}_{ij}$ be the signatures in $\mathcal{O}'$ such that are supported on $\{i, j\}$.
By assumption, it must be the case that at least two $\mathcal{O}_{ij}$ are nonempty, and not \geneq.
Without loss of generality, we may assume that $\mathcal{O}_{\db \dg}$ and $\mathcal{O}_{\dg \dr}$ are not empty and contains non-\geneq, say
$\mathbf{F} \in \mathcal{O}_{\db \dg}$ and $\mathbf{G} \in \mathcal{O}_{\dg \dr}$.
We claim that all binary signatures in $\mathcal{B}' = T \mathcal{B}$ must be either \strspt or \genperm.
Then, using the fact that $\mathbf{F}, \mathbf{G}$ are not \geneq, we will show that only special forms of \genperm signatures can be in $\mathcal{B}'$ to escape \#\P-hardness.

We will argue in the following way: 
by definition, if $\mathbf{A} \in \mathcal{O}'$ and $\mathbf{H} \in \mathcal{B}'$, then $\mathbf{H}\teh \mathbf{A} \in \mathcal{O}'$
;
therefore, $\mathbf{H}\teh \mathbf{F}$ and $\mathbf{H}\teh \mathbf{G}$ must be \strspt or degenerate.
In particular, $\mathbf{H}$ must be a symmetric matrix that maps $P_{\mathbf{F}}$ ($xy$-plane) to a coordinate plane or to a single vector.
Since $\mathbf{H}$ is a linear transformation, it must then send $\mathbf{e}_1$ and $\mathbf{e}_2$ to some axes or collapse them to a single vector.
This allows us to analyze the first two columns of $\mathbf{H}$, and similarly we can analyze the second and third columns using $\mathbf{G}$.

\begin{enumerate}
    \item 
Suppose $\mathbf{H}$ sends the $xy$-plane to a single vector.
This means that the first two columns of $\mathbf{H}$ are linearly dependent and hence $\mathbf{H} \in \mathcal{D}$.
We may write $\mathbf{H} = \begin{bsmallmatrix}
  ax^2 & axy & bx \\
  axy & ay^2 & by \\
  bx & by & c
\end{bsmallmatrix}$.
\begin{enumerate}
  \item 
    Suppose the $yz$-plane also goes to a single vector. 
    Then the second and third columns must also be linearly dependent.
    If the second column is nonzero, then $\mathbf{H}$ is degenerate.
    If the second column is zero, then $\mathbf{H}$ is \strspt.

  \item Suppose the $yz$-plane stays on $yz$-plane.
    Then, it must be the case that $axy, bx = 0$.
    If $x = 0$, then $\supp \mathbf{H} \subseteq \{\dg, \dr\}^*$.
    If $x \ne 0$, then $b$ must be $0$, and $ay = 0$.
    If $a = 0$, then $\mathbf{H}$ is degenerate.
    If $y = 0$, then $\supp \mathbf{H} \subseteq \{\db, \dr\}^*$.

  \item Suppose the $yz$-plane goes to the $xy$-plane.
    Then, it must be the case that $by, c = 0$.
    If $y = 0$, then, $\supp \mathbf{H} \subseteq \{\db, \dr\}^*$.
    If $b = 0$, then, $\supp \mathbf{H} \subseteq \{\db, \dg\}^*$.

  \item Suppose the $yz$-plane goes to the $xz$-plane.
    Then, it must be the case that $a y^2, by = 0$.
    If $y = 0$, then $\supp \mathbf{H} \subseteq \{\db, \dr\}^*$.
    If $a, b = 0$, then $\mathbf{H}$ is degenerate.
\end{enumerate}

\item
Suppose $\mathbf{H}$ sends $xy$-plane to $xy$-plane.
Then, $\mathbf{H} = \begin{bsmallmatrix}
  a & b & 0 \\
  b & c & 0 \\
  0 & 0 & d
\end{bsmallmatrix}$.
We may assume $d \ne 0$ since otherwise $\supp \mathbf{H} \subseteq \{\db, \dg\}^*$.
\begin{enumerate}
  \item Suppose $yz$-plane goes to a single vector.
    Then, either $b, c = 0$ or $d = 0$ and $\mathbf{H}$ is \strspt.
  \item Suppose $yz$-plane stays on $yz$-plane.
    Then, $b = 0$, so $\mathbf{H}$ is a \genperm.
  \item Suppose $yz$-plane goes to $xy$-plane.
    Then $d = 0$ so $\mathbf{H}$ is \strspt.
  \item Suppose $yz$-plane goes to $xz$-plane.
    Then, $c = 0$.
    If $b = 0$, then $\mathbf{H}$ is a \genperm, so suppose $b \ne 0$.
    That means $\mathcal{F}_{\db \dr}$ is not empty since $\mathbf{H}\teh \mathbf{G} = (b, 0, 0)\teh + (0, 0, d)\teh \in \mathcal{O}_{\db \dr}$.
    So, we may also analyze where $xz$-plane goes to, which is determined by the first and third columns.
    Since $b, d \ne 0$, the two columns are linearly independent, and thus $xz$-plane does not go to a line.
    Therefore, it must go to a coordinate plane, but since $b \ne 0$, we must have $a = 0$. Then $\mathbf{H}$ is a \genperm.
\end{enumerate}

\item
Suppose $\mathbf{H}$ sends $xy$-plane to $yz$-plane.
Then, $\mathbf{H} = \begin{bsmallmatrix}
  0 & 0 & b \\
  0 & a & c \\
  b & c & d
\end{bsmallmatrix}$.
\begin{enumerate}
  \item Suppose $yz$-plane goes to a single vector.
    Then, it must be the case that $b = 0$ and $\supp \mathbf{H} \subseteq \{\dg, \dr\}^*$.
  \item Suppose $yz$-plane stays on $yz$-plane.
    Then, $b = 0$, so $\supp \mathbf{H} \subseteq \{\dg, \dr\}^*$.
  \item Suppose $yz$-plane goes to $xy$-plane.
    Then, $c, d = 0$, so $\mathbf{H}$ is a \genperm.
  \item Suppose $yz$-plane goes to $xz$-plane.
    Then, $a, c = 0$ so $\supp \mathbf{H} \subseteq \{\db, \dr\}^*$.
\end{enumerate}

\item
Suppose $\mathbf{H}$ sends $xy$-plane to $xz$-plane.
Then, $\mathbf{H} = \begin{bsmallmatrix}
  a & 0 & b \\
  0 & 0 & c \\
  b & c & d
\end{bsmallmatrix}$.
We may assume that the first two columns are linearly independent, so $\mathbf{H}\teh \mathbf{F} \in \mathcal{O}_{\db \dr}$.
\begin{enumerate}
  \item Suppose $yz$-plane goes to a single vector.
    Then, it must be the case that $b, c = 0$ and $\supp \mathbf{H} \subseteq \{\db, \dr\}^*$.
  \item Suppose $yz$-plane stays on $yz$-plane.
    Then, $b = 0$.
    Since $\mathcal{O}_{\db \dr} \ne \emptyset$, we may look at where the $xz$-plane goes to.
    If it goes to a single line, then we must also have $a, c = 0$, so $\mathbf{H}$ is degenerate.
    If $xz$-plane goes to another plane, then we must have either $d = 0$, $a = 0$, or $c = 0$.
    If $d = 0$, $\mathbf{H}$ is \genperm.
    If $a = 0$, $\supp \mathbf{H} \subseteq \{\dg, \dr\}^*$.
    If $c = 0$, $\supp \mathbf{H} \subseteq \{\db, \dr\}^*$.
  \item Suppose $yz$-plane goes to $xy$-plane.
    Then, $c, d = 0$, so $\supp \mathbf{H} \subseteq \{\db, \dr\}^*$.
  \item Suppose $yz$-plane goes to $xz$-plane.
    Then, $c = 0$, so $\supp \mathbf{H} \subseteq \{\db, \dr\}^*$.
\end{enumerate}
\end{enumerate}
Now, we will argue that if a symmetric \genperm matrices is not \strspt, then the absolute values of the nonzero coefficients must be the same for $\holts(\mathcal{F})$ to be not \sph.
After normalization, any symmetric \genperm matrix that is not \strspt has the form
\[
  \begin{bmatrix}
    0 & x & 0 \\
    x & 0 & 0 \\
    0 & 0 & 1
    \end{bmatrix}, \begin{bmatrix}
    0 & 0 & x \\
    0 & 1 & 0 \\
    x & 0 & 0 
    \end{bmatrix}, \begin{bmatrix}
    1 & 0 & 0 \\
    0 & 0 & x \\
    0 & x & 0
  \end{bmatrix} \, .
\]
for some nonzero $x$.
The $2k$-th power of each of them are 
\[
  \begin{bmatrix}
    x^{2k} & 0 & 0 \\
    0 & x^{2k} & 0 \\
    0 & 0 & 1
    \end{bmatrix}, \begin{bmatrix}
    x^{2k} & 0 & 0 \\
    0 & 1 & 0 \\
    0 & 0 & x^{2k} 
    \end{bmatrix}, \begin{bmatrix}
    1 & 0 & 0 \\
    0 & x^{2k} & 0 \\
    0 & 0 & x^{2k}
  \end{bmatrix} \, .
\]
By assumption, the signatures $\mathbf{F}$ and $\mathbf{G}$ are not of the \geneq types.
Then, by \cref{thm:dich-sym-Boolean}, $\holbs([x^{2k}, 0, 1], \mathbf{F}\domres{\db, \dg})$ and $\holbs([x^{2k}, 0, 1], \mathbf{G}\domres{\dg, \dr})$ are not tractable for big enough $k$ if $x \ne \pm 1$.
We see that all the above three matrices are ruled out for $x \ne \pm 1$, 
and hence the only non \strspt symmetric binary signatures that $\mathcal{B}'$ can possibly contain are the scalar multiples of elements of $\octgroup$.
Therefore, we have proven that $\mathcal{F}$ is in class \tractBGGRBR.

\bibliography{refs}

\end{document}